\newtheorem{theorem}{Theorem}
\newtheorem{lemma}{Lemma}
\newtheorem{corollary}{Corollary}
\theoremstyle{definition}
\newtheorem{definition}{Definition}
\newtheorem*{claim}{Claim}
\declaretheoremstyle[
  spaceabove=6pt, spacebelow=6pt,
  headfont=\bfseries,
  notefont=\bfseries\sffamily,
  notebraces = {(}{)},
  postheadspace=0.5em,
  numbered=no,
  qed=$\blacksquare$
]{claimproof}
\declaretheorem[title=Proof of claim, style=claimproof]{claimproof}
\newcommand{\ed}{\mathsf{ed}}
\newcommand{\MSensSub}{\mathsf{MS}_{\mathrm{sub}}}
\newcommand{\MSensIns}{\mathsf{MS}_{\mathrm{ins}}}
\newcommand{\MSensDel}{\mathsf{MS}_{\mathrm{del}}}
\newcommand{\ASensSub}{\mathsf{AS}_{\mathrm{sub}}}
\newcommand{\ASensIns}{\mathsf{AS}_{\mathrm{ins}}}
\newcommand{\ASensDel}{\mathsf{AS}_{\mathrm{del}}}
\newcommand{\CDAWG}{\mathsf{CDAWG}}
\newcommand{\BWT}{\mathsf{BWT}}
\newcommand{\zorig}{\mathit{z}_{\mathrm{77}}}
\newcommand{\zsrorig}{\mathit{z}_{\mathrm{77sr}}}
\newcommand{\zend}{\mathit{z}_{\mathrm{End}}}
\newcommand{\zss}{\mathit{z}_{\mathrm{SS}}}
\newcommand{\zsrss}{\mathit{z}_{\mathrm{SSsr}}}
\newcommand{\zseveneight}{\mathit{z}_{\mathrm{78}}}
\newcommand{\LZorig}{\mathsf{LZ77}}
\newcommand{\LZorigsr}{\mathsf{LZ77sr}}
\newcommand{\LZSS}{\mathsf{LZSS}}
\newcommand{\LZSSsr}{\mathsf{LZSSsr}}
\newcommand{\LZEnd}{\mathsf{LZEnd}}
\newcommand{\LZseveneight}{\mathsf{LZ78}}
\newcommand{\grepair}{\mathit{g}_{\mathrm{rpair}}}
\newcommand{\glongest}{\mathit{g}_{\mathrm{long}}}
\newcommand{\ggreedy}{\mathit{g}_{\mathrm{grdy}}}
\newcommand{\gseq}{\mathit{g}_{\mathrm{seq}}}
\newcommand{\gavl}{\mathit{g}_{\mathrm{avl}}}
\newcommand{\gsimple}{\mathit{g}_{\mathrm{simple}}}
\newcommand{\galpha}{\mathit{g}_{\alpha}}
\newcommand{\gbsc}{\mathit{g}_{\mathrm{bsc}}}
\newcommand{\gis}{\mathit{g}_{\mathrm{is}}}
\newcommand{\Substr}{\mathsf{Substr}}
\newcommand{\occ}{\mathsf{occ}}
\newcommand{\str}{\mathsf{str}}
\newcommand{\polylog}{\mathrm{polylog}}
\newcommand{\rank}{\mathrm{rank}}
\newcommand\subcaption[1]{\begin{center}#1\end{center}}
\begin{document}

\title{Sensitivity of string compressors and repetitiveness measures}

\author[1]{Tooru~Akagi}
\author[1,2]{Mitsuru~Funakoshi}
\author[1,3]{Shunsuke~Inenaga}

\affil[1]{Department of Informatics, Kyushu University, Japan.

\{toru.akagi, mitsuru.funakoshi, inenaga\}@inf.kyushu-u.ac.jp}
\affil[2]{Japan Society for the Promotion of Science}
\affil[3]{PRESTO, Japan Science and Technology Agency, Japan}

\date{}
\maketitle

\begin{abstract}
  The \emph{sensitivity} of a string compression algorithm $C$ asks
  how much the output size $C(T)$ for an input string $T$ can increase
  when a single character edit operation is performed on $T$.
  This notion enables one to measure the robustness of compression algorithms
  in terms of errors and/or dynamic changes occurring in the input string.
  In this paper, we analyze the worst-case multiplicative sensitivity of
  string compression algorithms,
  which is defined by $\max_{T \in \Sigma^n}\{C(T')/C(T) : \ed(T, T') = 1\}$,
  where $\ed(T, T')$ denotes the edit distance between $T$ and $T'$.
  In particular, for the most common versions of the Lempel-Ziv 77 compressors,
  we prove that the worst-case multiplicative sensitivity is only
  a small constant (2 or 3, depending on the version of the Lempel-Ziv 77 and the edit operation type), 
  i.e., the size of the Lempel-Ziv 77 factorizations can be larger by only a small constant factor. 
  We strengthen our upper bound results by presenting matching lower bounds
  on the worst-case sensitivity for all these major versions of the Lempel-Ziv 77 factorizations.
  We generalize these results to the smallest bidirectional scheme $b$.
  In addition, we show that the sensitivity of a grammar-based compressor called GCIS (Grammar Compression by Induced Sorting)
  is also a small constant.
  Further, we extend the notion of the worst-case sensitivity
  to string repetitiveness measures such as
  the smallest string attractor size $\gamma$ and the substring complexity $\delta$,
  and show that the worst-case sensitivity of $\delta$ is also a small constant.
  These results contrast with the previously known related results
  such that the size $z_{\rm 78}$ of the Lempel-Ziv 78 factorization
  can increase by a factor of $\Omega(n^{1/4})$ [Lagarde and Perifel, 2018],
  and the number $r$ of runs in the Burrows-Wheeler transform
  can increase by a factor of $\Omega(\log n)$ [Giuliani et al., 2021] 
  when a character is prepended to an input string of length $n$.
  By applying our sensitivity bounds of $\delta$ or the smallest grammar
  to known results (c.f. [Navarro, 2021]),
  some non-trivial upper bounds for the sensitivities of important string compressors and repetitiveness measures
  including $\gamma$, $r$, LZ-End, RePair, LongestMatch, and AVL-grammar, are derived.
  We also exhibit the worst-case additive sensitivity
  $\max_{T \in \Sigma^n}\{C(T') - C(T) : \ed(T, T') = 1\}$,
  which allows one to observe more details in the changes of the output sizes. \\

  \noindent \textbf{keywords:} lossless data compression, Lempel-Ziv factorizations, run-length BWT, bidirectional scheme, string attractors, substring complexity, grammar compression, edit operations, sensitivity
\end{abstract}

\section{Introduction}

In this paper we introduce a new notion to quantify efficiency
of (lossless) compression algorithms, which we call the \emph{sensitivity} of compressors.
Let $C$ be a compression algorithm and let $C(T)$ denote
the size of the output of $C$ applied to an input text (string) $T$.
Roughly speaking, the sensitivity of $C$ measures how much
the compressed size $C(T)$ can change when a single-character-wise
edit operation is performed at an arbitrary position in $T$.
Namely, the worst-case \emph{multiplicative sensitivity} of $C$ 
is defined by $$\max_{T \in \Sigma^n}\{C(T')/C(T) : \ed(T, T') = 1\},$$
where $\ed(T, T')$ denotes the edit distance between $T$ and $T'$.
This new and natural notion enables one to measure the robustness of compression algorithms
in terms of errors and/or dynamic changes occurring in the input string.
Such errors and dynamic changes are commonly seen in real-world texts such as
DNA sequences and versioned documents.

The so-called highly repetitive sequences, which are strings containing a lot of repeated fragments, are abundant today: Semi-automatically generated strings via M2M communications, and collections of individual genomes of the same/close species are typical examples.
By intuition, such highly repetitive sequences should be highly compressible,
however, statistical compressors are known to fail to capture repetitiveness in a string~\cite{KreftN13}.
Therefore, other types of compressors, such as dictionary-based,
grammar-based, and/or lex-based compressors are often used to compress
highly repetitive sequences~\cite{LarssonM99,SirenVMN08,KuruppuPZ10,HoobinPZ11,Navarro21a}.

Let us recall two examples of well-known compressors:
The \emph{run-length Burrows-Wheeler Transform} (\emph{RLBWT}) is one kind of
compressor that is based on the lexicographically sorted rotations
of the input string.
The number $r$ of equal-character runs in the BWT of a string is known to be very small in practice: Indeed, BWT is used in the bzip2 compression format,
and several compressed data structures which support
efficient queries have been proposed~\cite{GagieNP20,BannaiGI20,NishimotoT21a,NishimotoT21b}.
The \emph{Lempel-Ziv 78 compression} (\emph{LZ78})~\cite{LZ78}
is one of the most fundamental dictionary based compressors
that is a core of in the gif and tiff compression formats.
While LZ78 only allows $\Omega(\sqrt{n})$ compression for any string of length $n$,
its simple structure allows for designing efficient compressed pattern matching algorithms and
compressed self-indices (c.f. \cite{KidaTSMA98,GasieniecR99,Gawrychowski12,Navarro04,FerradaN19} and references therein).

The recent work by Giuliani et al.~\cite{GiulianiILPST21}, however,
  shows that the number $r$ of runs in the BWT of a string of length $n$
  can grow by a multiplicative factor of $\Omega(\log n)$
  when a single character is prepended to the input string\footnote{It is well known that if the string ends with a unique end-marker $\$$, then the number $r$ of runs in the BWT increases additively by at most 2 after a character is prepended to the string. The work by Giuliani et al.~\cite{GiulianiILPST21}, however, shows that this is not the case without $\$$.}.
It is noteworthy that the family of strings discovered by Giuliani et al.~\cite{GiulianiILPST21} satisfies $r(T) = O(1)$ and $r(T') = \Omega(\log n)$,
where $r(T)$ and $r(T')$ respectively denote the number of runs
in the BWTs of $T$ and $T'$.
The other work by Lagarde and Perifel~\cite{LagardeP18} shows that
the size of the dictionary of LZ78,
which is equal to the number of factors in the respective LZ78 factorization,
can grow by a multiplicative factor of $\Omega(n^{1/4})$,
again when a single character is prepended to the input string.
Letting the LZ78 dictionary size be $\zseveneight$,
this multiplicative increase can also be described as $\Omega(\zseveneight^{3/2})$.
Lagarde and Perifel call the aforementioned phenomenon on LZ78 as ``one-bit catastrophe''.
Based on these known results, here we introduce the three following classes of string compressors
depending on their sensitivity.
\begin{enumerate}
  \item[(A)] Those whose sensitivity is $O(1)$;
  \item[(B)] Those whose sensitivity is $\polylog(n)$;
  \item[(C)] Those whose sensitivity is proportional to $n^c$ with some constant $0 < c \leq 1$.
\end{enumerate}
By generalizing the work of Lagarde and Perifel~\cite{LagardeP18},
we say that Class (C) is catastrophic in terms of the sensitivity.
Class (B) may not be catastrophic but the change in the compression size can still be quite large just for a mere single character edit operation to the input string.
Class (A) is the most robust against one-character edit operations among the three classes.
Recall that LZ78 $\zseveneight$ belongs to Class (C),
while it is not clear yet whether RLBWT $r$ belongs to Class (B) or (C)
(note that the work of Giuliani et al.~\cite{GiulianiILPST21} showed only a lower bound $\Omega(\log n)$).
In this paper, we show that the other major dictionary compressors,
the \emph{Lempel-Ziv 77 compression family}, belong to Class (A),
and thus such a catastrophe never happens with this family.
The \emph{LZ77} compression~\cite{LZ77}, which is the greedy parsing of
the input string $T$ where each factor of length more than one
refers to a previous occurrence to its left,
is the most important dictionary-based compressor both in theory and in practice.
    The LZ77 compression without self-references (resp. with self-references) can achieve $O(\log n)$ compression (resp. $O(1)$ compression) in the best case as opposed to the $\Omega(\sqrt{n})$ compression by the LZ78 counterpart,
    and the LZ77 compression is a core of common lossless compression formats including gzip, zip, and png.
In addition, its famous version called \emph{LZSS} (Lempel-Ziv-Storer-Szymanski)~\cite{StorerS82},
has numerous applications in string processing,
including finding repetitions~\cite{Crochemore84,KolpakovK99,GusfieldS04,BannaiIK17},
approximation of the smallest grammar-based compression~\cite{Rytter03,CharikarLLPPSS05},
and compressed self-indexing~\cite{BelazzouguiGGKO15,BilleEGV18,Navarro19,BelazzouguiCGGK21}, just to mention a few.

We show that the multiplicative
sensitivity of LZ77 with/without self-references
is at most $2$, namely, the number of factors in the respective LZ77 factorization can increase by at most a factor of $2$
for all types of edit operations (substitution, insertion, deletion of a character).
Then, we prove that the multiplicative sensitivity of 
LZSS with/without self-references is at most 3 for substitutions and deletions, and that it is at most 2 for insertions.
We also present matching lower bounds for the multiplicative sensitivity of
LZ77/LZSS with/without self-references
for all types of edit operations as well.
In addition, the multiplicative sensitivity of RLBWT $r$ turns out to be $O(\log r \log n)$,
which implies that $r$ belongs to Class (B)~\footnote{This $O(\log r \log n)$ upper bound for the sensitivity of $r$ follows from our result on the sensitivity of $\delta$ and our Lemma~\ref{lem:squeeze}, and from the known results between $r$ and $\delta$~\cite{kempa2020resolution,KociumakaNP20}.}
These results suggest that,
LZ77 and LZSS of Class (A) may better capture the repetitiveness of strings
  than RLBWT of Class (B) and LZ78 of Class (C),
since a mere single character edit operation should not much influence
the repetitiveness of a sufficiently long string.
We also consider the \emph{smallest bidirectional scheme}~\cite{StorerS82}
that is a generalization of the LZ family where each factor
can refer to its other occurrence to its left or right.
It is shown that for all types of edit operations,
the multiplicative sensitivity of the size $b$
of the smallest bidirectional scheme is at most 2,
and that there exist strings for which
the multiplicative sensitivity of $b$ is 2 with insertions and substitutions,
and it is 1.5 with deletions.
The \emph{smallest grammar} problem~\cite{CharikarLLPPSS05} is a famous NP-hard problem that asks to compute a grammar of the smallest size $g^*$ that derives only the input string.
We show that the multiplicative sensitivity of the smallest grammar size $g^*$ is at most 2.
Further, we extend the notion of the worst-case multiplicative sensitivity
to string repetitiveness measures such as
the size $\gamma$ of the \emph{smallest string attractor}~\cite{KempaP18}
and the \emph{substring complexity} $\delta$~\cite{KociumakaNP20},
both receiving recent attention~\cite{KempaPPR18,Prezza19,KutsukakeMNIBT20,MantaciRRRS21,ChristiansenEKN21}.
We prove
that the value of $\delta$ can increase by at most a factor of $2$
for substitutions and insertions,
and by at most a factor of 1.5 for deletions.
We show these upper bounds are also tight by presenting
matching lower bounds for the sensitivity of $\delta$.
We also present non-trivial upper and lower bounds for the sensitivity of $\gamma$.

As is mentioned above,
the work by Lagarde and Perifel~\cite{LagardeP18} considered
only the case of prepending a character to the string for the
multiplicative sensitivity of LZ78.
We show that the same lower bounds hold
for the multiplicative sensitivity of LZ78 in the case of substitutions and deletions,
and insertions inside the string,
by using a completely different instance from the one used in~\cite{LagardeP18}.

Studying the relations between different string repetitiveness measures/string compressor output sizes
has attracted much attention in the last two decades (for details see the survey~\cite{Navarro21a}).
Combining these known relations and our new sensitivity upper bounds mentioned above
gives us a kind of ``sandwich'' argument, which is formalized in Lemma~\ref{lem:squeeze}.
Using this lemma, some non-trivial upper bounds for the sensitivity of other measures can be driven,
including the LZ-End compressor~\cite{KreftN13}
and grammar-based compressors RePair~\cite{LarssonM99}, Longest-Match~\cite{KiefferY00},
Greedy~\cite{ApostolicoL00}, Sequential~\cite{YangK00}, LZ78~\cite{LZ78},
$\alpha$-balanced grammars~\cite{CharikarLLPPSS05},
AVL-grammars~\cite{Rytter03}, and Simple~\cite{Jez16}.
Theses upper bound results are reported as corollaries in the following sections.

Moreover, we consider the sensitivity of other compressors
and repetitiveness measures including Bisection~\cite{NelsonKC95},
GCIS~\cite{NunesLGAN18,NunesLGAN20}, and CDAWGs~\cite{BlumerBHME87}.

Table~\ref{tbl:multiplicative_sensitivity} summarizes our results on the multiplicative sensitivity of the string compressors and repetitiveness measures.

\begin{table}[bthp]
  \caption{Multiplicative sensitivity of the string compressors and string repetitiveness measures studied in this paper and in the literature, where $n$ is the input string length and $\Sigma$ is the alphabet. In the table ``sr'' stands for ``with self-references''.
  The upper bounds marked with ``$\dagger$'' are obtained by applying known results~\cite{KempaP18, KociumakaNP20, kempa2020resolution, KreftN13, KempaS22, CharikarLLPPSS05, Rytter03, Jez16} and our results on the sensitivity of the substring complexity $\delta$ or the smallest grammar $g^*$ to Lemma~\ref{lem:squeeze}.
  }
  \label{tbl:multiplicative_sensitivity}
  \centering
  \vspace{4mm}
  \begin{tabular}{l|c||c|c}
    \hline
    compressor/repetitiveness measure & edit type & upper bound & lower bound \\
    \hline \hline
    \multirow{2}{*}{Substring Complexity $\delta$} & ins./subst. & 2  & 2  \\ \cline{2-4}
    & deletion & 1.5 & 1.5  \\ \hline
    Smallest String Attractor $\gamma$ & all & $O(\log n)^\dagger$ 
    & 2 \\ \hline
    \multirow{2}{*}{RLBWT $r$} & insertion & \multirow{2}{*}{$O(\log n \log r)^\dagger$} & $\Omega(\log n)$~\cite{GiulianiILPST21} \\ \cline{2-2} \cline{4-4}
    & del./subst. & & - \\ \hline
    \multirow{2}{*}{Bidirectional Scheme $b$} & ins./subst. & 2  & 2  \\ \cline{2-4}
    & deletion & 2 & 1.5  \\ \hline
    LZ77 $\zorig$ & \multirow{2}{*}{all} & \multirow{2}{*}{2} & \multirow{2}{*}{2} \\ 
    LZ77sr $\zsrorig$ & & & \\ \hline 
    LZSS $\zss$ & del./subst. & 3  & 3 \\ \cline{2-4}
    LZSSsr $\zsrss$ & insertion & 2 & 2 \\ \hline
    \multirow{2}{*}{LZ78 $\zseveneight$} & insertion & \multirow{2}{*}{$O((n/ \log n)^{\frac{2}{3}})^\dagger$} & $\Omega(n^{\frac{1}{4}})$~\cite{LagardeP18} \\ \cline{2-2} \cline{4-4}
    & del./subst. &  & $\Omega(n^{\frac{1}{4}})$ \\ \hline
    LZ-End $\zend$ & all & $O(\log^2 (n/\delta))^\dagger$ & 2 \\ \hline
    Smallest grammar $g^*$ & all & 2 & - \\ \hline
    Repair $\grepair$ & \multirow{3}{*}{all} & \multirow{3}{*}{$O((n/ \log n)^{\frac{2}{3}})^\dagger$} & \multirow{3}{*}{-} \\ \cline{1-1}
    Longest match $\glongest$ &  & & \\ \cline{1-1}
    Greedy $\ggreedy$ & & & \\ \hline
    Sequential $\gseq$ & all & $O((n/ \log n)^{\frac{3}{4}})^\dagger$ & - \\ \hline
$\alpha$-balanced grammar $\galpha$ & \multirow{3}{*}{all} & \multirow{3}{*}{$O(\log (n / g^*))^\dagger$} & \multirow{3}{*}{-} \\ \cline{1-1}
    AVL grammar $\gavl$ &  & & \\ \cline{1-1}
    Simple $\gsimple$ & & & \\ \hline
    \multirow{2}{*}{Bisection $\gbsc$} & substitution & 2 & 2 \\ \cline{2-4}
    & ins./del. & $|\Sigma|+1$ & $|\Sigma|$ \\ \hline
    GCIS $\gis$ & all & 4 & 4 \\ \hline
    CDAWG $e$ & all & - & 2 \\ \hline
  \end{tabular}
\end{table}

In addition to the afore-mentioned multiplicative sensitivity,
we also introduce the worst-case \emph{additive sensitivity},
which is defined by $$\max_{T \in \Sigma^n}\{C(T') - C(T) : \ed(T, T') = 1\},$$
for all the string compressors/repetitiveness measures $C$ dealt in this paper.
We remark that the additive sensitivity
allows one to observe and evaluate more details in the changes of the output sizes,
as summarized in Table~\ref{tbl:additive_sensitivity}.
For instance, we obtain \emph{strictly tight upper and lower bounds}
for the additive sensitivity of LZ77 with and without self-references
in the case of substitutions and insertions.
Studying the additive sensitivities of string compressors is motivated by approximation of the Kolmogorov complexity.
Let $K(T)$ denote the Kolmogorov complexity of string $T$,
that is the length of a shortest program that produces $T$.
While $K(T)$ is known to be uncomputable,
the additive sensitivity $K(T')-K(T)$ for deletions is at most $O(\log n)$ \emph{bits}, since it suffices to add ``Delete the $i$th character $T[i]$ from $T$.'' at the end of the program.
Similarly, the additive sensitivity of $K$ for insertions and substitutions 
is at most $O(\log n + \log \sigma)$ \emph{bits}, where $\sigma$ is the alphabet size.
Therefore, a ``good approximation'' of the Kolmogorov complexity $K$ should have small additive sensitivity.

\begin{table}[thp]
  \caption{Additive sensitivity of the string compressors and string repetitiveness measures studied in this paper, where $n$ is the input string length and $\Sigma$ is the alphabet. Some upper/lower bounds are described in terms of both the measure and $n$. In the table ``sr'' stands for ``with self-references''.
  The upper bounds marked with ``$\dagger$'' are obtained by applying known results~\cite{KempaP18, KociumakaNP20, kempa2020resolution, KreftN13, KempaS22, CharikarLLPPSS05, Rytter03, Jez16} and our results on the sensitivity of the substring complexity $\delta$ or the smallest grammar $g^*$ to Lemma~\ref{lem:squeeze}.}
  \label{tbl:additive_sensitivity}
  \centering
  \vspace{4mm}
  \scalebox{0.88}{
    \begin{tabular}{l|c||c|c|c|c}
    \hline
    compressor/ & \multirow{2}{*}{edit type} & \multicolumn{2}{c|}{\multirow{2}{*}{upper bound}} & \multicolumn{2}{c}{\multirow{2}{*}{lower bound}} \\
    repetitiveness measure & & \multicolumn{2}{c|}{} & \multicolumn{2}{c}{} \\ \hline \hline
    Substring Complexity $\delta$ & all & \multicolumn{2}{c|}{1}  & \multicolumn{2}{c}{1} \\ \hline
    Smallest String Attractor $\gamma$ & all & \multicolumn{2}{c|}{$O(\delta \log n)^\dagger$
    } & $\gamma-3$ & $\Omega(\sqrt{n})$ \\ \hline
    \multirow{2}{*}{RLBWT $r$} & insertion & \multicolumn{2}{c|}{\multirow{2}{*}{$O(r \log n \log r)^\dagger$}} & \multirow{2}{*}{-} & $\Omega(\log n)$~\cite{GiulianiILPST21} \\ \cline{2-2} \cline{6-6}
    & del./subst. & \multicolumn{2}{c|}{} & & - \\ \hline
    Bidirectional Scheme $b$ & all & \multicolumn{2}{c|}{$b+2$} & $b/2-3$ & $\Omega(\sqrt{n})$ \\ \hline
    \multirow{2}{*}{LZ77 $\zorig$} & subst./ins. & \multicolumn{2}{c|}{$\zorig - 1$} & $\zorig - 1$ & \multirow{2}{*}{$\Omega(\sqrt{n})$} \\ \cline{2-5} & deletion & \multicolumn{2}{c|}{$\zorig - 2$} & $\zorig - 2$ & \\ \hline
\multirow{2}{*}{LZ77sr $\zorig$} & subst./ins. & \multicolumn{2}{c|}{\multirow{2}{*}{$\zsrorig$}} & $\zsrorig$ & \multirow{2}{*}{$\Omega(\sqrt{n})$} \\ \cline{2-2} \cline{5-5}
    & deletion & \multicolumn{2}{c|}{} & $\zsrorig - 2$ & \\ \hline
    \multirow{2}{*}{LZSS $\zss$} & del./subst. & \multicolumn{2}{c|}{$2\zss-2$}  & $2\zss - \Theta(\sqrt{\zss})$ & \multirow{2}{*}{$\Omega(\sqrt{n})$} \\ \cline{2-5}
    & insertion & \multicolumn{2}{c|}{$\zss$} & $\zss - \Theta(\sqrt{\zss})$ &  \\ \hline
    \multirow{2}{*}{LZSSsr $\zsrss$} & del./subst. & \multicolumn{2}{c|}{$2\zsrss$} & $2\zsrss - \Theta(\sqrt{\zsrss})$ &  \multirow{2}{*}{$\Omega(\sqrt{n})$} \\ \cline{2-5}
    & insertion & \multicolumn{2}{c|}{$\zsrss+1$} & $\zsrss - \Theta(\sqrt{\zsrss})$ & \\ \hline
    \multirow{2}{*}{LZ78 $\zseveneight$} & insertion & \multicolumn{2}{c|}{\multirow{2}{*}{$O(g^* \cdot (n/ \log n)^{\frac{2}{3}})^\dagger$}} & $\Omega((\zseveneight)^{\frac{3}{2}})$~\cite{LagardeP18} & $\Omega(n/\log n)$~\cite{LagardeP18} \\ \cline{2-2} \cline{5-6}
    & del./subst. & \multicolumn{2}{c|}{} & $\Omega((\zseveneight)^{\frac{3}{2}})$ & $\Omega(n^{\frac{3}{4}})$ \\ \hline
    LZ-End $\zend$ & all & \multicolumn{2}{c|}{$O(\zend \log^2 (n/\delta))^\dagger$} & $\zend - \Theta(\sqrt{\zend})$ & $\Omega(\sqrt{n})$ \\ \hline
    Smallest grammar $g^*$ & all & \multicolumn{2}{c|}{$g^*$} & \multicolumn{2}{c}{-} \\ \hline
    Repair $\grepair$ & \multirow{3}{*}{all} & \multicolumn{2}{c|}{\multirow{3}{*}{$O(g^* \cdot (n/ \log n)^{\frac{2}{3}})^\dagger$}} & \multicolumn{2}{c}{\multirow{3}{*}{-}} \\ \cline{1-1}
    Longest match $\glongest$ & & \multicolumn{2}{c|}{} & \multicolumn{2}{c}{} \\ \cline{1-1}
    Greedy $\ggreedy$ & & \multicolumn{2}{c|}{} & \multicolumn{2}{c}{} \\ \hline
    Sequential $\gseq$ & all & \multicolumn{2}{c|}{$O(g^* \cdot (n/ \log n)^{\frac{3}{4}})^\dagger$} & \multicolumn{2}{c}{-} \\ \hline    
$\alpha$-balanced grammar $\galpha$ & \multirow{3}{*}{all} & \multicolumn{2}{c|}{\multirow{3}{*}{$O(g^* \log (n / g^*))^\dagger$}} & \multicolumn{2}{c}{\multirow{3}{*}{-}} \\ \cline{1-1}
    AVL grammar $\gavl$ & & \multicolumn{2}{c|}{} & \multicolumn{2}{c}{} \\ \cline{1-1}
    Simple $\gsimple$ & & \multicolumn{2}{c|}{} & \multicolumn{2}{c}{} \\ \hline    
    \multirow{2}{*}{Bisection $\gbsc$} & substitution & $\gbsc$ & $\lceil \log_2 n \rceil$ & $\gbsc-4$ & $2 \log_2 n - 4$ \\ \cline{2-6}
    & ins./del. & \multicolumn{2}{c|}{$|\Sigma| \gbsc$} & $\Omega(|\Sigma|\gbsc)$ & $\Omega(|\Sigma|^2 \log \frac{n}{|\Sigma|})$ \\ \hline 
    GCIS $\gis$ & all & \multicolumn{2}{c|}{$3\gis$} & $3\gis -29$ & $(3/4)n+1$ \\ \hline
    CDAWG $e$ & all & \multicolumn{2}{c|}{-} & $e$ & $n$ \\ \hline
  \end{tabular}
  }
\end{table}

\subsection{Related work}

\subsubsection{String monotonicity}
A string repetitiveness measure $C$ is called
\emph{monotone} if, for any string $T$ of length $n$,
$C(T') \leq C(T)$ holds
with any of its prefixes $T' = T[1..i]$ and suffixes $T' = T[j..n]$~\cite{KociumakaNP20}.
Kociumaka et al~\cite{KociumakaNP20} pointed out that
$\delta$ is monotone, and posed a question whether $\gamma$ or
the size $b$ of the smallest bidirectional macro scheme~\cite{StorerS82}
are monotone.
This monotonicity for $C$ can be seen as a special and extended
case of our sensitivity for deletions,
namely, if we restrict $T'$ to be the string
obtained by deleting either the first or the last character from $T$,
then it is equivalent to asking whether
$\max_{T \in \Sigma}\{C(T')/C(T) : T' \in\{ T[1..n-1], T' = T'[2..n]\}\} \leq 1$.
Mantaci et al.~\cite{MantaciRRRS21} proved that
$\gamma$ is not monotone, by showing a family of strings $T$
such that $\gamma(T) = 2$ and $\gamma(T') = 3$ with $T' = T[1..n-1]$,
which immediately leads to a lower bound $3/2 = 1.5$ for 
the multiplicative sensitivity of $\gamma$.
In this paper, we present a new lower bound for the multiplicative sensitivity of $\gamma$, which is $2$.
Mitsuya et al.~\cite{MitsuyaNIBT2021} considered the monotonicity of LZ77 without self-references $\zorig$
presented a family of strings $T$ for which
$\zorig(T') / \zorig(T) \approx 4/3$ with $T' = [2..n]$.
Again, our matching upper and lower bounds for the
multiplicative sensitivity of $\zorig$, which are both $2$,
improve this $4/3$ bound.

\subsubsection{Comparison to sensitivity of other algorithms}

The notion of the sensitivity of (general) algorithms was
first introduced by Varma and Yoshida~\cite{VarmaY21}.
They studied the \emph{average} sensitivity
of well-known graph algorithms,
and presented interesting lower and upper bounds on
the expected number of changes in the output of an algorithm $A$,
when a randomly chosen edge is deleted from the input graph $G$.
The worst-case sensitivity of a graph algorithm
for edge-deletions and vertex-deletions
was considered by Yoshida and Zhou~\cite{YoshidaZ21}.

As opposed to these existing work on the sensitivity of graph algorithms,
our notion of the sensitivity of string compressors focuses
on the \emph{size} of their compressed outputs and does not formulate
the perturbation of their structural changes.
This is because the primary task of data compression is
to represent the input data with as little memory as possible,
and the structural changes of the compressed outputs can be
of secondary importance.

We remark that most instances of $\Sigma^n$ are not compressible,
or in other words, a randomly chosen string $T$ from $\Sigma^n$ is not compressible.
Such a string $T$ does not become highly compressible just after a one-character edit operation, and hence $C(T)$ and $C(T')$ are expected to be almost the same.
Therefore, considering the average sensitivity of 
string compressors and repetitiveness measures does not seem worth discussing,
and this is the reason why we focus on the worst-case sensitivity of string compressors and repetitiveness measures.

Still, our notion permits one to evaluate the worst-case size changes of
several known \emph{compressed string data structures}
in the dynamic setting, as will be discussed in the following subsection.

\subsubsection{Compressed string data structures}
A compressed string data structure is built on a compressed representation of the string and supports efficient queries such as pattern matching and substring extraction within compressed space. Since the string compressors and string repetitiveness measures that we deal with in this paper are models for highly repetitive strings, we mention some compressed string indexing structures for highly repetitive sequences below.

The Block tree of a string of length $n$ uses $O(\zss \log (n / \zss))$ words of space and supports random access queries in $O(\log (n / \zss))$ time.
Navarro~\cite{Navarro19} proposed an LZ-based indexing structure that uses $O(\zss \log (n/\zss))$ words of space and counts the number of occurrences of a query pattern in the text string in $O(m \log^{2+\epsilon} n)$ time, where $m$ is the length of the pattern and $\epsilon > 0$ is any constant.
An $O(\log n)$-time longest common extension (LCE) data structure that takes $O(\zss \log(n / \zss))$ space and is based on Recompression~\cite{Jez16} was proposed by I~\cite{I17}.
Nishimoto et al.~\cite{NishimotoIIBT20} presented a dynamic $O(\min\{\zss \log n \log^* n, n\})$-space compressed data structure that supports pattern matching and substring insertions/deletions in $O(m \cdot \polylog(n))$ time, where $m$ is the length of the pattern/substring.
Kociumaka et al.~\cite{KociumakaNP20} proposed a compressed indexing structure that uses $O(\delta \log(n / \delta))$ words of space, performs random access in $O(\log(n / \delta))$ time, and finds all the $\occ$ occurrences of a given pattern of length $m$ in $O(m \log n + \occ \log^{\epsilon} n)$ time.
Very recently, Kociumaka et al.~\cite{abs-2206-00781} proposed an improved data structure of $O(\delta \log(n / \delta))$-space that supports pattern matching queries in $O(m + (\occ + 1) \log^{\epsilon} n)$ time.
Two independent compressed indexing structures, which are based on grammar compression called GCIS (Grammar Compression by Induced Sorting)~\cite{NunesLGAN18} have been proposed~\cite{AkagiKNIBT21,Diaz-DominguezN21}.
Our constant upper bounds on the multiplicative sensitivity for $\zss$, $\delta$, and $\gis$ imply that the afore-mentioned compressed data structures retain their asymptotic space complexity even after one-character edit operation at an arbitrary position, though they may incur a certain amount of structural changes.

The r-index~\cite{GagieNP20}, the refined r-index~\cite{BannaiGI20},
and the OptBWTR~\cite{NishimotoT21a} are efficient indexing structures which are built on the RLBWT and use $O(r)$ words of space. The result by Giuliani et al.~\cite{GiulianiILPST21}, which uses a family of strings of length $n$ with $r = O(1)$, shows that the space complexity of these indexing structures can grow from $O(1)$ words of space to $O(\log n)$ words of space, after appending a character to the string. In turn, our upper bound for the sensitivity of $r$ implies that after a one-character edit operation, the space usage of these indexing structures is bounded by $O(r \log r \log n)$ for any string of length $n$.

There also exist compressed data structures based on other string compressors and/or repetitiveness measures:
Kempa and Prezza~\cite{KempaP18} presented an $O(\gamma \tau \log_{\tau} (n / \gamma))$-space data structure that allows for extracting substrings of length-$\ell$ in $O(\log_\tau (n / \gamma) + \ell \log(\sigma)/\omega)$ time, where $\tau \geq 2$ is an integer parameter, $\sigma$ is the alphabet size, and $\omega$ is the machine-word size in the RAM model.
Navarro and Prezza~\cite{NavarroP19} gave a data structure of size $O(\gamma \log (n / \gamma))$ that supports pattern matching queries in $O(m \log n + \occ \log^\epsilon n)$ time.
Christiansen et al.~\cite{ChristiansenEKN21} introduced a compressed indexing structure that occupies $O(\gamma \log (n / \gamma) \log^\epsilon n)$ space and finds all the $\occ$ pattern occurrences in optimal $O(m+ \occ)$ time (for other trade-offs between the space and the query time are also reported, see~\cite{ChristiansenEKN21}).
Gawrychowski et al.~\cite{GawrychowskiKKL18} presented a data structure for maintaining a dynamic set of strings, which is based on Recompression by Je\.{z}~\cite{Jez16}.
Kempa and Saha~\cite{KempaS22} developed a compressed data structure
that occupies $O(\zend)$ space and supports random access and LCE queries in $O(\polylog(n))$ time.
A compressed indexing structure that can be built directly from
the LZ77-compressed text is also known~\cite{kempa2020resolution,abs-1910-10631}.
For other compressed string indexing structures, see this survey~\cite{Navarro21b}.

\subsection{Paper organization}
Section~\ref{sec:preliminaries} introduces necessary notations.
We then present the worst-case sensitivity of
string compressors and repetitiveness measures in the increasing order
of their respective sizes:
from $\delta$ to $\gamma$, LZ77 family, LZ-End, and grammars: 
Section~\ref{sec:delta} deals with the substring complexity $\delta$;
Section~\ref{sec:gamma} deals with the smallest string attractor $\gamma$,
Section~\ref{sec:rlbwt} deals with the RLBWT $r$,
Section~\ref{sec:bs} deals with the smallest bidirectional scheme $b$,
Section~\ref{sec:lz77} deals with the LZ77 with/without self-references $\zorig$ and $\zsrorig$;
Section~\ref{sec:lzss} deals with the LZSS with/without self-references $\zss$ and $\zsrss$.
Section~\ref{sec:lzend} deals with the LZ-End $\zend$;
Section~\ref{sec:lz78} deals with the LZ78 $\zseveneight$;
Section~\ref{sec:grammar} deals with the smallest grammar $g^*$, and its applications to practical and/or approximation grammars
RePair $\grepair$, LongestMatch $\glongest$, Greedy $\ggreedy$, Sequential $\gseq$, LZ78 $\zseveneight$,
$\alpha$-balanced grammar $\galpha$, AVL-grammar $\gavl$, and Simple grammar $\gsimple$.
Section~\ref{sec:gcis} deals with the GCIS grammar $\gis$;
Section~\ref{sec:bisection} deals with the Bisection grammar $\gbsc$;
Section~\ref{sec:cdawg} deals with the CDAWG size $e$.
In Section~\ref{sec:conclusions} we conclude the paper and list
several open questions of interest.
 \section{Preliminaries}
\label{sec:preliminaries}

\subsection{Strings, factorizations, and grammars}

Let $\Sigma$ be an {\em alphabet} of size $\sigma$.
An element of $\Sigma^*$ is called a {\em string}.
For any non-negative integer $n$,
let $\Sigma^n$ denote the set of strings of length $n$ over $\Sigma$.
The length of a string $T$ is denoted by $|T|$.
The empty string $\varepsilon$ is the string of length 0,
namely, $|\varepsilon| = 0$.
The $i$-th character of a string $T$ is denoted by
$T[i]$ for $1 \leq i \leq |T|$,
and the \emph{substring} of a string $T$ that begins at position $i$ and
ends at position $j$ is denoted by $T[i..j]$ for $1 \leq i \leq j \leq |T|$.
For convenience, let $T[i..j] = \varepsilon$ if $j < i$.
Substrings $T[1..j]$ and $T[i..|T|]$
are respectively called a \emph{prefix} and a \emph{suffix} of $T$.

A factorization of a non-empty string $T$
is a sequence $f_1, \ldots, f_x$ of non-empty substrings of $T$
such that $T = f_1 \cdots f_x$.
Each $f_i$ is called a \emph{factor}.
The \emph{size} of the factorization is the number $x$ of factors in the factorization.

A context-free grammar $\mathcal{G}$ which generates only a single string $T$
is called a \emph{grammar compression} for $T$.
The \emph{size} of $\mathcal{G}$ is the total length of the
right-hand sides of all the production rules in $\mathcal{G}$.
The \emph{height} of $\mathcal{G}$ is the height of the derivation tree of $\mathcal{G}$.

\subsection{Worst-case sensitivity of compressors and repetitiveness measures}

For a string compression algorithm $C$ and an input string $T$,
let $C(T)$ denote the size of the compressed representation of $T$
obtained by applying $C$ to $T$.
For convenience, we use the same notation
when $C$ is a string repetitiveness measure,
namely, $C(T)$ is the value of the measure $C$ for $T$.

Let us consider the following edit operations on strings:
character substitution ($\mathrm{sub}$), 
character insertion ($\mathrm{ins}$),
and character deletion ($\mathrm{del}$).
For two strings $T$ and $S$, let $\ed(T,S)$ denote
the \emph{edit distance} between $T$ and $S$, namely,
$\ed(T,S)$ is the minimum number of edit operations
that transform $T$ into $S$.

Our interest in this paper is: ``How much can the compression size
or the repetitiveness measure size change
when a single-character-wise edit operation is performed on a string?''
To answer this question,
for a given string length $n$,
we consider an arbitrarily fixed string $T$ of length $n$
and all strings $T'$ that can be obtained by applying a single edit operation to $T$,
that is, $\ed(T,T') = 1$.
We define the worst-case \emph{multiplicative sensitivity} of $C$
w.r.t. a substitution, insertion, and deletion as follows:
\begin{eqnarray*}
  \MSensSub(C,n) & = & \max_{T \in \Sigma^n}\{C(T')/C(T) : T' \in \Sigma^n, \ed(T,T') = 1\},\\
  \MSensIns(C,n) & = & \max_{T \in \Sigma^{n}}\{C(T')/C(T) : T' \in \Sigma^{n+1}, \ed(T,T') = 1\},\\
  \MSensDel(C,n) & = & \max_{T \in \Sigma^n}\{C(T')/C(T) : T' \in \Sigma^{n-1}, \ed(T,T') = 1\}.
\end{eqnarray*}

We also consider the worst-case \emph{additive sensitivity} of $C$
w.r.t. a substitution, insertion, and deletion, as follows:
\begin{eqnarray*}
  \ASensSub(C,n) & = & \max_{T \in \Sigma^n}\{C(T')-C(T) : T' \in \Sigma^n, \ed(T,T') = 1\},\\
  \ASensIns(C,n) & = & \max_{T \in \Sigma^n}\{C(T')-C(T) : T' \in \Sigma^{n+1}, \ed(T,T') = 1\},\\
  \ASensDel(C,n) & = & \max_{T \in \Sigma^n}\{C(T')-C(T) : T' \in \Sigma^{n-1}, \ed(T,T') = 1\}.
\end{eqnarray*}

We remark that, in general, $C(T')$ can be larger than $C(T)$
even when $T'$ is obtained by a character deletion from $T$ (i.e. $|T'| = n-1$).
Such strings $T$ are already known for
the Lempel-Ziv 77 factorization size $z$ when $T' = T[2..n]$~\cite{MitsuyaNIBT2021},
or for the smallest string attractor size $\gamma$ when $T' = T[1..n-1]$~\cite{MantaciRRRS21}.

The above remark implies that in general the multiplicative/additive sensitivity for insertions and deletions
may not be symmetric and therefore they need to be discussed separately for some $C$.
Note, on the other hand, that the maximum difference between $C(T')$ and $C(T)$
when $|T'| = n-1$ (deletion) and $C(T')-C(T) < 0$ is equivalent to $\ASensIns(C,n-1)$,
and symmetrically
the maximum difference of $C(T')$ and $C(T)$ when $|T'| = n+1$ (insertion) and $C(T')-C(T) < 0$ is equivalent to $\ASensDel(C,n+1)$,
with the roles of $T$ and $T'$ exchanged.
Similar arguments hold for the multiplicative sensitivity with insertions/deletions.
Consequently, it suffices to consider $\MSensIns(C,n)$, $\MSensDel(C,n)$, $\ASensIns(C,n)$, $\ASensDel(C,n)$ for insertions/deletions.

Consider two measures $\alpha$ and $\beta$.
An upper bound for the multiplicative sensitivity of $\beta$ can readily be derived in the some cases, as follows:
\begin{lemma} \label{lem:squeeze}
  Let $T$ be any string of length $n$ and let $T'$ be any string with $\ed(T,T') = 1$.
  If the following conditions:
  \begin{itemize}
    \item $\alpha(T')/\alpha(T) = O(1)$;
    \item $\alpha(T) \leq \beta(T)$;
    \item $\beta(T) = O(\alpha(T) \cdot f \cdot (n, \alpha(T)))$, where $f$ is a function such that for any constant $c$ there exists a constant $c'$ satisfying $f(n, c \cdot \alpha(T)) \leq c' \cdot f(n, \alpha(T))$.
  \end{itemize}
  all hold, then we have the following upper bounds (1), (2), and (3) for the sensitivity of $\beta$:
  \begin{enumerate}
  \item[(1)] $\MSensSub(\beta,n) = O(f(n,\alpha))$ and $\ASensSub(\beta, n) = O(\alpha \cdot f(n, \alpha))$;
  \item[(2)] $\MSensIns(\beta,n) = O(f(n,\alpha))$ and $\ASensIns(\beta, n) = O(\alpha \cdot f(n, \alpha))$;
  \item[(3)] $\MSensDel(\beta,n) = O(f(n,\alpha))$ and $\ASensDel(\beta, n) = O(\alpha \cdot f(n, \alpha))$.
  \end{enumerate}
\end{lemma}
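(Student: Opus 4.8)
The plan is to chain the three hypotheses together in the obvious way, being careful about how the function $f$ interacts with the constant blow-up in $\alpha$. I would prove all three cases (substitution, insertion, deletion) at once, since the argument is identical; the edit type only enters through the first hypothesis $\alpha(T')/\alpha(T) = O(1)$, which is assumed uniformly.

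First I would fix $T$ of length $n$ and $T'$ with $\ed(T,T') = 1$, and write $c$ for the constant with $\alpha(T') \leq c\,\alpha(T)$ (available from the first bullet). Then, using the third bullet applied to the edited string $T'$ — whose length is $n$, $n+1$, or $n-1$, all of which are $\Theta(n)$, so $f(|T'|,\cdot)$ and $f(n,\cdot)$ agree up to constants if we also assume $f$ is well-behaved in its first argument, or more simply we just carry $|T'|$ through — we get $\beta(T') = O\bigl(\alpha(T')\cdot f(|T'|,\alpha(T'))\bigr)$. Substituting $\alpha(T') \leq c\,\alpha(T)$ and invoking the growth condition on $f$ (for the constant $c$ there is $c'$ with $f(n, c\,\alpha(T)) \leq c'\, f(n,\alpha(T))$), this becomes $\beta(T') = O\bigl(\alpha(T)\cdot f(n,\alpha(T))\bigr)$. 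Finally, dividing by $\beta(T) \geq \alpha(T)$ (the second bullet) gives $\beta(T')/\beta(T) = O\bigl(f(n,\alpha(T))\bigr)$, which is exactly the claimed multiplicative bound. For the additive bound I would instead write $\beta(T') - \beta(T) \leq \beta(T') = O\bigl(\alpha(T)\cdot f(n,\alpha(T))\bigr)$, noting $\beta(T) \geq 0$; this yields $\ASens_{*}(\beta,n) = O(\alpha\cdot f(n,\alpha))$ as stated.

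Taking the maximum over all $T \in \Sigma^n$ and all valid $T'$ then gives the bounds on $\MSens_{*}$ and $\ASens_{*}$, with the three subscripts $\mathrm{sub},\mathrm{ins},\mathrm{del}$ handled by the three length regimes for $|T'|$. One subtlety worth spelling out: in cases (2) and (3) the edited string has length $n \pm 1$, so strictly we are bounding $\beta(T')$ via the third hypothesis instantiated at a string of length $n\pm 1$; since the hypothesis as stated quantifies over "any string of length $n$", I would either restate it for strings of length $\Theta(n)$ or observe that $f(n\pm 1, \cdot)$ differs from $f(n,\cdot)$ by only an additive/multiplicative constant for all the concrete $f$ used later (logs, fractional powers), so this is harmless.

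The main obstacle is not mathematical depth — the argument is a two-line chase — but rather getting the quantifier bookkeeping on $f$ exactly right: one must apply the growth condition $f(n, c\cdot\alpha(T)) \leq c'\cdot f(n,\alpha(T))$ at the \emph{right} argument (namely $\alpha(T)$, after having bounded $\alpha(T')$ by $c\,\alpha(T)$), and one must be sure that the hidden constant in $\beta(T') = O(\alpha(T')\cdot f(|T'|,\alpha(T')))$ does not depend on $T'$. Both are genuinely fine under the stated hypotheses, but they are the only places where a sloppy write-up could go wrong, so I would state each substitution explicitly rather than folding them into a single $O(\cdot)$.
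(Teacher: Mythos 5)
Your proposal is correct and follows essentially the same chain of substitutions as the paper's own proof: bound $\beta(T')$ via the third hypothesis, replace $\alpha(T')$ by $c\,\alpha(T)$, absorb the constant into $f$ via the growth condition, and divide by $\beta(T)\geq\alpha(T)$. The only (harmless) differences are that you obtain the additive bound more directly via $\beta(T')-\beta(T)\leq\beta(T')$ while the paper manipulates the difference inside the $O(\cdot)$, and that your remark about $|T'|=n\pm1$ makes explicit a point the paper silently glosses over.
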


\begin{proof}
Let $c = \alpha(T')/\alpha(T)$, where $c$ is a constant.
  Then we have
  \begin{eqnarray*}
    \frac{\beta(T')}{\beta(T)} & = & O\left(\frac{\alpha(T') \cdot f(n,\alpha(T'))}{\alpha(T)}\right) \\
    & = & O\left(\frac{\alpha(T') \cdot f(n,c \cdot \alpha(T))}{\alpha(T)}\right) \\
    & = & O\left(\frac{\alpha(T') \cdot c' \cdot f(n,\alpha(T))}{\alpha(T)}\right) \\
    & = & O(f(n, \alpha(T))).
  \end{eqnarray*}
  Also,
  \begin{eqnarray*}
    \beta(T')-\beta(T) & = & O(\alpha(T') \cdot f(n, \alpha(T')) - \alpha(T) \cdot f(n, \alpha(T))) \\
    & = & O(\alpha(T') \cdot f(n, c \cdot \alpha(T)) - \alpha(T) \cdot f(n, \alpha(T))) \\
    & = & O(\alpha(T') \cdot c' \cdot f(n, \alpha(T)) - \alpha(T) \cdot f(n, \alpha(T))) \\
    & = & O((c' \cdot \alpha(T') - \alpha(T)) \cdot f(n, \alpha(T))) \\
    & = & O((c' \cdot c \cdot \alpha(T) - \alpha(T)) \cdot f(n, \alpha(T))) \\
    & = & O(\alpha(T) \cdot f(n, \alpha(T))).
  \end{eqnarray*}
\end{proof}

The functions satisfying $f(n, c \cdot \alpha(T)) \leq c' \cdot f(n, \alpha(T))$ include functions $f$ which are polynomial, poly-logarithmic, or constant in terms of $\alpha(T)$.
 \section{Substring Complexity}
\label{sec:delta}

In this section, we consider the worst-case sensitivity
of the string repetitiveness measure $\delta$,
which is the \emph{substring complexity} of strings~\cite{KociumakaNP20}.
For any string $T$ of length $n$,
the substring complexity $\delta(T)$ is defined as
$\delta(T) = \max_{1 \leq k \leq n}\left(\Substr(T,k)/k\right)$,
where $\Substr(T,k)$ is the number of distinct substrings of length $k$ in $T$.
It is known that $\delta(T) \leq \gamma(T)$ holds for any $T$~\cite{KociumakaNP20}.

In what follows, we present tight upper and lower bounds
for the multiplicative sensitivity of $\delta$ for all cases
of substitutions, insertions, and deletions.
We also present the additive sensitivity of $\delta$.

\subsection{Lower bounds for the sensitivity of $\delta$}

\begin{theorem}
  The following lower bounds on
  the sensitivity of $\delta$ hold: \\
  \textbf{substitutions:} $\MSensSub(\delta,n) \geq 2$.
    $\ASensSub(\delta,n) \geq 1$. \\
  \textbf{insertions:} $\MSensIns(\delta,n) \geq 2$.
    $\ASensIns(\delta,n) \geq 1$. \\
  \textbf{deletions:} $\lim \inf_{n\to\infty}\MSensDel(\delta,n) \geq 1.5$.
    $\lim \inf_{n\to\infty}\ASensDel(\delta,n) \geq 1$.
\end{theorem}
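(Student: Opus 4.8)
The plan is to exhibit, for each edit type, an explicit family of witness strings.

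\emph{Substitutions and insertions.} Take the unary string $T = a^n$, for which $\Substr(T,k)=1$ for every $k$, so $\delta(T)=1$ (the ratio $\Substr(T,k)/k$ being maximized at $k=1$). Let $T'$ be obtained from $T$ by substituting (respectively, inserting) a fresh letter $b\neq a$ at a position near the middle. Then $T'$ uses both letters, so $\Substr(T',1)=2$ and $\delta(T')\geq 2$; conversely, for every $k\geq 1$ each length-$k$ substring of $T'$ is either $a^k$ or one of the at most $k$ substrings containing the unique occurrence of $b$, whence $\Substr(T',k)\leq k+1\leq 2k$ and $\delta(T')=2$. Thus $\delta(T')/\delta(T)=2$ and $\delta(T')-\delta(T)=1$, giving $\MSensSub(\delta,n),\MSensIns(\delta,n)\geq 2$ and $\ASensSub(\delta,n),\ASensIns(\delta,n)\geq 1$ (using $n\geq 2$ for substitutions, $n\geq 1$ for insertions).

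\emph{Deletions.} Here the target factor is $1.5$ rather than $2$, which forces a far more delicate witness; this is the main work. The guiding observation is that for $T'=T[1..i-1]\,T[i+1..n]$, every length-$k$ substring of $T'$ is either a length-$k$ substring of $T$ or one of the at most $k-1$ \emph{straddling} strings $T[j..i-1]\,T[i+1..j+k]$ with $i-k+1\leq j\leq i-1$; hence $\Substr(T',k)\leq\Substr(T,k)+(k-1)$ for every $k$. Since a deletion that increases $\delta$ must act on a non-unary string, $\delta(T)\geq 2$, and the inequality then yields $\delta(T')<\delta(T)+1$; to drive $\delta(T')/\delta(T)$ towards $3/2$ one therefore needs $\delta(T)=2$ (so a binary alphabet) together with a length $k=\Theta(n)$ at which $\Substr(T,k)=2k$ and the deletion creates $k-\Theta(1)$ brand-new length-$k$ substrings. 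This also explains why the statement is a $\lim\inf$: one has $\delta(T')<3$ and $\delta(T')-\delta(T)<1$ strictly, so the bounds $3/2$ and $1$ can only be approached.

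The plan is then to fix a parameter $m$ and build a binary string $T_m$ of length $n$ (with $n\to\infty$ as $m\to\infty$) whose subword complexity equals $2k$ for all $1\leq k\leq m$ --- for instance a sufficiently long factor of a Rote sequence (a binary word with $\Substr(\cdot,k)=2k$ for every $k$), possibly with a local modification short enough not to disturb the complexity bound, or a tailor-made explicit construction --- and which admits a position $i$ whose deletion renders all $m-1$ straddling length-$m$ substrings pairwise distinct and absent from $T_m$. Then $\delta(T_m)=2$ (the term $k=1$ gives $2$ and $\Substr(T_m,k)\leq 2k$ for every $k$), while $\Substr(T_m',m)=2m+(m-1)=3m-1$, so $\delta(T_m')\geq 3-1/m$; hence $\MSensDel(\delta,n)\geq 3/2-1/(2m)$ and $\ASensDel(\delta,n)\geq 1-1/m$, and letting $m\to\infty$ establishes the claimed lower bounds $3/2$ and $1$ for $\lim\inf_{n\to\infty}\MSensDel(\delta,n)$ and $\lim\inf_{n\to\infty}\ASensDel(\delta,n)$.

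The hard part is producing $T_m$: requirement (i), that $\Substr(T_m,k)\leq 2k$ for \emph{all} $k$, is a constraint across every window length, whereas requirement (ii), that one deletion spawn $\Theta(m)$ new length-$m$ substrings, demands a locally ``rich'' region that naturally pushes the complexity up; reconciling the two is the crux. I expect the bulk of the argument to be the verification of (i) --- bounding the number of distinct length-$k$ factors separately for small, intermediate, and large $k$ --- and of (ii) --- checking that the $m-1$ straddling strings are new and pairwise distinct.
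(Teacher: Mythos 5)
Your substitution and insertion arguments are correct and essentially identical to the paper's: the paper also uses $T=\mathtt{a}^n$ with a single fresh character $\mathtt{b}$ substituted in (resp.\ appended), giving $\delta(T)=1$ and $\delta(T')=2$. Your preliminary analysis of the deletion case is also sound: you correctly deduce that any witness must have $\delta(T)=2$, that the gain comes from at most $k-1$ straddling substrings at some length $k$, and that this is why the bound is stated as a $\liminf$.

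However, for deletions there is a genuine gap: you never actually produce the witness string $T_m$, and you say so yourself (``the hard part is producing $T_m$''; ``I expect the bulk of the argument to be the verification of (i) \ldots and (ii)''). Those two verifications \emph{are} the proof --- everything you have written before them is a reformulation of what the witness must satisfy, not a demonstration that one exists. The suggestion of using a factor of a Rote word ``possibly with a local modification short enough not to disturb the complexity bound'' is not a construction: a local modification rich enough to make $m-1$ straddling length-$m$ substrings pairwise distinct and new is exactly the kind of perturbation that threatens to push $\Substr(\cdot,k)$ above $2k$ at intermediate lengths, and reconciling these two pressures is the crux you identify but do not resolve. For comparison, the paper's witness is the explicit string $T=(\mathtt{abb})^m\,\mathtt{a}\,(\mathtt{bba})^{m+1}\,\mathtt{a}^{3m}\,(\mathtt{bba})^m$; the deletion of the lone $\mathtt{a}$ between the first two blocks merges $(\mathtt{abb})^m$ and $(\mathtt{bba})^{m+1}$ into a long periodic run, and the proof consists of (a) a case analysis over four ranges of $k$ showing $\Substr(T,k)\leq 2k$ for every $k$, and (b) an explicit count showing $\Substr(T',3m+1)=9m+2$, whence $\delta(T')\geq 3-1/(3m+1)$. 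Until you exhibit a concrete $T_m$ and carry out both verifications, the deletion bounds remain unproven.
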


\begin{proof}
  \textbf{substitutions}:
    Consider strings $T = \mathtt{a^\mathit{n}}$ and $T' = \mathtt{a^{\mathit{n}-1} b}$.
    Then $\delta(T)=1$ and $\delta(T')=2$ hold.
    Thus we get $\MSensSub(\delta,n) \geq 2$ and
    $\ASensSub(\delta,n) \geq 1$. 
    
  \textbf{insertions}:
    Consider strings $T = \mathtt{a^\mathit{n}}$ and $T' = \mathtt{a^\mathit{n} b}$.
    Then $\delta(T)=1$ and $\delta(T')=2$ hold.
    Thus we get $\MSensIns(\delta,n) \geq 2$ and
    $\ASensIns(\delta,n) \geq 1$. 
  
  \textbf{deletions}:
  Consider string $$T = \mathtt{(abb)}^m \mathtt{a (bba)}^{m+1} \mathtt{a}^{3m} (\mathtt{bba})^m$$ with a positive integer $m$.
  Let $n = 12m+4 = |T|$. 
    For the sake of exposition, let $w_1=\mathtt{(abb)^\mathit{m}}$, $w_2=\mathtt{(bba)^{\mathit{m}+1}}$, $w_3=\mathtt{a}^{3m}$, and $w_4=\mathtt{(bba)^\mathit{m}}$ such that $T = w_1 \mathtt{a} w_2 w_3 w_4$.
    To analyze $\delta(T)$, we consider $\Substr(T, k)$
    for four different groups of $k$, as follows:
    \begin{itemize}
    \item For $1 \leq k \leq 2$:
    Since $T$ is a binary string,
    $\max_{1 \leq k \leq 2}\left(\Substr(T,k)/k\right) = 2$.
    
    \item For $3 \leq k \leq 3m$:
      The prefix $w_1 \mathtt{a} w_2 = \mathtt{(abb)^\mathit{m} a (bba)^{\mathit{m}+1}}$ and the suffix $w_4=\mathtt{(bba)^\mathit{m}}$ contain three distinct substrings $\mathtt{(abb)}^{k/3}$, $(\mathtt{bba})^{\mathit{k}/3}$, and $(\mathtt{bab})^{\mathit{k}/3}$ for each length $k$, and the substring $w_3 = \mathtt{a}^{3\mathit{m}}$ contains a unique substring $\mathtt{a^{\mathit{k}}}$ for each length $k$.
    The remaining distinct substrings must contain the range
    $[6m+4,6m+5]$ or $[9m+4,9m+5]$,
    which are the left and right boundaries of $w_3$, respectively.
    There are $k-1$ distinct substrings containing $[6m+4,6m+5]$ of form:
    \[
    \begin{array}{ll}
      (\mathtt{bba})^{l_1}\mathtt{a}^{k-3l_1} & \mbox{for } 1 \leq l_1 \leq \lfloor (k-1)/3 \rfloor; \\
      \mathtt{a(bba)}^{l_2-1}\mathtt{a}^{k-3l_2+2} & \mbox{for } 1\leq l_2 \leq \lfloor (k+1)/3 \rfloor;\\
      \mathtt{ba(bba)}^{l_3-1}\mathtt{a}^{k-3l_3+1} & \mbox{for } 1\leq l_3 \leq \lfloor k/3 \rfloor.
    \end{array}
    \]
    Also, there are $k-1$ distinct substrings containing $[9m+4,9m+5]$ of form
    \[
    \mathtt{a}^{k-l_4}(\mathtt{bba})^{l_4/3}~~\mbox{ for } 1\leq l_4 \leq k-1.
    \]
    Notice however that the two substrings $\mathtt{a(bba)}^{l_2-1} \mathtt{a}^{k-3l_2+2} = \mathtt{a^{\mathit{k}}}$ with $l_2=1$ and $\mathtt{a}^{k-l_4}(\mathtt{bba})^{l_4/3} \\= \mathtt{(abb)}^{\mathit{k}/3}$ with $l_4=k-1$
have already been counted in the other positions in $T$,
    and thus these duplicates should be removed.
    Summing up all these,
    we obtain $\Substr(T,k) = 3+1+2(k-1)-2 = 2k$ for every $3 \leq k \leq 3m$,
    implying $\max_{3 \leq k \leq 3m}\left(\Substr(T,k)/k\right) = 2$.
    
    \item For $3m < k \leq n$:
    The prefix $w_1 \mathtt{a} w_2$ contains at most three distinct substrings for every $k$ and the substrings $w_3$ and $w_4$ contain no substrings of length $k > 3m$.
    The remaining distinct substrings must again contain the positions in $[6m+4,6m+5]$ or $[9m+4,9m+5]$. These substrings can also be described in a similar way to the previous case for $3 \leq k \leq 3m$, except for how we should remove duplicates. We have the two following sub-cases:
    \begin{itemize}
      \item For $k = 3m+1$: 
    Since $\mathtt{a^{\mathit{k}}} = \mathtt{a}^{3m+1}$ has no occurrences in $T$ but $\mathtt{(abb)^{\mathit{k}/3}}$ has other occurrences and it has already been counted,
    the number of such distinct substrings is at most $2(k-1)-1$.
      \item For $k > 3m+1$: There exists at least one substring which contains both $[6m+4,6m+5]$ and $[9m+4,9m+5]$.
        Therefore, the number of such distinct substrings is at most $2(k-1)-1$.
      \end{itemize}  
    Hence, $\Substr(T,k) \leq 3+2(k-1)-1 = 2k$ for every $3m < k \leq n$,
    which implies that  
    $\max_{3m < k \leq n}\left(\Substr(T,k)/k\right) \leq 2$.
    \end{itemize}
    Consequently, we have that $\delta(T)=2$.

    Consider the string $$T' = \mathtt{(abb)^\mathit{m} (bba)^{\mathit{m}+1} a^{3\mathit{m}} (bba)^\mathit{m}} = w_1 w_2 w_3 w_4$$ that can be obtained from $T$ by removing $T[3m+1] = \mathtt{a}$ between $w_1$ and $w_2$.
    We consider the number of distinct substrings of length $3m+1$ in $T'$:
    Because of the lengths of $w_j$ with $j \in \{ 1,2,3,4 \}$,
    each substring of length $3m+1$ is completely contained in $w_2$ or
    it contains some boundaries of $w_j$.
    \begin{itemize}
    \item The prefix
$w_1 (w_2[1..|w_2|-3]) = \mathtt{(abb)^\mathit{m} (bba)^\mathit{m}}$
      contains $3m$ distinct substrings of length $3m+1$.
    \item The substring $w_2$ contains 3 distinct substrings of length $3m+1$.
    \item The substring
$w_2[4..|w_2|] w_3=\mathtt{(bba)}^m \mathtt{a}^{3m}$
      contains $3m$ distinct substrings of length $3m+1$.
    \item The suffix $w_3 w_4 = \mathtt{a}^{3m} (\mathtt{bba})^\mathit{m}$ contains $3m-1$ distinct substrings of length $3m+1$ (note that $\mathtt{a (bba)^\mathit{m}}$ is a duplicate and is not counted here).
    \end{itemize}
    Hence,
    \[
    \delta(T') \geq \Substr(T,3m+1)/(3m+1) = \frac{9m+2}{3m+1} = 3-\frac{1}{3m+1}.
    \]
    Thus we obtain $\lim \inf_{n\to\infty}\MSensDel(\delta,n) \geq \lim \inf_{m\to\infty}((3-1/(3m+1))/2) \geq 1.5$ and \\ $\lim \inf_{n\to\infty}\ASensDel(\delta,n) \geq \liminf_{m\to\infty}((3-1/(3m+1))-2) = 1$.
\end{proof}

\subsection{Upper Bounds for the sensitivity of $\delta$}

\begin{theorem} \label{theo:upper_bound_delta}
  The following upper bounds on
  the sensitivity of $\delta$ hold: \\
  \textbf{substitutions:} $\MSensSub(\delta,n) \leq 2$.
    $\ASensSub(\delta,n) \leq 1$. \\
  \textbf{insertions:} $\MSensIns(\delta,n) \leq 2$.
    $\ASensIns(\delta,n) \leq 1$. \\
  \textbf{deletions:} $\lim \sup_{n\to\infty}\MSensDel(\delta,n) \leq 1.5$.
    $\lim \sup_{n\to\infty}\ASensDel(\delta,n) \leq 1$. 
\end{theorem}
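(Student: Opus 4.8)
The plan is to derive all six bounds from one elementary observation: a single character edit can introduce only a bounded number of new substrings of each length, so $\delta$ can increase only slightly.

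First I would fix a length $k$ and bound $\Substr(T',k)$ in terms of $\Substr(T,k)$ by isolating the substrings of $T'$ that are genuinely new (i.e. do not occur in $T$). For a substitution at position $i$ (with $|T'|=n$) or an insertion of a character at position $i$ of $T'$ (with $|T'|=n+1$), every length-$k$ substring of $T'$ that avoids position $i$ already occurs in $T$, and there are at most $k$ substrings of length $k$ that cover a fixed position; hence $\Substr(T',k)\le\Substr(T,k)+k$. For a deletion of $T[i]$, writing $T'=T[1..i-1]\,T[i+1..n]$, every length-$k$ substring of $T'$ either lies entirely inside $T[1..i-1]$ or entirely inside $T[i+1..n]$ (and thus occurs in $T$) or straddles the join; the straddling substrings are determined by their starting position, which takes at most $k-1$ values, so $\Substr(T',k)\le\Substr(T,k)+(k-1)$.

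Next I would divide by $k$ and maximize over $k$, using $\Substr(T,k)/k\le\delta(T)$ for every $k$. For substitutions and insertions this gives $\delta(T')\le\delta(T)+1$, hence the additive bounds $\ASensSub(\delta,n),\ASensIns(\delta,n)\le 1$; combined with $\delta(T)\ge 1$ (which always holds, since $\delta(T)\ge\Substr(T,1)\ge 1$) it gives $\delta(T')\le 2\delta(T)$, hence the multiplicative bounds $\MSensSub(\delta,n),\MSensIns(\delta,n)\le 2$. For deletions, if the maximum defining $\delta(T')$ is attained at some length $k^\ast$, then either $k^\ast=1$, so $\delta(T')=\Substr(T',1)\le\Substr(T,1)\le\delta(T)$, or $k^\ast\ge 2$, so $\delta(T')\le\delta(T)+(k^\ast-1)/k^\ast<\delta(T)+1$; in either case $\delta(T')<\delta(T)+1$, which gives $\ASensDel(\delta,n)<1$ for every $n$ and therefore $\limsup_{n\to\infty}\ASensDel(\delta,n)\le 1$.

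The step I expect to be the crux is the multiplicative bound for deletions, since the estimate $\delta(T')<\delta(T)+1$ together with $\delta(T)\ge 1$ only yields a factor $<2$. To sharpen this to $1.5$ I would use that $\delta$ never takes a value strictly between $1$ and $2$: because $\delta(T)\ge\Substr(T,1)$, the number of distinct characters of $T$, which is a positive integer, either $T$ is unary (then $\delta(T)=\delta(T')=1$ and the ratio is $1$) or $\delta(T)\ge 2$ (then $\delta(T')<\delta(T)+1\le\tfrac32\delta(T)$). Hence $\MSensDel(\delta,n)<1.5$ for every $n$ and $\limsup_{n\to\infty}\MSensDel(\delta,n)\le 1.5$. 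The remaining work is only routine boundary bookkeeping in the substring counts.
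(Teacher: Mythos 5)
Your proposal is correct and follows essentially the same route as the paper's proof: bound the number of new distinct length-$k$ substrings by $k$ (substitution/insertion) or $k-1$ (deletion), divide by $k$ to get $\delta(T')\le\delta(T)+1$, use $\delta\ge 1$ for the factor-$2$ bounds, and use the dichotomy that $\delta(T)=1$ forces $T$ unary (so $\delta(T)\ge 2$ otherwise) to sharpen the deletion ratio to $1.5$. Your case analysis on the length $k^\ast$ attaining the maximum is a slightly more careful phrasing of the paper's $\max_k (k-1)/k$ estimate, but the argument is the same.
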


\begin{proof}
  First we consider the additive sensitivity for $\delta$.
  For each $k$, the number of substrings of length $k$ that contains the edited position $i$ is clearly at most $k$.
  Therefore, after a substitution or insertion, at most $k$ new distinct substrings of length $k$ can appear in the string $T'$ after the modification.
  Also, after a deletion, at most $k-1$ new distinct substrings of length $k$ can appear in $T'$.
Hence, in the case of substitutions and insertions,
    $\delta(T') \leq \max_{1\leq k \leq n} ((\Substr(T,k)+k)/k) \leq \max_{1\leq k \leq n}(\Substr(T,k))/k)+\max_{1\leq k \leq n}(k/k)=\delta(T)+1$ holds.
    Also, in the case of deletions,
    $\delta(T') \leq \max_{1\leq k \leq n} ((\Substr(T,k)+k-1)/k) \leq \delta(T)+\max_{1\leq k \leq n}((k-1)/k)$ holds.
  Thus we obtain $\ASensSub(\delta,n) \leq 1$,
  $\ASensIns(\delta,n) \leq 1$, and $\lim \sup_{n\to\infty}\ASensDel(\delta,n) \leq \lim \sup_{k\to\infty} (k-1)/k = 1$.

  Next we consider the multiplicative sensitivity for $\delta$.
  Note that $\delta(T') \geq 1$ for any non-empty string $T'$,
  since $\Substr(T',1) \geq 1$.
  Combining this with the afore-mentioned additive sensitivity,
  we obtain $\MSensSub(\delta,n) \leq 2$ and $\MSensIns(\delta,n) \leq 2$.
  For the case of deletions,
  observe that $\delta(T) = 1$ only if $T$ is a unary string.
  However $\delta(T')$ cannot increase after a deletion
  since $T'$ is also a unary string.
  Thus we can restrict ourselves to the case where
  $T$ contains at least two distinct characters.
  Then, we have $\lim \sup_{n\to\infty}\MSensDel(\delta,n) \leq 1.5$,
  which is achieved when $\delta(T)=2$ and
  $\delta(T')=2+\frac{k-1}{k}$ with $k\to\infty$.
\end{proof}
 \section{String Attractors}
\label{sec:gamma}

In this section, we consider the worst-case sensitivity
of the string repetitiveness measure $\gamma$,
which is the size of the smallest string attractor~\cite{KempaP18}.
A string attractor $\Gamma(T)$ for a string $T$
is a set of positions in $T$ such that any substring $T$ has an occurrence containing a position in $\Gamma(T)$.
We denote the size of the smallest string attractor of $T$ by $\gamma(T)$.
It is known that $\gamma(T)$ is upper bounded by any of $\zorig(T)$, $r(T)$, $e(T)$ for any string $T$~\cite{KempaP18}.

In what follows, we present lower bounds
for the multiplicative sensitivity of $\gamma$ for all cases
of substitutions, insertions, and deletions.
We also present the additive sensitivity of $\gamma$.

\subsection{Lower bounds for the sensitivity of $\gamma$}

\begin{theorem}
  The following lower bounds on
  the sensitivity of $\gamma$ hold: \\
  \textbf{substitutions:} $\lim \inf_{n\to\infty}\MSensSub(\gamma,n) \geq 2$.
    $\ASensSub(\gamma,n) \geq \gamma-2$ and $\ASensSub(\gamma,n)= \Omega(\sqrt n)$. \\
  \textbf{insertions:} $\lim \inf_{n\to\infty}\MSensIns(\gamma,n) \geq 2$.
    $\ASensIns(\gamma,n) \geq \gamma-2$ and $\ASensIns(\gamma,n)= \Omega(\sqrt n)$. \\
  \textbf{deletions:} $\lim \inf_{n\to\infty}\MSensDel(\gamma,n) \geq 2$.
    $\ASensDel(\gamma,n) \geq \gamma-3$ and $\ASensDel(\gamma,n) = \Omega(\sqrt n)$.
\end{theorem}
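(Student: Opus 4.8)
The plan is to establish all nine inequalities from one family of strings per edit operation. For substitutions I would exhibit an explicit family $\{T_k\}$ together with $T'_k$ obtained from $T_k$ by changing a single character, for which $\gamma(T_k)=k$ can be computed exactly while $\gamma(T'_k)\ge 2k-2$, and for which $|T_k|=\Theta(k^2)$. Every claimed bound then falls out simultaneously: $\MSensSub(\gamma,|T_k|)\ge \gamma(T'_k)/\gamma(T_k)\ge 2-2/k\to 2$; $\ASensSub(\gamma,|T_k|)\ge \gamma(T'_k)-\gamma(T_k)\ge k-2=\gamma(T_k)-2$; and since $k-2=\Theta(\sqrt{|T_k|})$, also $\ASensSub(\gamma,n)=\Omega(\sqrt n)$. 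Insertions are handled by the same $T'_k$, now taking $T_k$ to be $T'_k$ with that character removed; deletions are the mirror situation in which the \emph{shorter} string carries the \emph{larger} $\gamma$ (the non-monotonicity regime), and there one unit of slack is lost, giving $\gamma(T'_k)\ge 2k-3$, whence the $\gamma-3$ in the statement.

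The family itself should be built from $\Theta(k)$ near-identical gadget blocks glued so that, in $T_k$, corresponding substrings of many different blocks all have an occurrence through one common position — so a single attractor element can be shared across blocks and $\gamma(T_k)$ stays at $k$ — whereas the single edit destroys exactly the feature enabling this sharing and effectively splits $T'_k$ into two parts, each behaving like an independent copy needing its own attractor of size about $k$. The easy half of the analysis is then $\gamma(T_k)\le k$: display a concrete set of $k$ positions and check that every substring of $T_k$ has an occurrence hitting it. The matching $\gamma(T_k)\ge k$ follows by the standard device of pointing to $k$ substrings of $T_k$, no two of which can both be covered (via some occurrence) by a common position.

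The crux — and the step I expect to be the main obstacle — is the lower bound $\gamma(T'_k)\ge 2k-O(1)$. It cannot be obtained through the substring complexity: by Theorem~\ref{theo:upper_bound_delta}, $\delta(T'_k)\le \delta(T_k)+1\le\gamma(T_k)+1=k+1$, so the doubling of $\gamma$ is invisible to $\delta$, and the known bound $\delta\le\gamma$ is of no use. Instead I would exhibit $2k-O(1)$ substrings $w_1,\dots,w_{2k-O(1)}$ of $T'_k$, roughly $k$ of them associated with each side of the edit, and prove that their occurrences are pairwise position-disjoint — no single position of $T'_k$ lies inside an occurrence of two distinct $w_j$ — which forces any attractor of $T'_k$ to contain at least $2k-O(1)$ distinct positions. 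The delicate part is choosing the $w_j$ to be simultaneously occurrence-disjoint in $T'_k$ and shareable in $T_k$; this will need a careful analysis of where each $w_j$ can and cannot occur, exploiting the exact (a)periodic structure of the gadget blocks, together with a separate treatment of the one or two blocks adjacent to the edited position, which is where the difference between $2$ (substitution/insertion) and $3$ (deletion) originates.

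The multiplicative lower bounds are then immediate corollaries of the additive computations. One final remark: for the bare inequality $\liminf_n \MSensSub(\gamma,n)\ge 2$ one could instead give a constant-size witness with $\gamma(T)=2$ and $\gamma(T')=4$, but routing everything through the growing family above is what lets a single instance also yield the $\gamma-2$ and $\Omega(\sqrt n)$ additive bounds at the same time.
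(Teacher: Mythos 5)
Your plan coincides in structure with the paper's proof: one explicit family of strings, $\Theta(k)$ gadget blocks whose attractor positions can be shared through a common prefix, an edit that destroys the sharing, and length $\Theta(k^2)$ so that the additive gain $\Theta(k)$ is $\Omega(\sqrt{n})$. The paper even uses a single family for all three edit types, editing the same position three ways. However, what you have written is a specification of the witness, not the witness itself, and essentially all of the content of this theorem lives in the construction you defer. For the record, the paper's string is
\[
T = \mathtt{a}^{k}\, \mathtt{x}\, \mathtt{a}^{k+1}\, \#_1\, \mathtt{a}^{k-1} \mathtt{x}\, \mathtt{a}\, \#_2\, \mathtt{a}^{k-2} \mathtt{x}\, \mathtt{a}^2\, \#_3 \cdots \#_{k}\, \mathtt{x}\, \mathtt{a}^{k},
\]
where the $\#_j$ are $k$ pairwise distinct characters; it has $\gamma(T)=k+2$, and substituting/deleting the leftmost $\mathtt{x}$ (or inserting next to it) yields $\gamma(T')=2k+2$ (resp.\ $2k+1$). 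Two points in your outline are resolved more cheaply than you anticipate. First, the ``pairwise occurrence-disjoint substrings'' argument you flag as the crux is largely handed to you by the $k$ unique separators $\#_j$, each of which must itself be an attractor position; the remaining $k$ forced positions in $T'$ come from the substrings $\mathtt{a}^{k-j}\mathtt{x}\mathtt{a}^{j}$, whose only occurrence after the edit is inside their own block because the prefix $\mathtt{a}^{k}\mathtt{x}\mathtt{a}^{k+1}$ (the one window containing all of them around position $k+1$) has been destroyed. So no delicate periodicity analysis is needed. Second, your constants do not quite match the statement: with your normalization $\gamma(T_k)=k$ and $\gamma(T'_k)\ge 2k-2$ you would only get $\ASensSub \ge \gamma-2$ if the inequality is in fact an equality computation on a concrete family; the paper obtains $\gamma-2$ for substitutions and insertions and $\gamma-3$ for deletions by exhibiting matching upper bounds $\gamma(T')\le 2k+2$ (resp.\ $2k+1$) via explicit attractor sets, which you would also need in order to claim the exact additive values rather than just $\Omega(\gamma)$. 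Until the family and both directions of the counting are written out, the proof is incomplete; your correct side observation that the $\delta$-based route cannot detect the doubling of $\gamma$ does not substitute for the construction.
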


\begin{proof}
  Consider string $T = \mathtt{a^\mathit{k} x a^{\mathit{k}+1} \#_1 a^{\mathit{k}-1} x a \#_2 a^{\mathit{k}-2} x a^2 \#_3 \cdots \#_{\mathit{k}} x a^\mathit{k}}$, where $\#_j$ for every $1 \leq j \leq k$ is a distinct character.
  The position where $\#_j$ for each $1 \leq j \leq k$ occurs has to be an element of any string attractor for $T$.
  Also, each of the intervals $[1,k+1]$ and $[k+2,2k+2]$ has to contain at least one element of any string attractor for $T$,
  since each of the substrings $T[1..k+1] = \mathtt{a^\mathit{k} x}$ and $T[k+2.. 2k+2] = \mathtt{a}^{\mathit{k}+1}$ occurs only once in $T$.
  Therefore, $\gamma(T)\geq k+2$ holds.
  Consider the set $S = \{ k+1, k+2, 2k+3, 3k+5,\ldots, k+1+k(k+2)\}$ of
  $k+2$ positions in $T$ which contains all the positions required above.
  Since each substring $\mathtt{a^{\mathit{k-j}} x a^{\mathit{j}}}$ of length $k+1$
  immediately preceded by $\#_j$~($1 \leq j \leq k$) occurs in the prefix $\mathtt{a^\mathit{k} x a^{\mathit{k}+1}}$ and contains the position $k+1$,
  $S$ is indeed a string attractor for $T$, we get $\gamma(T)=k+2$.
  In the following, we use this string $T$ for the analysis of
  lower bounds for the sensitivity of $\gamma$.

  \textbf{substitutions}:
  Let $T'$ be the string obtained by substituting
  the leftmost occurrence of $\mathtt{x}$ at position $k+1$ in $T$
  with character $\mathtt{b}$,
  yielding the new prefix $\mathtt{a^\mathit{k} b a^{\mathit{k}+1}}$
  right before $\#_1$.
  The size of the smallest string attractor for $T'$ is as follows:
    Each occurrence position of $\#_j$ for $1 \leq j \leq k$ still has to be an element of any string attractor for $T'$.
    Also, each of the intervals $[k+1]$ and $[k+2,2k+2]$ has to contain at least one element of any string attractor for $T'$.
    In addition, each of the intervals $[2k+4,3k+4], [3k+6,4k+6],\ldots,[k+2+k(k+2),2k+2+k(k+2)]$ which are the occurrences of substrings $\mathtt{a^{\mathit{k}-1} x a}, \mathtt{a^{\mathit{k}-2} x a^2}, \ldots, \mathtt{x a^\mathit{k}}$ has to contain one string attractor,
    since we have lost the prefix $\mathtt{a^\mathit{k} x a^{\mathit{k}+1}}$.
    Therefore, $\gamma(T')\geq 2k+2$ holds and the set $\{ k+1, k+2, 2k+3, 3k+5,\ldots, k+1+k(k+2), 2k+4, 3k+6, \ldots, k+2+k(k+2)\}$ of $2k+2$ positions in $T'$
    is a string attractor for $T'$, implying $\gamma(T')=2k+2$.
    Thus we get $\lim \inf_{n\to\infty}\MSensSub(\gamma,n) \geq \lim \inf_{k\to\infty} (2k+2)/(k+2) = 2$ and $\ASensSub(\gamma,n) \geq \gamma-2$.
    Since $n = k^2+4k+2$ and $\gamma(T)=k+2$,
    $\ASensSub(\gamma,n)= \Omega(\sqrt n)$ holds.
  
    \textbf{insertions}:
    Let $T'$ be the string obtained by
    inserting $\mathtt{b}$ between $T[k+1] = \mathtt{x}$ and $T[k+2] = \mathtt{a}$, yielding the new prefix $\mathtt{a^\mathit{k} xb a^{\mathit{k}+1}}$ right before $\#_1$.
    The size of the smallest string attractor for $T'$ is as follows,
    using a similar argument to the case of substitutions:
    Each occurrence position of $\#_j$ for $1 \leq j \leq k$ still has to be an element of any string attractor for $T'$.
    Also, each of the intervals $[k+2]$ and $[k+3,2k+3]$ have to contain at least one element of any string attractor for $T'$.
    In addition, each of the intervals $[2k+5,3k+5]$,
    $[3k+7,4k+7],\ldots,[k+3+k(k+2),2k+3+k(k+2)]$ which are the occurrences of substrings $\mathtt{(a^{\mathit{k}-1} x a)}, \mathtt{(a^{\mathit{k}-2} x a^2)}, \ldots, \mathtt{(x a^\mathit{k})}$ have to contain one string attractor.
    Therefore, $\gamma(T')\geq 2k+2$ holds and the set $\{ k+2, k+3, 2k+4, 3k+6,\ldots, k+2+k(k+2), 2k+5, 3k+7, \ldots, k+3+k(k+2)\}$ achieves $\gamma(T')=2k+2$.
    Thus we get $\lim \inf_{n\to\infty}\MSensIns(\gamma,n) \geq \lim \inf_{k\to\infty} (2k+2)/(k+2) = 2$,
    $\ASensIns(\gamma,n) \geq \gamma-3$, and $\ASensIns(\gamma,n)= \Omega(\sqrt n)$.
  
    \textbf{deletions}:
    Let $T'$ be the string obtained by
    deleting $T[k+1] = \mathtt{x}$ from $T$,
    yielding the new prefix $\mathtt{a^{2\mathit{k}+1}}$ right before $\#_1$.
    The size of the smallest string attractor for $T'$ is as follows,
    using a similar argument to the cases of insertions and substitutions:
    Each occurrence position of $\#_j$ for $1 \leq j \leq k$ still has to be an element of any string attractor for $T'$.
    Also, the interval $[1,2k+1]$ has to contain one element of any string attractor for $T'$.
    In addition, each of the intervals $[2k+3,3k+3], [3k+5,4k+5],\ldots,[k+1+k(k+2),2k+1+k(k+2)]$ has to contain one string attractor for $T'$.
    Therefore, $\gamma(T')\geq 2k+1$ holds and the set $\{ 1, 2k+2, 3k+4,\ldots, k+k(k+2), 2k+3, 3k+5, \ldots, k+1+k(k+2)\}$ achieves $\gamma(T')=2k+1$.
    Thus we get $\lim \inf_{n\to\infty}\MSensDel(\gamma,n) \geq \lim \inf_{k\to\infty} (2k+1)(k+2) = 2$, $\ASensDel(\gamma,n) \geq \gamma-3$, and $\ASensDel(\gamma,n)= \Omega(\sqrt n)$. 
\end{proof}
  
\subsection{Upper Bounds for the sensitivity of $\gamma$}

In this section, we present some upper bounds for the worst-case sensitivity of the smallest string attractor size $\gamma$.

We use the following known results:

\begin{theorem}[Lemma 3.7 of \cite{KempaP18}] \label{theo:z_gamma}
  For any string $T$, $\gamma(T) \leq \zsrss(T)$.
\end{theorem}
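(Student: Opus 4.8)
The plan is to exhibit an explicit string attractor of $T$ of size $\zsrss(T)$, following the classical construction. Write the LZSS-with-self-references factorization of $T$ as $f_1 f_2 \cdots f_z$ with $z = \zsrss(T)$, and let $e_k = |f_1 \cdots f_k|$ denote the ending position of the $k$-th factor, so that $0 = e_0 < e_1 < \cdots < e_z = n$. I would take the candidate attractor to be $\Gamma = \{e_1, \ldots, e_z\}$, which has size $z$. It then suffices to show that $\Gamma$ is a string attractor of $T$, i.e., that every distinct nonempty substring $w$ of $T$ has an occurrence $T[i..j] = w$ with $i \le e_k \le j$ for some $k$.

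The key tool is the copying property of the self-referential parse: whenever $f_k$ is not a single fresh character, it has a source starting at some position $s_k$ with $1 \le s_k \le e_{k-1}$, and the greedy character-by-character copying guarantees $T[e_{k-1}+1+t] = T[s_k+t]$ for every $0 \le t \le |f_k|-1$; this identity is valid even when $s_k+t$ falls inside $f_k$ itself, precisely because the copy is defined left to right. To prove that $\Gamma$ is an attractor, I fix a nonempty substring $w$, let $T[i'..j']$ be its leftmost occurrence in $T$, and assume for contradiction that $[i',j'] \cap \Gamma = \emptyset$. Since the factor intervals partition $[1,n]$ and none of $e_1,\ldots,e_z$ lies in $[i',j']$, the whole interval $[i',j']$ is contained in $[e_{k-1}+1,\, e_k-1]$ for a single index $k$. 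If $|f_k|=1$ this interval is empty, contradicting $w \ne \varepsilon$; hence $|f_k| \ge 2$ and $f_k$ has a source at $s_k \le e_{k-1}$. Writing $i' = e_{k-1}+1+a$ with $a \ge 0$ and noting $a + |w| - 1 \le |f_k| - 2$, the copying property applied on the range of offsets $\{a, \ldots, a+|w|-1\} \subseteq \{0,\ldots,|f_k|-1\}$ gives $w = T[i'..j'] = T[s_k+a \,..\, s_k+a+|w|-1]$. Since $s_k \le e_{k-1}$, we get $s_k + a \le e_{k-1}+a < e_{k-1}+1+a = i'$, so $w$ has an occurrence strictly to the left of its leftmost occurrence, a contradiction. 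Therefore every leftmost occurrence meets $\Gamma$, so $\Gamma$ is a string attractor and $\gamma(T) \le |\Gamma| = z = \zsrss(T)$.

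The only delicate point is the self-referential case: one must check that the copying identity $T[e_{k-1}+1+t] = T[s_k+t]$ remains valid when the source overlaps the factor being produced, and that the shifted indices $s_k+a, \ldots, s_k+a+|w|-1$ stay within $[1,n]$ — both follow from $s_k \ge 1$, $a+|w|-1 \le |f_k|-2$, and the left-to-right definition of the parse. I expect no other obstacle; notably, the argument does not use the maximality of LZSS factors, only that each factor is either a single character or a copy of an earlier substring starting at a strictly smaller position.
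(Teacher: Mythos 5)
The paper does not prove this statement at all---it is imported verbatim as Lemma~3.7 of \cite{KempaP18}---so there is no internal proof to compare against. Your argument is correct and is essentially the standard proof of that lemma: the factor ending positions form an attractor because any occurrence avoiding all of them lies strictly inside a non-ground factor and can be translated to that factor's source position, contradicting leftmostness, with the self-referential (overlapping) case handled exactly as you describe.
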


\begin{theorem}[Lemma 1 of \cite{KociumakaNP20}] \label{theo:z_delta}
  For any string $T$ of length $n$,
  $\zsrss(T) = O(\delta(T) \log (n / \delta(T)))$.
\end{theorem}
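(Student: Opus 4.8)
Since this is Lemma~1 of~\cite{KociumakaNP20}, in the paper itself I would simply cite it; here is the route I would take to establish it. The plan is first to reduce to the greedy LZ77 factorization: self-references only merge factors, and the LZSS convention differs from classical LZ77 by at most a factor of two, so $\zsrss(T) = O(\zorig(T))$, and it suffices to prove $\zorig(T) = O(\delta(T)\log(n/\delta(T)))$ for the number $\zorig(T)$ of factors of the greedy non-self-referential LZ77 parse. I would then exploit the defining property of the greedy parse: if a factor starts at position $p$ and has length $\ell_p$, then every occurrence of the extended factor $T[p..p+\ell_p]$ starts within distance $\ell_p$ to the left of $p$ (otherwise greedy maximality would have extended the factor). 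In particular, for each fixed length $L$, distinct factors of length $L$ give distinct substrings of length $L+1$, and each such factor can be charged to a ``fresh right-extension'' of a right-special length-$L$ factor (one having at least two distinct following characters).

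The second ingredient is a split by length scale. Writing $d = \delta(T)$ and recalling $\Substr(T,k) \le dk$ for every $k$, the large scales are easy: for dyadic $\ell = 2^k > n/d$, the factors of length $\ge \ell$ are pairwise disjoint, hence fewer than $n/\ell < d$ of them, and summing over such scales gives a geometric series bounded by $O(d)$. The work is in the small scales: the plan is to show that for each dyadic $\ell = 2^k \le n/d$, the number of factors whose length lies in $[\ell, 2\ell)$ is $O(d)$. Granting this, summing over the $O(\log(n/d))$ small scales yields $O(d\log(n/d))$, which together with the $O(d)$ contribution from the large scales gives the theorem for $\zorig$, and hence for $\zsrss$. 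I would obtain the per-scale bound by charging, through the property above, each factor of length $L \in [\ell, 2\ell)$ to a fresh right-extension event of a right-special length-$L$ factor, and then expressing the number of such events over $L \in [\ell, 2\ell)$ as a telescoping combination of the increments $\Substr(T,L+1) - \Substr(T,L)$, controlled via the bound $\Substr(T,k) \le dk$.

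The crux is exactly this per-scale $O(d)$ bound. A naive hierarchical argument (a block-tree-style recursion that descends only into blocks with no earlier occurrence) does not by itself recover the $\log(n/d)$ factor; the careful amortization required to push the per-scale count down to $O(d)$ --- in particular, handling the ``unique-suffix'' corrections so that the telescoping sums do not accumulate a spurious linear term --- is the technical heart of~\cite{KociumakaNP20}, and for that reason I would cite their Lemma~1 rather than reproduce the computation. Once $\zorig(T) = O(d\log(n/d))$ is in hand, the stated bound $\zsrss(T) = O(\delta(T)\log(n/\delta(T)))$ follows immediately from $\zsrss(T) = O(\zorig(T))$.
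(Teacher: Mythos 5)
This statement is imported verbatim from the literature (Lemma~1 of \cite{KociumakaNP20}); the paper offers no proof of its own beyond the citation, so your decision to cite rather than reprove it is exactly the paper's treatment. Your accompanying sketch is a reasonable outline, and you correctly flag that the per-scale $O(\delta)$ bound is the entire technical content of the cited lemma --- indeed, the naive charge of each length-$L$ greedy phrase to a distinct length-$(L+1)$ substring yields only $\Substr(T,L+1) \leq \delta(L+1)$ phrases per length, i.e.\ $O(\delta \ell^2)$ per dyadic scale rather than $O(\delta)$ --- so deferring that step to \cite{KociumakaNP20} is appropriate and nothing further is required here.
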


\begin{theorem}[Lemma 2 of \cite{KociumakaNP20}] \label{theo:delta_gamma}
  For any string $T$, $\gamma(T) \geq \delta(T)$.
\end{theorem}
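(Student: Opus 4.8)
The plan is to prove the equivalent inequality $\delta(T) \le \gamma(T)$ by a direct counting argument: I will show that for every string attractor $\Gamma$ of $T$ and every length $k$ with $1 \le k \le n$, the number $\Substr(T,k)$ of distinct length-$k$ substrings is at most $k \cdot |\Gamma|$. Dividing by $k$, maximizing over $k$, and then taking $\Gamma$ to be a smallest attractor (so $|\Gamma| = \gamma(T)$) immediately yields $\delta(T) = \max_{1 \le k \le n}(\Substr(T,k)/k) \le \gamma(T)$.

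First I would fix $k$ and a string attractor $\Gamma$ of $T$. For each distinct substring $w$ of length $k$, the defining property of a string attractor, applied to $w$ itself, guarantees an occurrence $T[i..i+k-1] = w$ that contains some position $p \in \Gamma$, i.e.\ $i \le p \le i+k-1$. I fix one such occurrence and one such $p$ for every $w$, and associate to $w$ the pair $(p, p-i)$, where the offset $p-i$ ranges over $\{0, 1, \ldots, k-1\}$.

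The key step is that this association is injective: given a pair $(p, j)$, we recover the start $i = p - j$ of the chosen occurrence, and hence $w = T[i..i+k-1]$ is uniquely determined. Consequently the number of distinct length-$k$ substrings is bounded by the number of available pairs, namely $|\Gamma| \cdot k$. This gives $\Substr(T,k)/k \le |\Gamma|$ for every $k$, so $\delta(T) \le |\Gamma|$; choosing a smallest attractor completes the proof. (Equivalently, this can be phrased as a charging argument: every attractor position is ``charged'' by at most $k$ distinct length-$k$ substrings, one per offset.)

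I do not expect a real obstacle here — the argument is a clean injection/charging argument. The only point requiring a moment of care is the assertion that every length-$k$ substring has an occurrence straddling an attractor position, but this is precisely the definition of a string attractor, so nothing deeper is needed; the rest is elementary counting.
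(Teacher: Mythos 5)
Your argument is correct: the injection $w \mapsto (p, p-i)$ from distinct length-$k$ substrings to pairs (attractor position, offset) gives $\Substr(T,k) \leq k\,|\Gamma|$ for every $k$, which is exactly the standard proof of $\delta(T) \leq \gamma(T)$. Note that the paper itself states this result only as a citation (Lemma~2 of the Kociumaka--Navarro--Prezza paper) without reproving it, and your counting argument is the same one used there.
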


We are ready to show our results:

\begin{corollary} \label{coro:upper_gamma}
  The following upper bounds on
  the sensitivity of $\gamma$ hold: \\
  \textbf{substitutions:} $\MSensSub(\gamma,n) = O(\log n)$.
    $\ASensSub(\gamma,n) = O(\delta \log n)$. \\
  \textbf{insertions:} $\MSensIns(\gamma,n) = O(\log n)$.
    $\ASensIns(\gamma,n) = O(\delta \log n)$. \\
  \textbf{deletions:} $\MSensDel(\gamma,n) = O(\log n)$.
    $\ASensDel(\gamma,n) = O(\delta \log n)$. \\
\end{corollary}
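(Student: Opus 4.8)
The plan is to read off Corollary~\ref{coro:upper_gamma} as a single application of the ``sandwich'' Lemma~\ref{lem:squeeze}, instantiated with the pair $\alpha = \delta$ and $\beta = \gamma$. Concretely, I would take the auxiliary function to be $f(n,\delta) = \log n$, which does not depend on its second argument, so the technical growth condition $f(n, c\cdot\delta) \le c'\cdot f(n,\delta)$ required by Lemma~\ref{lem:squeeze} holds trivially with $c' = 1$. It then remains only to check that the three bullet hypotheses of Lemma~\ref{lem:squeeze} are exactly the content of Theorems~\ref{theo:upper_bound_delta}, \ref{theo:delta_gamma}, \ref{theo:z_gamma}, and \ref{theo:z_delta}.

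First I would verify $\delta(T')/\delta(T) = O(1)$ for every $T \in \Sigma^n$ and every $T'$ with $\ed(T,T') = 1$. By Theorem~\ref{theo:upper_bound_delta} we have $\delta(T') \le \delta(T) + 1$ in all three edit cases (substitution, insertion, deletion); combined with $\delta(T') \ge 1$ for every non-empty string, this gives $\delta(T')/\delta(T) \le 1 + 1/\delta(T) \le 2$, which is the first bullet. The second bullet, $\delta(T) \le \gamma(T)$, is Theorem~\ref{theo:delta_gamma}. For the third bullet, chaining Theorem~\ref{theo:z_gamma} with Theorem~\ref{theo:z_delta} gives $\gamma(T) \le \zsrss(T) = O(\delta(T)\log(n/\delta(T))) = O(\delta(T)\log n)$, i.e. $\beta(T) = O(\alpha(T)\cdot f(n,\alpha(T)))$ for our choice of $f$. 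With all three hypotheses in place, parts (1)--(3) of Lemma~\ref{lem:squeeze} yield at once $\MSensSub(\gamma,n), \MSensIns(\gamma,n), \MSensDel(\gamma,n) = O(f(n,\delta)) = O(\log n)$ and $\ASensSub(\gamma,n), \ASensIns(\gamma,n), \ASensDel(\gamma,n) = O(\delta\cdot f(n,\delta)) = O(\delta\log n)$, which is precisely the claimed statement.

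I do not expect a genuine obstacle: the corollary is engineered to fall out of Lemma~\ref{lem:squeeze}. The only two points needing a moment's care are (i) asserting the $O(1)$ bound on $\delta(T')/\delta(T)$ uniformly in $n$ rather than only asymptotically --- which is why I rely on the additive bound $\delta(T') \le \delta(T)+1$ instead of the $\limsup$-type multiplicative bound of Theorem~\ref{theo:upper_bound_delta} for deletions --- and (ii) the choice of $f$. Taking $f(n,\delta) = \log n$ is the conservative choice and already gives exactly the bounds in the statement; if one instead wants the sharper additive bound $\ASensSub(\gamma,n) = O(\delta\log(n/\delta))$ recorded in Table~\ref{tbl:additive_sensitivity}, one should take $f(n,\delta) = \log(n/\delta)$ and additionally note that $\delta(T') \ge \delta(T)/2$ (from $\delta(T) \le \delta(T')+1$ together with $\delta(T) \ge 2$ in the non-unary case), so that $\log(n/\delta(T')) \le \log(n/\delta(T)) + 1 = O(\log(n/\delta(T)))$ whenever $n/\delta(T) \ge 2$, the remaining regime $\delta(T) > n/2$ being trivial since there $\gamma(T) \le n = O(\delta(T))$.
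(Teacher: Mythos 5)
Your proposal is correct and follows exactly the paper's own route: the paper also derives the corollary by combining Theorems~\ref{theo:z_gamma}, \ref{theo:z_delta}, \ref{theo:delta_gamma} and Theorem~\ref{theo:upper_bound_delta} and feeding them into Lemma~\ref{lem:squeeze} with $\alpha=\delta$, $\beta=\gamma$. Your extra remarks on uniformity of the $O(1)$ ratio and on the sharper $\log(n/\delta)$ variant are fine but not needed for the stated bounds.
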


\begin{proof}
  Let $T$ be any string of length $n$,
  and let $T'$ be any string such that $\ed(T,T') = 1$.

  It follows from Theorem~\ref{theo:z_gamma} and Theorem~\ref{theo:z_delta} that
  $\gamma(T') \leq \zsrss(T') = O(\delta(T') \log n)$.
  Also, $\gamma(T) \geq \delta(T)$ by Theorem~\ref{theo:delta_gamma}
    and $\delta(T') = O(\delta(T))$ by Theorem~\ref{theo:upper_bound_delta}.
    Then, Lemma~\ref{lem:squeeze} leads that
    \[
    \frac{\gamma(T')}{\gamma(T)} = O\left(\frac{\delta(T') \log n}{\delta(T)}\right) = O(\log n).
    \]
  Similarly, 
  $\gamma(T')-\gamma(T) = O(\delta(T') \log n) \subseteq O(\delta(T) \log n)$ holds.
\end{proof}

\section{Run-Length Burrows-Wheeler Transform (RLBWT)}
\label{sec:rlbwt}

The \emph{Burrows-Wheeler transform} (\emph{BWT}) of a string $T$,
denoted $\BWT(T)$, is the string obtained by concatenating
the last characters of the lexicographically sorted suffixes of $T$.
The \emph{run-length BWT} (\emph{RLBWT}) of $T$ is
the run-length encoding of $\BWT(T)$
and $r(T)$ denotes its size, i.e., the number of maximal character runs in $\BWT(T)$.

For example, for string $T = \mathtt{abbaabababab}$,
$r(T) = 4$ since $\BWT(T) = \mathtt{babbbbbaaaaa}$ consists in four maximal character runs $\mathtt{b^1a^1b^5a^5}$.

\begin{theorem}[Theorem 1 of~\cite{GiulianiILPST21}] \label{theo:rlbwt_sofsem}
  There exists a family of strings $S$ such that $r(S) = 2$ and $r(S') = \Theta(\log n)$, where $n = |S|$ and $S'$ is a string obtained by prepending a character to $S$. The string $S$ is a reversed Fibonacci word. 
\end{theorem}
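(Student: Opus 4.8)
The plan is to work with the $\$$-free Burrows--Wheeler transform of~\cite{GiulianiILPST21} (the last column of the matrix of lexicographically sorted cyclic rotations), whose key feature for us is that it depends only on the conjugacy class of a primitive string, and to exploit the recursive self-similarity of Fibonacci words. Define the finite Fibonacci words by $f_0 = \mathtt{a}$, $f_1 = \mathtt{ab}$, and $f_k = f_{k-1}f_{k-2}$, so that $|f_k| = F_{k+2}$ and $f_k$ is primitive; set $S = \widetilde{f_k}$, $n = |S|$, so $k = \Theta(\log n)$, and let $S' = c\,S = \widetilde{f_k c}$ for a suitable letter $c \in \{\mathtt{a},\mathtt{b}\}$.

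\textbf{Step 1: $r(S)=2$.} I would first recall the classical fact that $f_k$ is a standard Sturmian word and that the $\$$-free BWT of a standard word has exactly two runs: the last column of its sorted-rotations matrix has the form $\mathtt{b}^{p}\mathtt{a}^{q}$ with $p+q=n$. I would then use the classical fact that a finite Fibonacci word is a conjugate of its own reversal --- equivalently, its set of cyclic rotations is closed under reversal --- which follows from the palindromic structure $f_k = q_k\,xy$ with $q_k = \widetilde{q_k}$ and $xy \in \{\mathtt{ab},\mathtt{ba}\}$. Since the $\$$-free BWT is a conjugacy invariant of primitive strings, $r(S) = r(\widetilde{f_k}) = r(f_k) = 2$.

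\textbf{Step 2: $r(S') = \Theta(\log n)$.} The prepended letter keeps $S'$ primitive but makes it not a conjugate of any Fibonacci word (its length is not a Fibonacci number), so the clean two-run picture is destroyed and one must analyse the sorted rotations of $S'$ directly. For the upper bound, a $\polylog(n)$ bound is immediate: $\delta$ is reversal-invariant and $\delta(f_k) = O(1)$, hence $\delta(S) = O(1)$ and, by Theorem~\ref{theo:upper_bound_delta}, $\delta(S') = O(1)$, so the known relations between $r$ and $\delta$ give $r(S') = \polylog(n)$; sharpening this to $r(S') = O(\log n)$ I would obtain from the same explicit recursive description used for the lower bound. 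For the lower bound $r(S') = \Omega(\log n)$, the plan is to peel the recursion $f_k = f_{k-2}f_{k-3}f_{k-2} = \cdots$ and exhibit, for each scale $j$, a long right-context $u_j$ (a reversed prefix of $f_j$ coming from this decomposition) whose occurrences as a prefix among the rotations of $S'$ are preceded by \emph{both} $\mathtt{a}$ and $\mathtt{b}$: intuitively, the single inserted letter misaligns exactly one copy of each scale's building block, so each scale $j = 2,\dots,k$ forces at least one fresh boundary in the last column, and these $\Theta(k)$ boundaries are pairwise distinct because the $u_j$ are nested and of strictly increasing length.

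\textbf{Main obstacle.} The crux is making Step 2 rigorous: one needs a sufficiently explicit handle on the rank, in sorted order, of every cyclic rotation of $S'$, both to locate the $\Theta(\log n)$ genuine run boundaries and to cap their number. This is exactly where the fine combinatorics of Fibonacci words enters --- singular words and bispecial factors, the precise positions of the factor $\mathtt{aa}$, and the way a freshly prepended letter inserts itself among rotations sharing a long common prefix. I expect the cleanest execution is an induction on $k$ that maintains a closed-form description of the sorted-rotation matrix of $\widetilde{f_k c}$, and of its last column, from which $r(S') = \Theta(\log n)$ can simply be read off.
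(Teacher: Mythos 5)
First, note that the paper does not prove this statement: it is imported verbatim as Theorem~1 of~\cite{GiulianiILPST21}, so your attempt is really being measured against that external source rather than against an in-paper argument. Your Step~1 is sound as stated: the rotation-based BWT of a primitive word is a conjugacy invariant, the BWT of a standard word is of the form $\mathtt{b}^{p}\mathtt{a}^{q}$ (Mantaci--Restivo--Sciortino), and the conjugacy class of a Fibonacci word is closed under reversal via the palindromic factorization $f_k = q_k xy$; together these give $r(\widetilde{f_k}) = 2$. (One definitional caveat: this paper defines $\BWT$ via sorted suffixes without an end-marker, whereas your conjugacy-invariance argument needs the rotation-based definition used in~\cite{GiulianiILPST21}; you should state explicitly which variant you are proving the theorem for.) Your upper bound is also essentially complete and you undersell it: since $\delta(\widetilde{f_k}) \leq 2$ by Sturmian factor complexity and $\delta(S') \leq \delta(S)+1$ by Theorem~\ref{theo:upper_bound_delta}, Theorem~\ref{theo_r_vs_delta} with $\delta = O(1)$ already yields $r(S') = O(\log n)$ directly, not merely $\polylog(n)$, so no further sharpening is needed on that side.

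The genuine gap is the lower bound $r(S') = \Omega(\log n)$, which is the entire content of the theorem. Your plan --- peel the recursion $f_k = f_{k-2}f_{k-3}f_{k-2} = \cdots$ and exhibit, for each scale $j$, a right-context $u_j$ preceded by both letters among the rotations of $S'$ --- is the right shape of argument and is close in spirit to what Giuliani et al.\ actually do (they track how the single prepended letter perturbs the otherwise perfectly interleaved rotations of a standard word, using the fine combinatorics of Sturmian factors). But as written it is an intuition, not a proof: you never construct the words $u_j$, never verify that each is genuinely left-extended by both $\mathtt{a}$ and $\mathtt{b}$ \emph{as a prefix of rotations of $S'$} (rather than merely as a factor of the infinite Fibonacci word, which would say nothing about run boundaries in the last column), and never argue that the resulting $\Theta(k)$ boundaries occupy distinct, non-adjacent positions in the last column rather than collapsing into the same run. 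You flag this yourself as the ``main obstacle,'' which is an accurate self-assessment: without the closed-form description of the sorted-rotation matrix of $\widetilde{f_k c}$ that you defer to, the proposal establishes $r(S)=2$ and $r(S') = O(\log n)$ but not the matching $\Omega(\log n)$, so the theorem is not yet proved.
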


Theorem~\ref{theo:rlbwt_sofsem} immediately leads to the following lower bound
for the sensitivity of $r$:
\begin{corollary} \label{coro:rlbwt_lowerbounds}
    The following lower bound on the sensitivity of RLBWT with $|\Sigma| = 2$ hold: \\
    \textbf{insertions:} $\MSensIns(r,n) = \Omega(\log n)$. $\ASensIns(r,n) = \Omega(\log n)$.
\end{corollary}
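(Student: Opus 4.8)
The plan is to derive the lower bound directly from Theorem~\ref{theo:rlbwt_sofsem}, which already hands us the hard combinatorial work (the analysis of the reversed Fibonacci word and the effect of prepending a character on the number of BWT runs). The only thing left is a bookkeeping argument translating the ``prepend'' operation in that theorem into the ``insertion'' operation that appears in the definition of $\MSensIns$ and $\ASensIns$, and then reading off the claimed asymptotics.

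First I would recall the family $S$ from Theorem~\ref{theo:rlbwt_sofsem}: for infinitely many $n$ there is a binary string $S$ of length $n$ with $r(S) = 2$, and a string $S'$ obtained by prepending one character to $S$ with $r(S') = \Theta(\log n)$ and $|S'| = n+1$. Prepending a character is a special case of an insertion, so $\ed(S, S') = 1$ with $S' \in \Sigma^{n+1}$; hence the pair $(S, S')$ is admissible in the maximization defining $\MSensIns(r, n)$ and $\ASensIns(r, n)$. Consequently
\[
\MSensIns(r,n) \;\geq\; \frac{r(S')}{r(S)} \;=\; \frac{\Theta(\log n)}{2} \;=\; \Omega(\log n),
\]
and
\[
\ASensIns(r,n) \;\geq\; r(S') - r(S) \;=\; \Theta(\log n) - 2 \;=\; \Omega(\log n).
\]
Since this holds for an infinite family of lengths $n$, both lower bounds follow, and the family is binary so the restriction $|\Sigma| = 2$ is respected. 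I would also note the implicit reliance on $r(S) = 2 = \Theta(1)$: it is precisely the constancy of $r(S)$ that turns the additive gap of order $\log n$ into a multiplicative factor of order $\log n$ as well.

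There is essentially no obstacle here beyond making sure the direction of the inequality and the string lengths match the definitions in Section~\ref{sec:preliminaries}: $\MSensIns(C,n)$ takes $T$ of length $n$ and $T'$ of length $n+1$, so we must start from $S$ (length $n$) and go to $S'$ (length $n+1$), which is exactly how Theorem~\ref{theo:rlbwt_sofsem} is phrased. The one subtlety worth a sentence is that $\MSensIns$ and $\ASensIns$ are defined as maxima over \emph{all} length-$n$ strings and \emph{all} single-character insertions, so exhibiting one good pair $(S,S')$ for each $n$ in the family suffices to lower-bound the maximum; no further argument about other strings or other insertion positions is needed.
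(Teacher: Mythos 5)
Your proposal is correct and matches the paper exactly: the paper derives Corollary~\ref{coro:rlbwt_lowerbounds} as an immediate consequence of Theorem~\ref{theo:rlbwt_sofsem}, treating the prepending of a character as a special case of insertion and using $r(S)=2$ together with $r(S')=\Theta(\log n)$ to obtain both the multiplicative and additive $\Omega(\log n)$ bounds. Your additional bookkeeping about string lengths and the maximization over all strings is a correct (if more explicit) rendering of the same argument.
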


To obtain a non-trivial upper bound for the sensitivity of $r$,
we can use the following known result:

\begin{theorem}[Theorem III.7 of~\cite{kempa2020resolution}] \label{theo_r_vs_delta}
  For any string $T$ of length $n$,
  $$r(T) = O\left(\delta(T) \max \left(1, \log \frac{n}{\delta(T) \log \delta(T)}\right) \log \delta(T) \right).$$
\end{theorem}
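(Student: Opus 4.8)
Since the statement is the main theorem of \cite{kempa2020resolution} (the resolution of the BWT-run conjecture), the plan is to sketch that paper's argument rather than a self-contained proof. First I would reduce $r(T)$ to a counting problem on the sorted suffix list. Writing $\mathsf{SA}$ for the suffix array, we have $\BWT(T)[j] = T[\mathsf{SA}[j]-1]$ with the usual sentinel convention, so $r(T) = 1 + |\{ j : T[\mathsf{SA}[j]-1] \neq T[\mathsf{SA}[j-1]-1]\}|$. Hence it suffices to partition the sorted list $T[\mathsf{SA}[1]..n],\dots,T[\mathsf{SA}[n]..n]$ into few \emph{blocks} inside each of which the sequence of preceding characters produces only $O(1)$ mismatches; the number of blocks is then the desired bound on $r(T)$.

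Second, I would fix a parameter $\tau$ (optimised last) and split the suffixes by periodicity: position $i$ is $\tau$-periodic if the maximal run of period $\le\lfloor\tau/3\rfloor$ through $i$ has length $\ge\tau-1$, and $\tau$-non-periodic otherwise. For the non-periodic suffixes I would invoke a $\tau$-synchronizing set $\mathsf{S}$ of size $O(n/\tau)$ (Kempa--Kociumaka). Its density property yields, for each non-periodic $i$, a successor $s(i)=\min(\mathsf{S}\cap[i,i+\tau])$, and its consistency property makes both $s(i)-i$ and $T[i..s(i)]$ functions of the length-$O(\tau)$ window $T[i..i+O(\tau)]$. Grouping the non-periodic suffixes by the pair $(T[i..s(i)-1],\,s(i)-i)$ therefore creates at most $\Substr(T,O(\tau)) = O(\delta(T)\,\tau)$ classes, and within a class the lexicographic order of $T[i..n]$ matches that of the anchored suffix $T[s(i)..n]$. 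Since the $O(n/\tau)$ suffixes starting in $\mathsf{S}$ sort into $O(n/\tau)$ intervals, pulling back partitions the non-periodic suffixes into $O(\delta(T)\,\tau + n/\tau)$ blocks, each a fixed class restricted to a contiguous run of $\mathsf{S}$-suffixes.

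Third comes the step I expect to be the main obstacle: within such a block the preceding character $T[i-1]$ is \emph{not} determined by the anchored suffix, so the partition above does not yet control the BWT. The fix is recursion: the $\mathsf{S}$-suffixes, renamed by the identity of their length-$O(\tau)$ left-contexts, form a shorter string $T'$ of length $O(n/\tau)$ with $\delta(T')=O(\delta(T))$, and the preceding-character sequence of a block, read in $\mathsf{S}$-suffix order, is a contiguous piece of a bounded number of BWT-like sequences of $T'$, so its mismatch count is governed by $r(T')$. Running the recursion $O(\log_\tau(n/\delta))$ times drives the length below $O(\delta)$, each level contributing $O(\delta\,\tau)$ new block boundaries. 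In parallel one handles the periodic suffixes via the structure theory of maximal runs: a periodic suffix lies in a run $T[a..b]$ of root period $p\le\tau/3$, suffixes sharing a Lyndon root and phase occupy a contiguous stretch of $\mathsf{SA}$ ordered by how much of the run remains and then by the (non-periodic, already handled) text following the run, so they add only $O(\Substr(T,O(\tau))) = O(\delta\,\tau)$ boundaries per level. Summing gives a bound of the form $O(\delta\,\tau\log_\tau(n/\delta))$; choosing $\tau$ appropriately at each level — the right choice is tied to the (growing) alphabet of the renamed string, which is where the extra $\log\delta$ factor surfaces — and carefully tracking how $n$, $\delta$, and the alphabet evolve under renaming turns this into the stated closed form $O\!\left(\delta\max\!\left(1,\log\tfrac{n}{\delta\log\delta}\right)\log\delta\right)$. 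Besides the preceding-character recursion, the other delicate points are proving the structural facts about runs of a common root occupying a contiguous, well-understood region of $\mathsf{SA}$, and the bookkeeping that converts the clean $O(\delta\,\tau\log_\tau(n/\delta))$ estimate into the precise expression.
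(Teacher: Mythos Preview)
The paper does not prove this statement: it is quoted verbatim as Theorem~III.7 of~\cite{kempa2020resolution} and used as a black box (together with $\delta \leq r$ and Theorem~\ref{theo:upper_bound_delta}) to derive Corollary~\ref{coro:rlbwt_upperbounds} via Lemma~\ref{lem:squeeze}. So there is no ``paper's own proof'' to compare your sketch against; the intended ``proof'' here is simply the citation.

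That said, your sketch does track the architecture of the Kempa--Kociumaka argument (string synchronizing sets, the periodic/non-periodic dichotomy, recursion on a shorter renamed string), which is appropriate if you were asked to reproduce the \emph{original} proof. For the purposes of the present paper, however, all that is needed is to invoke the cited bound.
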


\begin{corollary} \label{coro:rlbwt_upperbounds}
The following upper bounds on the sensitivity of $r$ hold: \\
  \textbf{substitutions:} $\MSensSub(r,n) = O(\log n \log r)$. $\ASensSub(r,n) = O(r \log n \log r)$.\\
  \textbf{insertions:} $\MSensIns(r,n) = O(\log n \log r)$. $\ASensIns(r,n) = O(r \log n \log r)$.\\
  \textbf{deletions:} $\MSensDel(r,n) = O(\log n \log r)$. $\ASensDel(r,n) = O(r \log n \log r)$.
\end{corollary}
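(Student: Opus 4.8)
The plan is to derive Corollary~\ref{coro:rlbwt_upperbounds} by combining the sensitivity bound for $\delta$ (Theorem~\ref{theo:upper_bound_delta}) with the known relation between $r$ and $\delta$ (Theorem~\ref{theo_r_vs_delta}) via the ``sandwich'' lemma (Lemma~\ref{lem:squeeze}). First I would set $\alpha = \delta$ and $\beta = r$ and check the three hypotheses of Lemma~\ref{lem:squeeze}: (i) $\delta(T')/\delta(T) = O(1)$ holds for all edit types by Theorem~\ref{theo:upper_bound_delta}; (ii) $\delta(T) \leq r(T)$ holds for any string, since $\delta(T) \leq \gamma(T) \leq r(T)$ (the first inequality from \cite{KociumakaNP20}, already noted in Section~\ref{sec:delta}, and the second from \cite{KempaP18}, noted in Section~\ref{sec:gamma}); and (iii) $r(T) = O(\delta(T) \cdot f(n,\delta(T)))$ with $f(n,d) = \log \frac{n}{d\log d}\cdot \log d$, which is exactly Theorem~\ref{theo_r_vs_delta} (absorbing the $\max(1,\cdot)$ into $O$-notation).

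The one genuine technical point is verifying the scaling condition on $f$ in Lemma~\ref{lem:squeeze}: for any constant $c$ there must be a constant $c'$ with $f(n, c\cdot d) \leq c' f(n, d)$, where $d = \delta(T)$. I would argue this directly for $f(n,d) = \log\!\frac{n}{d\log d}\cdot\log d$. The factor $\log(cd) = \log c + \log d \leq c_1 \log d$ for $d \geq 2$ and some constant $c_1$ depending on $c$; and $\log\frac{n}{cd\log(cd)} \leq \log\frac{n}{d\log d}$ when $c \geq 1$ (the argument only shrinks), while for $c < 1$ one has $\log\frac{n}{cd\log(cd)} \leq \log\frac{n}{d\log d} + \log\frac{1}{c} + O(1) \leq c_2\log\frac{n}{d\log d}$ as long as the latter is $\Omega(1)$, which we may assume since otherwise $r(T) = \Theta(\delta(T))$ and the corollary is immediate. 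Multiplying these two estimates gives $f(n,cd) \leq c' f(n,d)$ as required. (The same reasoning covers the class of polynomial/poly-logarithmic $f$ mentioned right after Lemma~\ref{lem:squeeze}.)

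With the hypotheses in place, Lemma~\ref{lem:squeeze} directly yields $\MSens_{*}(r,n) = O(f(n,\delta))$ and $\ASens_{*}(r,n) = O(\delta\cdot f(n,\delta))$ for $* \in \{\mathrm{sub}, \mathrm{ins}, \mathrm{del}\}$. It then remains only to rewrite these bounds in the cleaner form stated in the corollary: $f(n,\delta) = \log\frac{n}{\delta\log\delta}\log\delta = O(\log n\,\log\delta)$, and since $\delta(T) \leq r(T)$ we have $\log\delta = O(\log r)$ and $\log n\,\log\delta = O(\log n\,\log r)$; likewise $\delta\cdot f(n,\delta) = O(\delta\log n\log r) = O(r\log n\log r)$ using $\delta \leq r$ once more. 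This gives all six bounds claimed in Corollary~\ref{coro:rlbwt_upperbounds}.

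The main obstacle, such as it is, is the bookkeeping around the scaling condition (iii) and the edge case where $\delta$ is so large relative to $n$ that $\log\frac{n}{\delta\log\delta}$ degenerates; this is handled by observing that in that regime $r = \Theta(\delta)$ and the $O(1)$ sensitivity of $\delta$ transfers immediately. No new combinatorial insight about the BWT itself is needed — the entire argument is a reduction to the already-established sensitivity of $\delta$.
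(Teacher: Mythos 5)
Your proof is correct and follows essentially the same route as the paper: both reduce the sensitivity of $r$ to that of $\delta$ via Lemma~\ref{lem:squeeze}, using $\delta(T) \leq r(T)$ and Theorem~\ref{theo_r_vs_delta}. The only difference is that the paper first relaxes Theorem~\ref{theo_r_vs_delta} to the weaker form $r(T) = O(\delta(T)\log n \log\delta(T))$, i.e.\ $f(n,d) = \log n\log d$, which satisfies the scaling hypothesis of Lemma~\ref{lem:squeeze} immediately and avoids the bookkeeping you carry out for $f(n,d)=\log\frac{n}{d\log d}\log d$ and its degenerate regime.
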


\begin{proof}
  For any string $T$, it is known that
  $\delta(T) \leq r(T)$~\cite{KempaP18,KociumakaNP20}.
  We also use a simplified and relaxed bound $r(T) = O(\delta(T) \log n \log \delta(T))$ from Theorem~\ref{theo_r_vs_delta}, which always holds and is sufficient for our purpose.

  Let $T'$ be any string with $\ed(T,T') = 1$.
  It follows from Theorem~\ref{theo:upper_bound_delta} that
  $\delta(T') \leq 2 \delta(T)$.
  Therefore, we obtain $r(T') = O(\delta(T') \log n \log \delta(T')) = O(\delta(T) \log n \log \delta(T)) = O(r(T) \log n \log r(T))$ by Lemma~\ref{lem:squeeze}. 
  This leads to the claimed upper bounds for the sensitivity for $r$.
\end{proof}

We remark that 
the lower bounds $\MSensIns(r,n) = \Omega(\log n)$ and $\ASensIns(r,n) = \Omega(\log n)$ from Theorem~\ref{theo:rlbwt_sofsem} and Corollary~\ref{coro:rlbwt_lowerbounds} are asymptotically tight when $r = O(1)$, since $\MSensIns(r,n) = O(\log n \log r) = O(\log n)$ and $\ASensIns(r,n) = \Omega(\log n)$ in this case. 
 \section{Bidirectional Scheme}
\label{sec:bs}

In this section, we consider the worst-case sensitivity of the size of \emph{bidirectional scheme}~\cite{StorerS82}.
A factorization $T = f_{1} \cdots f_{b}$ for a string $T$ of length $n$ is a bidirectional scheme of $T$
if each phrase $f_j=T[p_j..p_j+\ell_j-1]$ is
either a single character or corresponding to another substring $T[q_j..q_j+\ell_j-1]$ where $\ell_j = |f_j|$ such that $p_j \neq q_j$.
We denote $f_j$ either a single character or the pair $(q_j, \ell_j)$.
If $|f_j|=1$, then $f_j$ is called a ground phrase.
A bidirectional scheme $B$ for $T$ defines a function $F_B:[1..n]\cup\{ 0\}\rightarrow[1..n]\cup\{ 0\}$, where
\[
\begin{cases}
        F_B(p_j)=0, & \text{ if } f_j \text{ is a ground phrase,}\\
        F_B(p_j+k)=q_j+k, & \text{ if } f_j = (q_j, \ell_j) \text{ and } 0 \leq k < \ell_j,\\
        F_B(0)=0.
\end{cases}
\]
  Let $F_B^0(p_j)=p_j$ and $F_B^m(p_j)=F_B(F_B^{m-1}(p_j))$ for any $m\geq 1$.
  A bidirectional scheme $B$ is called \emph{valid} if $F_B$ has no cycles;
  namely, there exists an $m\geq 1$ such that $F_B^{m}(x) = 0$ for every $x \in [1..n]$.
The string $T$ can be reconstructed from the bidirectional scheme if and only if it is valid.
The \emph{size} of a valid bidirectional scheme $B$ is the number of phrases in $B$.
We denote by $b(T)$ the size of a valid bidirectional scheme for $T$
of the smallest size possible.

For example, for string $T = \mathtt{abaabababbbba}$,
$B$ shown below is a valid bidirectional scheme of the smallest size possible:
\begin{equation*}
  \begin{split}
    B=(4,3)(6,4)\mathtt{ab}(9,3)\mathtt{a},
  \end{split}
\end{equation*}
where its corresponding factorization is:
$$B=\mathtt{aba|abab|a|b|bbb|a|}.$$
Here we have $b(T) = 6$.

In what follows, we present upper and lower bounds
for the multiplicative/additive sensitivity of $b$.
It is noteworthy that	our upper and lower bounds for the multiplicative sensitivity of $b$ for substitutions and insertions are tight. 

\subsection{Lower bounds for the sensitivity of $b$}

\begin{theorem} \label{theo:l_bs_1}
  The following lower bounds on
  the sensitivity of $b$ hold:  \\
  \textbf{substitutions:} $\MSensSub(b,n) \geq 2$. \\
  \textbf{insertions:} $\MSensIns(b,n) \geq 2$.
\end{theorem}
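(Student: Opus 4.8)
The plan is to exhibit, for each edit type, an explicit family of strings $T$ (parameterized so that $n \to \infty$) together with an edited string $T'$, and to show $b(T')/b(T) \to 2$. Since we only need lower bounds, I must prove a \emph{lower} bound on $b(T')$ and an \emph{upper} bound on $b(T)$; the latter is easy because I can just display a valid bidirectional scheme of the desired size, while the former requires an argument that \emph{every} valid bidirectional scheme of $T'$ is large. A natural source of such strings is one where $T$ is highly compressible by a bidirectional scheme — e.g. built from a short ``seed'' that is copied many times, or a run-rich string like a power $\mathtt{a}^k$ interleaved with unique separators — but where the single edit destroys the long-range repeats that made the scheme small, forcing roughly twice as many phrases. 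In spirit this mirrors the $\gamma$ lower bound already in the paper: take $T = \mathtt{a}^k \mathtt{x} \mathtt{a}^{k+1} \#_1 \cdots$, where the prefix $\mathtt{a}^k \mathtt{x} \mathtt{a}^{k+1}$ serves as the ``source'' all later blocks copy from, and substituting or inserting a character inside that prefix kills it as a source.

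For the \emph{substitution} case, I would pick a string $T$ of length $n = \Theta(k^2)$ (or $\Theta(k)$, whichever gives the cleanest counting) in which $b(T) \le k + O(1)$ via a scheme where $k$ blocks each copy from a single shared region, while after substituting one character in that shared region, $T'$ admits no source for those $k$ blocks, so each must be (essentially) self-encoded, giving $b(T') \ge 2k - O(1)$. To prove $b(T') \ge 2k - O(1)$, I would use the standard ``coverage'' lower bound for bidirectional schemes: if a substring $w$ occurs in $T'$ but $w$ occurs only at a bounded number of positions, then every occurrence region must be ``touched'' by a phrase boundary or by the ground phrases, because a phrase internal to one occurrence of $w$ must copy from a disjoint occurrence of the same content, and validity (acyclicity of $F_B$) forbids infinite copy-chains. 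Counting the disjoint ``forced'' intervals — one per $\#_j$ marker, one per surviving long power, plus $\Theta(k)$ intervals corresponding to the now-sourceless blocks — yields the bound. For \emph{insertion}, I would use essentially the same $T$ but insert a fresh or repeated character into the shared source region; the analysis is parallel, and the inserted character similarly destroys the unique long source, again forcing $b(T') \ge 2b(T) - O(1)$.

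The main obstacle I anticipate is the lower bound on $b(T')$: unlike $\gamma$ or $z$, a bidirectional scheme may copy from the \emph{right} as well as the left, so I cannot argue ``leftmost occurrence must be a literal'' and must instead reason about the whole occurrence structure and the acyclicity condition $F_B^m(x) = 0$. The clean way to handle this is to identify $\Theta(k)$ pairwise-disjoint intervals of $T'$ such that each interval's content, across \emph{all} of $T'$, occurs only inside that interval (so any phrase lying strictly inside it would have to copy from itself or from an overlapping copy, violating validity), hence each interval must contain a phrase boundary or a ground phrase — this is robust to bidirectionality. Designing $T$ so that after the edit there are $\approx 2b(T)$ such ``uniquely-placed'' intervals, while before the edit the shared source collapses them to $\approx b(T)$, is the crux; the $\#_j$-separated block construction above is engineered precisely so that the prefix block is the \emph{only} common source, and its destruction doubles the count. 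Once the intervals are in hand, the counting and the limit $\liminf_{n\to\infty} b(T')/b(T) \ge \lim_{k\to\infty}(2k - O(1))/(k + O(1)) = 2$ are routine.
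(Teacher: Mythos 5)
Your plan never actually produces the witness family, and the accounting you hope for is in tension with itself. You want a string $T$ with $k$ blocks separated by unique markers $\#_j$, all copying from one shared source, with $b(T)\le k+O(1)$ and $b(T')\ge 2k-O(1)$ after the edit. But each unique marker $\#_j$ must itself be a ground phrase in \emph{every} valid bidirectional scheme of $T$ (a longer phrase covering $\#_j$ would force a second occurrence of $\#_j$), so the markers alone already contribute $k$ phrases to $b(T)$, and each block needs at least one more phrase; hence $b(T)=2k+O(1)$, not $k+O(1)$. This is exactly what happens in the paper's Theorem~\ref{theo:l_bs_2}, where this style of construction is used: there $b(T)\le 2k+4$ and $b(T')=3k+5$, so the multiplicative ratio tops out at $3/2$, not $2$. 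Your sketch identifies "designing $T$ so that the edit doubles the count" as the crux and defers it — but that crux is precisely where the construction you describe falls short of the constant $2$. A secondary issue: even if the counting worked, the $O(1)$ slack only yields $\liminf_{n\to\infty}\MSensSub(b,n)\ge 2$, whereas the theorem asserts $\MSensSub(b,n)\ge 2$ outright.

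The paper's actual proof sidesteps all of this with a two-line construction: take $T=\mathtt{a}^n$, for which $b(T)=2$ (one ground phrase $\mathtt{a}$ plus one self-referencing copy), and let $T'$ be obtained by substituting (or inserting) a $\mathtt{b}$ near the middle. Then $\mathtt{b}$ must be a ground phrase, at least one $\mathtt{a}$ must be a ground phrase, and the two $\mathtt{a}$-runs flanking $\mathtt{b}$ cannot both be single copy phrases referring to each other (that creates a cycle in $F_B$), so $b(T')=4$ and the ratio is exactly $2$ for every $n$. If you want to salvage your occurrence-structure machinery, it is the right tool for the \emph{additive} lower bound $\ASensSub(b,n)\ge b/2-1=\Omega(\sqrt n)$ (the paper's Theorem~\ref{theo:l_bs_2}), where the unary example is useless; but for the multiplicative constant $2$ the unary string is both necessary-to-find and sufficient.
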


\begin{proof}
  \textbf{substitutions}:
  Consider strings $T = \mathtt{a^\mathit{n}}$ and $T' = \mathtt{a^{\lceil\mathit{n}/2\rceil-1} b a^{\lfloor\mathit{n}/2\rfloor}}$.
  Then $b(T)=2$ and $b(T')=4$ hold.
  Thus we get $\MSensSub(b,n) \geq 2$. 
  
\textbf{insertions}:
  Consider strings $T = \mathtt{a^\mathit{n}}$ and $T' = \mathtt{a^{\lceil\mathit{n}/2\rceil} b a^{\lfloor\mathit{n}/2\rfloor}}$.
  Then $b(T)=2$ and $b(T')=4$ hold.
  Thus we get $\MSensIns(b,n) \geq 2$. 
\end{proof}

The family of strings used in Theorem~\ref{theo:l_bs_1} gives us tight lower bounds for multiplicative sensitivities.
However, this family of strings only provides us with weak lower bound $2$ for the additive sensitivity of $b$.
The following theorem will give us stronger lower bounds for the additive sensitivity for $b$.
We remark that this theorem also leads us to a non-trivial lower bound for the multiplicative sensitivity of $b$ in the case of deletions.

\begin{theorem} \label{theo:l_bs_2}
  The following lower bounds on
  the sensitivity of $b$ hold:  \\
  \textbf{substitutions:} $\ASensSub(b,n) \geq b/2-1$, and $\ASensSub(b,n)=\Omega(\sqrt n)$. \\
  \textbf{insertions:} $\ASensIns(b,n) \geq b/2-1$, and $\ASensIns(b,n)=\Omega(\sqrt n)$. \\
  \textbf{deletions:} $\lim \inf_{n\to\infty}\MSensDel(b,n) \geq 1.5$, $\ASensDel(b,n) \geq b/2-3$, and $\ASensDel(b,n)=\Omega(\sqrt n)$.
\end{theorem}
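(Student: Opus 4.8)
The plan is to pin down, for every sufficiently large integer $m$, one concrete string $T=T_m$ of length $n=\Theta(m^2)$ together with one distinguished position $p$, so that this single instance witnesses all six lower bounds at once. I aim for $b(T)=2m+O(\log m)$ and, writing $T'$ for the string obtained by deleting (respectively substituting, inserting) at $p$, for $b(T')\ge 3m$. Granting these two estimates everything follows: $b(T')-b(T)\ge m-O(\log m)$, which is of the form $b(T)/2-O(\log b)$ and is $\Omega(\sqrt n)$ because $b(T)=\Theta(m)=\Theta(\sqrt n)$; and $b(T')/b(T)\ge (3m+1)/(2m+O(\log m))$, which tends to $3/2$, giving $\liminf_{n\to\infty}\MSensDel(b,n)\ge 1.5$.

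For the construction I would use a ``reservoir'' $R$ that is simultaneously very compressible and combinatorially rich: a prefix of an overlap-free word such as the Thue--Morse word, of length $\Theta(m)$, so that $b(R)=O(\log m)$ while all of its length-$m$ windows $W_1,\dots,W_m$ (with $W_j=R[j..j+m-1]$) are pairwise distinct. With $m$ fresh, pairwise-distinct marker characters $\#_1,\dots,\#_m$, set
$$T \;=\; R\,\#_1 W_1\,\#_2 W_2 \cdots \#_m W_m .$$
The upper bound on $b(T)$ is immediate: compress $R$ with $O(\log m)$ phrases, use one ground phrase per marker $\#_j$, and let each $W_j$ be a single copy phrase pointing at its occurrence $R[j..j+m-1]$; this is a valid (acyclic) bidirectional scheme of size $b(R)+2m=2m+O(\log m)$.

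The heart of the proof --- and the step I expect to be the main obstacle, since $b$ is NP-hard and has no off-the-shelf lower-bound technique --- is the lower bound on $b(T')$. Choose $p$ in the middle of $R$, so that $p$ lies inside every window $W_1,\dots,W_m$ (possible because each window has length $m$ while $R$ has length about $2m$). After the edit none of the $W_j$ occurs in the new reservoir $R'$; moreover, since the $W_j$ all have the common length $m$ and are pairwise distinct, and the unique markers forbid an occurrence straddling two blocks, each $W_j$ occurs in $T'$ only inside its own block. Hence in any valid bidirectional scheme for $T'$ the unique characters $\#_j,\#_{j+1}$ force every phrase that meets block $W_j$ to lie inside it, and a single phrase covering that block would be a non-ground copy phrase, exhibiting another occurrence of $W_j$ in $T'$ --- a contradiction. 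So each of the $m$ blocks costs at least two phrases, and adding one phrase per marker and at least one for $R'$ yields $b(T')\ge 2m+m+1=3m+1$. The residual routine points --- that the Thue--Morse prefix really does have $m$ distinct length-$m$ windows through a common position, that the edit leaves no $W_j$ elsewhere in $T'$ including inside $R'$, and that a slightly more careful reservoir absorbs the $O(\log m)$ slack down to the $b/2-1$ stated in the theorem --- are handled by standard overlap-freeness combinatorics (enlarging the window length to a constant multiple of $m$ if convenient). The substitution and insertion statements use the same $T$ with the edit at $p$ replaced by substituting or inserting a fresh character, and the multiplicative deletion bound is read off exactly as above.
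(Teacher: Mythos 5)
Your template is the same as the paper's: a ``reservoir'' containing $\Theta(\sqrt n)$ distinct equal-length windows that all pass through one editable position, followed by marker-separated copies of those windows, each copied by a single phrase pointing into the reservoir; the edit kills the reservoir occurrences, forcing every block to split. The paper instantiates this with the reservoir $\mathtt{a}^k\mathtt{x}\mathtt{a}^{k+1}$ and windows $\mathtt{a}^{k-j+1}\mathtt{x}\mathtt{a}^{j}$, which costs only $4$ phrases (so the constant $b/2-1$ comes out exactly) and makes every uniqueness claim immediate, because each window is identified by the offset of the single $\mathtt{x}$ inside it and the edit removes the only $\mathtt{x}$ in the reservoir.

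The step you flag as the heart of the argument is also where your instantiation has a real hole. You need that after changing one character of the Thue--Morse reservoir $R$ at position $p$, no window $W_j$ occurs anywhere in the modified reservoir $R'$. Pairwise distinctness of the $W_j$ only rules out occurrences of $W_j$ at offsets whose window avoids $p$; at an offset $j'$ whose window contains $p$, the string $R'[j'..j'+m-1]$ is $W_{j'}$ with one character flipped, and nothing you have assumed prevents this from equalling $W_j$. This is not excluded by overlap-freeness: Thue--Morse does contain pairs of equal-length factors differing in exactly one position (e.g.\ $\mathtt{0110}$ and $\mathtt{0100}$ are both factors). So ``none of the $W_j$ occurs in $R'$'' needs a genuine proof or a different reservoir; the paper's $\mathtt{a}^k\mathtt{x}\mathtt{a}^{k+1}$ sidesteps it entirely. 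Two smaller points: your bound as stated is $b/2-O(\log b)$ rather than the claimed $b/2-1$ (you acknowledge this, but the fix is precisely to drop Thue--Morse for the cheap reservoir), and your per-block count gives two phrases per $W_j$ where the paper additionally accounts for the ground phrases forced by $\mathtt{y}$, the markers, and one occurrence of $\mathtt{x}$ to land exactly on $b(T')=3k+5$.
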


\begin{proof}
  Consider string $$T = \mathtt{a^\mathit{k} x a^{\mathit{k}+1} \#_1 a^{\mathit{k}} x a \#_2 a^{\mathit{k}-1} x a^2 \#_3 \cdots \#_{\mathit{k}} a x a^\mathit{k}},$$ where $\#_j$ for every $1 \leq j \leq k$ is a distinct character.
  One of the valid bidirectional schemes $B$ for $T$ is
  \begin{equation*}
    \begin{split}
      B=(k+2,k)\mathtt{xa}(k+2,k)\mathtt{\#_1}(1,k+2)\mathtt{\#_2}(2,k+2)\mathtt{\#_3}\cdots\mathtt{\#_{\mathit{k}}}(k,k+2).
    \end{split}
  \end{equation*}
  The corresponding factorization of the above bidirectional scheme is as follows:
  $$B=\mathtt{a^{\mathit{k}}|x|a|a^{\mathit{k}}|\#_1| a^{\mathit{k}} x a|\#_2|a^{\mathit{k}-1} x a^2|\#_3|\cdots|\#_{\mathit{k}}|a x a^{\mathit{k}}}|.$$
  The size of $B$ is $2k+4$ and thus $b(T)\leq 2k+4$.
  
  As for substitutions, let $T'$ be the string obtained by substituting
  the leftmost occurrence of $\mathtt{x}$ at position $k+1$ in $T$
  with a character $\mathtt{y}$ such that $\mathtt{y} \neq \mathtt{x}$, that is,
  $$T' = \mathtt{a^\mathit{k} y a^{\mathit{k}+1} \#_1 a^{\mathit{k}} x a \#_2 a^{\mathit{k}-1} x a^2 \#_3 \cdots \#_{\mathit{k}} a x a^\mathit{k}}.$$
  Then, one of the valid bidirectional schemes $B'$ of $T'$ is:
  \begin{equation*}
    \begin{split}
      B'=(k+2,k)\mathtt{ya}(k+2,k)\mathtt{\#_1}(1,k)\mathtt{xa}\mathtt{\#_2}(2k+5,k)(1,2)\mathtt{\#_3}\cdots\mathtt{\#_{\mathit{k}}}(3k+4,2)(1,k).
    \end{split}
  \end{equation*}
  Also, the corresponding factorization for $B$ is as follows:
  $$B'=\mathtt{a^{\mathit{k}}|y|a|a^{\mathit{k}}|\#_1| a^{\mathit{k}}|x|a|\#_2|a^{\mathit{k}-1} x| a^2|\#_3|\cdots|\#_{\mathit{k}}|a x |a^{\mathit{k}}}|.$$
    The size of $B'$ is $3k+5$.
    We show that $B'$ is a valid bidirectional scheme for $T'$ of the smallest size possible,
namely, $b(T') = 3k+5$.
    Since $\mathtt{y}$ and $\#_j$ for every $1 \leq j \leq k$ are unique characters in $T'$,
    they have to be ground phrases.
    Also, since each substring $\mathtt{a^{\mathit{k-j}+1} x a^{\mathit{j}}}$ of length $k+2$ for all $1 \leq j \leq k$ and $\mathtt{a}^{k+1}$ are unique in $T'$,
    each corresponding interval has to have at least one boundary of phrases.
    In addition, at least one occurrence of $\mathtt{x}$ has to be a ground phrase.
    Then, $b(T')=3k+5$ holds.
    Since $|T| = n = k^2+5k+2$, we have $k = \Theta(\sqrt{n})$.
    Hence, we get $\lim \inf_{n\to\infty}\MSensSub(b,n) \geq 1.5$ and
    $\ASensSub(b,n) \geq k+1 = b/2-1 =\Omega(\sqrt n)$.

    Moreover, by considering the case where the character $T[k+1]$ is deleted and the case where the character $\mathtt{y}$ is inserted between positions $k+1$ and $k+2$, we obtain Theorem~\ref{theo:l_bs_2}.  
\end{proof}

\subsection{Upper bounds for the sensitivity of $b$}

\begin{theorem} \label{theo:u_bs}
  The following upper bounds on
  the sensitivity of $b$ hold:  \\
  \textbf{substitutions:} $\limsup_{n \rightarrow \infty}\MSensSub(b,n) \leq 2$.
    $\ASensSub(b,n) \leq b+2$. \\
  \textbf{insertions:} $\MSensIns(b,n) \leq 2$.
    $\ASensIns(b,n) \leq b$. \\
  \textbf{deletions:} $\limsup_{n \rightarrow \infty}\MSensDel(b,n) \leq 2$.
    $\ASensDel(b,n) \leq b+1$.
\end{theorem}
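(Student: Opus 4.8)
The plan is to derive the three multiplicative bounds from the three additive ones, and to prove the additive bounds by a careful surgery on an optimal bidirectional scheme for $T$. For the reduction, note that $b(T)\ge 2$ for every string of length $n\ge 2$ (a size-$1$ scheme would force $q_1=p_1$), so if $\ASensSub(b,n)\le b+2$ then $b(T')/b(T)\le 1+(b(T)+2)/b(T)=2+2/b(T)$, which tends to $2$; the finitely many extremal instances with small $b(T)$ (unary strings and length-$2$ strings) are checked directly and give ratio at most $2$, and the insertion/deletion cases are analogous with the sharper constants $b$ and $b+1$. The cases $n\le 1$ are trivial, so it suffices to establish the three additive bounds.

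For the additive bounds, fix an optimal bidirectional scheme $B=f_1\cdots f_b$ for $T$ and let $i$ be the edited position, which lies inside the target interval of exactly one phrase $f_j$. I would construct a valid scheme $B'$ for $T'$ as follows. First, split $f_j$ at position $i$ into at most two copying residues — its head and tail, each copying faithfully from the corresponding part of $f_j$'s original source — and, between them, insert $O(1)$ ground phrases realizing the new character(s) introduced by the edit (a ground $c$ for a substitution/insertion, nothing for a deletion). Second, reuse verbatim every phrase of $B$ whose source interval avoids position $i$: for a substitution its target is the same substring of $T'$ as of $T$, since $T$ and $T'$ agree off position $i$; for an insertion/deletion one additionally offsets sources and targets lying to the right of $i$ by $\pm1$, which does not change which substring they represent. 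Third — the delicate part — handle the phrases whose source interval does cover position $i$: these would otherwise reproduce a corrupted character, so each is "repaired", either by re-routing its source to an uncorrupted occurrence of the substring it represents (when one is available) or, otherwise, by a single local split around the aligned copy of position $i$ pointing to one common anchor position that holds the character $T[i]$. Finally I would argue that $F_{B'}$ is acyclic — it is obtained from $F_B$ by zeroing the pointer out of $i$, by sub-dividing phrases (which only restricts pointers, never redirects them across), and by the controlled re-routings along already-acyclic chains — hence $B'$ is valid, and then verify position by position that $B'$ reconstructs $T'$.

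The main obstacle is the phrase count in the third step. There can be $\Theta(b(T))$ phrases whose source covers $i$, and the naive repair of isolating a fresh ground phrase inside each of them costs two extra phrases apiece, which would yield only a $3b(T)+O(1)$ bound. The point is to avoid a per-phrase ground phrase: one must show that all affected phrases can be served by a single common anchor of value $T[i]$ (together with $O(1)$ auxiliary phrases), so that repairing all of them adds at most $b(T)+O(1)$ phrases in total. Combined with the at most $b(T)$ kept-or-split phrases from the first two steps and the $O(1)$ ground phrases, this gives $b(T')\le 2b(T)+2$ for substitutions, $b(T')\le 2b(T)$ for insertions (the inserted character need not be ground if it already occurs elsewhere), and $b(T')\le 2b(T)+1$ for deletions. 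Pinning down this accounting, and re-checking acyclicity after the re-routings, is where the real work lies; the differences between the three edit types and the $n\le1$ corner cases are routine once the substitution case is settled.
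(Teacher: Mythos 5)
Your proposal follows essentially the same route as the paper's proof: split the phrase whose target interval contains the edited position, keep phrases whose source avoids it, and repair the possibly $\Theta(b)$ phrases whose source covers it by splitting each into two pieces whose second piece points at a single ground-phrase anchor carrying $T[i]$ (the paper realizes this anchor inside the affected phrase whose source ends rightmost, which is also how it settles acyclicity), yielding the same $2b+O(1)$ accounting. Since you correctly identified this single-anchor idea as the crux and your constants and the reduction from additive to multiplicative bounds match the paper's, there is nothing substantive to add.
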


\begin{proof}
  In the following, we consider the case that $T[i]=a$ is substituted by a character $\#$ that does not occur in $T$.
  The other cases of insertions, deletions, and substitutions with another character $b~(\neq a)$ occurring in $T$, can be proven similarly.
  We show how to construct a valid bidirectional scheme of $T'$ of the size $b' \geq b (T')$ by dividing each phrase of $B$ into some phrases,
  where $B$ is a valid bidirectional scheme for $T$ of the smallest size possible.
We categorize each phrase $f_j=T[p_j..p_j+\ell_j-1]$ of $B$ into one of the three following cases:
  \begin{itemize}
    \item[(1)] $i \in [p_j..p_j+\ell_j-1]$;
    \item[(2)] $i \notin [p_j..p_j+\ell_j-1]$ and $i \notin [q_j..q_j+\ell_j-1]$;
    \item[(3)] $i \notin [p_j..p_j+\ell_j-1]$ and $i \in [q_j..q_j+\ell_j-1]$.
  \end{itemize}
  
  \noindent \textbf{Case (1):} Let $T[p_j..p_j+\ell_j-1]= w_1 a w_2$ and $T'[p_j..p_j+\ell_j-1]=w_1 \# w_2$, where $a \in \Sigma$ and $w_1, w_2 \in \Sigma^*$.
  If $i \notin [q_j..q_j+\ell_j-1]$, 
  then the phrase $f_j$ is divided into three phrases $w_1=(q_j,|w_1|), \#, w_2=(q_j+|w_1|+1, |w_2|)$ in $T'$.
  See also the top of Figure~\ref{fig:bs}.
  Otherwise, i.e., if $i \in [q_j..q_j+\ell_j-1]$,
  intervals $[p_j..p_j+\ell_j-1]$ and $[q_j..q_j+\ell_j-1]$ are overlapping.
  We consider the case $p_j < q_j$. (Another case can be treated similarly.)
  Then $[q_j..q_j+|w_1|]$ contains the edited position $i$.
  Let $T[p_j..p_j+\ell_j-1]=w'_1 a w'_2 a w_2$, where $w'_1, w'_2 \in \Sigma^*$ and $q_j+|w'_1|=i$.
  We divide the phrase $f_j$ into at most five phrases $w'_1=(q_j,|w'_1|), a, w'_2=(q_j+|w'_1|+1, |w'_2|),\#, w_2=(q_j+|w_1|+1, |w_2|)$.
  See also the middle of Figure~\ref{fig:bs}.

  \noindent \textbf{Case (2):} No changes are made to the phrase $f_j$ in this case, since $f_j$ can continue to refer to the same reference.

  \noindent \textbf{Case (3):} Among all phrases in Case (3), let $f_k$ be the phrase whose ending position of the reference is the rightmost.
  Let $T[p_k..p_k+\ell_k-1]=u_1 a u_2$, where $u_1, u_2 \in \Sigma^*$ and $q_k+|u_1|=i$.
  Then we divide the phrase $f_k$ into at most three phrases $u_1=(q_k,|u_1|), a, u_2=(q_k+|u_1|+1, |u_2|)$ in $T'$.
  For the other phrases of Case(3), we divide $f_j=v_1 a v_2$, where $v_1, v_2 \in \Sigma^*$ and $q_j+|v_1|=i$, into at most two phrases $v_1 = (q_j,|v_1|)$ and $a v_2=(q_k+|u_1|, |v_2|+1)$.
  From the above operations, the character that referred to position $i$ in $T$ becomes a ground phrase or refers to position $q_k+|u_1|$, which is a ground phrase, in $T'$.
  The other substrings refer to the original reference positions or to a subinterval of $[q_k+|u_1|..q_k+|f_k|-1]$.
  The reference of the subinterval corresponds to the original reference of the substring.
  See also the bottom of Figure~\ref{fig:bs}.
  
  Then, the bidirectional scheme obtained from the above operations is ensured to be valid.
  The size of the bidirectional scheme $b'$ is maximized if exactly one phrase of Case (1) is divided into five phrases, and the remaining $b(T)-1$ phrases belong to Case (3).
  Since at most one of the $b(T)-1$ phrases of Case (3) can be divided into three phrases, and all the others can be divided into two phrases, $b'$ is at most $5+3+2(b(T)-2)=2b(T)+4$.
  Furthermore, if $T$ is a unary string, then $b(T)=2$ and the valid bidirectional scheme of size $4 (=2b(T))$ can be constructed easily.
  Otherwise, there are at least two ground phrases in $T$, and these phrases can not be divided into some phrases in $T'$.
  Then we get $b' \leq 2b(T)+2$ and Theorem~\ref{theo:u_bs}.
  \begin{figure}[htpb]
  \centering
          \includegraphics[width=90mm]{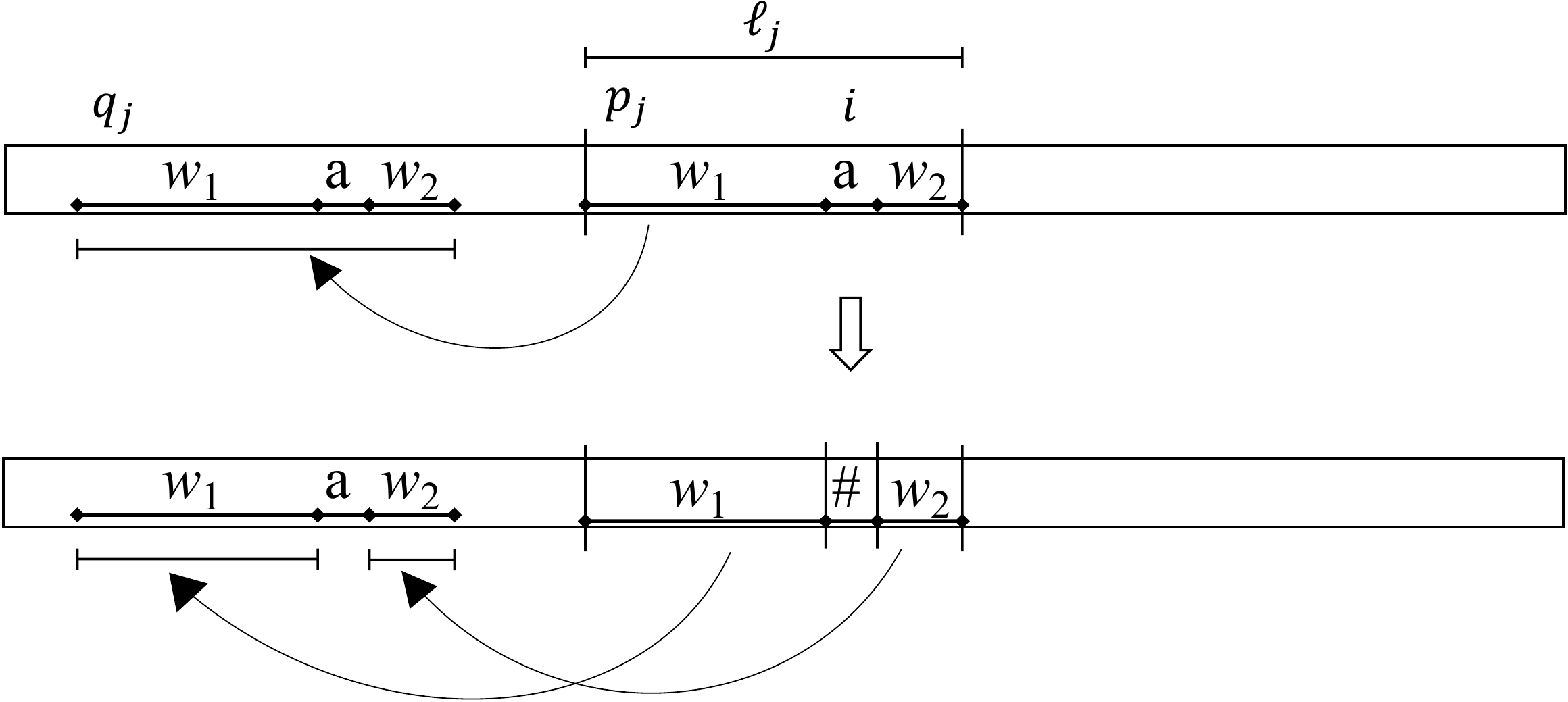}\\
          \subcaption{Subcase of Case (1): $i \in [p_j..p_j+\ell_j-1]$ and $i \notin [q_j..q_j+\ell_j-1]$.}
          \vspace{8mm}
          \includegraphics[width=90mm] {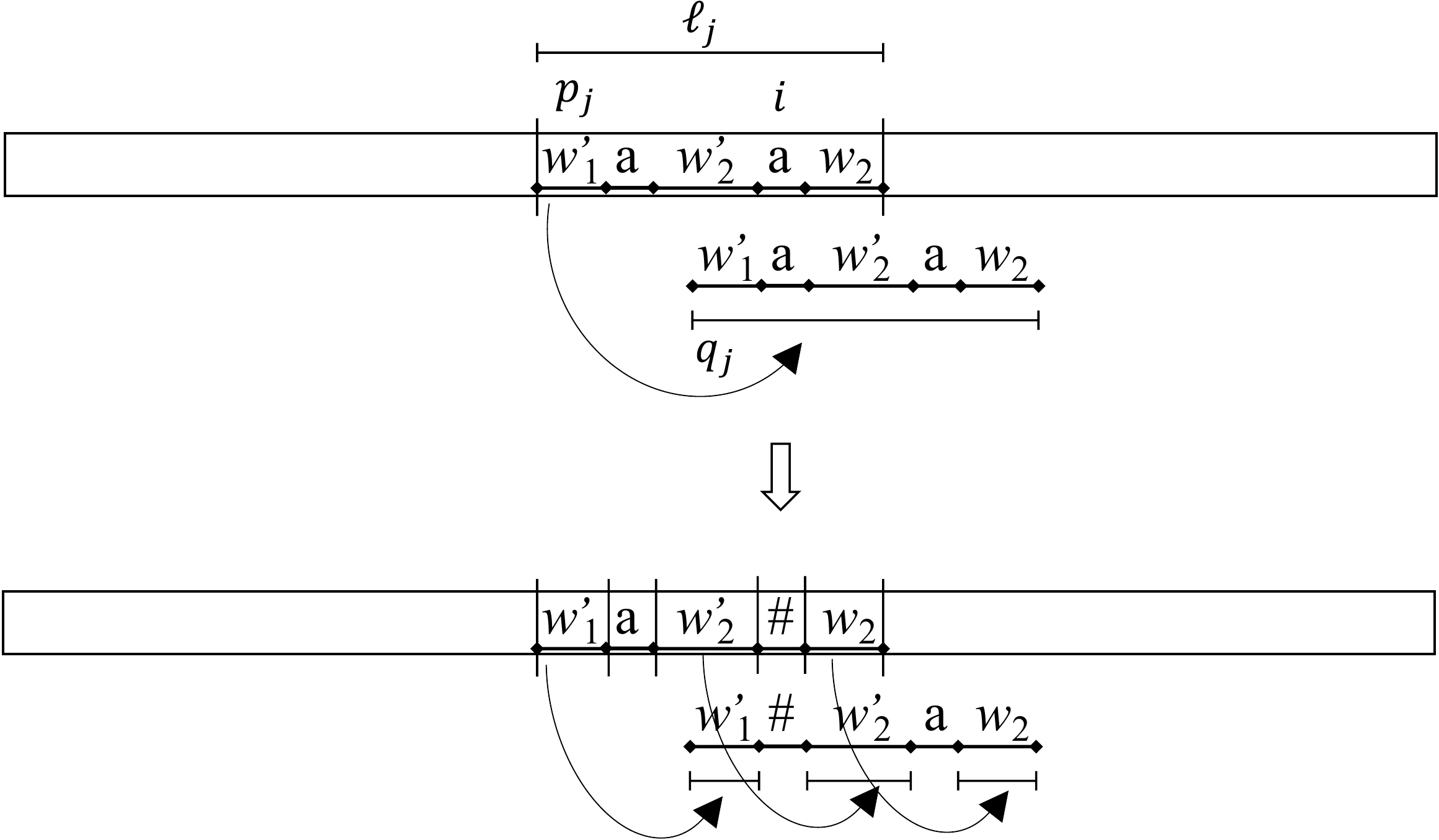}\\
          \subcaption{Subcase of Case (1): $i \in [p_j..p_j+\ell_j-1]$ and $i \in [q_j..q_j+\ell_j-1]$.}
          \vspace{8mm}
          \includegraphics[width=90mm]{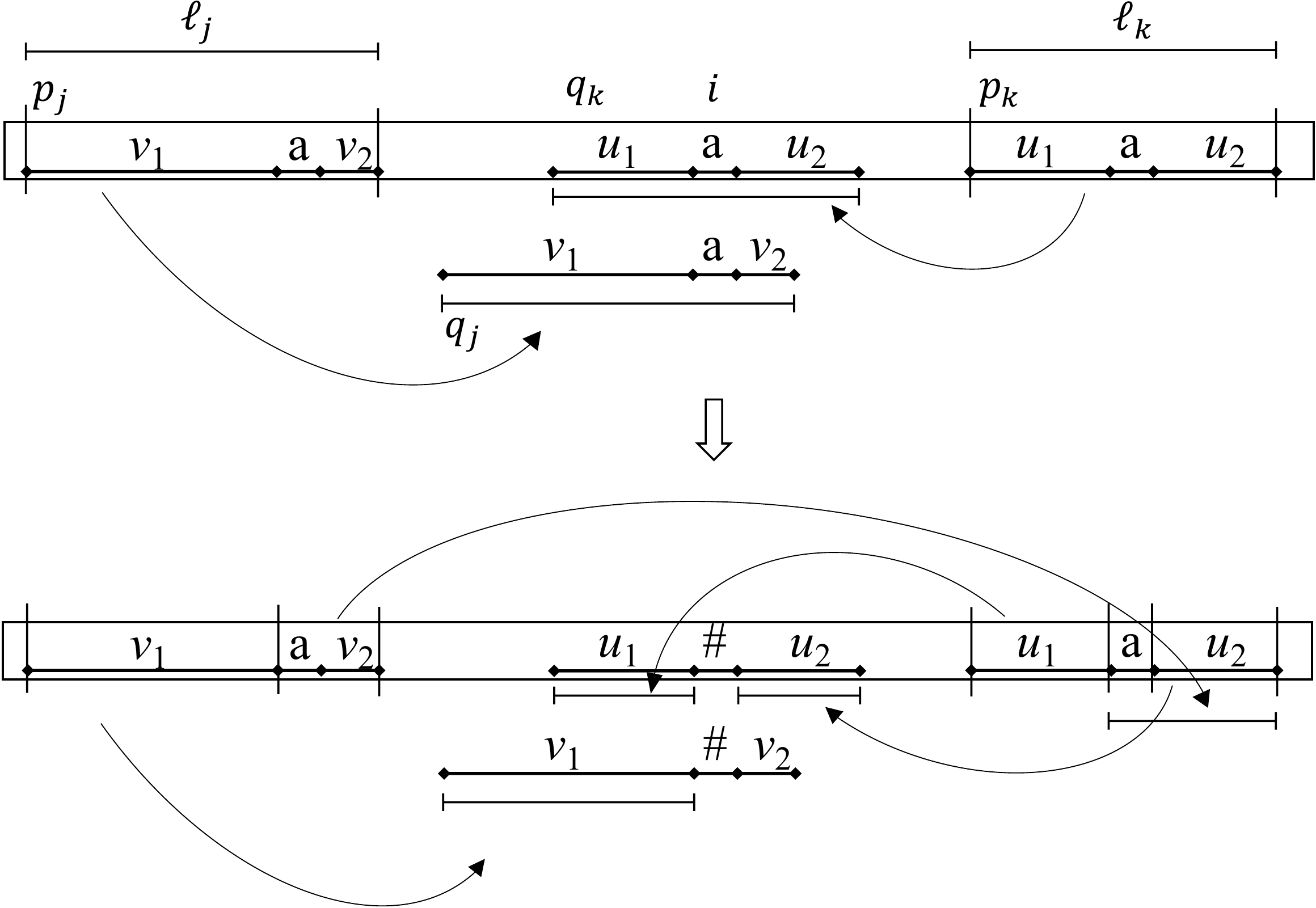}\\
          \subcaption{Case (3): $i \notin [p_j..p_j+\ell_j-1]$ and $i \in [q_j..q_j+\ell_j-1]$.}
     \caption{Illustration for changes of references in Case (1) and Case (3).}
     \label{fig:bs}
\end{figure}

\end{proof}
 \section{Lempel-Ziv 77 factorizations with/without self-references}
\label{sec:lz77}

In this section, we consider the worst-case sensitivity
of the \emph{Lempel-Ziv 77 factorizations} (\emph{LZ77})~\cite{LZ77}
with/without self-references.

For convenience, let $f_0 = \varepsilon$.
A factorization $f_{1} \cdots f_{z}$ for a string $T$ of length $n$
is the non self-referencing LZ77 factorization $\LZorig(T)$ of $T$
if for each $1 \leq i < z$ the factor $f_i$ is
the shortest prefix of $f_i \cdots f_z$ that does not occur
in $f_0 f_1 \cdots f_{i-1}$
(or alternatively $f_k[1..|f_k|-1]$ is the longest prefix of $f_i \cdots f_z$
that occurs in $f_0 f_1 \cdots f_{i-1}$).
Since $f_k[1..|f_k|-1]$ never overlaps with its previous occurrence,
it is called non self-referencing.
The last factor $f_z$ is the suffix of $T$ of length $n-|f_1 \cdots f_{z-1}|$
and it may have multiple occurrences in $f_1 \cdots f_z$.

A factorization $f_{1} \cdots f_{z}$ for a string $T$ of length $n$
is the self-referencing LZ77 factorization $\LZorigsr(T)$ of $T$
if for each $1 \leq i < z$ the factor $f_i$ is
the shortest prefix of $f_i \cdots f_z$ that occurs exactly once in
$f_1 \cdots f_i$ as a suffix
(or alternatively $f_k[1..|f_k|-1]$ is the longest prefix of $f_i \cdots f_z$
which has a previous occurrence beginning at a position in range
$[1..|f_1 \cdots f_{k-1}|]$).
Since $f_k[1..|f_k|-1]$ may overlap with its previous occurrence,
it is called self-referencing.
The last factor $f_z$ is the suffix of $T$ of length $n-|f_1 \cdots f_{z-1}|$
and it may have multiple occurrences in $f_1 \cdots f_z$.

If we use a common convention
that the string $T$ terminates with a unique character $\$$,
then the last factor $f_z$ satisfies the same properties as $f_1, \ldots, f_{z-1}$,
in both cases of (non) self-referencing LZ77 factorizations.

To avoid confusions, we use different notations to denote the sizes
of these factorizations.
For a string $T$
let $\zorig(T)$ and $\zsrorig(T)$ denote
the number $z$ of factors in $\LZorig(T)$ and $\LZorigsr(T)$, respectively.

For example, for string $T = \mathtt{abaabababababab\$}$,
\begin{eqnarray*}
\LZorig(T) & = & \mathtt{a|b|aa|bab|ababa|bab\$}|, \\
\LZorigsr(T) & = & \mathtt{a|b|aa|bab|abababab\$}|,
\end{eqnarray*}
where $|$ denotes the right-end of each factor in the factorizations.
Here we have $\zorig(T) = 6$ and $\zsrorig(T) = 5$.

In what follows, we present tight upper and lower bounds
for the multiplicative sensitivity of $\zorig$ and $\zsrorig$ for all cases
of substitutions, insertions, and deletions.
We also present the additive sensitivity of $\zorig$ and $\zsrorig$.

\subsection{Lower bounds for the sensitivity of $\zorig$}

\begin{theorem} \label{theo:lz77_lowerbounds}
  The following lower bounds on
  the sensitivity of non self-referencing LZ77 factorization hold:  \\
  \textbf{substitutions:} $\liminf_{n \rightarrow \infty}\MSensSub(\zorig,n) \geq 2$. $\ASensSub(\zorig,n) \geq \zorig-1$. \\
  \textbf{insertions:} $\liminf_{n \rightarrow \infty}\MSensIns(\zorig,n) \geq 2$. $\ASensIns(\zorig,n) \geq \zorig-1$. \\
  \textbf{deletions:} $\liminf_{n \rightarrow \infty}\MSensDel(\zorig,n) \geq 2$. $\ASensDel(\zorig,n) \geq \zorig-2$.
\end{theorem}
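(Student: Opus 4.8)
The plan is to exhibit, for each parameter $k$, an explicit string $T=T_k$ whose non-self-referencing LZ77 factorization $\LZorig(T)$ has about $k$ factors, and then to edit a single carefully chosen ``pivot'' character so that the resulting $T'$ has about $2k$ factors; letting $k\to\infty$ will then yield all the stated bounds simultaneously. A natural candidate, in the same spirit as the lower-bound instances already used for $\gamma$ in Section~\ref{sec:gamma} and for $b$ in Theorem~\ref{theo:l_bs_2}, is a ``staircase'': a short source $\mathtt{a}^{k}\mathtt{x}\mathtt{a}^{k+1}$ followed by blocks $\mathtt{\#_1}\mathtt{a}^{k-1}\mathtt{x}\mathtt{a}\,\mathtt{\#_2}\mathtt{a}^{k-2}\mathtt{x}\mathtt{a}^{2}\mathtt{\#_3}\cdots\mathtt{\#_k}\mathtt{x}\mathtt{a}^{k}$, where each $\mathtt{\#_j}$ is a fresh unique character. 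The key property is that in $T$ every block $\mathtt{a}^{k-j}\mathtt{x}\mathtt{a}^{j}$ already occurs inside the source, so the greedy parse absorbs each block together with the following separator $\mathtt{\#_{j+1}}$ into a single factor, giving roughly $k$ ``big'' factors plus $O(\log k)$ small factors used merely to spell out the leading run.

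The first step is to compute $\zorig(T)$ exactly: parse $\mathtt{a}^k$ (this costs $\Theta(\log k)$ doubling factors), then $\mathtt{x}$, then $\mathtt{a}^{k+1}$, then each $\mathtt{\#_j}$ as a singleton followed by the block $\mathtt{a}^{k-j}\mathtt{x}\mathtt{a}^{j}$, which—occurring in the source—extends to one factor ending at the next separator, and finally the trailing block. The second step is to define $T'$ by substituting (resp.\ inserting next to, resp.\ deleting) the pivot occurrence of $\mathtt{x}$ by a symbol that does not occur to its left, and to recompute the factorization: now no $\mathtt{x}$ is available left of the staircase, so the first staircase block can no longer be copied wholesale and must be split, and from then on each $\mathtt{a}^{k-j}\mathtt{x}\mathtt{a}^{j}$ is only one symbol longer than its longest previously occurring prefix, hence costs its own factor while each $\mathtt{\#_j}$ costs another. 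This roughly doubles the count, so $\zorig(T')=2\zorig(T)-\Theta(1)$ in the substitution/insertion cases and one less in the deletion case (deleting the pivot fuses two runs). Since $|T|=\Theta(k^2)$ and $\zorig(T)=\Theta(k)$, this gives $\liminf_{n}\MSensSub(\zorig,n)\ge 2$ and likewise $\liminf_n\MSensIns(\zorig,n)\ge 2$, $\liminf_n\MSensDel(\zorig,n)\ge 2$, together with the additive lower bounds $\ASensSub(\zorig,n)\ge\zorig-1$, $\ASensIns(\zorig,n)\ge\zorig-1$, $\ASensDel(\zorig,n)\ge\zorig-2$, and the $\Omega(\sqrt n)$ additive bounds.

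The main obstacle is the greedy-parse bookkeeping: one must argue, factor by factor, exactly which previously occurring prefix is longest at each step, in both $T$ and $T'$, with particular care for (i) factors straddling the boundaries of the runs and the separators, (ii) removing duplicate substrings when counting (as in the $\delta$ analysis of Theorem~\ref{theo:upper_bound_delta}), and (iii) the first and last staircase blocks, which behave slightly differently from the generic ones. Pinning down the exact additive constants (the ``$-1$'' versus ``$-2$'') requires tracking these boundary factors precisely, and here the construction must be fine-tuned so that the factors that do \emph{not} double are accounted for exactly—in particular one either arranges for the leading run not to contribute an $\Omega(\log k)$ overhead that is not matched in $T'$, or one supplies a separate small family realizing $\zorig(T')=2\zorig(T)-1$ for the term phrased purely in $\zorig$ while the staircase supplies the $\Omega(\sqrt n)$ term. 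Finally, one must check that substituting by an already-present character, and the exact placement of the inserted or deleted symbol, do not accidentally create helpful earlier occurrences of $\mathtt{x}$; handling all three edit types with essentially the same instance, as is done for $\gamma$ and $b$, keeps this manageable.
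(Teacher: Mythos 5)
Your staircase family does establish the multiplicative bounds: with $\zorig(T)=k+\Theta(\log k)$ and $\zorig(T')=2k+\Theta(\log k)$ the ratio tends to $2$, and this is essentially how the paper proves its separate Theorem~\ref{theo:lz77_lowerbounds_sqrtn} (the $\Omega(\sqrt n)$ additive bounds, which are \emph{not} part of the statement under review). The genuine gap is in the additive claims $\ASensSub(\zorig,n)\ge\zorig-1$ and $\ASensDel(\zorig,n)\ge\zorig-2$, which are exact and match the paper's upper bounds in Theorem~\ref{theo:u_zorig}. In any staircase-type instance the $\Theta(\log k)$ factors that greedily spell out the leading run $\mathtt{a}^k$ (as $\mathtt{a}\mid\mathtt{a}^2\mid\mathtt{a}^4\mid\cdots$) survive the edit unchanged, so the increase is only $k$ while $\zorig-1$ is $k+\Theta(\log k)$; the deficit is $\Theta(\log\zorig)$, not a constant, and no amount of fine-tuning of the separators removes it as long as the source block contains a long unary run. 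You flag this yourself and defer to ``a separate small family realizing $\zorig(T')=2\zorig(T)-1$,'' but that family is precisely the content of the theorem and is not supplied.

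The paper's construction is structurally different: it is the self-similar family $Q_1=\mathtt{0}$, $Q_k=Q_1\cdots Q_{k-1}\mathtt{1}$, $T=Q_1\cdots Q_p$, for which $\LZorig(T)=Q_1\mid Q_2\mid\cdots\mid Q_p$ has exactly $p$ factors and \emph{every} factor's unique reference is a prefix of $T$ containing position $1$. Editing position $1$ therefore invalidates all $p-1$ non-trivial references simultaneously, and an induction shows each $Q_k$ splits into exactly two factors, giving $\zorig(T')=2p-1$ (resp.\ $2p-2$ for deletion) with no unmatched logarithmic overhead. If you want to salvage your approach you would need to replace the staircase by such a doubling family (or otherwise eliminate the non-doubling prefix factors); as written, the proposal proves the multiplicative bounds and the $\Omega(\sqrt n)$ additive bounds but not the stated $\zorig-1$ and $\zorig-2$ bounds.
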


\begin{proof}
Let $p \geq 2$ and $\Sigma = \{\mathtt{0, 1, 2}\}$.
We use the following string $T$ for our analysis
in all cases of substitutions, insertions, and deletions.

Let $Q_1 = \mathtt{0}$ and $Q_k = Q_1 \cdots Q_{k-1} \mathtt{1}$ with $2 \leq k \leq p$.
Let 
\begin{eqnarray*}
  T & = & Q_1Q_2 \cdots Q_p \\
    & = & \mathtt{0} \cdot \mathtt{01} \cdot \mathtt{0011} \cdot \mathtt{00100111} \cdot \mathtt{0010011001001111} \cdots Q_p
\end{eqnarray*}
with $|T| = n = \Theta(2^p)$.
Since $Q_k[1..|Q_k|-1] = T[1..|Q_k|-1]$,
$Q_k[|Q_k|] = \mathtt{1}$, and $T[|Q_k|]=\mathtt{0}$
for $2 \leq k \leq p$,
each $Q_k$ forms a single factor in the non self-referencing LZ77 factorization of $T$. Namely,
\begin{eqnarray*}
  \LZorig(T) & = & Q_1|Q_2| \cdots |Q_p| \\
  & = & \mathtt{0} | \mathtt{01} | \mathtt{0011} | \mathtt{00100111} | \mathtt{0010011001001111}| \cdots |Q_p|
\end{eqnarray*}
with $\zorig(T) = p = \Theta(\log n)$.

\textbf{substitutions:}  
Consider the string 
\begin{eqnarray*}
  T' & = & \mathtt{2} \cdot T[2..n] \\
  & = & \mathtt{2} \cdot Q_2 \cdots Q_p \\
  & = & \mathtt{2} \cdot \mathtt{01} \cdot \mathtt{0011} \cdot \mathtt{00100111} \cdot \mathtt{0010011001001111} \cdots Q_p
\end{eqnarray*}
which can be obtained from $T$ by substituting the first $\mathtt{0}$ with $\mathtt{2}$.
Let us analyze the structure of the non self-referencing LZ77 factorization $\LZorig(T')$ of $T'$.
We prove by induction that $Q_k$ is divided into exactly two factors
for every $2 \leq k \leq p$ in $\LZorig(T')$.
$Q_2$ is factorized as $\mathtt{0|1|}$ in $\LZorig(T')$.
Suppose that $Q_{k-1}$ is divided into exactly two factors in $\LZorig(T')$,
which means that the next factor is a prefix of $Q_k \cdots Q_p$.
Since $T'[1] = \mathtt{2}$, each $Q_k[1..|Q_k|-1]$ cannot occur
as a prefix of $T'$.
The longest prefix of $Q_k = Q_1 \cdots Q_{k-1}\mathtt{1}$ that occurs
in $T'[1..|Q_1 \cdots Q_{k-1}|]$ is 
$Q_{k-1}[1..|Q_{k-1}|-1] = Q_1 \cdots Q_{k-2}$.
Thus, $Q_1 \cdots Q_{k-2}\mathtt{0}$ is the shortest prefix of
$T'[|Q_1 \cdots Q_{k-1}|+1..n] = Q_k \cdots Q_p$
that does not occur in $T'[1..|Q_1 \cdots Q_{k-1}|] = Q_1 \cdots Q_{k-1}$.
The remaining suffix of $Q_k$ is $Q_{k-1}[2..|Q_{k-1}|] \mathtt{1} = Q_2 \cdots Q_{k-2} \mathtt{1}\mathtt{1}$.
Since $Q_k$ has $\mathtt{01}^{k-1}$ as a suffix
and this is the leftmost occurrence of $\mathtt{1}^{k-1}$ in $T'$,
the next factor is this remaining suffix $Q_2 \cdots Q_{k-2} \mathtt{1}\mathtt{1}$ of $Q_k$.
Thus, the non self-referencing LZ77 factorization of $T'$ is
\[
 \LZorig(T') = \mathtt{2|0|1|00|11|0010|0111|00100110|01001111| \cdots }|Q_1 \cdots Q_{p-2}\mathtt{0} | Q_2 \cdots Q_{p-2}\mathtt{11} |
\]
with $\zorig(T') = 2p-1$, which leads to
$\liminf_{n \rightarrow \infty}\MSensSub(\zorig,n) \geq \liminf_{p \to \infty} ((2p-1)/p) = 2$, $\ASensSub(\zorig,n) \geq (2p-1)-p = p-1 = \zorig-1 = \Omega(\log n)$.

\textbf{insertions:}
Let $T'$ be the string obtained by inserting
$\mathtt{2}$ immediately after the first character $T[1] = \mathtt{0}$,
namely,
\begin{eqnarray*}
  T' & = & Q_1 \cdot \mathtt{2} \cdot Q_2 \cdots Q_p \\
  & = & \mathtt{0} \cdot \mathtt{2} \cdot \mathtt{01} \cdot \mathtt{0011} \cdot \mathtt{00100111} \cdot \mathtt{0010011001001111} \cdots Q_p.
\end{eqnarray*}
Then, by similar arguments to the case of substitutions,
we have
\[
 \LZorig(T') = \mathtt{0|2|01|00|11|0010|0111|00100110|01001111| \cdots }|Q_1 \cdots Q_{p-2}\mathtt{0} | Q_2 \cdots Q_{p-2}\mathtt{11} |
\]
with $\zorig(T') = 2p-1$,
which leads to $\liminf_{n \rightarrow \infty}\MSensIns(\zorig,n) \geq \liminf_{p \to \infty} ((2p-1)/p) = 2$, $\ASensIns(\zorig,n) \geq (2p-1)-p = p-1 = \zorig-1 = \Omega(\log n )$.

\textbf{deletions:}
Let $T'$ be the string obtained by deleting
the first character $T[1] = \mathtt{0}$,
namely
\begin{eqnarray*}
  T' & = & Q_2 \cdots Q_p \\
  & = & \mathtt{01} \cdot \mathtt{0011} \cdot \mathtt{00100111} \cdot \mathtt{0010011001001111} \cdots Q_p.
\end{eqnarray*}
Then, by similar arguments to the case of substitutions,
we have
\[
 \LZorig(T') = \mathtt{0|1|00|11|0010|0111|00100110|01001111| \cdots }|Q_1 \cdots Q_{p-2}\mathtt{0} | Q_2 \cdots Q_{p-2}\mathtt{11} |
\]
with $\zorig(T') = 2p-2$,
which leads to $\liminf_{n \rightarrow \infty}\MSensDel(\zorig,n) \geq \liminf_{p \to \infty} ((2p-2)/p) = 2$, $\ASensDel(\zorig,n) \geq (2p-2)-p = p-2 = \zorig-2 = \Omega(\log n )$.
\end{proof}

The strings $T$ and $T'$ used in Theorem~\ref{theo:lz77_lowerbounds}
give us optimal additive lower bounds in terms $\zorig$,
are highly compressible ($\zorig(T) = O(\log n)$)
and only use two or three distinct characters.
By using more characters,
we can obtain larger lower bounds for the additive sensitivity
for the size of the non self-referencing LZ77 factorizations $\LZorig$
in terms of the string length $n$, as follows:

\begin{theorem} \label{theo:lz77_lowerbounds_sqrtn}
  The following lower bounds on
  the sensitivity of non self-referencing LZ77 factorization $\LZorig$ hold: \\
  \textbf{substitutions:} $\ASensSub(\zorig,n) = \Omega(\sqrt{n})$. \\
  \textbf{insertions:} $\ASensIns(\zorig,n) = \Omega(\sqrt{n})$. \\
  \textbf{deletions:} $\ASensDel(\zorig,n) = \Omega(\sqrt{n})$.
\end{theorem}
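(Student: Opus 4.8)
The plan is to reuse the string $T = \mathtt{a}^k \mathtt{x} \mathtt{a}^{k+1} \#_1\, \mathtt{a}^k \mathtt{x} \mathtt{a}\, \#_2\, \mathtt{a}^{k-1} \mathtt{x} \mathtt{a}^2\, \#_3 \cdots \#_k\, \mathtt{a} \mathtt{x} \mathtt{a}^k$ already used in Theorem~\ref{theo:l_bs_2}, for which $n = |T| = k^2+5k+2 = \Theta(k^2)$, so $k = \Theta(\sqrt{n})$, and to apply the same three edits near position $k+1$. Write $B_j = \mathtt{a}^{k-j}\mathtt{x}\mathtt{a}^{j+1}$ for the $j$-th ``staircase block'' ($0 \le j \le k-1$). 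The structural fact driving the argument is that each $B_j$ occurs inside the prefix $\mathtt{a}^k\mathtt{x}\mathtt{a}^{k+1}$ of $T$ (take $k-j$ of the leading $\mathtt{a}$'s, the $\mathtt{x}$, and $j+1$ of the trailing $\mathtt{a}$'s). First I would compute $\zorig(T)$: the leading run $\mathtt{a}^k$ produces a doubling staircase of $\Theta(\log k)$ factors, then $\mathtt{x}$, $\mathtt{a}^{k+1}$ and $\#_1$ each begin a factor, and thereafter greedy parsing swallows each block together with its following separator into a \emph{single} factor $\mathtt{a}^{k-j}\mathtt{x}\mathtt{a}^{j+1}\#_{j+2}$, since $B_j$ already occurs to its left but $B_j\#_{j+2}$ does not. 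Hence $\zorig(T) = k + \Theta(\log k)$.

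Next I would analyse $\zorig(T')$ when $T'$ is obtained by substituting the leading $\mathtt{x}$ at position $k+1$ with a fresh character $\mathtt{y}$. Now the portion of $T'$ preceding the blocks contains no $\mathtt{x}$ whatsoever, so no $B_j$ can occur there; $B_j$ cannot occur inside an earlier block $B_i$ with $i<j$ either, because $B_i$ has only $i+1 \le j$ copies of $\mathtt{a}$ to the right of its $\mathtt{x}$ while $B_j$ needs $j+1$; and $B_j$ contains no separator, so it cannot straddle one. Therefore $B_j$ has no occurrence strictly to the left of its own position in $T'$, whereas its length-$(k+1)$ prefix $\mathtt{a}^{k-j}\mathtt{x}\mathtt{a}^{j}$ does occur earlier (inside $B_{j-1}$ for $j \ge 1$; for $j=0$ the greedy factor simply stops one character earlier, at $\mathtt{a}^k\mathtt{x}$). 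Consequently the greedy LZ77 factor covering the start of $B_j$ ends exactly at the end of $B_j$, so $B_j$ now costs a whole factor and the separator $\#_{j+2}$ is pushed into a further factor, i.e.\ each block together with its separator now costs two factors instead of one. A straightforward induction on $j$ then gives $\zorig(T') = 2k + \Theta(\log k)$, whence $\ASensSub(\zorig,n) \ge \zorig(T') - \zorig(T) = k - \Theta(\log k) = \Omega(\sqrt{n})$. Note the alphabet here has $\Theta(\sqrt n)$ symbols, consistent with ``using more characters''.

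For insertions I would insert a fresh character immediately before the leading $\mathtt{x}$, and for deletions I would delete the leading $\mathtt{x}$ (merging the two leading runs into $\mathtt{a}^{2k+1}$); in either case the portion of $T'$ before the blocks no longer contains the pattern ``an $\mathtt{a}$ immediately followed by an $\mathtt{x}$'' that every $B_j$ requires, so the same induction runs and yields $\zorig(T') - \zorig(T) = \Omega(\sqrt n)$. The delicate step — the main obstacle — is the case analysis in the second paragraph: one must verify rigorously that after the edit no $B_j$ has an earlier occurrence in $T'$, ruling out both occurrences straddling a block/separator boundary and occurrences hiding inside one of the many similar-looking $\mathtt{a}$-$\mathtt{x}$-$\mathtt{a}$ blocks, and dually that every proper prefix of $B_j$ still occurs earlier so the greedy factor neither overshoots nor stops too soon. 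This reduces to counting the $\mathtt{a}$'s on each side of the unique $\mathtt{x}$ in each candidate window, but it is the part that needs to be written out with care.
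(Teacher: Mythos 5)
Your construction is correct, and the underlying mechanism is exactly the one the paper uses, just instantiated on a different witness string. The paper's proof takes $T = \mathtt{a}^{2p-2}\mathtt{b}\cdot\mathtt{a}^{p}\mathtt{b}\#_1\cdot\mathtt{a}^{p+1}\mathtt{b}\#_2\cdots\mathtt{a}^{2p-2}\mathtt{b}\#_{p-1}$: each of the $\Theta(\sqrt n)$ blocks is one factor because its $\mathtt{a}$-run is a suffix of the single long leading run $\mathtt{a}^{2p-2}$, and the edit cuts that run so that each block's run $\mathtt{a}^{p+j-1}$ loses its only earlier occurrence and the block splits into $\mathtt{a}^{p+j-1}\mid\mathtt{b}\#_j$. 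You instead recycle the $\mathtt{a}^{k}\mathtt{x}\mathtt{a}^{k+1}$ template string from the bidirectional-scheme section, and the edit destroys the template rather than a run length. Both yield $\zorig(T)=\Theta(\sqrt n)+\Theta(\log n)$ and $\zorig(T')=2\cdot\Theta(\sqrt n)+\Theta(\log n)$ by the same split-every-block-in-two accounting, so the bounds are identical; neither approach is more general. The one place your instance genuinely costs more work is the step you flag yourself: because all your blocks $B_j=\mathtt{a}^{k-j}\mathtt{x}\mathtt{a}^{j+1}$ have the same length and each contains an $\mathtt{x}$, you must align a hypothetical earlier occurrence of $B_j$ with the $\mathtt{x}$ of some $B_i$, $i<j$, and rule it out by comparing the $\mathtt{a}$-counts on each side ($i+1\le j<j+1$), and separately check that $\mathtt{a}^{k-j}\mathtt{x}\mathtt{a}^{j}$ survives as a suffix of $B_{j-1}$; in the paper's instance the corresponding check is just a comparison of maximal $\mathtt{a}$-run lengths, which is slightly cleaner. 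Your handling of the $j=0$ boundary (the factor stopping at $\mathtt{a}^k\mathtt{x}$ and the trailing $\mathtt{a}\#_2$ forming the second factor) is right, and the insertion/deletion variants go through as you describe since the edited prefix no longer contains $\mathtt{a}\mathtt{x}$ as a substring.
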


\begin{proof}
  In~\ref{app:lz77_lowerbounds_sqrtn}.
\end{proof}

\subsection{Upper bounds for the sensitivity of $\zorig$} \label{sec:u_zorig}

\begin{theorem} \label{theo:u_zorig}
  The following upper bounds on
  the sensitivity of non self-referencing LZ77 factorization $\LZorig$ hold:  \\
  \textbf{substitutions:} $\limsup_{n \rightarrow \infty}\MSensSub(\zorig,n) \leq 2$.
    $\ASensSub(\zorig,n) \leq \zorig-1$. \\
  \textbf{insertions:} $\limsup_{n \rightarrow \infty}\MSensIns(\zorig,n) \leq 2$.
    $\ASensIns(\zorig,n) \leq \zorig-1$. \\
  \textbf{deletions:} $\limsup_{n \rightarrow \infty}\MSensDel(\zorig,n) \leq 2$.
    $\ASensDel(\zorig,n) \leq \zorig-2$.
\end{theorem}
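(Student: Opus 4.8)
The plan is to establish the additive upper bounds $\zorig(T') \le 2\zorig(T) - 1$ for substitutions and insertions, and $\zorig(T') \le 2\zorig(T) - 2$ for deletions; the multiplicative bounds $\limsup_{n\to\infty}\MSensSub(\zorig,n),\ \limsup_{n\to\infty}\MSensIns(\zorig,n),\ \limsup_{n\to\infty}\MSensDel(\zorig,n) \le 2$ then follow at once, since $(2z-1)/z < 2$ and $(2z-2)/z < 2$ for every $z \ge 1$ (and the ratios tend to $2$, so these are tight against Theorem~\ref{theo:lz77_lowerbounds}). The workhorse is the well-known greedy-optimality of $\LZorig$: for any string $S$, the value $\zorig(S)$ is the minimum size over all factorizations $S = g_1 \cdots g_m$ in which each $g_\ell$ is a single character or $g_\ell[1..|g_\ell|-1]$ occurs in $g_1 \cdots g_{\ell-1}$. (I would also record the easy fact that this ``prefix occurs earlier'' property in fact holds for \emph{every} phrase of $\LZorig(S)$, including the last.) Hence it suffices to exhibit one such factorization of $T'$ of the claimed size.

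Let the edit sit at position $i$, let $\LZorig(T) = f_1 \cdots f_z$, let $f_j$ be the phrase containing position $i$, and write $p_k$ for the start of $f_k$. Since $T[1..i-1] = T'[1..i-1]$ and $p_j - 1 \le i-1$, the phrases $f_1, \ldots, f_{j-1}$ remain valid verbatim in $T'$. For $f_j$ I split it into at most two valid phrases of $T'$: for a substitution $f_j = \alpha c \beta \rightsquigarrow (\alpha c') \mid \beta$, using that $\alpha$ is a prefix of $f_j[1..|f_j|-1]$ (hence occurs in $T[1..p_j-1] = T'[1..p_j-1]$) and that $\beta[1..|\beta|-1]$ is a substring of $f_j[1..|f_j|-1]$; insertions and deletions are analogous (a deletion of an end character of $f_j$ even yields a single phrase). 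For each later phrase $f_k$ with $k > j$: its content is unchanged and lies entirely right of position $i$, and the reference occurrence of $f_k[1..|f_k|-1]$ inside $T[1..p_k-1]$ survives in $T'$ unless it straddles position $i$; in the straddling case I cut $f_k$ at the straddle point into two phrases whose prefixes occur, respectively, in the untouched left part ($\subseteq T'[1..i-1]$) and in a shifted copy of a substring of $T[i+1..\,]$. This produces a valid factorization of $T'$ with at most $(j-1) + 2 + 2(z-j) = 2z - j + 1$ phrases.

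For $j \ge 2$ this already gives $\le 2z-1$ (substitutions/insertions). For $j = 1$ I use that $f_1$ is necessarily a single character: after a substitution it costs at most one phrase, so the total is $\le 1 + 2(z-1) = 2z-1$; after an insertion the edited two-character prefix costs at most two phrases, but then $f_2$ cannot split, because $f_2[1..|f_2|-1]$ occurs in $T[1..|f_1|] = T[1..1]$, forcing $|f_2|\le 2$ and $f_2[1]=T[1]$, which is still present in $T'[1..2]$ — giving $\le 2 + 1 + 2(z-2) = 2z-1$. For deletions the $j=1$ case is favorable: the deleted character is an entire single-character phrase, contributing $0$, so the total is $\le 2(z-1) = 2z-2$; and the $j=2$ case is forced (again by $|f_2|\le 2$) to delete an end character of $f_2$, so $f_2$ contributes a single phrase and the total is $\le 1 + 1 + 2(z-2) = 2z-2$. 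In all remaining cases $2z - j + 1 \le 2z-2$. The additive bounds, and with them the multiplicative bounds, follow.

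The main obstacle I anticipate is the accounting that keeps the constant sharp. The crucial point is that every affected later phrase $f_k$ is replaced by at most \emph{two} phrases rather than three; this rests on the precise observation that the reference occurrence of $f_k[1..|f_k|-1]$ ends before position $p_k$ (so $o_k + |f_k| \le p_k + 1$), which is exactly what guarantees that after the cut the right-hand leftover still has a valid earlier reference — and the off-by-one checks here differ slightly between substitutions, insertions, and deletions. The remaining delicate part is squeezing out the final ``$-1$'' (respectively ``$-2$'') by combining the structural facts $|f_1| = 1$ and $|f_2| \le 2$ with a careful case split on the position of the edit relative to the phrase boundaries $p_1, p_2$.
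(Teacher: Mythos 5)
Your proposal is correct, and it reaches the same bounds through a genuinely different organization of the argument. The paper analyzes the actual greedy factorization $\LZorig(T')$ in place: it proves that each old phrase interval $[p_j,q_j]$ contains at most two starting positions of factors of $\LZorig(T')$, via the same three-way case split you use (edit strictly to the right of the phrase, edit inside the phrase, edit inside every previous occurrence of the phrase's reference), and then harvests the ``$-1$''/``$-2$'' from $|f_1|=1$ and $|f_2|\le 2$ exactly as you do. You instead exhibit a witness factorization of $T'$ of size at most $2\zorig(T)-1$ (resp.\ $-2$) and invoke the optimality of the greedy $\LZorig$ parsing among all factorizations in which every phrase is a single character or has its prefix-minus-last-character occurring strictly earlier --- a lemma the paper never states or uses. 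What your route buys is modularity: each constructed phrase only has to be \emph{valid} (its reference must exist somewhere earlier), so you never need to reason about where the greedy parser of $T'$ actually places its cuts, which is where the paper's case (3) has to argue that a new factor ``extends at least to $q_j$.'' The cost is that the greedy-optimality lemma becomes load-bearing and must be proved (it is standard --- the usual induction showing that after $k$ phrases the greedy parse has covered at least as much of the string as any competing valid factorization, noting that the competitor's reference lies in $S[1..e'_k]\subseteq S[1..e_k]$ so no self-reference is introduced --- but it is an extra ingredient, and you should also verify, as you note, that the last greedy factor satisfies the same reference property). Two of your boundary cases deserve one extra sentence each in a full write-up: when a character is inserted at the right end of a phrase $f_j$ the split must be $f_j\mid x$ rather than $(\alpha x)\mid\beta$ (since $\beta=\varepsilon$ would force $f_j$ itself, not just $f_j[1..|f_j|-1]$, to occur earlier), and in the straddling case one should say explicitly that the cut is applied only when \emph{every} previous occurrence of $f_k[1..|f_k|-1]$ contains the edited position. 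Neither affects the counts.
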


\begin{proof}
  In the following, we consider the case that $T[i]=a$ is substituted by a character $\#$ that does not occur in $T$.
  The other cases of insertions, deletions, and substitutions with another character $b$~($\neq a$) occurring in $T$, can be proven similarly,
  which will be discussed at the end of the proof.

  We denote the factorizations as
  $\LZorig(T) = f_{1} \cdots f_{z}$ and $\LZorig(T') =f'_{1} \cdots f'_{z'}$.
  We denote the interval of factor $f_j$ (resp. $f'_j$) by $[p_j,q_j]$ (resp. $[p'_j, q'_j]$).

  Now we prove the following claim:
  \begin{claim}
    Each interval $[p_j,q_j]$ has at most two starting positions $p'_k$ and $p'_{k+1}$ of factors in $\LZorig(T')$ for some $1 \leq k < z'$.
  \end{claim}
  
  \begin{claimproof}
    There are the three following cases:
    \begin{itemize}
      \item[(1)] When the interval $[p_j, q_j]$ satisfies $q_j < i$:
      $f_j=f'_j$ holds for any such $j$.
      Therefore, in the interval $[p_j, q_j]$ there exists exactly one starting position $p'_j = p_j$ of a factor in $\LZorig(T')$.
      
      \item[(2)] When the interval $[p_j, q_j]$ satisfies $p_j \leq i \leq q_j$:
        Let $T[p_j..q_j] = w_{1} a w_{2} c$ and $T'[p_j..q_j] = w_1 \# w_2 c$,
        where $a, c, \# \in \Sigma$ and $w_1, w_2 \in \Sigma^*$.
      By definition, $w_1 a w_2$ has at least one previous occurrence in $f_1 \cdots f_{j-1}$.
      After the substitution, $w_1 \#$ becomes a factor $f'_j$ of $\LZorig(T')$ since $\#$ is a fresh character, and $w_2 c$ becomes a prefix of the next factor $f'_{j+1}$ in $\LZorig(T')$.
      This means that $p'_j=p_j$ and $q'_{j+1}\geq q_j$.
      Therefore, the interval $[p_j, q_j]$ has at most two starting positions $p'_j$ and $p'_{j+1}$ of factors in $\LZorig(T')$.
      
      \item[(3)] When the interval $[p_j, q_j]$ satisfies $i < p_j$:
        There are the two following sub-cases:
        \begin{itemize}
          \item[(3-A)] When $T[p_j..q_j-1]$ has a previous occurrence which does not contain the edited position $i$ in $T$:
          In this case, any suffix of $T[p_j..q_j-1]$ has a previous occurrence in $T'$.
          Therefore, $[p'_k, q'_k]$ with $p_j \leq p'_k$ satisfies $q'_k \geq q_j$.
          Hence, the interval $[p_j..q_j]$ has at most one starting position $p'_k$ of a factor in $\LZorig(T')$.
          
        \item[(3-B)] When all previous occurrences of $T[p_j..q_j-1]$ in $T$ contain the edited position $i$:
          Let $u_{1} a u_{2} d = T[p_j, q_j]$ with
          $a, d \in \Sigma$ and $u_1, u_2 \in \Sigma^*$.
          $u_1$ and $u_2$ have previous occurrences in $T'[1..p_j-1]$.
          Let $p'_k$ be the starting position of the leftmost factor of $\LZorig(T')$
          which begins in range $[p_j, q_j]$. 
          If $p'_k$ is in $u_2$, then $q'_k \geq q_k$
          and thus there is only one starting position of a factor of $\LZorig(T')$ in the interval $[p_j..q_j]$.
          Suppose $p'_k$ is in $u_1$.
          If $a$ has no previous occurrences
          (which happens when $T[i]$ was the only previous occurrence of $a$),
          then $T'[p_k+|u_1|]$ is the first occurrence of $a$
          and thus $q'_k = p_k+|u_1|$.
          Otherwise, $q'_k \geq p_k+|u_1|$.
          In either case, since $u_2$ has a previous occurrence,
          $q'_{k+1} \geq q_{k+1}$.
          Thus, there can exist at most two starting positions of factors of $\LZorig(T')$ in the interval $[p_j..q_j]$.

\end{itemize}
    \end{itemize}
  This completes the proof for the claim. 
  \end{claimproof}
  
  By the above claim, $\zorig(T') \leq 2 \zorig(T)$ holds for any string $T$ and any substitution operation.
  Since $f_1$ consists of a single character for any string and the interval $[1,1]$ cannot have two starting positions of factors in $\LZorig(T')$, $\zorig(T') \leq 2 \zorig(T)-1$ holds. This completes the proof for the case of substitution with $\#$.

  The above proof can be generalized to all the other cases,
  by replacing $\#$ in $T'$ as follows:
  \begin{itemize}
  \item $\# \leftarrow b$ for substitutions with character $b$ occurring in $T$, where we have $T'[p_j, q_j] = w_1 b w_2 c$ for Case (2);
  \item $\# \leftarrow T[i]\#$ for insertions with $\#$, where we have $T'[p_j, q_j] = w_1 T[i]\# w_2 c$ for Case (2);
  \item $\# \leftarrow T[i]b$ for insertions with character $b$ occurring in $T$, where we have $T'[p_j, q_j] = w_1 T[i]b w_2 c$ for Case (2);
  \item $\# \leftarrow \varepsilon$ for deletions, where we have $T'[p_j, q_j] = w_1w_2 c$ for Case (2).
  \end{itemize}
The analysis for Case (2) and Case (3) is analogous for all these cases.
Also, in the case of deletions, since $|f_2|\leq 2$ and the interval can have two starting positions of factors in $\LZorig(T')$ only when $f_1=T[1]$ is deleted, $\zorig(T') \leq 2 \zorig(T)-2$ holds.
\end{proof}

\subsection{Lower bounds for the sensitivity of $\zsrorig$}

\begin{theorem} \label{theo:LZsr_lowerbounds}
  The following lower bounds on
  the sensitivity of self-referencing LZ77 factorization $\LZorigsr$ with $|\Sigma| = 3$ hold: \\
  \textbf{substitutions:} $\MSensSub(\zsrorig, n) \geq 2$. $\ASensSub(\zsrorig,n) \geq \zsrorig$. \\
  \textbf{insertions:} $\MSensIns(\zsrorig, n) \geq 2$. $\ASensIns(\zsrorig,n) \geq \zsrorig$. \\
  \textbf{deletions:} $\liminf_{n \rightarrow \infty}\MSensDel(\zsrorig,n) \geq 2$. $\ASensDel(\zsrorig,n) \geq \zsrorig -2$.
\end{theorem}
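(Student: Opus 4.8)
The plan is to reuse, essentially verbatim, the constructions from Theorem~\ref{theo:lz77_lowerbounds}, but now analyze the \emph{self-referencing} LZ77 factorization of the same family $T = Q_1 Q_2 \cdots Q_p$ (with $Q_1 = \mathtt 0$ and $Q_k = Q_1 \cdots Q_{k-1}\mathtt 1$). First I would observe that for this particular $T$, self-referencing does not help: each $Q_k$ is already a single factor in $\LZorig(T)$, and since $Q_k[1..|Q_k|-1] = T[1..|Q_k|-1]$ is a \emph{prefix} of $T$ (hence its previous occurrence starts at position $1$ and does not overlap the factor being parsed), the self-referencing parse is identical, so $\zsrorig(T) = p$ as well. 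The key arithmetic difference from the non-self-referencing case is that here the final factor $f_z$ does \emph{not} need the special "cannot extend" treatment; more importantly, because there is no end-marker and $\zsrorig$ can be strictly smaller than $\zorig$ in general, I expect the lower-bound ratios to come out slightly cleaner — in fact the theorem claims the \emph{non-asymptotic} bounds $\MSensSub(\zsrorig,n) \geq 2$ and $\ASensSub(\zsrorig,n) \geq \zsrorig$, which are stronger (no $\liminf$) than the $\zorig$ counterparts.

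The main work is to recompute $\zsrorig(T')$ for the three edited strings. For substitutions, take $T' = \mathtt 2 \cdot T[2..n] = \mathtt 2 \cdot Q_2 \cdots Q_p$. I would argue, by the same induction as in the proof of Theorem~\ref{theo:lz77_lowerbounds}, that each $Q_k$ ($2 \le k \le p$) splits into exactly two self-referencing factors: since $T'[1] = \mathtt 2$, no prefix $Q_k[1..|Q_k|-1] = Q_1\cdots Q_{k-1}$ occurs starting at position $1$ any more, and the longest previously-occurring prefix of $Q_k$ is $Q_1\cdots Q_{k-2}$, giving a first factor $Q_1\cdots Q_{k-2}\mathtt 0$ and a second factor consisting of the remaining suffix $Q_2\cdots Q_{k-2}\mathtt{11}$ (whose leftmost occurrence is exactly here because it contains the first occurrence of $\mathtt 1^{k-1}$). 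Self-referencing can only help the parse stay the same or coarser, so I must check it does not \emph{merge} any of these pieces: the second factor ends in $\mathtt 1^{k-1}$, a string with no earlier occurrence at all, so it cannot be extended, and a careful check shows the split points are forced. This yields $\zsrorig(T') = 2p-1$ (the leading $\mathtt 2$, then $Q_2$ splitting as $\mathtt 0|\mathtt 1|$, and each subsequent $Q_k$ contributing two factors), hence $\MSensSub(\zsrorig,n) = (2p-1)/p \ge 2$ is \emph{false as an equality}; rather $(2p-1)/p < 2$, so to get the clean bound $\ge 2$ I would instead use a tiny variant — e.g. append one extra block or use $\Sigma=\{\mathtt 0,\mathtt 1,\mathtt 2\}$ with a short gadget so that the counts become $p$ and $2p$ exactly, or simply use the unary-flavored instance $T = \mathtt a^n$, $T' = \mathtt a^i \mathtt b \mathtt a^{n-i}$ appropriately chosen. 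The insertion and deletion cases are handled by the identical modifications used in Theorem~\ref{theo:lz77_lowerbounds}: insert $\mathtt 2$ after $T[1]$ (giving $\zsrorig(T') = 2p-1$, so $\ASensIns \ge \zsrorig - 1$, and with the gadget $\ge \zsrorig$), and delete $T[1]$ (giving $\zsrorig(T') = 2p-2$ and hence $\ASensDel \ge \zsrorig - 2$, with $\liminf \MSensDel \ge 2$).

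The hard part will be the \emph{verification that self-referencing does not coarsen the parse of $T'$} relative to the non-self-referencing one. In the non-self-referencing analysis one only has to exhibit that each candidate factor cannot occur in the \emph{completed} prefix; here one must additionally rule out that a longer self-referential match (one overlapping the current position) lets a factor swallow what would otherwise be the start of the next block. My strategy for this is to track the leftmost occurrence of the "run" suffixes $\mathtt 1^{j}$: the crucial factors all terminate precisely at the first position completing a longer-than-before run of $\mathtt 1$'s, and any self-referential extension past that point would require $\mathtt 1^{j}$ to have appeared earlier (including overlapping occurrences), which it provably has not. Once that structural lemma is in place, the factor counts are just bookkeeping identical to the $\zorig$ proof, and the three bounds follow by the same limiting arguments, with the non-asymptotic $\ge 2$ and $\ge \zsrorig$ statements obtained from the small deterministic gadget rather than from the $Q_k$ family alone.
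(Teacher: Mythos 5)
Your deletion case is exactly the paper's: it reuses the $Q_k$ family from Theorem~\ref{theo:lz77_lowerbounds} and gets $\zsrorig(T)=p$, $\zsrorig(T')=2p-2$. The gap is in substitutions and insertions. You correctly compute that the $Q_k$ family only yields $\zsrorig(T')=2p-1$, so it proves $\liminf \MSensSub \geq 2$ and $\ASensSub \geq \zsrorig-1$ but not the claimed non-asymptotic $\MSensSub(\zsrorig,n)\geq 2$ and $\ASensSub(\zsrorig,n)\geq \zsrorig$ — and then you defer the fix to an unspecified ``gadget'' or to the unary instance $T=\mathtt{a}^n$. The unary fallback does give ratio exactly $2$ (so the multiplicative claim survives), but it only witnesses an additive increase of $2$ with $\zsrorig(T)=2$; it does not recover the bound $\ASensSub \geq \zsrorig$ for a family with $\zsrorig$ unbounded, which is what the paper actually proves ($\ASensSub \geq \zsrorig = \Omega(\log n)$, and this is the entry recorded in Table~\ref{tbl:additive_sensitivity}). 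So the essential construction is missing, not merely deferred.

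The paper closes exactly this deficit by changing the base string rather than appending a gadget: it sets $R_1=\mathtt{00}$ and $R_k=R_1\cdots R_{k-1}\mathtt{1}$, so that $\LZorigsr(T)=\mathtt{0}\,|\,\mathtt{0001}\,|\,R_3\,|\cdots|\,R_p$ has $p$ factors where the second factor $\mathtt{0001}$ is genuinely \emph{self-referencing} (its reference $\mathtt{000}$ starts at position $1$ and overlaps the factor). The edit at position $2$ destroys precisely this overlapping reference, and $\mathtt{2001}$ refactors as $\mathtt{2|00|1}$ — three factors where there was one — while each later $R_k$ splits in two, giving exactly $2p$ factors. In other words, the one-factor saving that self-referencing buys in $T$ is the slack you were missing: your $Q_k$ string has no self-referencing factor to destroy, which is why your count comes out one short. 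If you want to keep your plan, you must exhibit such a family explicitly (your structural lemma about leftmost occurrences of $\mathtt{1}^j$ is the right kind of tool for verifying its parse, and is the same style of argument the paper uses); as written, the substitution and insertion bounds of the theorem are not established.
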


\begin{proof}
\textbf{substitutions:}
Let $p \geq 2$ and $\Sigma = \{\mathtt{0,1,2}\}$.
We use the following string $T$ for our analysis.

Let $R_1 = \mathtt{00}$ and $R_k = R_1 \cdots R_{k-1} \mathtt{1}$
with $2 \leq k \leq p$.
Consider the following string $T$ of length $n = \Theta(2^p)$:
\begin{eqnarray*}
  T & = & R_1 \cdots R_p \\
  & = & \mathtt{00 \cdot 001 \cdot 000011 \cdot 000010000111} \cdots R_p 
\end{eqnarray*}
with $|T| = n = \Theta(2^p)$.
It immediately follows from the definition of $T$ that
the self-referencing LZ77 factorization of $T$ is
\begin{eqnarray*}
  \LZorigsr(T) & = & \mathtt{0|0001}|R_3|R_4| \cdots |R_p| \\
  & = & \mathtt{0|0001|000011|000010000111|} \cdots |R_p|
\end{eqnarray*}
with $\zsrorig(T) = p = \Theta(\log n)$.
Note that the second factor $\mathtt{0001}$ is self-referencing.

As for substitution, we consider the string 
\begin{eqnarray*}
  T' & = & T[1]\cdot \mathtt{2} \cdot T[3..n] \\
  & = & \mathtt{02 \cdot 001 \cdot 000011 \cdot 000010000111} \cdots R_p 
\end{eqnarray*}
which can be obtained from $T$ by substituting the second $\mathtt{0}$ with $\mathtt{2}$.
Let us analyze the structure of the self-referencing LZ77 factorization of $T'$.
The second factor $\mathtt{0001}$ in $\LZorigsr(T)$ becomes $\mathtt{2001}$
in the edited string $T'$,
and this is divided into exactly three factors as $\mathtt{2|00|1|}$ in $\LZorigsr(T')$
because $\mathtt{2}$ is a fresh character,
$\mathtt{00}$ is the shortest prefix of
$T[3..n] = \mathtt{001}R_3 \cdots R_p$ that does not occur in $T[1..2] = \mathtt{02}$,
and $\mathtt{1}$ is a fresh character.
Our claim is that each $R_k$ with $3 \leq k \leq p$
is halved into two factors
$R_k[1..|R_k|/2] = R_1 \cdots R_{k-2} \mathtt{0}$ and
$R_k[|R_k|/2+1..|R_k|] = \mathtt{0} R_2 \cdots R_{k-2} \mathtt{11}$
of equal length in $\LZorigsr(T')$.
Suppose that $R_{k-1}$ is factorized as
$R_{k-1}[1..|R_{k-1}|/2]~|~R_{k-1}[|R_{k-1}|/2+1..|R_{k-1}|]~|$ in $\LZorigsr(T')$,
which means that the next factor is a prefix of $R_k \cdots R_p$.
Since $R_k[1..2] = \mathtt{00}$ and $T'[1..2] = \mathtt{02}$,
$R_k[1..|R_k|-1]$ does not have a previous occurrence as a prefix of $T'$.
Since $R_k = R_1 \cdots R_{k-2}R_{k-1} \mathtt{1}$
and $R_1 \cdots R_{k-2} = R_{k-1}[1..|R_{k-1}|-1]$,
the longest prefix of $T'[|R_1 \cdots R_{k-1}|+1..n] = R_k \cdots R_p$
that has a previous occurrence beginning in range $[1..|R_1 \cdots R_{k-1}|]$ is $R_1 \cdots R_{k-2}$, which implies $R_1 \cdots R_{k-2} R_{k-1}[1] = R_1 \cdots R_{k-2}\mathtt{0}$ is the next factor in $\LZorigsr(T')$.
The remaining part of $R_k$ is
$R_{k-1}[2..|R_{k-1}|]\mathtt{1} = \mathtt{0} R_2 \cdots R_{k-2} \mathtt{11}$.
Since its prefix $\mathtt{0} R_2 \cdots R_{k-2} \mathtt{1}$ has a previous occurrence
and $\mathtt{0} R_2 \cdots R_{k-2} \mathtt{11}$ has a suffix $\mathtt{01}^{k-1}$
which is the leftmost occurrence of $\mathtt{1}^{k-1}$ in $T'$,
this remaining part $\mathtt{0} R_2 \cdots R_{k-2} \mathtt{11}$ becomes the next factor in $\LZorigsr(T')$.
Thus, the self-referencing LZ77 factorization of $T'$ is
\[
\LZorigsr(T') = \mathtt{0|2|00|1| 000|011| 000010|000111| \cdots }|R_1 \cdots R_{p-2}\mathtt{0|0}R_2 \cdots R_{p-2}\mathtt{11} |
\]
with $\zsrorig(T') = 2p$, which leads to
$\MSensSub(\zsrorig,n) \geq 2p/p = 2$
and $\ASensSub(\zsrorig,n) \geq 2p-p = p = \zsrorig = \Omega(\log n)$.

\textbf{insertions:}
We use the same string $T$ in the case of substitutions.
Let $T'$ be the string obtained by inserting
$\mathtt{2}$ immediately after $T[1] = \mathtt{0}$,
namely,
\begin{eqnarray*}
  T' & = &  \mathtt{0} \cdot \mathtt{2} \cdot \mathtt{0} \cdot R_2 \cdots R_p \\
  & = & \mathtt{0\cdot2 \cdot0 \cdot 001 \cdot 000011 \cdot 000010000111} \cdots R_p.
\end{eqnarray*}
Then, by similar arguments to the case of substitutions,
we have
\[
\LZorigsr(T') = \mathtt{0|2|00|01| 0000|11| 000010|000111| \cdots }|R_1 \cdots R_{p-2}\mathtt{0|0}R_2 \cdots R_{p-2}\mathtt{11} |
\]
with $\zsrorig(T') = 2p$, which leads to
$\MSensIns(\zsrorig,n) \geq 2p/p = 2$
and $\ASensIns(\zsrorig,n) \geq 2p-p = p = \zsrorig = \Omega(\log n)$.

\textbf{deletions:}
As for deletions, we use the same strings $T$ and $T'$ from Theorem~\ref{theo:lz77_lowerbounds}. 
This string and the deletion also achieve the same lower bound for the self-referencing LZ77 factorization in the case of deletions. 
Then, we obtain $\zsrorig(T) = p$, $\zsrorig(T') = 2p-2$, which leads to
$\liminf_{n \rightarrow \infty}\MSensDel(\zsrorig,n) \geq 2$
and $\ASensDel(\zsrorig,n) \geq \zsrorig -2 = \Omega(\log n)$.

\end{proof}

The strings $T$ and $T'$ used in Theorem~\ref{theo:LZsr_lowerbounds}
give us optimal additive lower bounds in terms $\zsrorig$,
are highly compressible ($\zsrorig(T) = O(\log n)$)
and only use two or three distinct characters.
By using more characters,
we can obtain larger lower bounds for the additive sensitivity
for the size of the self-referencing LZ77 factorizations
in terms of the string length $n$, as follows:

\begin{theorem} \label{theo:lz77sr_lowerbounds_sqrtn}
  The following lower bounds on
  the sensitivity of self-referencing LZ77 factorization $\LZorigsr$ hold: \\
  \textbf{substitutions:} $\ASensSub(\zsrorig,n) = \Omega(\sqrt{n})$. \\
  \textbf{insertions:} $\ASensIns(\zsrorig,n) = \Omega(\sqrt{n})$. \\
  \textbf{deletions:} $\ASensDel(\zsrorig,n) = \Omega(\sqrt{n})$.
\end{theorem}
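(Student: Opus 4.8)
The plan is to reuse, in the spirit of the $\gamma$ and $b$ lower bounds of Sections~\ref{sec:gamma} and~\ref{sec:bs}, a family of strings that spends $\Theta(\sqrt n)$ fresh separator characters in order to turn the logarithmic gadget of Theorem~\ref{theo:LZsr_lowerbounds} into a linear one. Concretely, for a parameter $k$ I would take
\[
T \;=\; \mathtt{a}^{k}\,\mathtt{x}\,\mathtt{a}^{k+1}\,\#_1\,\mathtt{a}^{k-1}\mathtt{x}\mathtt{a}\,\#_2\,\mathtt{a}^{k-2}\mathtt{x}\mathtt{a}^{2}\,\#_3\cdots\#_{k}\,\mathtt{x}\mathtt{a}^{k},
\]
where $\#_1,\dots,\#_k$ are pairwise distinct fresh characters; writing $B_j=\mathtt{a}^{k-j}\mathtt{x}\mathtt{a}^{j}$ for the $j$-th block, $T$ is the ``template'' $\mathtt{a}^{k}\mathtt{x}\mathtt{a}^{k+1}$ followed by $\#_1B_1\#_2B_2\cdots\#_kB_k$, so that $|T|=\Theta(k^2)$ and $k=\Theta(\sqrt n)$. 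The first step is the observation that each $B_j$ occurs inside the template (it is the length-$(k+1)$ substring of $\mathtt{a}^{k}\mathtt{x}\mathtt{a}^{k+1}$ at offset $j$), and all these occurrences use the single copy of $\mathtt{x}$ in the template. Exhibiting the factorization then shows that in $\LZorigsr(T)$ the template costs $O(1)$ factors, each $\#_j$ is its own factor, and each $B_j$ together with the following separator is covered by one factor referring back into the template; hence $\zsrorig(T)=k+O(1)$.

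For substitutions, let $T'$ be $T$ with the template's $\mathtt{x}$ replaced by a fresh character $\mathtt{y}$. The key point is that after this edit the only copies of $\mathtt{x}$ in $T'$ lie inside the blocks, and the copy inside $B_i$ is followed by exactly $i$ copies of $\mathtt{a}$ and then the unique character $\#_{i+1}$. Consequently, the longest prefix of $B_j=\mathtt{a}^{k-j}\mathtt{x}\mathtt{a}^{j}$ that has an earlier occurrence in $T'$ beginning before the start of $B_j$ is $\mathtt{a}^{k-j}\mathtt{x}\mathtt{a}^{j-1}$ (witnessed inside $B_{j-1}$), so $B_j$ itself has no such occurrence; moreover no self-referential factor can cover all of $B_j$ either, since any source of a prefix $\mathtt{a}^{k-j}\mathtt{x}\cdots$ lies inside some $B_i$ with $i<j$ and is followed by a unique $\#$ (or by too short an $\mathtt{a}$-run), producing a mismatch before reaching $|B_j|$. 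Therefore the factor of $\LZorigsr(T')$ starting at the beginning of $B_j$ (and a factor does start there, because $\#_j$ is unique and cannot be interior to a longer factor) ends strictly inside $B_j$, so at least two factors begin in the window $B_j\#_{j+1}$. Since these windows are disjoint for $j=1,\dots,k-1$, we get $\zsrorig(T')\ge 2(k-1)$, hence $\zsrorig(T')-\zsrorig(T)\ge k-O(1)=\Omega(\sqrt n)$. For insertions I would instead insert a fresh character immediately before the template's $\mathtt{x}$ (so that the $\mathtt{x}$ in the template loses its preceding $\mathtt{a}$ and no $B_j$ occurs there), and for deletions I would delete the template's $\mathtt{x}$ altogether; the same disjoint-window count applies verbatim and yields the claimed $\Omega(\sqrt n)$ lower bounds in all three cases.

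The routine parts are the explicit factorization of $T$ and the counting of factor-start positions in the disjoint windows of $T'$. The main obstacle — the only place where self-referencing genuinely needs care beyond the non-self-referencing argument of Appendix~\ref{app:lz77_lowerbounds_sqrtn} — is ruling out that an overlapping factor rescues a block $B_j$ after the edit. I would handle this by checking that every candidate source beginning before $B_j$, whether in the template or inside a $B_i$ with $i<j$, agrees with $B_j$ for at most $|B_j|-1$ characters: template sources fail because the template has no $\mathtt{x}$ (or a displaced $\mathtt{x}$), while a source inside $B_i$ runs into the unique $\#_{i+1}$ or into an $\mathtt{a}$-run shorter than the $\mathtt{a}^{j}$ that $B_j$ demands, in each case before the copy — ordinary or self-referential — could reach length $|B_j|$. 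A secondary nuisance is pinning down the $O(1)$ additive slack coming from the template factors and from the last block $B_k$, which is simply the suffix; but this affects only the lower-order term and not the $\Omega(\sqrt n)$ conclusion.
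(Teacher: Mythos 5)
Your proposal is correct, and it uses essentially the same mechanism as the paper's proof in Appendix~A.2: a prefix ``template'' that lets each of $\Theta(\sqrt{n})$ blocks, delimited by fresh separators $\#_j$, be covered by a single factor referring into that template, and an edit to the template that forces each block to split into exactly two factors, yielding an additive gap of $\Theta(\sqrt n)$. The only difference is the concrete gadget (you reuse the $\mathtt{a}^{k-j}\mathtt{x}\mathtt{a}^{j}$ blocks from the string-attractor section, whereas the paper uses blocks $\mathtt{a}^{p+j}\mathtt{b}\#_j$ with increasing runs referencing a long unary prefix), and your case analysis ruling out a rescuing self-referential source is sound.
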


\begin{proof}
  In~\ref{app:lz77sr_lowerbounds_sqrtn}.
\end{proof}

\subsection{Upper bounds for the sensitivity of $\zsrorig$}

\begin{theorem} \label{theo:u_zsrorig}
  The following upper bounds on
  the sensitivity of self-referencing LZ77 factorization $\LZorigsr$ hold:  \\
  \textbf{substitutions:} $\MSensSub(\zsrorig,n) \leq 2$.
    $\ASensSub(\zsrorig,n) \leq \zsrorig$. \\
  \textbf{insertions:} $\MSensIns(\zsrorig,n) \leq 2$.
    $\ASensIns(\zsrorig,n) \leq \zsrorig$. \\
  \textbf{deletions:} $\MSensDel(\zsrorig,n) \leq 2$.
    $\ASensDel(\zsrorig,n) \leq \zsrorig$.
\end{theorem}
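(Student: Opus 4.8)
The plan is to follow the blueprint of the proof of Theorem~\ref{theo:u_zorig} for the non-self-referencing case. First I would treat the base case in which $T[i]=a$ is replaced by a fresh character $\#\notin\Sigma$, and then recover substitutions by an existing character, insertions, and deletions from it by the same syntactic replacements $\#\leftarrow b$, $\#\leftarrow T[i]\#$, $\#\leftarrow T[i]b$, $\#\leftarrow\varepsilon$ used there. Writing $\LZorigsr(T)=f_1\cdots f_z$ with $f_j=T[p_j..q_j]$ and $\LZorigsr(T')=f'_1\cdots f'_{z'}$, the goal is to bound, for each $j$, the number of factors of $\LZorigsr(T')$ whose starting position lies in $[p_j..q_j]$ (with the obvious index shift for insertions and deletions). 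The target is: every such interval contains at most two starts, with the sole exception of the interval of the unique factor of $\LZorigsr(T)$ that contains position $i$, which may contain up to three; since $f_1=T[1]$ is a single character and the corresponding factor of $\LZorigsr(T')$ is also a single character, the interval of $f_1$ always contains exactly one start, so $\zsrorig(T')\le 1+2(\zsrorig(T)-1)+1=2\zsrorig(T)$ when the edit does not fall inside $f_1$, and the count only drops when it does. This gives $\MSensSub(\zsrorig,n)\le 2$ and $\ASensSub(\zsrorig,n)\le\zsrorig$, and the translations above give the same conclusions for insertions and deletions.

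For the interval bound I would use the same three-case split as in Theorem~\ref{theo:u_zorig}. Case (1), $q_j<i$: here $f_j=f'_j$, because the greedy self-referencing parse is computed left to right and every left-reference used while scanning $T[1..q_j]$ both begins and ends strictly before $i$; so the interval has exactly one start. Case (2), $p_j\le i\le q_j$: write $f_j=w_1\,a\,w_2\,c$ with $c=T[q_j]$, so that $w_1aw_2=f_j[1..|f_j|-1]$ has a left-reference beginning before $p_j$; then $w_1$ still has such a reference in $T'$ while $w_1\#$ does not (as $\#$ is fresh), so $w_1\#$ forms a factor of $\LZorigsr(T')$ starting at $p_j$, and $w_2c\cdots$ resumes a new factor. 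In the favourable case $w_2$, being a suffix of $w_1aw_2$, keeps a left-reference and the resumed factor reaches past $w_2$, leaving only the split between $w_2$ and $c$, hence at most three starts in $[p_j..q_j]$. Case (3), $i<p_j$: the argument of Case (3) of Theorem~\ref{theo:u_zorig}, with its sub-cases according to whether the earlier occurrences of $T[p_j..q_j-1]$ avoid or meet position $i$, carries over; the extra subtlety is that a surviving earlier occurrence of a prefix or suffix of $T[p_j..q_j-1]$ may now overlap $[p_j..q_j]$ itself, and that insertions and deletions shift indices, so one has to verify that such an occurrence still begins inside the valid reference range $[1..p'_k-1]$ of the relevant factor, which forces that factor to reach position $q_j$ and keeps the count at two starts.

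The main obstacle is Case (2) with a self-overlapping reference: when \emph{every} left-reference of $w_1aw_2$ that begins before $p_j$ overlaps $[p_j..q_j]$, it necessarily passes through the edited position $i$, so the copy of $w_2$ inside that reference is destroyed by the edit and $w_2$ loses its inherited left-reference. The delicate point is to show that the resumed factor covering $w_2c\cdots$ nevertheless does not fragment into more than two pieces before leaving $[p_j..q_j]$ — intuitively because a self-overlapping reference of length $|f_j|-1$ forces a short period on $T[p_j..q_j-1]$, so the prefixes of $w_2$ regain left-references within $w_1w_2$ itself; this is exactly the slack that lets the interval of the edited factor contribute at most three starts rather than an unbounded number. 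A secondary technical point is the bookkeeping of index shifts in Case (3) for insertions and deletions. Once the interval bound is established, the multiplicative and additive statements follow by the summation described in the first paragraph.
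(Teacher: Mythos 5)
Your architecture --- fresh-character base case, the three-case interval split inherited from Theorem~\ref{theo:u_zorig}, the claim of at most two factor starts per interval except at most three for the edited one, and the final count $1+2(z-1)+1=2z$ --- is exactly the paper's, and your treatment of Cases (1) and (3) and of the translation to the other edit types matches the paper's proof. The one place where you diverge is the place you yourself flag as ``the delicate point'', and there the argument is not actually supplied. After $w_1\#$ closes a factor at position $i$, you must show that $w_2c$ contributes at most two further starts even when every reference of $w_1aw_2$ has its copy of $w_2$ pass through $i$. Your periodicity remark yields only half of this: if the reference begins at $s=p_I-\pi$, then the prefixes of $w_2$ of length less than $\pi$ still have copies (at offset $-\pi$) avoiding $i$, so the factor starting at $i+1$ reaches length at least $\pi$; but that says nothing about why the remainder of $w_2$ does not fragment further, which is precisely the ``unbounded number'' you are trying to rule out.

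The missing observation --- which is how the paper closes this case --- is that the edited position sits at a single offset inside the destroyed copy of $w_2$, so $w_2=u_1a'u_2$ where the copy of $u_1$ ends at $i-1$ and the copy of $u_2$ begins at $i+1$; the latter is itself a valid (self-referencing) left-reference for the occurrence of $u_2$ inside $w_2$, since it begins strictly before that occurrence. Hence the factor starting at $i+1$ covers at least $u_1$ plus one character, and the next factor, starting somewhere in $u_2$, reaches at least $q_I$ because every suffix of $u_2$ has its corresponding copy beginning at or after $i+1$ and before that factor's start. This is just the Case (3-B) argument of Theorem~\ref{theo:u_zorig} applied inside the edited interval, and it is what your sketch needs; note that your $u_1$ is exactly the length-$(\pi-1)$ prefix your periodicity computation produces, so the two views coincide once the $u_2$ half is supplied. (A minor slip: a reference overlapping $[p_j..q_j]$ need not pass through $i$ --- that happens only when $s\geq p_j-|w_2|$ --- but in the complementary case the copy of $w_2$ survives and you are back in your favourable case, so this is harmless.)
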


\begin{proof}
  We use the same notations as in Theorem~\ref{theo:u_zorig} of Section~\ref{sec:u_zorig}.
We consider the case where $T[i]$ is substituted by a fresh character $\#$,
  as in the proof for Theorem~\ref{theo:u_zorig}.
  We prove the following claim:
  \begin{claim}
    Each interval $[p_j,q_j]$ has at most two starting positions $p'_k$ and $p'_{k+1}$ of factors in $\LZorigsr(T')$ for $1 \leq k < z'$, excluding the interval $[p_I, q_I]$ that contains the edited position $i$.
    The interval $[p_I,q_I]$ has at most three starting positions of factors in $\LZorigsr(T')$.
  \end{claim}
  
  \begin{claimproof}
    Cases (1) and (3) which correspond to the positions before and after $i$ can be shown by the same discussions in the case of non self-referencing LZ factorizations (Theorem~\ref{theo:u_zorig} in Section~\ref{sec:u_zorig}).
    Now we consider case (2):
    \begin{itemize}
      \item[(2)] The interval $[p_j, q_j]$ satisfies $p_j \leq i \leq q_j$ (namely, $f_j=f_I$):
      If $f_I$ is not self-referencing, then by the same argument to the proof for Theorem~\ref{theo:u_zorig} in Section~\ref{sec:u_zorig}, the interval has at most two starting positions of factors in $\LZorigsr(T')$.
      Now we consider the case that $f_I$ is self-referencing.
      For the string $w_{1} a w_{2} c = T[p_I..q_I]$, only the substrings of $w_{2}$ can have a self-referencing previous occurrence that contains the edited position $i$ in $T$.
      Therefore, $w_1$ has a previous occurrence in $T'$ not containing $i$,
      which means that $q'_k = i$ where $T'[i] = \#$ is a fresh character.
      For the $w_{2} c$ part, we can apply the same discussion of Case (3)
      in Theorem~\ref{theo:u_zorig} of Section~\ref{sec:u_zorig}.
      Therefore, the $w_{1}\#$ part of $T'[p_j, q_j] = w_1 \# w_2 c$ can have at most one starting position, and the $w_{2} c$ part can have at most two starting positions of a factor in $\LZorigsr(T')$. 
    \end{itemize}
  This completes the proof for the claim. 
  \end{claimproof}
  
  By the above claim, $\zsrorig(T') \leq 2 \zsrorig(T)+1$ holds for any string $T$ and any substitution.
  Since again $|f_1| = 1$, we get $\zsrorig(T') \leq 2 \zsrorig(T)$.

  Using the same character(s) as in the proof for Theorem~\ref{theo:u_zorig},
  we can generalize this proof to
  the other types of edit operations.
\end{proof}
 \section{Lempel-Ziv-Storer-Szymanski factorizations with/without\\ self-references}
\label{sec:lzss}

In this section, we consider the worst-case sensitivity
of the \emph{Lempel-Ziv-Storer-Szymanski factorizations} (\emph{LZSS})~\cite{StorerS82} with/without self-references,
a.k.a. C-factorizations~\cite{Crochemore84}.

Given a factorization $T = f_{1} \cdots f_{z}$ for a string $T$ of length $n$:
\begin{itemize}
\item
  it is the non self-referencing LZSS factorization $\LZSS(T)$ of $T$
  if for each $1 \leq i \leq z$ the factor $f_i$ is either
  the first occurrence of a character in $T$,
  or the longest prefix of $f_i \cdots f_z$ occurs in $f_1 \cdots f_{i-1}$.

\item
  it is the self-referencing LZSS factorization $\LZSSsr(T)$ of $T$
  if for each $1 \leq i \leq z$ the factor $f_i$ is either
  the first occurrence of a character in $T$,
  or the longest prefix of $f_i \cdots f_z$ occurs
  at least twice in $f_1 \cdots f_{i}$.
\end{itemize}

To avoid confusions, we use different notations to denote the sizes
of these factorizations.
For a string $T$
let $\zss(T)$ and $\zsrss(T)$ denote
the number $z$ of factors in the non self-referencing LZSS factorization
and in the self-referencing LZSS factorization
of $T$, respectively.

For example, for string $T = \mathtt{abaabababababab}$, we have
\begin{eqnarray*}
  \LZSS(T) & = & \mathtt{a|b|a|aba|ba|baba|bab|}, \\
  \LZSSsr(T) & = & \mathtt{a|b|a|aba|babababab|},
\end{eqnarray*}
where $|$ denotes the right-end of each factor in the factorizations.
Here we have $\zss(T) = 7$ and $\zsrss(T) = 5$.

\subsection{Lower bounds for the sensitivity of $\zss$}

\begin{theorem} \label{theo:lzss_lowerbounds_sqrtn}
  The following lower bounds on
  the sensitivity of non self-referencing LZSS factorization $\LZSS$ hold: \\
  \textbf{substitutions:} $\liminf_{n \rightarrow \infty}\MSensSub(\zss,n) \geq 3$.
  $\ASensSub(\zss,n) \geq 2\zss - \Theta(\sqrt{\zss})$ and\\ $\ASensSub(\zss,n) = \Omega(\sqrt{n})$. \\
  \textbf{insertions:} $\liminf_{n \rightarrow \infty}\MSensIns(\zss,n) \geq 2$.
  $\ASensIns(\zss,n) \geq \zss - \Theta(\sqrt{\zss})$ and\\ $\ASensIns(\zss,n) = \Omega(\sqrt{n})$. \\
  \textbf{deletions:} $\liminf_{n \rightarrow \infty}\MSensDel(\zss,n) \geq 3$.
  $\ASensDel(\zss,n) \geq 2\zss - \Theta(\sqrt{\zss})$ and\\ $\ASensDel(\zss,n) = \Omega(\sqrt{n})$.
\end{theorem}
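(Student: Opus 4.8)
The plan is to prove each lower bound by exhibiting an explicit parametrized family of strings $T$, together with a single-character edit yielding $T'$, and then computing $\LZSS(T)$ and $\LZSS(T')$ exactly. The guiding observation is that, unlike an LZ77 factor (which always carries a trailing ``mismatch'' character that absorbs a local change), an LZSS factor of the form $w_1 a w_2$ has no such buffer: editing the character $a$ can split this single factor into \emph{three} factors of $\LZSS(T')$ — a prefix $w_1$ that still has an earlier occurrence, the now-isolated single character, and a suffix $w_2$ that still has an earlier occurrence. To turn such a per-factor splitting into a global factor-$3$ blow-up, I would build $T$ as a concatenation $T = B_1 B_2 \cdots B_m$ over a constant-size alphabet in which (i) each block $B_k$ is exactly one LZSS factor of $T$ because $B_k$ (or its relevant prefix) has a \emph{unique} earlier occurrence, and (ii) all these earlier occurrences chain back to a common anchor located at (or immediately after) the edit site, so that one edit simultaneously kills every block's reference and forces each $B_k$ to re-parse into three factors of the ``copy-prefix / isolated-character / copy-suffix'' shape. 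This gives $\zss(T) = m + O(1)$ and $\zss(T') = 3m - O(1)$, hence $\liminf_{n}\MSensSub(\zss,n) \ge 3$ and $\ASensSub(\zss,n) \ge 2\zss - O(1)$. The same string together with a front deletion (which destroys the anchor rather than editing a character inside a factor) yields the matching deletion bounds, while for insertions an analogous but weaker gadget only admits a factor-$2$ re-parse — consistent with the separate upper bound of $2$ for insertions established in this section — so the insertion bounds are $\MSensIns(\zss,n)\ge 2$ and $\ASensIns(\zss,n)\ge \zss - O(1)$.

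For the bounds phrased in terms of $n$, namely $\ASensSub(\zss,n)=\Omega(\sqrt n)$ and its insertion/deletion analogues, and for the sharpened additive form $2\zss-\Theta(\sqrt{\zss})$, I would amplify the gadget by interleaving it with $\Theta(\sqrt n)$ pairwise-distinct separator symbols, each of which is forced to be its own length-$1$ factor in both $T$ and $T'$. Choosing the block lengths so that $|T| = n$ and the number of separators is $\Theta(\sqrt n)$ drives $\zss(T)$ up to $\Theta(\sqrt n)$ while preserving the factor-$3$ (resp. factor-$2$) cascade inside each copy of the gadget; the separators and the pair of boundary factors adjacent to each of them are exactly the parts that do \emph{not} triple, which accounts for the $\Theta(\sqrt{\zss})=\Theta(n^{1/4})$ defect in $2\zss-\Theta(\sqrt{\zss})$, and $\ASensSub(\zss,n)=\Omega(\sqrt n)$ then follows. (Equivalently, one may run the constant-alphabet family and the many-separators family as two independent constructions: the former delivers the multiplicative bounds and the $2\zss-\Theta(1)$ additive form, the latter the $\Omega(\sqrt n)$ form, and the theorem is the union of what the two give.)

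The main obstacle, and the bulk of the work, is the exact determination of the two factorizations. I would argue by induction on $k$ that (a) in $T$ the parsing reaches the start of $B_k$ with exactly the claimed earlier factors, and the longest prefix of $B_k B_{k+1}\cdots$ having an occurrence in the already-processed prefix is precisely $B_k$; and (b) in $T'$, after $B_1,\dots,B_{k-1}$ have each been re-parsed into three factors, the longest prefix of the remaining suffix with an earlier occurrence is the claimed copy-prefix $w_1$, then the next longest is the single character, then the next is the copy-suffix $w_2$. Every inductive step needs both a \emph{lower bound} on this longest prefix (exhibited via an explicit earlier occurrence) and a matching \emph{upper bound} (ruling out any longer prefix occurring earlier), and the delicate point is the upper bound: it requires the self-similarity of the gadget to be rigid enough that the relevant occurrences are genuinely unique, so that editing the anchor really does break them. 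A secondary subtlety is the bookkeeping of substrings that legitimately occur in several places (each to be counted once) — this is precisely what produces the $\Theta(\sqrt{\zss})$ correction term and what pins down the fact that only the first affected block in $T'$ absorbs the one truly fresh character. Once both factorizations are locked down, the stated multiplicative and additive lower bounds follow by direct arithmetic, and taking limits as $m\to\infty$ (with $n=|T|\to\infty$) gives the $\liminf$ statements.
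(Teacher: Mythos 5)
Your core mechanism is the right one and is exactly what the paper's proof does: build $T$ from blocks $B_1,\dots,B_m$, each of which forms a single $\LZSS$ factor whose \emph{only} earlier occurrence passes through a common anchor position, so that one edit at the anchor forces every block to re-parse into three factors (in the paper the split is ``copy-prefix $\mid$ copy of the middle $\mid$ last character'' rather than your ``prefix $\mid$ isolated character $\mid$ suffix'', but that is cosmetic), with insertions yielding only a two-way split. The paper's concrete gadget takes a prefix $Q_1\mathtt{a_1}\mathtt{1}Q_2$ built from $2p$ distinct letters and lets the $k$-th block be the length-$(2k+1)$ window of that prefix centred on the unique $\mathtt{1}$; editing that $\mathtt{1}$ kills all $m=p(p+1)/2$ references at once.

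There is, however, a genuine quantitative gap in your amplification step. You interleave $\Theta(\sqrt{n})$ pairwise-distinct separator symbols, each its own length-$1$ factor in both $T$ and $T'$, while also driving $\zss(T)$ up to only $\Theta(\sqrt{n})$. Then the non-tripling factors number $\Theta(\sqrt{n})=\Theta(\zss)$, i.e.\ a constant fraction of all factors, so this family gives $\zss(T')\leq 3\zss-\Theta(\zss)$: the multiplicative ratio is bounded away from $3$ and the additive increase is $2\zss-\Theta(\zss)$, not the claimed $2\zss-\Theta(\sqrt{\zss})$. The calibration must be the other way around: only $\Theta(n^{1/4})=\Theta(\sqrt{\zss})$ distinct characters (hence $\Theta(\sqrt{\zss})$ non-tripling factors, all confined to the prefix gadget) against $\Theta(\sqrt{n})=\Theta(\zss)$ tripling blocks, which is what the paper's choice $|Q_1|=|Q_2|=p(p+1)/2$ with $n=\Theta(p^4)$ achieves. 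Your fallback — a constant-alphabet gadget delivering ratio $3$ and $2\zss-\Theta(1)$ — is asserted but never constructed, and it is not clear such a gadget exists (the paper needs the growing alphabet precisely to make each block's earlier occurrence unique and to stop each block from extending into the next); so as written neither of your two families actually delivers the stated combination of bounds.
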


\begin{proof}
  Let $\Sigma = \{\mathtt{0, 1, a_1, \ldots, a_{\mathit{p}}, b_1, \ldots , b_{\mathit{p}}} \}$.
  Let
  \begin{eqnarray*}
  Q_1 &=& \mathtt{(a_1\cdots a_{\mathit{p}}) (a_1\cdots a_{{\mathit{p}}-1}) \cdots (a_1a_2)(a_1)}, \\
  Q_2 &=&  \mathtt{(b_1)  (b_1 b_2)  \cdots (b_1 \cdots b_{\mathit{p}-1}) (b_1 \cdots b_{\mathit{p}})},
  \end{eqnarray*}
  and $m = |Q_1| = |Q_2| = p(p+1)/2$.
  Consider the following string:
    \begin{equation*}
      \begin{split}
        T =& (Q_1 \mathtt{a_11} Q_2) \cdot (\mathtt{a_11} Q_2[1]) \cdot (Q_1[m]\mathtt{a_11}Q_2[1..2]) \cdot (Q_1[m-1..m]\mathtt{a_11}Q_2[1..3])\\ & \cdots (Q_1[m-k+2..m] \mathtt{a_11}Q_2[1..k]) \cdots (Q_1[2..m] \mathtt{a_11}Q_2)\\
        =& (Q_1 \mathtt{a_11} Q_2) (\varepsilon \mathtt{a_11b_1}) (\mathtt{a_1a_11b_1b_1}) \cdots (Q_1[m-k+2..m] \mathtt{a_11}Q_2[1..k]) \cdots (Q_1[2..m] \mathtt{a_11}Q_2)
     \end{split}
    \end{equation*}
  with $1 \leq k \leq m$.

Let us analyze the structure of the non self-referencing LZSS factorization of $T$. $Q_1$ consists of $p$ characters $\mathtt{a_1, \ldots,a_\mathit{p}}$, and the prefix $\mathtt{a_1 \cdots a_\mathit{p}}$ of $Q_1$ forms $p$ factors of length 1.
  The remaining part of $Q_1$ is divided into $p-1$ factors as $\mathtt{(a_1 \cdots a_\mathit{k})}$ with $p-1 \geq k \geq 1$ because $\mathtt{(a_1 \cdots a_\mathit{k})a_1}$ does not occur before.
  Next, both $T[m+1]=\mathtt{a_1}$ and $T[m+2]=\mathtt{1}$ become a factor of length $1$. As for the prefix of $Q_2$, $\mathtt{b_1}$ is a fresh character and becomes a factor of length $1$. For each $\mathtt{(b_1 \cdots b_\mathit{k})}$ with $2\leq k \leq p$, $\mathtt{b_1 \cdots b_{\mathit{k}-1}}$ occurs previously, and
  $\mathtt{b_1 \cdots b_\mathit{k}}$ does not occur before. Therefore, each interval of $\mathtt{(b_1 \cdots b_\mathit{k})}$ has two factors as $\mathtt{b_1 \cdots b_{\mathit{k}-1}|b_\mathit{k}}|$. Then, there are $4p$ factors in the interval $[1..|Q_1\mathtt{a_1}\mathtt{1}Q_2|]$.
  The substring $T[|Q_1\mathtt{a_1}\mathtt{1}Q_2|..|T|]$ is the sequence of $m$ parts $(Q_1[m-k+2..m] \mathtt{a_11}Q_2[1..k])$ with $1 \leq k \leq m$. Each part becomes a factor because $(Q_1[m-k+2..m] \mathtt{a_11}Q_2[1..k])$ occurs at $T[m-k+2 .. m+k+2]$, and $(Q_1[m-k+2..m] \mathtt{a_11}Q_2[1..k])Q_1[m-k+1]$
  does not occur before.
  Therefore, the factorization of $T$ is:
  \begin{eqnarray*}
  \LZSS(T) = && Q_1 \mathtt{|a_1|1|} Q_2|(\varepsilon \mathtt{a_11b_1})|(\mathtt{a_1a_11b_1b_1})|\cdots\\&&|(Q_1[m-k+2..m] \mathtt{a_11}Q_2[1..k])| \cdots |(Q_1[2..m] \mathtt{a_11}Q_2)|,
  \end{eqnarray*}
  where
  \[
  \LZSS(Q_1) = \mathtt{a_1|\cdots |a_{\mathit{p}}| (a_1\cdots a_{{\mathit{p}}-1})| \cdots |(a_1a_2)|(a_1)|}
  \]
  and
  \[
  \LZSS(Q_2) =  \mathtt{b_1| b_1| b_2|  \cdots |b_1 \cdots b_{\mathit{p}-2}|b_{\mathit{p}-1}|b_1 \cdots b_{\mathit{p}-1}|b_{\mathit{p}}|}.
  \]
  Then $\zss(T) = 4p+(1/2)p(p+1)$ holds.
  
  \textbf{substitutions:}
  Let 
  \[
  T' = (Q_1 \mathtt{a_10} Q_2) (\varepsilon \mathtt{a_11b_1}) (\mathtt{a_1a_11b_1b_1}) \cdots (Q_1[m-k+2..m] \mathtt{a_11}Q_2[1..k]) \cdots (Q_1[2..m] \mathtt{a_11}Q_2)
  \]
  be the string obtained from $T$ by substituting the first $\mathtt{1}$ with $\mathtt{0}$.
  It is clear that the factorization of the interval $[1..|Q_1\mathtt{a_1}\mathtt{0}Q_2|]$ is unchanged, and there are $4p$ factors in.
  Next, $m$ factors $(Q_1[m-k+2..m] \mathtt{a_11}Q_2[1..k])$ with $1 \leq k \leq m$ lose the position they refer to.
  Then, each factor $Q_1[m-k+2..m] \mathtt{a_1}\mathtt{1}Q_2[1..k]$ is divided into three factors as $Q_1[m-k+2..m] \mathtt{a_1} | \mathtt{1} Q_2[1..k-1] | Q_2[k]|$ because of their previous occurrences.
  Therefore, the factorization of $T'$ is:
    \begin{equation*}
      \begin{split}
        \LZSS(T') = & Q_1 \mathtt{|a_1|0|} Q_2|\mathtt{a_1|1|b_1}|\mathtt{a_1a_1|1b_1|b_1}|\cdots |Q_1[m-k+2..m] \mathtt{a_1|1}Q_2[1..k-1]|Q_2[k]| \cdots\\ &|Q_1[2..m] \mathtt{a_1|1}Q_2[1..m-1]|Q_2[m]|,
     \end{split}
    \end{equation*}
    where
    \[
    \LZSS(Q_1) = \mathtt{a_1|\cdots |a_{\mathit{p}}| (a_1\cdots a_{{\mathit{p}}-1})| \cdots |(a_1a_2)|(a_1)|}
    \]
    and
    \[
    \LZSS(Q_2) =  \mathtt{b_1| b_1| b_2|  \cdots |b_1 \cdots b_{\mathit{p}-2}|b_{\mathit{p}-1}|b_1 \cdots b_{\mathit{p}-1}|b_{\mathit{p}}|}.
    \]
  Then, $\zss(T') = 4p+(3/2)p(p+1)$ holds.
  Also, 
  \begin{eqnarray*}
  |T| &=& p(p+1) + 2 +\sum_{k=1}^{(1/2)p(p+1)} (2k+1)   \\ 
  &=& p(p+1) + 2 +2 \sum_{k=1}^{\frac{p(p+1)}{2}}k + \frac{p(p+1)}{2}  \\ \nonumber
  &=& p(p+1) + 2 +\frac{p^2(p+1)^2}{4} + p(p+1)  = \Theta(p^4)
  \end{eqnarray*}
  holds.
  Hence, we obtain
  \begin{eqnarray*}
  \liminf_{n \rightarrow \infty} \MSensSub(\zss,n) & \geq & \liminf_{p \rightarrow \infty} \left( \frac{\left(4p+\frac{3(p(p+1))}{2}\right)}{\left(4p+\frac{p(p+1)}{2}\right)} \right) = 3, \\
  \ASensSub(\zss,n) & \geq & \left(4p+\frac{3(p(p+1))}{2}\right) - \left(4p+\frac{p(p+1)}{2}\right) =  p(p+1) \\
  &=& 2\zss - \Theta(\sqrt{\zss}) \in \Omega(\sqrt{n}).
  \end{eqnarray*}

  \textbf{insertions:}
  Let 
  \[
  T' = (Q_1 \mathtt{a_101} Q_2) (\varepsilon \mathtt{a_11b_1}) (\mathtt{a_1a_11b_1b_1}) \cdots (Q_1[m-k+2..m] \mathtt{a_11}Q_2[1..k]) \cdots (Q_1[2..m] \mathtt{a_11}Q_2)
  \]
  be the string obtained from $T$ by inserting $\mathtt{0}$ before the first $\mathtt{1}$.
  The non self-referencing LZSS factorization of $T'$ is:
  \begin{equation*}
    \begin{split}
      \LZSS(T') = \ & Q_1 \mathtt{|a_1|0|1|} Q_2|\mathtt{a_1|1b_1}|\mathtt{a_1a_1|1b_1b_1}|\cdots \\
      & |Q_1[m-k+2..m] \mathtt{a_1|1}Q_2[1..k]| \cdots |Q_1[2..m] \mathtt{a_1|1}Q_2|,
   \end{split}
  \end{equation*}
  where
  \[
  \LZSS(Q_1) = \mathtt{a_1|\cdots |a_{\mathit{p}}| (a_1\cdots a_{{\mathit{p}}-1})| \cdots |(a_1a_2)|(a_1)|}
  \]
  and
  \[
  \LZSS(Q_2) =  \mathtt{b_1| b_1| b_2|  \cdots |b_1 \cdots b_{\mathit{p}-2}|b_{\mathit{p}-1}|b_1 \cdots b_{\mathit{p}-1}|b_{\mathit{p}}|}.
  \]
  Then, $\zss(T') = 4p+p(p+1)$ holds.
  Hence, we obtain $\liminf_{n \rightarrow \infty}\MSensIns(\zss,n) \geq 2$, $\ASensIns(\zss,n) \geq \zss - \Theta(\sqrt{\zss})$, and $\ASensIns(\zss,n) = \Omega(\sqrt{n})$.
  
  \textbf{deletions:}
  As for deletions, by considering $T'$ obtained from $T$ by deleting the first $\mathtt{1}$, we get a similar decomposition to the case of substitutions.
  Thus, we also obtain $\liminf_{n \rightarrow \infty}\MSensDel(\zss,n) \geq 3$, $\ASensDel(\zss,n) \geq 2\zss - \Theta(\sqrt{\zss})$, and $\ASensDel(\zss,n) = \Omega(\sqrt{n})$.
\end{proof}

\subsection{Upper bounds for the sensitivity of $\zss$} \label{sec:u_zss}

\begin{theorem} \label{theo:u_zss}
  The following upper bounds on
  the sensitivity of non self-referencing LZSS factorization $\LZSS$ hold:  \\
  \textbf{substitutions:} $\limsup_{n \rightarrow \infty}\MSensSub(\zss,n) \leq 3$.
    $\ASensSub(\zss,n) \leq 2\zss-2$. \\
  \textbf{insertions:} $\MSensIns(\zss,n) \leq 2$.
    $\ASensIns(\zss,n) \leq \zss$. \\
  \textbf{deletions:} $\limsup_{n \rightarrow \infty}\MSensDel(\zss,n) \leq 3$.
    $\ASensDel(\zss,n) \leq 2\zss-3$.
\end{theorem}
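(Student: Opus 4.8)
The plan is to mirror the proof of Theorem~\ref{theo:u_zorig} for $\zorig$, where the only new phenomenon is that an $\LZSS$ phrase stops at the longest previous match and does \emph{not} absorb the following mismatching symbol; consequently an edit that lands in, or a reference that crosses, one phrase may, in the worst case, shatter it into three phrases rather than two, which is exactly the source of the multiplicative factor $3$ (versus $2$ for $\zorig$) for substitutions and deletions. As in Theorem~\ref{theo:u_zorig}, I would first treat the substitution $T[i]=a\mapsto\#$ with $\#$ a fresh symbol, and obtain the other cases by the replacements $\#\leftarrow b$, $\#\leftarrow T[i]\#$, $\#\leftarrow T[i]b$, $\#\leftarrow\varepsilon$; the deletion and insertion variants turn out to be gentler, as noted below.

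Write $\LZSS(T)=f_1\cdots f_z$ with phrase intervals $[p_j,q_j]$. The core claim is that every interval $[p_j,q_j]$ contains at most three starting positions of phrases of $\LZSS(T')$, and at most two of them when $i\notin[p_j,q_j]$ and the edit is an insertion. I would split into the three cases of Theorem~\ref{theo:u_zorig}: (1) $q_j<i$, where $f'_j=f_j$ gives one start; (2) $p_j\le i\le q_j$, where $T[p_j..q_j]=w_1aw_2$ becomes $T'[p_j..q_j]=w_1\#w_2$ and $w_1,w_2$ inherit previous occurrences from the earlier occurrence of $f_j$ (which lies entirely to the left of $i$ and is therefore untouched), so the greedy parse of $T'$ produces at most the three pieces $w_1$, $\#$, and a phrase beginning inside $w_2$; and (3) $i<p_j$. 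Case (3) splits further: if $f_j$ has a previous occurrence avoiding $i$, then every suffix of $f_j$ keeps a previous, non-overlapping occurrence in $T'$ and a single start suffices; otherwise every previous occurrence of $f_j$ straddles $i$, and fixing one, $T[q'_j..q'_j+\ell_j-1]=f_j$ with $i=q'_j+d$, both $f_j[1..d]$ and $f_j[d+2..\ell_j]$ still occur before $p_j$ in $T'$, so the greedy parse may take $f_j[1..d]$, then the single symbol $f_j[d+1]=a$, then a phrase reaching past $q_j$ --- three starts. For an insertion the symbol at position $i$ is \emph{not} destroyed, so the first match already covers $f_j[1..d+1]$ and only two starts arise; for a deletion case~(2) yields only the two pieces $w_1,w_2$.

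Summing the per-interval counts gives $\zss(T')\le 3z$ for substitutions and deletions and $\zss(T')\le 2z+1$ for insertions. The exact additive bounds then follow from the usual left-end savings: $f_1$ is a single first-occurrence symbol, so the interval $[1,1]$ contributes one start rather than three, removing $2$ and yielding $\zss(T')\le 3\zss-2$, i.e.\ $\ASensSub(\zss,n)\le 2\zss-2$; for deletions, deleting $T[1]$ makes $f_1$ vanish (saving $3$), and otherwise $f_1$ is a case-(1) phrase (saving $2$) while the unique case-(2) phrase splits into only two pieces (saving one more), so $\zss(T')\le 3\zss-3$, i.e.\ $\ASensDel(\zss,n)\le 2\zss-3$; for insertions the same argument absorbs the extra $+1$, giving $\zss(T')\le 2\zss$, i.e.\ $\ASensIns(\zss,n)\le\zss$ (prepending is an easy separate case, destroying no internal occurrence and giving $\zss(T')\le\zss+1$). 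Dividing by $\zss(T)$ and letting $n\to\infty$ gives $\limsup\MSensSub(\zss,n)\le 3$, $\limsup\MSensDel(\zss,n)\le 3$, and $\MSensIns(\zss,n)\le 2$.

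The step I expect to be the main obstacle is the straddling sub-case of (3): one must check that the ``middle'' one-symbol piece $f_j[d+1]$ really occurs (so the greedy parse does not re-synchronize before reaching it) and is, in the worst case, exactly one symbol long, that the tail $f_j[d+2..\ell_j]$ has a previous occurrence ending strictly before the (possibly shifted) image of $p_j$ in $T'$, and that these per-interval counts are not over-counted when one phrase of $\LZSS(T')$ spans several intervals of $\LZSS(T)$; nailing down the precise constants ($-2$ vs.\ $-3$ vs.\ the bare $\zss$) also requires handling the boundary cases $i\le 1$, $|f_j|=1$, and $d\in\{0,\ell_j-1\}$ carefully, just as in Theorem~\ref{theo:u_zorig}.
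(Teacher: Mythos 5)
Your proposal follows essentially the same route as the paper's proof: the identical three-case decomposition by the position of the edit relative to each phrase interval (with case (3) split according to whether some previous occurrence of the phrase avoids position $i$), the same per-interval counts of at most three (resp.\ two) starting positions, and the same left-end refinements via $f_1$ and the special interval $[p_I,q_I]$ to obtain the exact additive constants. The argument and the resulting bounds match the paper's, so no further comparison is needed.
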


\begin{proof}
  Let $\LZSS(T) = f_{1} \cdots f_{z}$ and $\LZSS(T') = f'_{1} \cdots f'_{z'}$.
  We denote the interval of the $j$th factor $f_j$ (resp. $f'_j$) by $[p_j,q_j]$ (resp. $[p'_j, q'_j]$), namely $T[p_j..q_j] = f_j$ and $T'[p'_j..q'_j] = f'_{j}$.
  Also, let $f_I$ be the factor of $\LZSS(T)$ whose interval $[p_I, q_I]$ contains the edited position $i$, namely $p_I \leq i \leq q_I$.
  
  \textbf{substitutions}:
    In the following, we consider the case that the $i$th character $T[i]=a$ is substituted by a fresh character $\#$ which does not occur in $T$. The other cases can be proven similarly.
    Now we show the following claim:
    \begin{claim} 
      After the substitution, each interval $[p_j,q_j]$ has at most three starting positions $p'_k$, $p'_{k+1}$, and $p'_{k+2}$ of factors in $\LZSS(T')$ for $1 \leq k \leq z'-2$.
    \end{claim}
  
    \begin{claimproof}
      There are the three following cases:
      \begin{itemize}
      \item[(i)] When the interval $[p_j, q_j]$ satisfies $q_j < i$:
        By the same argument to Case (1) for LZ77,
        the interval $[p_j, q_j]$ contains exactly one starting position $p'_j = p_j$.

        \item[(ii)] When the interval $[p_j, q_j]$ satisfies $p_j \leq i \leq q_j$ (namely, $f_j=f_I$):
        For the string $w_{j_1} a w_{j_2} = T[p_j..q_j]$, it is guaranteed that $w_1 a w_2$ has at least one occurrence in $f_1 \cdots f_{j-1}$.
        After the substitution which gives $T'[p_j..q_j] = w_1 \# w_2$, $w_1$ and $\#$ become factors as $f'_j$ and $f'_{j+1}$, and $w_2$ becomes the prefix of factor $f'_{j+2}$.
        This means that $p'_j=p_j$ and $q'_{j+2}\geq q_j$.
        Therefore, the interval $[p_j, q_j]$ contains at most three starting positions $p'_j$, $p'_{j+1}$ and $p'_{j+2}$ of factors in $\LZSS(T')$.
      
        \item[(iii)] When the interval $[p_j, q_j]$ satisfies $i < p_j$:
          We consider the two following sub-cases:
          \begin{itemize}
            \item[(iii-A)] When $T[p_j..q_j]$ has at least one occurrence which does not contain the edited position $i$ in $T$:
            Any suffix of $T[p_j..q_j]$ still has a previous occurrence in $T'$.
            Therefore, $[p'_k, q'_k]$ with $p_j \leq p'_k$ satisfies $q'_k \geq q_j$,
            meaning the interval $[p_j, q_j]$ contains at most one starting position $p'_k$ of a factor in $\LZSS(T')$.
          
          \item[(iii-B)] All occurrences of $T[p_j..q_j]$ in $T$ contain the edited position $i$:
            Let $u_{1} a u_{2} = T[p_j, q_j]$ with $a \in \Sigma$ and $u_1, u_2 \in \Sigma^*$.
            $u_1$ and $u_2$ have previous occurrences in $T'[1..p_j-1]$.
            Let $p'_k$ be the starting position of the leftmost factor of $\LZSS(T')$
            which begins in range $[p_j, q_j]$. 
            If $p'_k$ is in $u_2$, then $q'_k \geq q_k$
            and thus there is only one starting position of a factor of $\LZSS(T')$ in the interval $[p_j..q_j]$.
            Suppose $p'_k$ is in $u_1$.
            If $a$ has no previous occurrences
            (which happens when $T[i]$ was the only previous occurrence of $a$),
            then $T'[p_k+|u_1|]$ is the first occurrence of $a$ in $T'$
            and thus $q'_k = p_k+|u_1|-1$, $p'_{k+1} = q'_k + 1$ and $q'_{k+1} = p'_{k+1}+1$.
            Otherwise, $q'_k \geq p_k+|u_1|-1$,
            $p'_{k+1} \geq q'_k + 1$ and $q'_{k+1} \geq p'_{k+1}+1$.
            In either case, since $u_2$ has a previous occurrence, $q'_{k+2} \geq q_{k+1}$.
            Thus, there can exist at most three starting positions of factors of $\LZSS(T')$ in the interval $[p_j..q_j]$.

          \end{itemize}
      \end{itemize}
      This completes the proof for the claim.
      \end{claimproof}
  
      It follows from the above claim that $\zss(T') \leq 3 \zss(T)$ for any string $T$ and substitutions with $\#$.
      Since $|f_1| = 1$, $\zss(T') \leq 3 \zss(T)-2$ holds.
      Hence, we obtain $\limsup_{n \rightarrow \infty}\MSensSub(\zss,n) \leq 3$ and $\ASensSub(\zss,n) \leq 2\zss-2$.
      
      \textbf{insertions}:
        In the following, we consider the case that $\#$ is inserted to between positions $i-1$ and $i$.
        The other cases can be proven similarly.
        Now we show the following claim:
        \begin{claim}
          After the insertion, each interval $[p_j,q_j]$ contains at most two starting positions $p'_k$ and $p'_{k+1}$ of factors in $\LZSS(T')$ for $1 \leq k \leq z'-1$, excluding the interval $[p_I, q_I]$.
          Also, the interval $[p_I,q_I]$ contains at most three starting positions of factors in $\LZSS(T')$.
        \end{claim}
      
        \begin{claimproof}
          For Cases (i), (ii), and (iii-A), we can use the same discussions as in the case of substitutions.
          Now we consider Case (iii-B):
          \begin{itemize}
            \item[(iii-B)] When all occurrences of $T[p_j..q_j]$ in $T$ contain the edited position $i$:
            Let $u_{1} a u_{2} = T[p_j, q_j]$ with $a \in \Sigma$ and $w_1, w_2 \in \Sigma^*$. It is guaranteed that $w_{j_1}a$, and $w_{j_2}$ still have previous occurrences in $T'$.
            Therefore, each range of $w_{j_1}a$ and $w_{j_2}$ can contain at most one starting position of a factor in $\LZSS(T')$. 
          \end{itemize}    
        \end{claimproof}
      
        It follows from the above claim that $\zss(T') \leq 2 \zss(T)+1$ holds any string $T$ and insertions with $\#$.
        By using the same discussion as for $f_1$, we obtain $\zss(T') \leq 2 \zss(T)$ holds.
        Then we have $\MSensIns(\zss,n) \leq 2$ and $\ASensIns(\zss,n) \leq \zss$.
        
        \textbf{deletions}:
          In the following, we consider the case that $T[i]=a$ is deleted.
          Now we show the following claim:
          \begin{claim}
            After the deletion, each interval $[p_j,q_j]$ contains at most three starting positions $p'_k$, $p'_{k+1}$, and $p'_{k+2}$ of factors in $\LZSS(T')$ for $1 \leq k \leq z'-2$, excluding the interval $[p_I, q_I]$.
            The interval $[p_I,q_I]$ contains at most two starting positions of factors in $\LZSS(T')$.
          \end{claim}
        
          \begin{claimproof}
            For Cases (i) and (iii), we can use the same discussions as in the case of substitutions.
            Now we consider case (ii):
            \begin{itemize}
              \item[(ii)] When the interval $[p_j, q_j]$ satisfies $p_j \leq i \leq q_j$ (namely, $f_j=f_I$):
              Let $w_{1} a w_{2} = T[p_j..q_j]$ with $a \in \Sigma$ and $w_1, w_2 \in \Sigma^*$. It is guaranteed that $w_1 a w_2$ has at least one previous occurrence in $f_1 \cdots f_{j-1}$.
              Therefore, after the deletion of $a$, each range of $w_{1}$ and $w_{2}$ can contain at most one starting position of a factor in $\LZSS(T')$. 
            \end{itemize}    
          \end{claimproof}
        
          It follows from the above claim that $\zss(T') \leq 3 \zss(T)-1$ holds for any string $T$ and deletions.
          By using the same discussion as for $f_1$, $\zss(T') \leq 3 \zss(T)-3$ holds.
          Then we get\\ $\limsup_{n \rightarrow \infty}\MSensDel(\zss,n) \leq 3$ and $\ASensDel(\zss,n) \leq 2\zss-3$.
\end{proof}

\subsection{Lower bound for the sensitivity of $\zsrss$}

\begin{theorem} \label{theo:lzsssr_lowerbounds_sqrtn}
  The following lower bounds on
  the sensitivity of self-referencing LZSS factorization $\LZSSsr$ hold: \\
  \textbf{substitutions:} $\liminf_{n \rightarrow \infty}\MSensSub(\zsrss,n) \geq 3$.
  $\ASensSub(\zsrss,n) \geq 2\zsrss - \Theta(\sqrt{\zsrss})$ and\\ $\ASensSub(\zsrss,n) = \Omega(\sqrt{n})$. \\
  \textbf{insertions:} $\liminf_{n \rightarrow \infty}\MSensIns(\zsrss,n) \geq 2$.
  $\ASensIns(\zsrss,n) \geq \zsrss - \Theta(\sqrt{\zsrss})$ and\\ $\ASensIns(\zsrss,n) = \Omega(\sqrt{n})$. \\
  \textbf{deletions:} $\liminf_{n \rightarrow \infty}\MSensDel(\zsrss,n) \geq 3$.
  $\ASensDel(\zsrss,n) \geq 2\zsrss - \Theta(\sqrt{\zsrss})$ and\\ $\ASensDel(\zsrss,n) = \Omega(\sqrt{n})$.
\end{theorem}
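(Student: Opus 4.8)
The plan is to reuse, without modification, the string family $T$ over the alphabet $\Sigma = \{\mathtt{0},\mathtt{1},\mathtt{a_1},\ldots,\mathtt{a_{\mathit{p}}},\mathtt{b_1},\ldots,\mathtt{b_{\mathit{p}}}\}$ and the three edited strings $T'$ (for substitution, insertion, and deletion, respectively) constructed in the proof of Theorem~\ref{theo:lzss_lowerbounds_sqrtn}, and to show that on this particular family the self-referencing LZSS factorization coincides with the non-self-referencing one, i.e., $\LZSSsr(T) = \LZSS(T)$ and $\LZSSsr(T') = \LZSS(T')$ in each of the three cases. Granting this, the factor counts $\zsrss(T) = \zss(T) = 4p + p(p+1)/2$ and $\zsrss(T') = \zss(T') \in \{\,4p + 3p(p+1)/2,\ 4p + p(p+1)\,\}$, together with $|T| = \Theta(p^4)$, carry over verbatim, and all the multiplicative and additive lower bounds in the theorem follow by exactly the same arithmetic as in Theorem~\ref{theo:lzss_lowerbounds_sqrtn}.

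The only thing to verify is that at no factor boundary does permitting an overlapping (self-referencing) reference yield a longer match than the one already chosen by the non-self-referencing greedy parser, so that both parsers step through identical factor boundaries. For $T$, every long factor is a block of the form $Q_1[m-k+2..m]\mathtt{a_11}Q_2[1..k]$, and its canonical earlier occurrence lies inside the prefix $Q_1\mathtt{a_11}Q_2$, at positions $[m-k+2..m+k+2]$, far to the left of and disjoint from the block currently being parsed; the block is terminated only because its one-character extension by $Q_1[m-k+1]$ does not occur anywhere earlier in $T$. This non-recurrence --- exactly the property engineered by the distinctness of the guard symbols $\mathtt{a_i}$ and $\mathtt{b_i}$ and the staircase layout of $Q_1$ and $Q_2$ --- is insensitive to whether references are allowed to overlap, so a self-reference cannot lengthen the factor. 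The short factors produced inside $Q_1$, inside $Q_2$, and the prefix block $Q_1\mathtt{a_11}Q_2$ itself are handled by the same observation; hence $\LZSSsr(T) = \LZSS(T)$.

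For the edited strings the parsing only becomes finer --- each block splits as $Q_1[m-k+2..m]\mathtt{a_1}\mid\mathtt{1}Q_2[1..k-1]\mid Q_2[k]$ for substitutions and deletions and as $Q_1[m-k+2..m]\mathtt{a_1}\mid\mathtt{1}Q_2[1..k]$ for insertions --- and the identical reasoning applies piece by piece: each piece has a canonical, non-overlapping earlier occurrence (for instance $\mathtt{1}Q_2[1..k-1]$ first occurs inside the block $\mathtt{a_11}Q_2[1]$ near position $m$), and none of them can be extended by an overlapping reference. Therefore $\LZSSsr(T') = \LZSS(T')$, the factor counts from the proof of Theorem~\ref{theo:lzss_lowerbounds_sqrtn} are unchanged, and substituting them in yields $\liminf_{n\to\infty}\MSensSub(\zsrss,n)\geq 3$ with $\ASensSub(\zsrss,n) \geq 2\zsrss - \Theta(\sqrt{\zsrss})$ and $\ASensSub(\zsrss,n) = \Omega(\sqrt n)$; $\liminf_{n\to\infty}\MSensIns(\zsrss,n)\geq 2$ with $\ASensIns(\zsrss,n) \geq \zsrss - \Theta(\sqrt{\zsrss})$ and $\ASensIns(\zsrss,n) = \Omega(\sqrt n)$; and $\liminf_{n\to\infty}\MSensDel(\zsrss,n)\geq 3$ with $\ASensDel(\zsrss,n) \geq 2\zsrss - \Theta(\sqrt{\zsrss})$ and $\ASensDel(\zsrss,n) = \Omega(\sqrt n)$.

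The main obstacle I anticipate is precisely this verification that self-referencing never helps on these strings: one has to rule out, at every factor boundary, that some longer prefix of the remaining suffix has an occurrence overlapping the factor being formed. This is intuitively forced by the abundance of distinct separator symbols $\mathtt{0},\mathtt{1},\mathtt{a_i},\mathtt{b_i}$ and by the aperiodic, staircase layout of the blocks, but making it airtight --- ideally isolated as a short lemma asserting that $\LZSS$ and $\LZSSsr$ agree on this family --- is where the real content lies; everything else is a transcription of the $\zss$ computation from Theorem~\ref{theo:lzss_lowerbounds_sqrtn}.
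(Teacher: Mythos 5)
Your proposal is correct and matches the paper's own proof, which likewise reuses the strings $T$ and $T'$ from Theorem~\ref{theo:lzss_lowerbounds_sqrtn} and observes that the self-referencing and non-self-referencing LZSS factorizations coincide on this family. In fact, you supply more justification for that coincidence than the paper does, which states it without further argument.
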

\begin{proof}
  We use the same strings $T$ and $T'$ as in the proof for Theorem~\ref{theo:lzss_lowerbounds_sqrtn}
  which shows the lower bounds of the sensitivity of the non self-referencing LZSS.
  For the string $T$ and each edit operation,
  the self-referencing LZSS factorization is the same as the non self-referencing LZSS factorization.
  Hence, we obtain Theorem~\ref{theo:lzsssr_lowerbounds_sqrtn}.
\end{proof}

\subsection{Upper bounds for the sensitivity of $\zsrss$}

\begin{theorem} \label{theo:u_zsrss}
  The following upper bounds on
  the sensitivity of self-referencing LZSS factorization $\LZSSsr$ hold:  \\
  \textbf{substitutions:} $\MSensSub(\zsrss,n) \leq 3$.
    $\ASensSub(\zsrss,n) \leq 2\zsrss$. \\
  \textbf{insertions:} $\limsup_{n \rightarrow \infty}\MSensIns(\zsrss,n) \leq 2$.
    $\ASensIns(\zsrss,n) \leq \zsrss +1$. \\
  \textbf{deletions:} $\MSensDel(\zsrss,n) \leq 3$.
    $\ASensDel(\zsrss,n) \leq 2\zsrss -1$.
\end{theorem}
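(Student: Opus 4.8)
The plan is to follow the proof scheme of Theorem~\ref{theo:u_zss} for $\zss$, augmented with the self-referencing handling already used for $\zsrorig$ in Theorem~\ref{theo:u_zsrorig}. As in those proofs, I would first treat the substitution of $T[i]=a$ by a fresh character $\#\notin\Sigma$, and obtain the remaining edit types (substitution by an existing character, insertion of $\#$ or of an existing character, deletion) from exactly the replacements of $\#$ used in Theorem~\ref{theo:u_zorig}. Write $\LZSSsr(T)=f_1\cdots f_z$ and $\LZSSsr(T')=f'_1\cdots f'_{z'}$ with factor intervals $[p_j,q_j]$ and $[p'_j,q'_j]$, and let $f_I$ be the factor of $\LZSSsr(T)$ with $p_I\le i\le q_I$. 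The target is a counting claim of the form: each interval $[p_j,q_j]$ with $j\neq I$ contains at most three starting positions of $\LZSSsr(T')$-factors for substitutions and deletions, and at most two for insertions, while the special interval $[p_I,q_I]$ contains a small constant number more. Summing over $j$ and discounting the trivial interval $[1,1]$ (which contributes a single start because $|f_1|=1$) should then yield $\zsrss(T')\le 3\zsrss(T)$, $\zsrss(T')\le 2\zsrss(T)+1$, and $\zsrss(T')\le 3\zsrss(T)-1$ for substitutions, insertions, and deletions respectively, from which all six bounds follow by dividing by $\zsrss(T)$ and by subtracting $\zsrss(T)$.

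For the intervals $[p_j,q_j]$ that do not contain $i$, I would reuse the case analysis of Theorem~\ref{theo:u_zss} essentially verbatim. If $q_j<i$ then $f_j=f'_j$ and the interval holds exactly one start. If $i<p_j$, then either some occurrence of $T[p_j..q_j]$ avoids $i$, so the substring keeps a previous occurrence in $T'$ and the interval holds one start, or every previous occurrence of $T[p_j..q_j]$ meets $i$, in which case re-encoding the edited character breaks the substring into at most three pieces (the possibly re-emerging copy of $a$ accounting for the middle piece), exactly as for $\zss$. The self-referencing property is irrelevant here, since all the references involved lie strictly on one side of $i$.

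The crux is the factor $f_I=T[p_I..q_I]$, which I would write as $w_1\,a\,w_2$ with $a=T[i]$. If $f_I$ is not self-referencing, it has a previous occurrence lying entirely to the left of $p_I$, hence entirely before $i$ and unchanged in $T'$; then the $\zss$ argument applies unchanged, and $[p_I,q_I]$ holds at most three starts ($w_1$, then $\#$, then $w_2$ swallowed by the next factor). If $f_I$ is self-referencing, I would invoke the observation used for $\zsrorig$ in Theorem~\ref{theo:u_zsrorig}: every occurrence of $f_I$ inside $f_1\cdots f_I=T[1..q_I]$ begins strictly to the left of $p_I$, so the copy of the prefix $w_1$ inside any such occurrence ends before $i$ and is undisturbed; hence $w_1$ still occurs twice within $T'[1..i-1]$ and is emitted as a factor ending at $i-1$ (or, if $w_1=\varepsilon$, the first factor at $p_I$ is just $\#$), $\#$ is emitted next, and the tail $w_2$, which lies entirely to the right of $i$, is then controlled by the ``$i<p_j$'' analysis of the previous paragraph applied to the substring $w_2$ (note that $w_2$ does have a previous occurrence in $T$, namely inside that reference occurrence of $f_I$). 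Adding these contributions bounds the number of $\LZSSsr(T')$-factor starts inside $[p_I,q_I]$ by a small constant.

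The step I expect to be the main obstacle is exactly this count for the self-referencing $[p_I,q_I]$, and verifying that its slack is absorbed cleanly by the $|f_1|=1$ discount. The subtle point is what happens to $w_2$ when the reference occurrence of $f_I$ overlaps $[p_I,q_I]$ far enough to reach past $i$: then $w_2$ itself is untouched, but its copy inside the reference occurrence is altered, so $w_2$ may have to be re-factored, and in the worst case the edited character $a$ (which might have been unique in $T$) re-surfaces as a fresh factor inside $w_2$. I would need to bound the number of pieces of $w_2$ by a small constant in each of the three edit models, and then check that the per-interval totals collapse exactly to $\zsrss(T')\le 3\zsrss(T)$, $\zsrss(T')\le 2\zsrss(T)+1$, and $\zsrss(T')\le 3\zsrss(T)-1$ (the mild asymmetry forcing the $\limsup$ only in the insertion bound). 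Once the claim is established, $\MSensSub(\zsrss,n)\le 3$, $\limsup_{n\to\infty}\MSensIns(\zsrss,n)\le 2$, and $\MSensDel(\zsrss,n)\le 3$ follow at once, as do $\ASensSub(\zsrss,n)\le 2\zsrss$, $\ASensIns(\zsrss,n)\le \zsrss+1$, and $\ASensDel(\zsrss,n)\le 2\zsrss-1$ by subtracting $\zsrss(T)$.
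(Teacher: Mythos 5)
Your proposal is correct and follows essentially the same route as the paper: reuse the interval-counting case analysis from the non-self-referencing LZSS bound, observe that self-referencing only matters for the factor $f_I = w_1 a w_2$ containing the edited position, note that only $w_2$ can have a self-referencing previous occurrence through $i$, and control $w_2$ by the Case (iii) analysis of Theorem~\ref{theo:u_zss}. The constant bookkeeping you flag as the remaining obstacle is exactly what the paper also leaves implicit, so there is no gap in approach.
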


\begin{proof}
  We use the same notations as in Theorem~\ref{theo:u_zss} of Section~\ref{sec:u_zss}.
  As with the case of self-referencing LZ77 $\zsrorig$,
  only the interval $[p_I,q_I]$ that contains the edited position $i$ is effected
  in this case of self-referencing LZSS $\zsrss$.
  For the string $w_{1} a w_{2} = T[p_I..q_I]$,
  only $w_{2}$ can have a self-referencing previous occurrence
  that contains the edited position $i$.
  For each edit operation, by applying the discussion of Case (iii)
  in Theorem~\ref{theo:u_zss}
  to the range of $w_{2}$ in $[p_I..q_I]$, we obtain Theorem~\ref{theo:u_zsrss}.
\end{proof}

 \section{LZ-End factorizations}
\label{sec:lzend}

In this section, we consider the worst-case sensitivity
of the \emph{LZ-End factorizations}~\cite{KreftN13}.
This is an LZ77-like compressor
  such that each factor $f_i$ has a previous occurrence
  which corresponds to the ending position of a previous factor.
  This property allows for fast substring extraction in practice~\cite{KreftN13}.

A factorization $T = f_{1} \cdots f_{\zend}$ for a string $T$ of length $n$ is the LZ-End factorization $\LZEnd(T)$ of $T$
such that, for each $1 \leq i < \zend$, $f_i[1..|f_i|-1]$ is the longest prefix of $f_i \cdots f_{\zend}$ which has a previous occurrence in $f_1 \cdots f_{i-1}$ as a suffix of some string in $\{\varepsilon, f_1, f_1f_2, \ldots, f_1 \cdots f_{i-1}\}$.
The last factor $f_{\zend}$ is the suffix of $T$ of
length $n-|f_1 \cdots f_{\zend-1}|$.
Again, if we use a common convention
that the string $T$ terminates with a unique character $\$$,
then the last factor $f_{\zend}$ satisfies the same properties as $f_1, \ldots, f_{z-1}$,
in the cases of LZ-End factorizations.
Let $\zend(T)$ denote the number of factors in
the LZ-End factorization of string $T$.

For example, for string $T = \mathtt{abaabababababab\$}$,
\begin{eqnarray*}
T & = & \mathtt{a|b|aa|ba|bab|ababab\$}|,
\end{eqnarray*}
where $|$ denotes the right-end of each factor in the factorization.
Here we have $\zend(T) = 6$.

\subsection{Lower bounds for the sensitivity of $\zend$} \label{sec:l_zend}

\begin{theorem} \label{theo:lzend_lowerbounds}
  The following lower bounds on
  the sensitivity of $\zend$ hold: \\
  \textbf{substitutions:} $\liminf_{n \rightarrow \infty}\MSensSub(\zend,n) \geq 2$. 
    $\ASensSub(\zend,n) \geq \zend - \Theta(\sqrt{\zend})$ and \\ $\ASensSub(\zend,n) = \Omega(\sqrt{n})$. \\
  \textbf{insertions:} $\liminf_{n \rightarrow \infty}\MSensIns(\zend,n) \geq 2$.
    $\ASensIns(\zend,n) \geq \zend - \Theta(\sqrt{\zend})$ and \\ $\ASensIns(\zend,n) = \Omega(\sqrt{n})$. \\
  \textbf{deletions:} $\liminf_{n \rightarrow \infty}\MSensDel(\zend,n) \geq 2$.
    $\ASensDel(\zend,n) \geq \zend - \Theta(\sqrt{\zend})$ and \\ $\ASensDel(\zend,n) = \Omega(\sqrt{n})$.
\end{theorem}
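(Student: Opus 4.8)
The plan is to exhibit, for each of the three edit types, a single family of highly compressible strings $T$ with $\zend(T) = \Theta(p)$ and length $n = \Theta(p^2)$ (so that $\zend(T) = \Theta(\sqrt n)$), together with a one-character edit near the left end of $T$ that forces $\zend(T') \geq 2\zend(T) - \Theta(\sqrt{\zend(T)})$. All three claimed bounds then follow simultaneously: the ratio $\zend(T')/\zend(T)$ tends to $2$; the additive gain is $\zend - \Theta(\sqrt{\zend})$ because a linear-in-$p$ part of the factorization will not double while the quadratically many remaining factors will; and since $\zend(T) = \Theta(\sqrt n)$ this gain is also $\Omega(\sqrt n)$.

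The string $T$ I would take to be of the same flavour as the one used for LZSS in Theorem~\ref{theo:lzss_lowerbounds_sqrtn}: a short ``prologue'' that uses $\Theta(p)$ pairwise distinct fresh characters — each becoming a singleton LZ-End factor, and crucially creating a dense, controlled set of factor-boundary positions — followed by a ``staircase'' of $\Theta(p^2)$ blocks in which each block is essentially a one-character extension of the previous one, arranged so that the maximal proper prefix of each block has a previous occurrence that ends \emph{exactly} at the right end of an already-completed factor. Under this alignment the greedy LZ-End parse swallows each staircase block as a single factor, so that $\zend(T) = \Theta(p^2) + \Theta(p)$; establishing this by induction on the blocks requires checking both that the claimed source is legal for LZ-End (a suffix of some $f_1\cdots f_j$) and that the parse cannot extend past the claimed factor end (appending one more character breaks the ``suffix-of-a-prefix'' property).

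Next I would perform the edit — a substitution of one designated boundary character by a fresh symbol, an insertion of a fresh symbol just before it, or its deletion — chosen so that it shifts or erases the single boundary position on which \emph{all} staircase blocks rely. After the edit, for each block the would-be source no longer ends at a factor boundary, so the greedy parse is forced to cut the block into two pieces: a head that can still reference the end of an earlier block, and a tail that realigns. Summing over the $\Theta(p^2)$ staircase blocks (the prologue's $\Theta(p)$ singleton factors being essentially unchanged) gives $\zend(T') \geq 2\zend(T) - \Theta(\sqrt{\zend(T)})$, and the stated multiplicative and additive lower bounds follow.

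The main obstacle is the exact computation of $\LZEnd(T)$ and $\LZEnd(T')$. Unlike LZ77 and LZSS, an LZ-End factor is required to end at the boundary of a previous factor, so the greedy parse is genuinely path-dependent: the legal sources at step $i$ depend on exactly where \emph{all} previous factors ended, and one has to argue carefully — by induction over the staircase — that every block is parsed in precisely the claimed way both before and after the edit, ruling out longer factors (which would save the adversary some factors) as well as shorter ones (which would weaken or invalidate the doubling claim). A secondary difficulty is engineering the construction so that a single edited position propagates through the entire staircase rather than being absorbed by a bounded suffix of it, and so that it does so uniformly for substitutions, insertions, and deletions.
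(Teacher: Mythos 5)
Your plan is essentially the paper's proof: the paper uses a prologue $Q = \sigma_1\cdot\sigma_1\sigma_2\cdots\sigma_1\cdots\sigma_p$ contributing only $p=\Theta(\sqrt{\zend})$ factors, follows it with $q=\Theta(p^2)$ staircase blocks $Q[q{-}j{+}1..q]\,\sigma_1\sigma_{p+1}\sigma_{p+j+1}$ whose sources all end at the single factor boundary $q+2$, and then edits position $q+1$ so that every staircase block splits into exactly two factors, giving $\zend(T')=2\zend(T)-\Theta(\sqrt{\zend})=\zend(T)+\Omega(\sqrt{n})$ for all three edit types. The only cosmetic deviation is that the paper's prologue factors have increasing lengths rather than being singletons (which is precisely what makes a length-$\Theta(p^2)$ reference string cost only $\Theta(\sqrt{\zend})$ factors), and the induction over the staircase that you flag as the main obstacle is carried out in the paper exactly as you describe.
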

\begin{proof}

  Let $\sigma_j$ denote the $i$th character in the alphabet $\Sigma$
  for $1 \leq j \leq |\Sigma|$.
  For a positive integer $p$,
  consider the string $Q = \sigma_1 \cdot \sigma_1\sigma_2 \cdots \sigma_1 \cdots \sigma_{\mathit{p}}$ of length $q = |Q| = p(p+1)/2 = \Theta(p^2)$.
Consider the string
\begin{eqnarray*}
  T & = & Q \cdot \sigma_1 \sigma_{\mathit{p}+1} \cdot Q[q]\sigma_1 \sigma_{\mathit{p}+1}\sigma_{\mathit{p}+2} \cdot Q[q-1..q]\sigma_1\sigma_{\mathit{p}+1}\sigma_{\mathit{p}+3} \cdots Q\sigma_1 \sigma_{\mathit{p}+1}\sigma_{\mathit{p+q}+1}
\end{eqnarray*}
with $|T| = \Theta(q^2)$.
As for the interval $[1,q]$ in $T$, $\LZEnd(Q)= f_1,\ldots,f_p$ such that $f_k = \sigma_1 \cdots \sigma_{\mathit{k}}$ for every $1\leq k \leq p$.
Since $f_p$ has no occurrences in $f_1 \cdots f_{p-1}$, the decomposition is not changed by appending any character to $Q$.
Hence, the next factor $f_{p+1}$ starts at position $q+1$.
Then $f_{p+1} = \sigma_1 \sigma_{\mathit{p}+1}$ holds, since $\sigma_{\mathit{p}+1}$ is a fresh character and $\sigma_1 = f_1$.
As for the remaining interval, we show $f_{p+1+j} = Q[q-j+1 .. q]\sigma_1\sigma_{\mathit{p}+1}\sigma_{\mathit{p+j}+1}$ holds for each $1\leq j \leq q$.
At first, for $j=1$, $f_{p+2}=Q[q]\sigma_1\sigma_{\mathit{p}+1}\sigma_{\mathit{p}+2}$ holds since $f_{p+2}$ starts with $Q[q]$, $Q[q]\sigma_1\sigma_{\mathit{p}+1}$ has an occurrence as a suffix of $T[1..q+2]$, and $\sigma_{\mathit{p}+2}$ is a fresh character in the prefix.
Next, we assume that $f_{p+1+j} = Q[q-j+1 .. q]\sigma_1\sigma_{\mathit{p}+1}\sigma_{\mathit{p+j}+1}$ holds with $1\leq j \leq k-1$ for some integer $k$.
Then we consider whether $f_{p+1+k} = Q[q-k+1 .. q]\sigma_1\sigma_{\mathit{p}+1}\sigma_{\mathit{p+k}+1}$ holds or not.
By the assumption, $f_{p+1+k}$ starts with $Q[q-k+1]$.
Also, $Q[q-k+1..q]\sigma_1\sigma_{\mathit{p}+1}$ has an occurrence as a suffix of $T[1..q+2]$, and $\sigma_{\mathit{p+k}+1}$ is a fresh character in the prefix.
Therefore, the assumption is also valid for $k$.
By the above argument, $f_{p+1+j} = Q[q-j+1 .. q]\sigma_1\sigma_{\mathit{p}+1}\sigma_{\mathit{p+j}+1}$ holds for each $1\leq j \leq q$ by induction.
Therefore,
\begin{eqnarray*}
  \LZEnd(T) = \sigma_1|\sigma_1\sigma_2| \cdots |\sigma_1 \cdots \sigma_{\mathit{p}}| \sigma_1 \sigma_{\mathit{p}+1} | Q[q] \sigma_1 \sigma_{\mathit{p}+1}\sigma_{\mathit{p}+2} |\cdots| Q\sigma_1 \sigma_{\mathit{p}+1}\sigma_{\mathit{p+q}+1}|
\end{eqnarray*}
with $\zend(T) = p+1+q = \Theta(q) = \Theta(\sqrt{n})$.

As for substitutions, consider the string
\begin{eqnarray*}
  T' & = & Q \cdot \# \sigma_{\mathit{p}+1} \cdot Q[q]\sigma_1 \sigma_{\mathit{p}+1}\sigma_{\mathit{p}+2} \cdot Q[q-1..q]\sigma_1\sigma_{\mathit{p}+1}\sigma_{\mathit{p}+3} \cdots Q\sigma_1 \sigma_{\mathit{p}+1}\sigma_{\mathit{p+q}+1}
\end{eqnarray*}
which can be obtained from $T$ by substituting $T[q+1]=\sigma_1$ with a character $\#$ which does not occur in $T$.
Let us analyze the structure of the $\LZEnd(T')$.
As mentioned above, $f_1,\ldots,f_p$ are not changed after the substitution.
The $(p+1)$th factor in $\LZEnd(T)$, namely, $\# \sigma_{\mathit{p}+1}$ is factorized as $\# | \sigma_{\mathit{p}+1}|$ in $\LZEnd(T')$
since both characters have no occurrence in $T'[1..q] = Q$.
Then the next factor starts with $Q[q]$.
Each of $Q[q]$ and $\mathtt{\sigma}_1$ have some occurrence as a suffix of a previous factor.
On the other hand, each of $Q[q]\mathtt{\sigma}_1$ and $\sigma_{\mathit{p}+1}\sigma_{\mathit{p}+2}$ have no occurrences previously.
Therefore, $Q[q]\sigma_1\sigma_{\mathit{p}+1}\sigma_{\mathit{p}+2}$ is factorized as $Q[q]\sigma_1 | \sigma_{\mathit{p}+1}\sigma_{\mathit{p}+2}|$ in $\LZEnd(T')$.
Similarly, by induction, each $(p+1+k)$th factor in $\LZEnd(T)$ for every $2 \leq k \leq q$, namely, $Q[q-k+1..q]\sigma_1\sigma_{\mathit{p}+1}\sigma_{\mathit{p+k}+1}$ is also factorized as $Q[q-k+1..q]\sigma_1 | \sigma_{\mathit{p}+1}\sigma_{\mathit{p+k}+1}|$.
Thus, the LZ-End factorization of $T'$ is
\[
 \LZEnd(T') = \sigma_1|\sigma_1 \sigma_2|\cdots |\sigma_1 \cdots \sigma_{\mathit{p}}| \# |\sigma_{\mathit{p}+1} | Q[q] \sigma_1 | \sigma_{\mathit{p}+1}\sigma_{\mathit{p}+2} |\cdots| Q\sigma_1 |\sigma_{\mathit{p}+1}\sigma_{\mathit{p+q}+1}|
\]
with $\zend(T') = p+2+2q$.
Recall $p = \Theta(\sqrt{q})$.
Hence we get $\liminf_{n \rightarrow \infty}\MSensSub(\zend,n) \geq \liminf_{q \to \infty} (p+2q+2)/(p+q+1) \geq 2$, $\ASensSub(\zend,n) \geq (p+2q+2)-(p+q+1) = \zend - \Theta(\sqrt{\zend})$, and $\ASensSub(\zend,n) = \Omega(\sqrt{n})$.

Also, as for deletions (resp. insertions), we get Theorem~\ref{theo:lzend_lowerbounds} by considering the case where the character $T[q+1]$ is deleted (resp. $\#$ is inserted between positions $q$ and $q+1$).
\end{proof}

\subsection{Upper bounds for the sensitivity of $\zend$} \label{sec:u_zend}

To show a non-trivial upper bound for the sensitivity of $\zend$,
we use the following known results:

\begin{theorem}[\cite{KreftN13}] \label{theo:zend_z}
  For any string $T$, $\zsrss(T) \leq \zend(T)$.
\end{theorem}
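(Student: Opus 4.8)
The plan is to prove $\zsrss(T)\le\zend(T)$ by a left-to-right exchange argument that compares the two factorizations phrase by phrase. The guiding intuition is that $\LZEnd(T)$ is merely a \emph{restricted} self-referencing LZ77-type parsing: each $\LZEnd$ phrase $e_i$ has the form $e_i=\alpha_i c_i$, a copy $\alpha_i$ followed by a single trailing character $c_i$, where $\alpha_i$ is additionally forced to end exactly at an earlier phrase boundary, whereas $\LZSSsr(T)$ is produced by the unconstrained greedy rule of taking, at each position, the longest prefix of the remaining suffix that has \emph{any} earlier occurrence (self-references allowed). Since the greedy rule obeys strictly fewer constraints, it should never lag behind, and hence it can only use fewer phrases.

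To make this precise I would write $b_0=0<b_1<\cdots<b_{\zend}=n$ for the phrase boundaries of $\LZEnd(T)$ and $c_0=0<c_1<\cdots<c_z=n$ for those of $\LZSSsr(T)$, where $z=\zsrss(T)$, and prove the domination $c_j\ge b_j$ for every $j\le\min(z,\zend)$ by induction on $j$. The base case $c_0=b_0=0$ is immediate. For the inductive step, assume $c_{j-1}\ge b_{j-1}$; if $c_{j-1}\ge b_j$ then already $c_j>c_{j-1}\ge b_j$, so suppose the start $c_{j-1}+1$ of the $j$-th $\LZSSsr$ phrase lies strictly inside the interval of $e_j$. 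Then $T[c_{j-1}+1..b_j-1]$ is a suffix of the copy $\alpha_j$, and since the $\LZEnd$ rule guarantees $\alpha_j$ has an occurrence lying entirely inside $T[1..b_{j-1}]$, that suffix also has an earlier (non-overlapping) occurrence relative to position $c_{j-1}+1$; hence the greedy $\LZSSsr$ phrase starting there copies at least as far as position $b_j-1$, which — modulo the trailing-character bookkeeping discussed below — yields $c_j\ge b_j$. Taking $j=\min(z,\zend)$ and using $c_z=b_{\zend}=n$ then forces $z\le\zend$.

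The steps I would carry out, in order, are: (i) fix the precise notion of ``earlier occurrence with self-reference'' for a position and record the elementary closure fact that every suffix of a substring possessing an earlier occurrence again possesses one; (ii) the base case and the sub-cases of the inductive step as above; (iii) the handling of the boundary phrases, namely that $f_1$ is a single character and that the last $\LZEnd$ phrase does not have the $\alpha_i c_i$ shape unless the customary unique end-marker $\$$ is appended, in which case all phrases share the uniform structure.

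I expect step (iii) — the bookkeeping for the trailing characters $c_i$ — to be the main obstacle. The greedy $\LZSSsr$ phrase is \emph{not} guaranteed to copy the trailing character of $e_i$, because the source occurrence of $\alpha_i$ need not be followed by the matching symbol, so the naive induction only delivers $c_j\ge b_j-1$ and hence the weaker bound $\zsrss(T)\le\zend(T)+1$. Removing this off-by-one is where one must either invoke the $\$$-terminated convention, so that every phrase (including the last) has the uniform $\alpha_i c_i$ form and the final single-character phrases of the two parsings align, or strengthen the invariant so that the surplus lengths accumulated by the greedy parse amortize the trailing characters across phrases. Everything else is routine.
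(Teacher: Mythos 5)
The paper does not actually prove this statement --- it is imported verbatim from Kreft and Navarro --- so your argument has to stand entirely on its own. Your template, the left-to-right domination induction $c_j \ge b_j$ comparing the unconstrained greedy parse against the more constrained one, is the standard and correct way to attack inequalities of this kind, and you have correctly isolated the crux in your step (iii). The problem is that this crux is not bookkeeping: the off-by-one coming from the trailing character of each $\LZEnd$ phrase is genuinely irremovable for the LZSS-style (copy-only) greedy parse, and neither of your two proposed repairs works. Concretely, take $T = \mathtt{aab}$. With the definitions used in this paper, $\LZEnd(T) = \mathtt{a|ab}$, so $\zend(T)=2$, while $\LZSSsr(T) = \mathtt{a|a|b}$, so $\zsrss(T)=3$: the invariant $c_j \ge b_j$ already fails at $j=2$, because the greedy copy starting at position $2$ cannot absorb the fresh character $\mathtt{b}$ that the $\LZEnd$ phrase carries explicitly, and there is no earlier surplus to amortize against. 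Appending the terminator does not help ($\mathtt{aab\$}$ gives $4$ versus $3$). So your induction proves exactly $\zsrss(T) \le \zend(T)+1$, and that weaker bound is tight for these definitions.

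What your argument does prove cleanly is $\zsrorig(T) \le \zend(T)$: if the left-hand parse is the LZ77-style one in which every phrase is ``longest previous factor plus one explicit character,'' then in your inductive step the copy part of the $j$-th greedy phrase has length at least $b_j - 1 - c_{j-1}$ and the appended character pushes the boundary to $c_j \ge b_j$, closing the induction with no off-by-one. That copy-plus-character form is the lemma Kreft and Navarro actually establish. So the gap you flagged is real, it is the entire content of the statement rather than a routine afterthought, and closing it requires reconciling the definition of the left-hand side with the copy-plus-character parse (or a genuinely different accounting), not a sharper version of your current invariant.
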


\begin{theorem}[Theorem 3.2 of \cite{KempaS22}] \label{theo:zend_delta}
  For any string $T$ of length $n$, $\zend(T) = O(\delta(T) \log^2 (n/\delta(T)))$.
\end{theorem}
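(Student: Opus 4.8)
The final statement is Theorem~3.2 of~\cite{KempaS22}, recorded here only for later use; note that all inequalities assembled above (such as $\zsrss(T)\le\zend(T)$ of Theorem~\ref{theo:zend_z}, and Theorem~\ref{theo:z_delta}) either bound \emph{smaller} measures by $\zend$ or bound $\zsrss$, so none can produce an upper bound on $\zend$, and a genuinely new argument is required. I would attack it scale by scale. Write the LZ-End parsing as $T=f_1\cdots f_{\zend}$. At most $\delta(T)$ phrases have length $\ge n/\delta(T)$, since their lengths sum to $n$, contributing only an additive $O(\delta)$ term; the work is then to count, for each scale $2^j$ with $2^j<n/\delta$, the phrases of length in $[2^j,2^{j+1})$ and to sum over the $O(\log(n/\delta))$ scales.

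The template would be the Kociumaka-Navarro-Prezza bound $\zsrss=O(\delta\log(n/\delta))$ (Theorem~\ref{theo:z_delta}), whose proof likewise proceeds through a geometric hierarchy of length scales and exploits the bound $O(2^j\delta)$ on the number of distinct substrings of length $\Theta(2^j)$. I would try to show that replacing LZ77 by LZ-End inflates the count by at most a further $O(\log(n/\delta))$ factor, which would give $\zend=O(\zsrss\log(n/\zsrss))=O(\delta\log^2(n/\delta))$. The mechanism of the inflation is the extra LZ-End rule that the source $T[s..e]$ of a phrase must have $e$ at a phrase boundary: the greedy longest-match source of a candidate phrase need not end at a boundary, so LZ-End must retreat $e$ to the nearest admissible boundary, which can shorten the phrase and split what would have been a single LZ77 phrase into several LZ-End phrases. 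To control the splitting I would fix, at each scale $\tau=2^j$, a $\tau$-synchronizing set $\mathsf{S}_\tau$ in the sense of Kempa-Kociumaka: two occurrences of a length-$\ge 3\tau$ substring carry consistent synchronizing positions, and these positions are $\Omega(\tau)$ apart, so both the phrase boundaries and the admissible source endpoints at that scale can be pinned to $\mathsf{S}_\tau$; the splitting at scale $\tau$ is then charged essentially once per synchronizing position, and telescopes over the $O(\log(n/\delta))$ scales.

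The hard part is exactly this charging forced by the ``source ends at a phrase boundary'' rule, which is precisely where LZ-End pays its extra logarithmic factor over LZ77: one must show, uniformly over all scales, that a single length-$\Theta(\tau)$ context is reused by only $O(1)$ amortized LZ-End phrases per scale even though each such phrase may have backed up to a different boundary, and that the back-ups at distinct scales do not interfere. The subsidiary issues --- the small scales where the synchronizing machinery is vacuous, so the phrases must be counted directly, and the bookkeeping needed to see that the various additive $O(\delta)$ slacks (long phrases, the first phrase, scale boundaries) do not accumulate --- are routine once the main per-scale estimate is in place, after which the summation and the passage from $\log n$ to $\log(n/\delta)$ follow.
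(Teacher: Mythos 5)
This statement is not proved in the paper at all: it is imported verbatim as Theorem~3.2 of Kempa and Saha~\cite{KempaS22} and used as a black box (together with Theorems~\ref{theo:z_gamma}, \ref{theo:delta_gamma}, and \ref{theo:zend_z}) to derive Corollary~\ref{coro:lzend_upperbounds}. Your write-up correctly recognizes this, and your sketch does identify the right architecture of the external argument --- a geometric hierarchy of length scales, the $O(2^j\delta)$ bound on distinct substrings of length $\Theta(2^j)$, synchronizing sets to make source endpoints consistent across occurrences, and the attribution of the extra logarithmic factor to the LZ-End constraint that a phrase's source must end at a phrase boundary.

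As a standalone proof, however, the proposal has a genuine gap exactly where you flag it: the amortized per-scale charging lemma (that each length-$\Theta(\tau)$ context is reused by only $O(1)$ LZ-End phrases per scale despite the forced retreats to admissible boundaries) is named as ``the hard part'' but never carried out, and it is the entire content of the theorem. A secondary caution: your route passes through the intermediate claim $\zend = O(\zsrss\log(n/\zsrss))$, i.e.\ a single extra logarithm over LZ77; this is itself a nontrivial assertion that does not follow from anything assembled in this paper, and the known argument bounds the per-scale phrase count directly against $\delta$ rather than against the LZ77 parsing, so even the shape of your reduction would need justification. Since the paper only cites the result, none of this affects the paper's correctness, but your text should be presented as a proof outline of the cited work rather than as a proof.
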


From Theorems~\ref{theo:z_gamma},~\ref{theo:delta_gamma},~\ref{theo:zend_delta}, and~\ref{theo:zend_z}, we obtain the following result:
\begin{corollary} \label{coro:lzend_upperbounds}
  The following upper bounds on
  the sensitivity of $\zend$ hold: \\
  \textbf{substitutions:} $\MSensSub(\zend,n) = O(\log^2 (n/\delta))$. 
    $\ASensSub(\zend,n) = O(\zend \log^2 (n/\delta))$. \\
  \textbf{insertions:} $\MSensIns(\zend,n) = O(\log^2 (n/\delta))$.
    $\ASensIns(\zend,n) = O(\zend \log^2 (n/\delta))$. \\
  \textbf{deletions:} $\MSensDel(\zend,n) = O(\log^2 (n/\delta))$.
    $\ASensDel(\zend,n) = O(\zend \log^2 (n/\delta))$.
\end{corollary}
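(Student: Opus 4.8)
The plan is to obtain all six bounds at once by invoking the ``squeeze'' lemma (Lemma~\ref{lem:squeeze}) with $\alpha = \delta$, $\beta = \zend$, and the poly-logarithmic function $f(n,x) = \log^2(n/x)$. Fix an arbitrary string $T$ of length $n$ and an arbitrary string $T'$ with $\ed(T,T') = 1$; it then suffices to check the three hypotheses of Lemma~\ref{lem:squeeze}.

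The first hypothesis, $\delta(T')/\delta(T) = O(1)$, is immediate from Theorem~\ref{theo:upper_bound_delta}, which gives $\delta(T') \le 2\delta(T)$ for substitutions and insertions and $\delta(T')/\delta(T) \le 1.5$ (in the limit) for deletions. The second hypothesis, $\delta(T) \le \zend(T)$, follows by composing the three known inequalities $\delta(T) \le \gamma(T)$ (Theorem~\ref{theo:delta_gamma}), $\gamma(T) \le \zsrss(T)$ (Theorem~\ref{theo:z_gamma}), and $\zsrss(T) \le \zend(T)$ (Theorem~\ref{theo:zend_z}). The third hypothesis, $\zend(T) = O(\delta(T)\, f(n,\delta(T)))$, is exactly Theorem~\ref{theo:zend_delta}; and the scaling requirement on $f$ holds because $\log^2\!\big(n/(c\,\delta)\big) = \big(\log(n/\delta) - \log c\big)^2 = O\!\big(\log^2(n/\delta)\big)$ for any constant $c$, i.e. $f$ is poly-logarithmic in its second argument, which is precisely the case highlighted after Lemma~\ref{lem:squeeze}.

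With the three hypotheses verified, Lemma~\ref{lem:squeeze} delivers $\MSensSub(\zend,n) = \MSensIns(\zend,n) = \MSensDel(\zend,n) = O(f(n,\delta)) = O(\log^2(n/\delta))$ together with the matching additive bounds $O(\delta \log^2(n/\delta))$; since $\delta \le \zend$ by the chain above, these additive bounds may equivalently be stated as $O(\zend \log^2(n/\delta))$, as claimed. (For insertions and deletions the length of $T'$ is $n \pm 1$, but $\log^2((n \pm 1)/\delta) = \Theta(\log^2(n/\delta))$, so this has no effect on the asymptotics.) I do not expect any genuine obstacle here: the only step needing a moment's attention is the scaling property of $f$, and that is already subsumed by the remark following Lemma~\ref{lem:squeeze}.
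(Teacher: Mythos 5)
Your proposal is correct and follows essentially the same route as the paper's proof: both establish $\delta \le \zend$ via the chain $\delta \le \gamma \le \zsrss \le \zend$, bound $\delta(T')$ by a constant multiple of $\delta(T)$ using Theorem~\ref{theo:upper_bound_delta}, verify that $\log^2(n/\delta)$ scales properly under constant factors, and then apply Lemma~\ref{lem:squeeze} with Theorem~\ref{theo:zend_delta}. No gaps.
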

\begin{proof}

  For any string $T$, $\delta(T) \leq \zend(T)$ holds from Theorems~\ref{theo:delta_gamma},~\ref{theo:z_gamma}, and~\ref{theo:zend_z}.
  Let $T'$ be any string with $\ed(T,T') = 1$.
  It follows from Theorem~\ref{theo:upper_bound_delta} that
  $\delta(T') \leq 2 \delta(T)$.
  Now let $c$ be the constant value such that $\delta(T') = c \delta(T)$ holds.
  Then, $\log^2 (n/\delta(T')) = \log^2 n + \log^2 c \delta(T) - 2 \log n \log c \delta(T) = \log^2 n + \log^2 \delta(T) - 2 \log n \log \delta(T) + \log^2 c + 2 \log \delta(T) \log c -2 \log n \log c = O(\log^2 (n/\delta(T)))$.
  Following Lemma~\ref{lem:squeeze}, 
  we now obtain $\zend(T') = O(\delta(T') \log^2 (n/\delta(T'))) = O(\delta(T) \log^2 (n/\delta(T))) = O(\zend(T) \log^2 (n/\delta(T)))$, 
  which leads to the claimed upper bounds for the sensitivity for $\zend$.
\end{proof}

 \section{Lempel-Ziv 78 factorizations}
\label{sec:lz78}

In this section, we consider the worst-case sensitivity
of the \emph{Lempel-Ziv 78 factorizations} (\emph{LZ78})~\cite{LZ78}.

For convenience, let $f_0 = \varepsilon$.
A factorization $T = f_{1} \cdots f_{\zseveneight}$ for a string $T$ of length $n$ is the LZ78 factorization $\LZseveneight(T)$ of $T$
if for each $1 \leq i < \zseveneight$ the factor $f_i$ is
the longest prefix of $f_i \cdots f_{\zseveneight}$
such that $f_i[1..|f_i|-1] = f_j$ for some $0 \leq j < i$.
The last factor $f_{\zseveneight}$ is the suffix of $T$ of
length $n-|f_1 \cdots f_{\zseveneight-1}|$ and it may be equal to
some previous factor $f_j$~($1 \leq j < \zseveneight$).
Again, if we use a common convention
that the string $T$ terminates with a unique character $\$$,
then the last factor $f_{\zseveneight}$ can be defined
analogously to the previous factors.
Let $\zseveneight(T)$ denote the number of factors in
the LZ78 factorization of string $T$.

For example, for string $T = \mathtt{abaabababababab\$}$,
\[
\LZseveneight(T) = \mathtt{a|b|aa|ba|bab|ab|aba|b\$}|,
\]
where $|$ denotes the right-end of each factor in the factorization.
Here we have $\zseveneight(T) = 8$.

As for the sensitivity of LZ78, Lagarde and Perifel~\cite{LagardeP18} showed that $\MSensIns(\zseveneight,n) = \Omega(n^{1/4})$, $\ASensIns(\zseveneight,n) = \Omega(\zseveneight^{3/2})$, and $\ASensIns(\zseveneight,n) = \Omega(n/\log n)$ for insertions.
\footnote{In the restricted case of appending a character to the top of a string or deleting the first character of a string, they showed upper bounds that the ratio is $O(n^{1/4})$ and the increase is $O(\zseveneight^{3/2})$.}
In this section, we present lower bounds for the multiplicative/additive sensitivity of LZ78 for the remaining cases, i.e., for substitutions and deletions, by using a completely different string from~\cite{LagardeP18}.

\subsection{Lower bounds for the sensitivity of $\zseveneight$}

\begin{theorem} \label{theo:LZ78_lowerbound}
  The following lower bounds on
  the sensitivity of $\zseveneight$ hold: \\
  \textbf{substitutions:} $\MSensSub(\zseveneight,n) = \Omega(n^{1/4})$.
    $\ASensSub(\zseveneight,n) = \Omega(\zseveneight^{3/2})$ and $\ASensSub(\zseveneight,n) = \Omega(n^{3/4})$. \\
  \textbf{deletions:} $\MSensDel(\zseveneight,n) = \Omega(n^{1/4})$.
    $\ASensDel(\zseveneight,n) = \Omega(\zseveneight^{3/2})$ and $\ASensDel(\zseveneight,n) = \Omega(n^{3/4})$.
\end{theorem}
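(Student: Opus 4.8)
The plan is to construct a single family of strings $T$ (depending on a parameter $p$) together with one edited string $T'$ per edit type such that $\zseveneight(T) = \Theta(p^2)$ and $\zseveneight(T') = \Theta(p^3)$, which forces a multiplicative blow-up of $\Theta(p) = \Theta(n^{1/4})$ and an additive blow-up of $\Theta(p^3) = \Theta(\zseveneight^{3/2})$. The overall scheme mirrors the ``catastrophe'' construction of Lagarde and Perifel~\cite{LagardeP18} but, since we need a \emph{substitution} or \emph{internal deletion} rather than a prepend, I would build $T$ so that a single edit at an \emph{internal} position (not at the front) destroys a long carefully-aligned run of shared prefixes in the LZ78 trie. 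Concretely, I expect $T$ to be a concatenation of $\Theta(p)$ blocks, where the $k$-th block is (a rotation/shift of) a word over a fragment alphabet like $\sigma_1\cdots\sigma_p$ padded with fresh separator characters, so that in $\LZseveneight(T)$ each block contributes only $O(p)$ factors — giving $\zseveneight(T) = \Theta(p^2)$ — because every block reuses the trie nodes built by earlier blocks.

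The key steps, in order, are: (1) define $T$ precisely as a block concatenation and verify by induction on the block index that $\LZseveneight(T)$ has the claimed $\Theta(p^2)$ factors, exactly as the analogous inductive arguments earlier in the paper (e.g.\ the $Q_k$-based arguments in Theorem~\ref{theo:lz77_lowerbounds} and the $R_k$-based one in Theorem~\ref{theo:LZsr_lowerbounds}) establish the factor structure; (2) define $T'$ by substituting one designated internal character of $T$ by a fresh symbol $\#$ (and, for the deletion case, by deleting that same character); (3) re-analyze $\LZseveneight(T')$: the edit is positioned so that the shared path in the LZ78 trie that ``anchored'' the $\Theta(p)$ later blocks is cut, so each of those blocks must now rebuild its prefixes essentially from scratch, forcing each to split into $\Theta(p^2)$ short factors and hence $\zseveneight(T') = \Theta(p^3)$; (4) compute $n = |T| = \Theta(p^4)$ to convert the ratios, obtaining $\MSensSub,\MSensDel = \Omega(n^{1/4})$, $\ASensSub,\ASensDel = \Omega(\zseveneight^{3/2}) = \Omega(n^{3/4})$; (5) note that the deletion of that internal character yields exactly the same $T'$-side analysis, so the same bounds hold for $\MSensDel$ and $\ASensDel$.

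The main obstacle will be step (3): showing that after one internal edit the factorization of \emph{every} subsequent block genuinely explodes, rather than just one or two of them. This requires a delicate choice of the block contents and separators so that (a) in $T$, each block's prefixes are all already present as trie nodes (so it costs only $O(p)$ factors), but (b) in $T'$, those trie nodes are no longer reachable along the path that the block traverses, so the greedy LZ78 parser is forced into $\Theta(p)$ factors of length $1$ (or nearly so) inside \emph{each} block, summed over $\Theta(p^2)$ such sub-pieces across $\Theta(p)$ blocks. I would control this by making the "damaged" prefix a proper prefix of what every later block starts with, and using fresh separator characters $\#_k$ per block so that the blocks cannot borrow structure from one another after the edit. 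The bookkeeping is the analogue of the $\Theta(p^2)$-to-$\Theta(p^3)$ counting in~\cite{LagardeP18}; since the present paper only claims matching \emph{lower} bounds here (the upper bounds come from Lemma~\ref{lem:squeeze} via $g^*$), I do not need a tight constant, only the stated asymptotics.
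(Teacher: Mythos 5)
Your overall architecture is essentially the paper's: a string $T$ at the LZ78 compression limit ($\zseveneight(T)=\Theta(\sqrt{n})$) built from $\Theta(\sqrt{n})$ blocks, each costing $O(1)$ factors because it extends an already-built prefix chain in the trie, followed by a single \emph{internal} substitution (or deletion) with a fresh symbol that desynchronizes the parsing of every later block. Your target parameters also match the paper's (the paper's $k$ is your $p^2$): $\zseveneight(T)=\Theta(\sqrt n)$, $\zseveneight(T')=\Theta(n^{3/4})$, ratio $\Theta(n^{1/4})$, gap $\Theta(\zseveneight^{3/2})$.

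However, there is a genuine gap in your step (3), and it is not deferrable bookkeeping --- it is the entire technical content of the proof. You claim the edit forces each later block into ``$\Theta(p)$ factors of length $1$ (or nearly so)'' and that the fresh separators prevent blocks from borrowing structure from one another. Neither can hold for LZ78: the trie is global, so as soon as one damaged block inserts, say, $\sigma_2\sigma_3$ as a factor, every subsequent block reuses and extends it; and over an alphabet of size $\Theta(p)$ there are only $\Theta(p)$ distinct factors of length $1$ in the whole parse, so $\Theta(p^3)$ near-unit factors are impossible. What actually happens (and what the paper proves) is that the factor length inside the $j$-th damaged block grows like $\ell_j\approx\sqrt{j}$, so block $j$ splits into only about $y_j/\ell_j$ pieces, and the total is $\sum_j \Theta(k/\sqrt{j})=\Theta(k^{3/2})$ --- this trie regrowth is exactly why the blow-up is $\zseveneight^{3/2}$ rather than $\zseveneight^{2}$. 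Moreover, to make even this count go through, the paper has to choose the block lengths $y_j$ by a modular alignment condition ($2+j+\ell_j-1\equiv y_j\pmod{\ell_j}$) so that the shifted greedy parse stays exactly in phase with the growing trie depth, and then verify the resulting partition by induction on $j$. Without specifying the $y_j$ (or an equivalent alignment device) and carrying out that induction, the claim $\zseveneight(T')=\Theta(p^3)$ does not follow from your construction; with it, your plan becomes the paper's proof.
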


\begin{proof}
  Consider the string 
  \begin{equation*}
    T = \mathtt{(\sigma_{\mathit{k}+1}) \cdots (\sigma_{2\mathit{k}}) \cdot (\sigma_{1})\cdot (\sigma_{1}\sigma_{2})\cdots(\sigma_{1}\cdots \sigma_{\mathit{k}})\cdot (\sigma_1\cdots \sigma_{\mathit{y}_1} \cdot \sigma_{\mathit{k}+1}) \cdots (\sigma_1\cdots \sigma_{\mathit{y}_\mathit{k}} \cdot \sigma_{2\mathit{k}})},
  \end{equation*}
  where $\mathtt{\sigma_\mathit{i}}$ for every $1 \leq i \leq 2k$ is a distinct character and $y_j$ for every $1 \leq j \leq k$ satisfies the following property:
  $y_j$ is the maximum integer at most $k$ such that
  $2+j+\ell_j-1 \equiv y_j \pmod{\ell_{j}}$ where $\ell_j$ is an integer satisfying $(1/2)\ell_j(\ell_j-1)+1\leq j \leq (1/2)\ell_j(\ell_j+1)$.
  We remark that the parentheses $($ and $)$ in $T$ are shown only for
  the better visualization and exposition, and therefore they are not the characters in $T$.
  
  Let $n$ be the length of $T$.
  Since $k+ (1/2)k(k+1) < n < k+ (1/2)k(k+1) + k(k+1)$, $n \in \Theta(k^2)$ holds.
  In the LZ78 factorization of $T$,
  for each substring $(w)$,
  its suffix $w[1..|w|-1]$ has a previous occurrence as $(w[1..|w|-1])$,
  and $(w)$ is the leftmost occurrence of $w$ in the string $T$.
  Therefore, the LZ78 factorization of $T$ is
  \begin{equation*}
    \LZseveneight(T) = \mathtt{\sigma_{\mathit{k}+1}| \cdots |\sigma_{2\mathit{k}}|\sigma_{1}|\sigma_{1}\sigma_{2}|\cdots|\sigma_{1}\cdots \sigma_{\mathit{k}}|\sigma_1\cdots \sigma_{\mathit{y}_1} \cdot \sigma_{\mathit{k}+1}|\cdots |\sigma_1\cdots \sigma_{\mathit{y}_\mathit{k}} \cdot \sigma_{2\mathit{k}}|}
  \end{equation*}
  with $\zseveneight(T)=3k$.
  
  For our analysis of the sensitivity of $\zseveneight$ for substitutions,
  consider the string
  \begin{equation*}
    T' = \mathtt{(\sigma_{\mathit{k}+1}) \cdots (\sigma_{2\mathit{k}}) \cdot (\sigma_{1})\cdot (\sigma_{1}\sigma_{2})\cdots(\sigma_{1}\cdots \sigma_{\mathit{k}})\cdot (\#\sigma_2\cdots \sigma_{\mathit{y}_1} \cdot \sigma_{\mathit{k}+1}) \cdots (\sigma_1\cdots \sigma_{\mathit{y}_\mathit{k}} \cdot \sigma_{2\mathit{k}})},
  \end{equation*}
   which can be obtained from $T$ by substituting the first character $\sigma_1$ of the string in the $2k+1$th paring parentheses with a fresh character $\mathtt{\#}$, which does not occur in $T$.
Let us analyze the structure of the LZ78 factorization of $T'$.
   Clearly, the first $2k$ factors are unchanged after the substitution.
   Next, we consider $(\#\sigma_2\cdots \sigma_{\mathit{y}_1} \cdot \sigma_{\mathit{k}+1})$.
   First, the prefix $\mathtt{\#\sigma_2\cdots \sigma_{\mathit{y}_1}}$ is decomposed into $y_1$ factors of length $1$.
   The next factor is $\mathtt{\sigma_{\mathit{k}+1}\sigma_1}$ since $\mathtt{\sigma_{\mathit{k}+1}}$ has an occurrence as a previous factor and $\mathtt{\sigma_{\mathit{k}+1}\sigma_1}$ has no occurrences as a previous factor.
   Now we show in each interval of the $2k+j$th paring parentheses for $2\leq j\leq k$ (i.e., the interval of $\mathtt{\sigma_1 \cdots \sigma_{\mathit{y_j}} \sigma_{\mathit{k+j}}}$) there appear the right-ends $|$ of factors in $\LZseveneight(T')$ as follows:
   \begin{equation}
     \mathtt{\sigma_1|\sigma_2\cdots \sigma_{\mathit{j}+1}| \sigma_{\mathit{j}+2} \cdots \sigma_{\mathit{j+\ell_j}+1}|\cdots|\sigma_{\mathit{y_j-\ell_j}+1} \cdots\sigma_{\mathit{y_j}}|\sigma_{\mathit{k+j}}}. \label{eqn:lz78}
   \end{equation}
   Namely, the interval is decomposed into $3+((y_j-j-\ell_j-1)/\ell_j)+1=3+(y_j-j-1)/\ell_j$ pieces $d_1, \ldots, d_{3+(y_j-j-1)/\ell_j}$, where $|d_1|=|d_{3+(y_j-j-1)/\ell_j}| = 1$, $|d_2|=j$, and each of the others is of length $\ell_j$.
   At first, we show the partition (\ref{eqn:lz78}) is valid for $j=2$.
   As mentioned above, there is a factor $\mathtt{\sigma_{\mathit{k}+1}\sigma_1}$ constructed with the immediately preceded character and the first character of the interval of $\mathtt{\sigma_1 \cdots \sigma_{\mathit{y_2}} \sigma_{\mathit{k}+2}}$. 
   And then, $\mathtt{\sigma_2 \cdots \sigma_{\mathit{y_2}}}$ is decomposed into $\mathtt{\sigma_2\sigma_3|\sigma_4 \sigma_5|\cdots |\sigma_{\mathit{y_2}-1} \sigma_{\mathit{y_2}}|}$ since $y_2$ is the maximum odd value less than or equal to $k$ and each $\sigma_{2i}$ for $1\leq i \leq (y_2-1)/2$ is the longest prefix as some previous factor.
   Since $\ell_2 = 2$ holds,
   the partitions of the interval are $\mathtt{\sigma_1|\sigma_2\sigma_3|\sigma_4 \sigma_5|\cdots |\sigma_{\mathit{y}_2-1} \sigma_{\mathit{y}_2}|\sigma_{\mathit{k}+2}}$, and this satisfies the partition (\ref{eqn:lz78}).
   Next, we assume the partition (\ref{eqn:lz78}) is valid for $j\leq h-1$ for some integer $h$,
   and we consider whether the partition (\ref{eqn:lz78}) is valid or not for $j=h$.
   From the assumption and the same discussion as the above,
   there is a factor $\mathtt{\sigma_{\mathit{k+h}-1}\sigma_1}$ constructed with the immediately preceded character and the first character of the interval of $\mathtt{\sigma_1 \cdots \sigma_{\mathit{y_h}} \sigma_{\mathit{k+h}}}$. 
Since the set of previous factors starting with the character $\mathtt{\sigma_2}$ is $\{\mathtt{(\sigma_2), (\sigma_2 \sigma_3), \ldots, (\sigma_2 \cdots \sigma_\mathit{h})}\}$, the next factor becomes $\mathtt{\sigma_2 \cdots \sigma_{\mathit{h}+1}}$.
   In addition, it is guaranteed that the set of previous factors starting with the character $\mathtt{\sigma_\mathit{i}}$ for every $h+2 \leq i \leq y_h-\ell_{h}+1$ is equal to $\{\mathtt{(\sigma_{\mathit{i}}), (\sigma_{\mathit{i}} \sigma_{\mathit{i}+1}), \ldots, (\sigma_{\mathit{i}} \cdots \sigma_\mathit{i+\ell_j}-\mathrm{2}}\}$.
   Since $y_h$ can be described as $h+\ell_h+1+t\ell_h$ for some integer $t$, the decomposition of the interval becomes $\mathtt{\sigma_1|\sigma_2\cdots \sigma_{\mathit{h}+1}| \sigma_{\mathit{h}+2} \cdots \sigma_{\mathit{h+\ell_h}+1}|\cdots|\sigma_{\mathit{y_h-\ell_h}+1} \cdots\sigma_{\mathit{y_h}}|\sigma_{\mathit{k+h}}}$,
   and this satisfies the partition (\ref{eqn:lz78}).
   From the above, the partition (\ref{eqn:lz78}) is valid for $2\leq j \leq k$ by induction.
   
   Thus, the LZ78 factorization of $T'$ is
   \begin{equation*}
     \begin{split}
       \LZseveneight(T') = \mathtt{\sigma_{\mathit{k}+1}| \cdots |\sigma_{2\mathit{k}}|\sigma_{1}|\sigma_{1}\sigma_{2}|\cdots|\sigma_{1}\cdots \sigma_{\mathit{k}}|\#|\sigma_2|\cdots|\sigma_{\mathit{y}_1}| \sigma_{\mathit{k}+1}\sigma_1|\sigma_2 \sigma_3|\cdots|\sigma_{\mathit{y}_2-1} \sigma_{\mathit{y}_2}|} \\
      \mathtt{\sigma_{\mathit{k}+2} \sigma_1|\cdots| \sigma_2\cdots \sigma_{\mathit{j}+1}| \sigma_{\mathit{j}+2} \cdots \sigma_{\mathit{j+\ell_j}+1}|\cdots|\sigma_{\mathit{y_j-\ell_j}+1} \cdots \sigma_{\mathit{y_j}}| \sigma_{\mathit{k+j}} \sigma_1|\cdots|\cdots \sigma_{\mathit{y}_\mathit{k}}|\sigma_{2\mathit{k}}|}.
    \end{split}
  \end{equation*}
  See also Figure~\ref{fig:lz78} for a concrete example.
  
  The size of $\LZseveneight(T')$ is $\zseveneight(T')=2k+y_1+\sum_{j=2}^{k}(2+(y_j-j-1)/\ell_j)+1=5k-1+\sum_{j=2}^{k}((y_j-j-1)/\ell_j)$.
  $\sum_{j=2}^{k}((y_j-j-1)/\ell_j)$ is the total number of factors of length $\ell_j$ for $1\leq j \leq k$.
  Now we consider the number of factors of length in $L \in \{\ell_1, \ldots, \ell_k\}$.
  For all $j$ such that $(1/2)L(L-1)+1\leq j \leq (1/2)L(L+1)$,
  the total number of factors of length in $L$ is $\sum_{j=j_{\min}^{L}}^{j_{\max}^{L}}((y_j-j-1)/L)$, where $j_{\min}^{L}=(1/2)L(L-1)+1$ and $j_{\max}^{L}=(1/2)L(L+1)$.
  From the definition of $y_j$,
  $\sum_j y_j = k+(k-1)+\cdots+(k-L+1)$ holds.
  Therefore, $\sum_{j=j_{\min}^{L}}^{j_{\max}^{L}}((y_j-j-1)/L)=(k+(k-1)+\cdots+(k-L+1)-j_{\min}^{L}-(j_{\min}^{L}+1) - \cdots -j_{\max}^{L} -L)/L=(L(k-j_{\max}^{L})-L)/L=k-(1/2)L(L+1)-1$.
  Let $\ell_k = m$.
  Then the total number of factors of length $\ell_j$ for $1\leq j \leq k$ is $\sum_{L=2}^{m} (k-(1/2)L(L+1)-1)=(m-1)k-(1/12)m(m+1)(2m+1)-(1/4)m(m+1)-m+2$.
  Consider the case of $m=\sqrt{k}$, then $\zseveneight(T') \in \Omega(k\sqrt{k})$.
  Thus we obtain $\MSensSub(\zseveneight,n) = \Omega(n^{1/4})$,
  $\ASensSub(\zseveneight,n) = \Omega(\zseveneight^{3/2})$, and $\ASensSub(\zseveneight,n) = \Omega(n^{3/4})$. 
  
  As for deletions, by considering $T'$ obtained from $T$ by deleting the first character of the $2k+1$th factor in $\LZseveneight(T)$, we obtain a similar decomposition as the above.
  Thus, $\MSensDel(\zseveneight,n) = \Omega(n^{1/4})$, $\ASensDel(\zseveneight,n) = \Omega(\zseveneight^{3/2})$, and $\ASensDel(\zseveneight,n) = \Omega(n^{3/4})$ also hold.
\end{proof}

  \begin{figure}[th]
    \centerline{
      \includegraphics[scale=0.6]{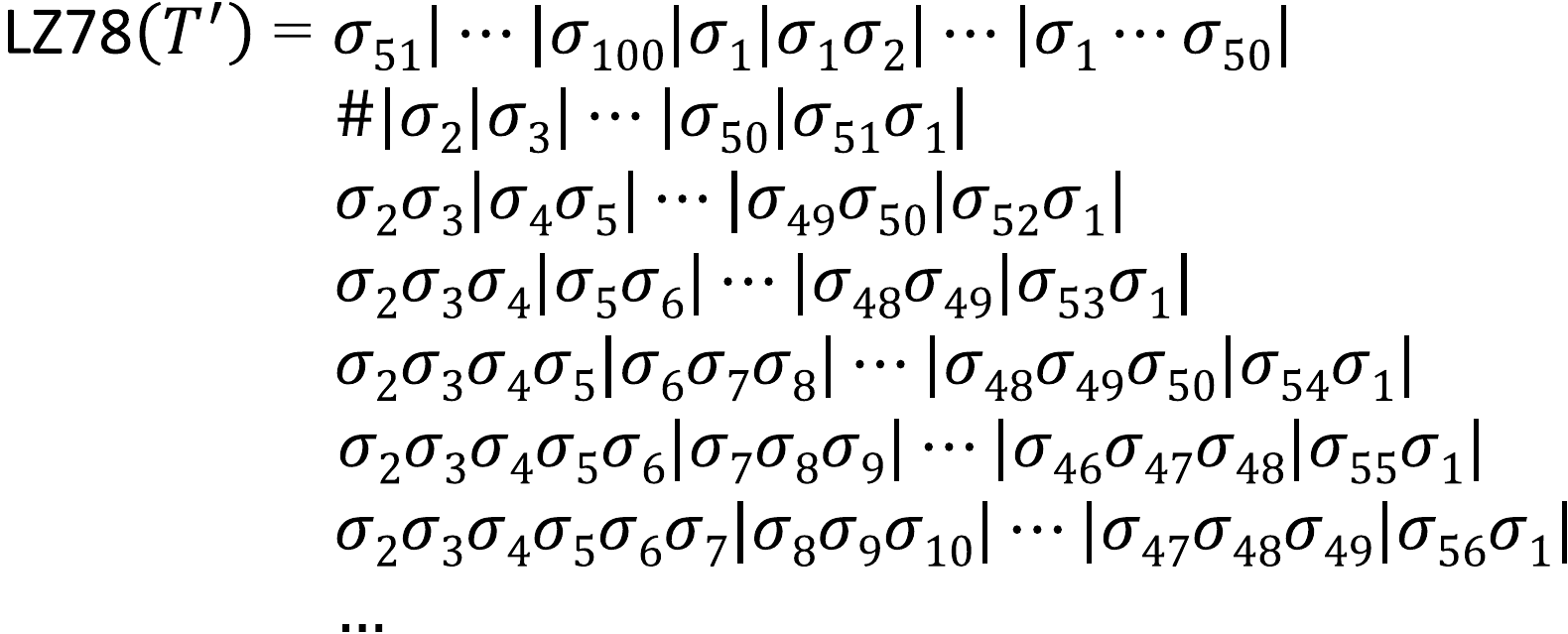}
    }
    \caption{Illustration for $\LZseveneight(T')$ for the string $T'$ of Theorem~\ref{theo:LZ78_lowerbound} with $k=50$.}
    \label{fig:lz78}
  \end{figure}
  
We remark that our string also achieves $\MSensIns(\zseveneight,n) = \Omega(n^{1/4})$, $\ASensIns(\zseveneight,n) = \Omega(\zseveneight^{3/2})$, and $\ASensIns(\zseveneight,n) = \Omega(n^{3/4})$ for insertions, if we consider the string $T'$ obtained from $T$ by inserting $\mathtt{\#}$ between the first and second characters of the $2k+1$th factor of $\LZseveneight(T)$.

In Section~\ref{sec:grammar}, we will present an $O((n / \log n)^{\frac{2}{3}})$ upper bound for the multiplicative sensitivity for LZ78.
 \section{Smallest grammars and approximation grammars} \label{sec:grammar}

In this section, we consider the sensitivity of the smallest grammar size $g^*$
and several grammars whose sizes satisfy some approximation ratios to $g^*$.

\subsection{Smallest grammar}

In this section (and also in the following sections),
we consider \emph{grammar-based} compressors for input string $T$.

It is known that the problem of computing
the size $g^*(T)$ of the smallest grammar only generating $T$ is NP-hard~\cite{StorerS82,CharikarLLPPSS05}.
It is also known that $\zss(T)$ is a lower bound
of the size of \emph{any} grammar generating $T$,
namely, $\zss(T) \leq g^*(T)$ holds for any string $T$~\cite{Rytter03,CharikarLLPPSS05}.

We have the following upper bounds for the sensitivity of $g^*(T)$: 
\begin{theorem} \label{theo:smallest_grammar_upper}
  The following upper bounds on the sensitivity of $g^*$ hold: \\
  \textbf{substitutions:} $\MSensSub(g^*,n) \leq 2$. $\ASensSub(g^*,n) \leq g^*$.\\
  \textbf{insertions:} $\MSensIns(g^*,n) \leq 2$. $\ASensIns(g^*,n) \leq g^*$.\\
  \textbf{deletions:} $\MSensDel(g^*,n) \leq 2$. $\ASensDel(g^*,n) \leq g^*$.
\end{theorem}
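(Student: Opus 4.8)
The plan is to bound $g^*(T')$ by constructing an explicit grammar for $T'$ out of a smallest grammar $\mathcal{G}$ for $T$, together with a handful of new rules to account for the edited position $i$. Concretely, I would first take the derivation tree of $\mathcal{G}$ and walk down the root-to-leaf path to position $i$. Every nonterminal $A$ on this path derives a substring of $T$ that covers position $i$; I would replace each such $A$ by a new nonterminal $A'$ whose right-hand side is obtained from that of $A$ by replacing the one child symbol (on the path) with its primed version, and — at the bottom — replacing the terminal $T[i]$ itself with the edited content ($\#$ for substitution, $T[i]\#$ or $\#T[i]$ for insertion, $\varepsilon$ for deletion, with an obvious cleanup when a right-hand side becomes too short). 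All nonterminals off the path keep their original rules. This yields a valid grammar for $T'$; the only thing that grew is the set of primed rules along the path.

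The key quantitative step is that this construction at most doubles the size. Each production rule of $\mathcal{G}$ contributes its right-hand-side length to $|\mathcal{G}| = g^*(T)$. In the new grammar, every original rule is still present (it may still be reachable from off-path nonterminals), and in addition we created at most one primed copy $A'$ for each nonterminal $A$ on the path; since $|\mathrm{rhs}(A')| \le |\mathrm{rhs}(A)| + 1$ for an insertion and $\le |\mathrm{rhs}(A)|$ otherwise, and since each nonterminal appears on the path at most once, the total added size is at most $\sum_{A \text{ on path}} |\mathrm{rhs}(A)| \le g^*(T)$ (plus possibly a small additive constant from the bottom-of-path cleanup, which I would argue away or absorb). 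Hence $g^*(T') \le 2 g^*(T)$, giving $\MSensSub(g^*,n),\MSensIns(g^*,n),\MSensDel(g^*,n) \le 2$ and $\ASens_{\bullet}(g^*,n) \le g^*$. One could also shortcut the additive bound via Lemma~\ref{lem:squeeze} together with $\zss(T)\le g^*(T)$ and the sensitivity of $\zss$, but the direct construction gives the clean additive form $g^*$.

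The main obstacle is handling the case where a primed rule degenerates — e.g., a deletion makes some $\mathrm{rhs}(A')$ have length $1$ (a unit production) or even $0$, or an insertion/substitution forces the bottom terminal-level rule to change in a way that is not literally "replace one symbol." I expect to dispose of this by the standard observation that unit productions and empty productions can be eliminated without increasing the grammar size, so the bound is preserved; but one must check that this elimination does not cascade (it does not, because only the at-most-$O(\log n)$ primed nonterminals on the path can be affected, and collapsing them can only shrink the grammar). A secondary subtlety is making sure the construction does not accidentally leave the original on-path nonterminals unreachable and then claim their size back as savings — but we do not need any savings, we only need the upper bound, so it is safe to keep all original rules and merely observe that the primed rules add at most $g^*(T)$.
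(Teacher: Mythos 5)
Your construction is exactly the paper's proof: it also introduces a primed copy $X'$ of each non-terminal $X$ on the root-to-leaf path for position $i$, with the on-path child replaced by its primed version, applied top-down, and bounds the added size by the total right-hand-side length along the path, which is at most $g^*(T)$. The degenerate-case and insertion-by-one-symbol subtleties you flag are glossed over in the paper as well, and your handling of them is sound.
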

\begin{proof}
Let $T$ be any string of length $n$,
and let $\mathcal{G}^*(T)$ be a grammar of size $g^*(T)$ that only generates $T$.

We describe the case of substitutions.
Let $T'$ be the string that can be obtained by substituting a character $c$ for the $i$th character $T[i]$ of $T$, where $c \neq T[i]$.
Let $X$ be a non-terminal of $\mathcal{G}^*(T)$ in the path $P$ from the root to the leaf for the $i$th character in the derivation tree of $\mathcal{G}^*(T)$.
Let $X \rightarrow Y_1 \cdots Y_k$ be the production from $X$,
and let $Y_j$~($1 \leq j \leq k$) be the non-terminal that is the child of $X$ in the path $P$.
Then, we introduce a new non-terminal $X'$ and a new production $X' \rightarrow Y_1 \cdots Y_{j-1} Y'_{j} Y_{j+1} \cdots Y_{k}$,
where $Y'_{j}$ will be the new non-terminal at the next depth in the path $P$.
By applying this operation in a top-down manner on $P$, we can obtain a grammar $\mathcal{G}(T')$ of size $g(T') \leq 2g^*(T)$ that generates $T'$.
Since $g^*(T') \leq g(T')$, we have the claimed bounds.
The cases with insertions and deletions are analogous.
\end{proof}

\subsection{Practical grammars}

Since computing a smallest grammar of size $g^*(T)$ is NP-hard,
a number of practical grammar-based compressors have been proposed,
including RePair~\cite{LarssonM99}, Longest-Match~\cite{KiefferY00},
Greedy~\cite{ApostolicoL00}, Sequential~\cite{YangK00}\footnote{Sequential is an improved version of Sequitur~\cite{NevillManningW97}.},
and LZ78~\cite{LZ78}\footnote{The LZ78 factorization can also be seen as a grammar.}.
Charikar et al.~\cite{CharikarLLPPSS05} analyzed the approximation ratios
of these grammar compressors to the smallest grammar.
Let $\grepair$, $\glongest$, $\ggreedy$, $\gseq$, $\zseveneight$ denote the sizes of
the aforementioned compressors, respectively.
It is known that for any $g \in \{\grepair, \glongest, \ggreedy, \zseveneight \}$
$g(T) = O(g^*(T) (n/ \log n)^{\frac{2}{3}})$ holds,
and $\gseq = O(g^*(T) (n / \log n)^{\frac{3}{4}})$ holds~\cite{CharikarLLPPSS05}.
By combining these results with Lemma~\ref{lem:squeeze} and Theorem~\ref{theo:smallest_grammar_upper},
we obtain the following bounds:

\begin{corollary}
  The following upper bounds for the sensitivity of $g \in \{\grepair, \glongest, \ggreedy, \zseveneight \}$ hold:\\
  \textbf{substitutions:} $\MSensSub(g,n) = O((n/ \log n)^{\frac{2}{3}})$. $\ASensSub(g,n) = O(g^* \cdot (n/ \log n)^{\frac{2}{3}})$. \\
  \textbf{insertions:} $\MSensIns(g,n) = O((n/ \log n)^{\frac{2}{3}})$. $\ASensIns(g,n) = O(g^* \cdot (n/ \log n)^{\frac{2}{3}})$. \\
  \textbf{deletions:} $\MSensDel(g,n) = O((n/ \log n)^{\frac{2}{3}})$. $\ASensDel(g,n) = O(g^* \cdot (n/ \log n)^{\frac{2}{3}})$. 
\end{corollary}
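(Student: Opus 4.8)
The plan is to apply the ``sandwich'' Lemma~\ref{lem:squeeze} with the small measure $\alpha = g^*$ and the large measure $\beta = g$ for each $g \in \{\grepair, \glongest, \ggreedy, \zseveneight\}$, so that the constant-factor bound on the sensitivity of $g^*$ from Theorem~\ref{theo:smallest_grammar_upper} is transported, up to the approximation ratio, to these practical grammars.

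Concretely, I would verify the three hypotheses of Lemma~\ref{lem:squeeze}. (i)~By Theorem~\ref{theo:smallest_grammar_upper}, $g^*(T')/g^*(T) \leq 2 = O(1)$ for any $T'$ with $\ed(T,T') = 1$, so the first condition holds. (ii)~Since $g^*(T)$ is by definition the size of a smallest grammar generating $T$, and RePair, Longest-Match, Greedy, and LZ78 each output a grammar (resp.\ a grammar-like factorization) generating $T$, we have $g^*(T) \leq g(T)$, which is the second condition with $\alpha(T) \le \beta(T)$. (iii)~By the classical approximation bounds of Charikar et al.~\cite{CharikarLLPPSS05} quoted above, $g(T) = O\!\left(g^*(T)\cdot (n/\log n)^{2/3}\right)$, so we may take $f(n,\alpha) = (n/\log n)^{2/3}$; since $f$ does not depend on its second argument at all, the requirement $f(n, c\cdot\alpha) \leq c'\cdot f(n,\alpha)$ holds trivially with $c' = 1$ (and indeed constant-in-$\alpha$ functions are explicitly covered by the remark following Lemma~\ref{lem:squeeze}).

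With these checked, Lemma~\ref{lem:squeeze}(1)--(3) immediately yields $\MSensSub(g,n), \MSensIns(g,n), \MSensDel(g,n) = O((n/\log n)^{2/3})$ and $\ASensSub(g,n), \ASensIns(g,n), \ASensDel(g,n) = O(g^*\cdot(n/\log n)^{2/3})$. The only point needing a touch of care is that for insertions and deletions the edited string $T'$ has length $n\pm 1$, so the approximation bound for $T'$ reads $g(T') = O\!\left(g^*(T')\cdot((n\pm1)/\log(n\pm1))^{2/3}\right)$; since $(n\pm1)/\log(n\pm1) = \Theta(n/\log n)$, this is absorbed into the $O(\cdot)$ and causes no difficulty. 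I do not expect any genuine obstacle: the corollary is a direct consequence of Theorem~\ref{theo:smallest_grammar_upper}, the known approximation ratios, and Lemma~\ref{lem:squeeze}. The same argument gives the analogous Sequential bound, using $\gseq = O(g^*(n/\log n)^{3/4})$ in place of the $2/3$ bound.
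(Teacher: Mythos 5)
Your proposal is correct and is essentially the paper's own argument: the corollary is stated there as a direct combination of Theorem~\ref{theo:smallest_grammar_upper}, the approximation ratios of Charikar et al., and Lemma~\ref{lem:squeeze}, exactly as you verify. Your explicit check of the three hypotheses (including that $f(n,\alpha)=(n/\log n)^{2/3}$ is constant in $\alpha$, and the harmless $n\pm 1$ shift for insertions/deletions) fills in the routine details the paper leaves implicit.
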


\begin{corollary}
  The following upper bounds for the sensitivity of $\gseq$ hold:\\
  \textbf{substitutions:} $\MSensSub(\gseq,n) = O((n/ \log n)^{\frac{3}{4}})$. $\ASensSub(\gseq,n) = O(g^* \cdot (n/ \log n)^{\frac{3}{4}})$. \\
  \textbf{insertions:} $\MSensIns(\gseq,n) = O((n/ \log n)^{\frac{3}{4}})$. $\ASensIns(\gseq,n) = O(g^* \cdot (n/ \log n)^{\frac{3}{4}})$. \\
  \textbf{deletions:} $\MSensDel(\gseq,n) = O((n/ \log n)^{\frac{3}{4}})$. $\ASensDel(\gseq,n) = O(g^* \cdot (n/ \log n)^{\frac{3}{4}})$. 
\end{corollary}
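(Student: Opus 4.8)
The plan is to invoke the ``sandwich'' Lemma~\ref{lem:squeeze} exactly as in the proof of the preceding corollary, instantiating $\alpha = g^*$ and $\beta = \gseq$. First I would verify the three hypotheses of Lemma~\ref{lem:squeeze}. The ratio bound $g^*(T')/g^*(T) = O(1)$ is immediate from Theorem~\ref{theo:smallest_grammar_upper}, which gives $\MSensSub(g^*,n) \le 2$, $\MSensIns(g^*,n) \le 2$, and $\MSensDel(g^*,n) \le 2$. The inequality $g^*(T) \le \gseq(T)$ holds trivially, since $g^*$ is by definition the size of a smallest grammar generating $T$, so the grammar produced by Sequential has size at least $g^*(T)$.

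Next I would take $f(n,x) = (n/\log n)^{3/4}$ as the function appearing in the third hypothesis. Charikar et al.~\cite{CharikarLLPPSS05} showed $\gseq(T) = O(g^*(T)\,(n/\log n)^{3/4})$, i.e., $\beta(T) = O(\alpha(T)\cdot f(n,\alpha(T)))$. Since $f$ does not depend on its second argument at all, the required scaling condition $f(n, c\cdot\alpha(T)) \le c'\cdot f(n,\alpha(T))$ holds trivially with $c' = 1$ for every constant $c$ --- the same observation already used for RePair, Greedy, LongestMatch, and LZ78, only with a different exponent.

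With all hypotheses verified, Lemma~\ref{lem:squeeze} yields directly the multiplicative bounds $\MSensSub(\gseq,n), \MSensIns(\gseq,n), \MSensDel(\gseq,n) = O((n/\log n)^{3/4})$ and the additive bounds $\ASensSub(\gseq,n), \ASensIns(\gseq,n), \ASensDel(\gseq,n) = O(g^*\cdot (n/\log n)^{3/4})$, which is exactly the statement of the corollary. There is no genuine obstacle here: the argument reuses the preceding corollary's proof verbatim with the approximation exponent $2/3$ replaced by $3/4$. The only point worth stating explicitly is that the Sequential approximation guarantee is already phrased directly in terms of $g^*$, so no intermediate measure (such as $\zss$) needs to be threaded through the chain of known inequalities.
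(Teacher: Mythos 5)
Your proposal is correct and matches the paper's own argument: the paper likewise obtains this corollary by feeding the Charikar et al.\ bound $\gseq(T) = O(g^*(T)(n/\log n)^{3/4})$, the trivial inequality $g^*(T) \leq \gseq(T)$, and the constant sensitivity of $g^*$ from Theorem~\ref{theo:smallest_grammar_upper} into Lemma~\ref{lem:squeeze}. Your verification of the three hypotheses, including the observation that $f$ is independent of its second argument, is exactly what is needed.
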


\subsection{Approximation grammars}

There also exist (better) approximation algorithms in terms of the smallest grammar size $g^*$.

It is known that $\alpha$-balanced grammar compressor~\cite{CharikarLLPPSS05}, the AVL-grammar compressor~\cite{Rytter03},
and the really-simple grammar compressor~\cite{Jez16}
all achieve $O(\log (n/g^*))$-approximation ratios to $g^*$.
Let $\galpha$, $\gavl$, and $\gsimple$ denote the sizes of these compressors,
respectively.
Namely, for every $g \in \{\galpha, \gavl, \gsimple\}$,
$g = O(g^* \log (n / g^*))$ holds.
Since $\log (n/g^*)$ satisfies the conditions for the function $f(n,g^*)$ in Lemma~\ref{lem:squeeze},
and since $g^*$ satisfies the conditions Lemma~\ref{lem:squeeze} by Theorem~\ref{theo:smallest_grammar_upper}, we obtain the following:

\begin{corollary} \label{theo:approximation_grammars}
  The following upper bounds for the sensitivity of $g \in \{\galpha, \gavl, \gsimple\}$ hold:\\
  \textbf{substitutions:} $\MSensSub(g,n) = O(\log (n/ g^*))$. $\ASensSub(g,n) = O(g^* \log (n/g^*))$. \\
  \textbf{insertions:} $\MSensIns(g,n) = O(\log (n / g^*))$. $\ASensIns(g,n) = O(g^* \log (n / g^*))$. \\
  \textbf{deletions:} $\MSensDel(g,n) = O(\log (n / g^*))$. $\ASensDel(g,n) = O(g^* \log (n / g^*))$. 
\end{corollary}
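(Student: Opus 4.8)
The plan is to invoke Lemma~\ref{lem:squeeze} directly, taking $\alpha = g^*$ and $\beta = g$ for each $g \in \{\galpha, \gavl, \gsimple\}$, with $f(n, g^*) = \log(n/g^*)$. First I would fix an arbitrary string $T$ of length $n$ and an arbitrary $T'$ with $\ed(T,T') = 1$, and then verify the three hypotheses of Lemma~\ref{lem:squeeze} one at a time.

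For the first hypothesis, $g^*(T')/g^*(T) = O(1)$ is exactly Theorem~\ref{theo:smallest_grammar_upper}, which in fact gives the constant $2$ for all three edit types. For the second hypothesis, $g^*(T) \leq g(T)$ holds because $g^*(T)$ is by definition the size of a smallest grammar generating $T$, so any grammar for $T$ --- in particular the $\alpha$-balanced grammar, the AVL-grammar, or the really-simple grammar --- has size at least $g^*(T)$. For the third hypothesis, the known approximation guarantees of~\cite{CharikarLLPPSS05, Rytter03, Jez16} state that $g(T) = O(g^*(T) \log(n/g^*(T)))$, which is of the required form $O(\alpha(T) \cdot f(n, \alpha(T)))$ with $f(n,m) = \log(n/m)$. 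It then remains only to check that $f$ satisfies the scaling condition ``$f(n, c\,\alpha(T)) \leq c'\, f(n, \alpha(T))$'' from Lemma~\ref{lem:squeeze}; since the relevant constant supplied by Theorem~\ref{theo:smallest_grammar_upper} is $c = 2 \geq 1$, we have $\log(n/(2 g^*(T))) = \log(n/g^*(T)) - \log 2 \leq \log(n/g^*(T))$, so $c' = 1$ suffices (and more generally $\log(n/m)$ is covered by the poly-logarithmic case noted right after Lemma~\ref{lem:squeeze}).

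With all three hypotheses in place, conclusions (1), (2), and (3) of Lemma~\ref{lem:squeeze} immediately give $\MSensSub(g,n), \MSensIns(g,n), \MSensDel(g,n) = O(f(n,g^*)) = O(\log(n/g^*))$ and $\ASensSub(g,n), \ASensIns(g,n), \ASensDel(g,n) = O(g^* \cdot f(n,g^*)) = O(g^* \log(n/g^*))$, which is precisely the claimed bounds. I do not expect any genuine obstacle: the only point needing a line of care is the scaling property of $f$, and that reduces to the elementary inequality above since the pertinent constant is at least $1$; everything else is a direct application of Theorem~\ref{theo:smallest_grammar_upper} together with the cited approximation ratios.
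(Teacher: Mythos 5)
Your proposal is correct and takes essentially the same route as the paper: the paper likewise applies Lemma~\ref{lem:squeeze} with $\alpha = g^*$, $\beta = g$, and $f(n,g^*) = \log(n/g^*)$, citing Theorem~\ref{theo:smallest_grammar_upper} for the $O(1)$ multiplicative sensitivity of $g^*$ and the known $O(\log(n/g^*))$ approximation ratios of the three compressors. The paper's verification of the hypotheses of Lemma~\ref{lem:squeeze} is stated more tersely, but the substance, including the remark that $\log(n/g^*)$ satisfies the scaling condition on $f$, is identical to what you wrote.
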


\section{Grammar compression by induced sorting (GCIS)}
\label{sec:gcis}

In this section, we consider the worst-case sensitivity
of the \emph{grammar compression by induced sorting} (\emph{GCIS})~\cite{NunesLGAN18,NunesLGAN20}.
GCIS is based on the idea from the famous SAIS algorithm~\cite{NongZC11}
that builds the suffix array of an input string in linear time.
Recently, it is shown that GCIS has a locally consistent parsing property
similar to the ESP-index~\cite{MaruyamaNKS13} and the SE-index~\cite{NishimotoIIBT20},
and grammar-based indexing structures
based on GCIS have been proposed~\cite{AkagiKNIBT21,Diaz-DominguezN21}.

Let $T$ be the string of length $n$ over an integer alphabet $\Sigma = \{ \mathtt{1}, \ldots, \sigma \}$.
Let $\Pi = \{\sigma+1, \ldots, \sigma+|\Pi|\}$
be the set of non-terminal symbols.
For strings $x, y$ over $\Sigma$ or $\Pi$,
we write $x \prec y$ iff $x$ is lexicographically smaller than $y$. 

First we explain how the GCIS algorithm constructs its grammar from the input string.
For any text position $1 \leq i \leq |T|$,
position $i$ is of type L if $T[i..|T|]$ is lexicographically larger than $T[i+1...|T|]$, and it is of type S otherwise.
For any $2 < i < |T|$, we call position $i$ an \emph{LMS} (\emph{LeftMost S}) position if $i$ is of type S and $i-1$ is of type L.
For convenience, we append a special character $\$$ to $T$
which does not occur elsewhere in $T$,
and assume that positions $1$ and $|T\$|$ are LMS positions.

Let $i_1, \ldots, i_{z+1}$ be the sequence of the LMS positions in $T$ sorted in increasing order.
Let $D_j = T[i_j..i_{j+1}-1]$ for any $1 \leq j \leq z$.
When $z \geq 2$, then $T = D_1, \ldots, D_z$ is called the GCIS-parsing of $T$.

Next, we create new non-terminal symbols $R_1, \ldots, R_z$ such that $R_i = 1+\sigma + |\{ D_j : D_j \prec D_i :1 \leq j \leq z \}|$ for each $i$.
Intuitively, we pick the least unused character from $\Pi$
and assign it to $R_i$.
Then, $G_1 = R_1 \cdots R_z$ is called the GCIS-string of $T$.
Let $\mathcal{G}_1$ the set of all $z$ symbols in $G_1$, and $P_1= \{R_i \to D_i : 1 \leq i \leq z\}$ is the set of production rules. Let $\mathcal{D}_1 = \{ D_1, \ldots, D_z \}$ be the set of all distinct factors.
Let $G_0 = T$, then we define GCIS recursively, as follows:

\begin{definition}
\label{def_gcis}
For $k \geq 0 $, let the sequence $i_1, i_2, \ldots i_{z_k+1}$ be all LMS positions sorted in increasing order, and $D_j = G_k[i_j \ldots i_{j+1}-1]$ for any $1 \leq j \leq z_k$. $G_k = D_1, D_2, \ldots, D_{z_k}$ is the GCIS-parsing of $G_k$.
For all $i$ in $1 \leq i \leq z_k$, we define $R$ to satisfy : 
\[
R_i = |\{D_j : D_j \prec D_i :1 \leq j \leq z_k)\}|+\sum_{t=1}^{k-1} |P_t|+ \sigma + 1. \]
Then, $G_{k+1} = R_1 \ldots R_{z_k}$ is the GCIS-string of $G_k$. $\mathcal{G}_{k+1}$ is the set of non-terminals, $P_k= \{R_i \to D_i :1 \leq i \leq z_k\}$ is the set of production rules. $\mathcal{D}_k = \{ D_1, \ldots, D_{z_k} \}$ is the set of all distinct factors in the GCIS-parsing of $G_k$.
\end{definition}
Again, each $R_i$ is chosen to be the least unused character from $\Pi$.
$G_{k+1}$ is not defined if there are no LMS positions in $G_k[2..|G_k|]$. Then,
the GCIS grammar of $T$ is $(\Sigma, \bigcup_{t=1}^{k}{\mathcal{G}_t}, \bigcup_{t=1}^{k-1}{P_t},G_k)$.
$T$ is derived from the recursive application of the rules $\bigcup_{t=1}^{k-1}{P_t}$, which is the third argument, to the fourth argument $G_k$, which is the start string, until there are no non-terminal characters, which is in the second argument $\bigcup_{t=1}^{k}{\mathcal{G}_t} = \Pi$, in the string.  
Let $r = k$ be the height of GCIS, in other words how many times we applied this GCIS method recursively to $T$.
Let $\gis(T)$ be the size of GCIS grammar of $T$. Then, if $r=0$, $\gis(T) = |T|$, and if $r \geq 1$, $\gis(T) = \Vert \mathcal{D}_1 \Vert + \cdots + \Vert \mathcal{D}_r \Vert + G_r$, where $\Vert S \Vert$ for a set of strings denotes
the total length of the strings in $S$.

Figure~\ref{gcis_fig1} shows an example on how GCIS is constructed
from an input string.

\begin{figure}[tbh]
 \centering
 \includegraphics[keepaspectratio, scale=0.5]
      {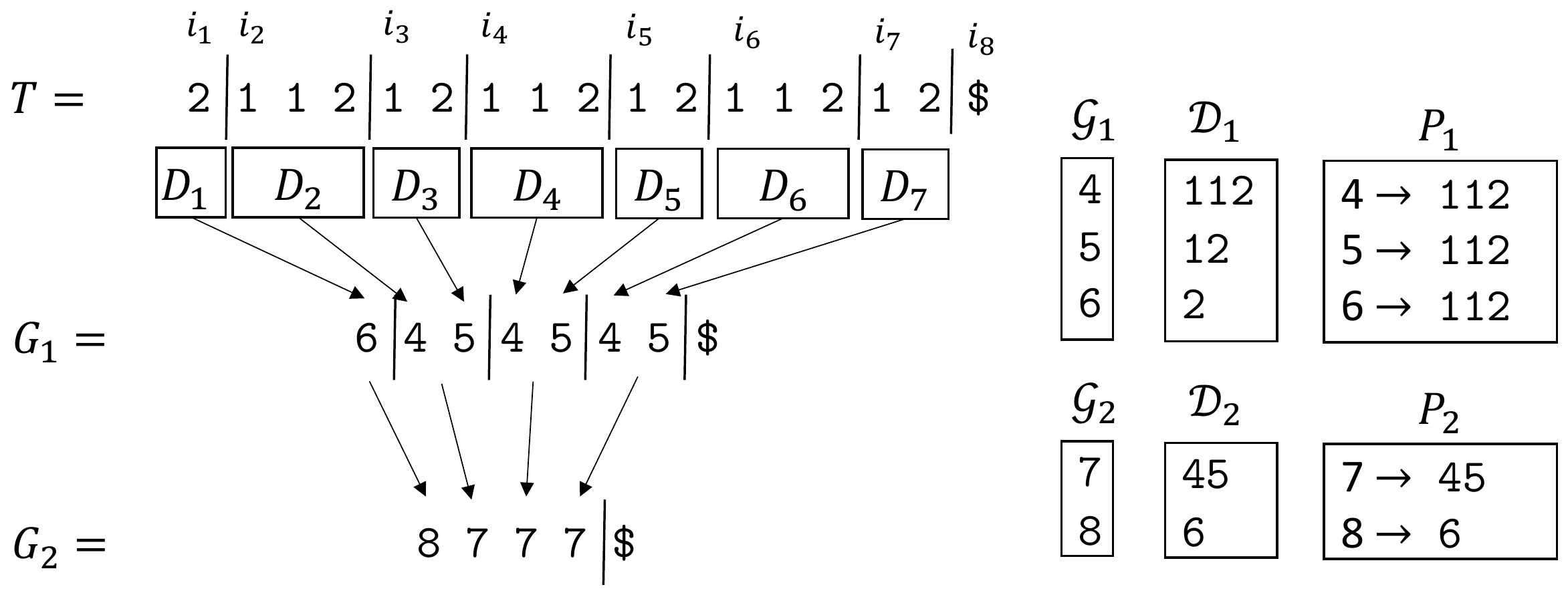}
 \caption{Construction of GCIS from string  $T = G_0 = \mathtt{2112121121211212\$}$. In this case, there are $8$ LMS positions $i_1, \ldots, i_8$ in $T$ and 7 factors $D_1, \ldots, D_7$. $\mathcal{D}_1 = \mathtt{\{ 112,12,2 \} }$ is the set of distinct factors of the GCIS-parsing for $T$, and $G_1 = R_1 \cdots R_7 = \mathtt{6454545} $ is the GCIS-string of $T$. Recursively, $\mathcal{D}_2 = \mathtt{\{ 6,45 \}}, G_2 = \mathtt{8777} $, and the start string of the GCIS for $T$ is $G_2$ because the number of factors of the GCIS-parsing of $G_2$ is 1 (excluding $\$$), in other words there are no LMS positions in $G_2[2.. |G_2|]$. The size of the GCIS grammar of $T$ is $\gis(T) = \Vert\mathcal{D}_1\Vert + \Vert\mathcal{D}_2\Vert + |G_2| = 6+3+4 = 13. $}
 \label{gcis_fig1}
\end{figure}

From now on, we consider to perform an edit operation
to the input string $T$
and will consider how the GCIS changes after the edit.

\begin{definition}
\label{def_gcis2}
Let $S$ and $S'$ be strings.
If $S'$ is obtained from $S$ by deleting the substring of length $a$ starting from a position $c$ in $S$ and by inserting a string of length $b$ to the same position $c$, then we write $F(S,S') = (a,b)$.
\end{definition}

Our single-character edit operation performed to $T$
can be described as $F(T,T') = (1,1)$ for substitution,
$F(T,T') = (0,1)$ for insertion,
and $F(T,T') = (1,0)$ for deletion.
We will use this notation $F$ to the GCIS-strings for $T$ and $T'$,
in which case $a,b$ can be larger than $1$.
Still, we will prove that $a,b$ are small constants
for the GCIS-strings.

As with the definitions for $T$,
$T'=D'_1, \ldots, D'_{z'}$ is the GCIS-parsing of $T'$,
$G'_1 = R'_1 \cdots R'_{z'}$ is the GCIS-string of $T'$,
$\mathcal{G'}_1$ is the set of non-terminals for $T'$,
$\mathcal{D}_1 = \{ D'_1, \ldots, D'_{z'} \}$
is the set of all distinct factors of the GCIS-parsing of $T'$,
$P'_1= \{R_i \to D_i :1 \leq i \leq z'\}$ is the set of production rules. 
Let $G'_0 = T'$, then we can recursively define $G'_1, G'_2 \ldots, G_{r'}$ similarly to $T$, where $r'$ is the height of the GCIS for $T'$.

\subsection{Upper bounds for the sensitivity of $\gis$}

This section presents the following upper bounds for
the sensitivity of GCIS.

\begin{theorem}
  The following upper bounds on
  the sensitivity of GCIS hold: \\
  \textbf{substitutions:} $\MSensSub(\gis,n) \leq 4$. $\ASensSub(\gis, n) \leq 3\gis$. \\
  \textbf{insertions:} $\MSensIns(\gis,n) \leq 4$. $\ASensIns(\gis, n) \leq 3\gis$. \\
  \textbf{deletions:} $\MSensDel(\gis,n) \leq 4$. $\ASensDel(\gis, n) \leq 3\gis$. 
\end{theorem}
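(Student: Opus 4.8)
The plan is to track how a single-character edit to $T$ propagates up through the recursive GCIS levels $G_0 = T, G_1, G_2, \ldots, G_r$, showing that at every level the edit stays "localized" in the sense captured by Definition~\ref{def_gcis2}: if $F(G_k, G'_k) = (a_k, b_k)$ with $a_k, b_k$ bounded by a small constant, then $F(G_{k+1}, G'_{k+1}) = (a_{k+1}, b_{k+1})$ is again bounded by a small constant. The crucial structural fact to exploit is that GCIS parsing is \emph{locally consistent}: the LMS positions of a string depend only on a bounded-length window of surrounding characters, so editing a short substring of $G_k$ can only change the set of LMS positions (and hence the factorization $D_1, \ldots, D_{z_k}$) within a bounded neighborhood of the edited region. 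Consequently the GCIS-string $G_{k+1} = R_1 \cdots R_{z_k}$ changes in only a bounded number of consecutive symbols — with one subtlety I address below.

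First I would establish the local-consistency lemma precisely: there is a constant $c_0$ such that whether position $i$ of a string is of type L/S, and whether it is an LMS position, is determined by $G_k[i-1 .. i+c_0]$ (in fact the dependence is even more local once one unfolds the L/S definition, but a constant window suffices). From this, an edit $F(G_k, G'_k) = (a_k, b_k)$ can alter the LMS-position structure only within a window of size $O(a_k + b_k)$ around the edit site; outside that window the factors $D_j$ are unchanged as \emph{strings}. This means $G'_k$ and $G_k$ share a common prefix of factors and a common suffix of factors, and differ only in a block of $O(1)$ consecutive factors in the middle. That immediately gives a bound on $a_{k+1} = O(1)$ and $b_{k+1} = O(1)$ for the GCIS-parsing level, i.e.\ $F(G_{k+1}, G'_{k+1})$ has both coordinates bounded by an absolute constant.

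The main obstacle — and the step I expect to require the most care — is that the non-terminal \emph{names} $R_i$ are assigned by global lexicographic rank among the distinct factors $\mathcal{D}_k$, so inserting, deleting, or changing a factor can shift the rank of \emph{every} other factor by $\pm 1$, which would naively make $G_{k+1}$ and $G'_{k+1}$ differ everywhere even though their factorizations into $D_j$'s agree outside a constant window. The resolution is to observe that such a global relabeling is a bijective renaming that shifts all affected symbols by the same additive constant, and to argue that this does not affect the \emph{size} $\gis$ of the grammar, nor does it affect the recursive parsing structure, since LMS-positions and the lexicographic comparisons at the next level are preserved under an order-preserving renaming of the alphabet. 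So when passing to the next level one works with $G_{k+1}$ up to this harmless renaming, and the "$F(\cdot,\cdot) = (\text{const}, \text{const})$" invariant is maintained for the purpose of counting symbols and production-rule lengths. I would state this carefully as a lemma ("the GCIS grammar sizes of two strings differing by an order-preserving renaming are equal, and their higher GCIS levels are related by an order-preserving renaming") so the induction goes through cleanly.

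With the invariant in hand, the final count is bookkeeping. At each level $k$ the edit changes $\mathcal{D}_k$ by removing $O(1)$ distinct factors and adding $O(1)$ new ones, each of bounded length relative to the local structure; summing the contributions $\Vert \mathcal{D}_k \Vert$ over all levels $k = 1, \ldots, \max(r, r')$ and the start-string contribution, the total increase $\gis(T') - \gis(T)$ is at most $3\gis(T)$ (the constant $3$ comes from the worst case where an edited factor forces splitting into up to a bounded number of pieces at each level; one matches the claimed constants by a careful per-level accounting, analogous to the LZSS arguments of Section~\ref{sec:lzss}). For the multiplicative bound one additionally uses that $\gis(T') \le 4\gis(T)$, following from $\gis(T') \le \gis(T) + 3\gis(T)$ together with the trivial lower bound $\gis(T), \gis(T') \ge 1$; the insertion and deletion cases are handled by the same argument since $F(T,T')$ has both coordinates $\le 1$ in all three edit types, so nothing in the level-by-level propagation changes. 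The symmetry of insertion and deletion (noted in the preliminaries) means it suffices to treat substitution and insertion in detail and remark that deletion follows.
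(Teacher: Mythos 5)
Your strategy is essentially the paper's own proof: propagate the edit level-by-level via the $F$ function, use local consistency of LMS positions to bound the number of affected factors per level, neutralize the global rank-shift of non-terminal names by an order-preserving renaming lemma, and telescope over the levels. The only step you wave at rather than supply is absorbing the per-level additive $O(1)$ overhead (naive summation gives $4\gis(T)+O(r)$, not $4\gis(T)$); the paper handles this by charging those constants against guaranteed slack in $\Vert\mathcal{D}_h\Vert$ at every non-final level, but your "careful per-level accounting" is exactly where that argument lives.
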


We will prove this theorem as follows: We unify substitutions, insertions, and deletions by using the $F$ function in Definition~\ref{def_gcis2}. First, we prove that edit operations do not affect the size of the GCIS grammar. Second, we divide the size of GCIS grammar $\gis(T)$ into $\Vert\mathcal{D}_1\Vert$ and $\gis(G_1)$, and prove that $\Vert \mathcal{D'}_1 \Vert \leq 4 \Vert\mathcal{D}_1\Vert + O(1)$. Then, $\gis(T') = \Vert\mathcal{D'}_1\Vert + \gis(G'_1) \leq 4\Vert\mathcal{D}_1\Vert + \gis(G'_1) + O(1)$ holds. 
The essence is to find the two special strings $\hat{G_1}$ and $\hat{G'_1}$ which satisfy: 
\begin{itemize}
  \item $\hat{G'_1}$ can be obtained from $\hat{G_1}$ by some substitutions, insertions, and deletions.
  \item $\gis(G_1) = \gis(\hat{G_1})$, and $\gis(G'_1) = \gis(\hat{G'_1})$.
 \end{itemize}
 Then, we can apply the method to each height. The extra additive $O(1)$ factor can be charged to the process of the GCIS compression, which is to be proved in Lemma~\ref{gcis_under4plus}. Finally, we will obtain $\gis(T') \leq 4\gis(T)$.

\begin{lemma}[\cite{NongZC11}]
\label{sl_define}
The type of $T[k]$ is S if $T[k] \prec T[k+1]$ and L if $T[k] \succ T[k+1]$. If $T[k] = T[k+1]$, the type of $T[k]$ equals to the type of $T[k+1]$.
\end{lemma}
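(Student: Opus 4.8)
The plan is to argue directly from the definition of the S/L typing recalled above: position $k$ of $T$ (more precisely of $T\$$, since a sentinel $\$$ strictly smaller than every other character has been appended) is of type L exactly when the suffix $T[k..|T|]$ is lexicographically larger than $T[k+1..|T|]$, and of type S otherwise. The first step I would take is to record the small but essential observation that the two suffixes $T[k..|T|]$ and $T[k+1..|T|]$ can never be lexicographically equal: they have different lengths, and since $\$$ occurs uniquely in $T\$$ and only as its last character, no proper suffix of $T\$$ is a prefix of another. Hence ``otherwise'' in the definition of type S genuinely means ``strictly lexicographically smaller'', and every position receives exactly one of the two types, so it suffices to determine the sign of the comparison $T[k..|T|]$ versus $T[k+1..|T|]$.

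Next I would split into three cases according to the single-character comparison of $T[k]$ and $T[k+1]$. If $T[k] \prec T[k+1]$, the suffixes $T[k..|T|]$ and $T[k+1..|T|]$ already differ at their first characters, the former carrying the smaller one, so $T[k..|T|] \prec T[k+1..|T|]$ and position $k$ is of type S. The case $T[k] \succ T[k+1]$ is entirely symmetric and forces type L. The only case that requires a word of care is $T[k] = T[k+1]$: here $T[k..|T|] = T[k] \cdot T[k+1..|T|]$ while $T[k+1..|T|] = T[k+1] \cdot T[k+2..|T|] = T[k] \cdot T[k+2..|T|]$, so after cancelling the common leading character the comparison of $T[k..|T|]$ with $T[k+1..|T|]$ has exactly the same outcome as the comparison of $T[k+1..|T|]$ with $T[k+2..|T|]$; therefore the type of $T[k]$ equals the type of $T[k+1]$.

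The argument in the equality case implicitly needs position $k+1$ to have a well-defined type, that is $k+1 < |T\$|$, and in particular needs $T[k] = T[k+1]$ to be possible there at all; but this is automatic, since $T[|T\$|] = \$$ is unique, so $T[k] = T[k+1]$ can only occur for $k+1 < |T\$|$, and then position $k+1$ is an interior (or first) position with a defined type. Thus the pairwise rule is self-consistent and the recursion over equal runs bottoms out at the first position where the characters differ. This sentinel bookkeeping is the only genuine subtlety; the remainder is just the routine unfolding of the lexicographic order, so no real obstacle arises in the proof.
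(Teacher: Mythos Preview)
Your proof is correct: the three-way case split on the first-character comparison, together with the cancellation of the common leading character in the equality case, is exactly the standard argument, and your remark about the sentinel guaranteeing that no two distinct suffixes are lexicographically equal (and that the equal-character recursion terminates) handles the only subtlety cleanly. The paper itself gives no proof of this lemma at all---it simply cites it as a known fact from the SAIS literature~\cite{NongZC11}---so there is nothing further to compare.
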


Let $\rank_T[i]$ be the lexicographical rank
of the character $T[i]$ at position $i$ in $T$.
Let $\hat{T}$ be any string of length $|\hat{T}| = |T|$
such that $\rank_{\hat{T}}[i] = \rank_T[i]$ for every $1 \leq i \leq |T|$.

\begin{lemma}
\label{gcis_identically_equal}
Let $G_1$ and $\hat{G_1}$ denote the GCIS-strings of $T$ and $\hat{T}$, respectively.
Then $\hat{G_1}$ is the string that can be obtained by replacing the characters in $G_1$ without changing the ranks of any characters in $G_1$, and $\gis(\hat{T}) = \gis(T)$.
\end{lemma}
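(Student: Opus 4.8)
The plan is to exploit that the GCIS construction is \emph{rank-oblivious}: every branching decision the algorithm makes on $T$ is the outcome of a comparison between two characters of $T$, and such a comparison is entirely decided by $\rank_T$. Since $\rank_{\hat{T}}[i] = \rank_T[i]$ for every $i$ and $|\hat{T}| = |T|$, the executions on $T$ and on $\hat{T}$ proceed in lockstep. First I would record the elementary fact that for any positions $a,b$ we have $T[a] \prec T[b]$ (resp.\ $=$, $\succ$) if and only if $\rank_T[a] < \rank_T[b]$ (resp.\ $=$, $>$), hence if and only if $\hat{T}[a] \prec \hat{T}[b]$ (resp.\ $=$, $\succ$); the unique terminator $\$$ keeps rank $1$ in both strings, so it causes no trouble. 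Feeding this into Lemma~\ref{sl_define} shows that position $k$ has the same L/S type in $T$ and in $\hat{T}$, hence $T$ and $\hat{T}$ have the same set of LMS positions, and therefore the GCIS-parsings $T = D_1 \cdots D_z$ and $\hat{T} = \hat{D}_1 \cdots \hat{D}_{\hat{z}}$ share all factor boundaries: $\hat{z} = z$ and $\hat{D}_j$ occupies the same interval as $D_j$, in particular $|\hat{D}_j| = |D_j|$.

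Next I would show that the non-terminal numbering agrees up to a global additive shift. Because $|\hat{D}_i| = |D_i|$ and character comparisons are rank-preserved, a position-by-position walk gives $D_i = D_j \iff \hat{D}_i = \hat{D}_j$ and $D_i \prec D_j \iff \hat{D}_i \prec \hat{D}_j$ (the ``proper prefix'' tie-break in the lexicographic order transfers for the same reason). Hence $|\{\, j : D_j \prec D_i \,\}| = |\{\, j : \hat{D}_j \prec \hat{D}_i \,\}|$ for every $i$, and by the formula defining $R_i$ the values $R_i$ and $\hat{R}_i$ differ only by the constant offset $\sigma+1$ versus $\hat{\sigma}+1$; that is, $R_i - R_j = \hat{R}_i - \hat{R}_j$ for all $i,j$. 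A uniform shift does not alter ranks, so $\rank_{\hat{G_1}}[i] = \rank_{G_1}[i]$ for every $i$, which is exactly the assertion that $\hat{G_1}$ is obtained from $G_1$ by a rank-preserving renaming of characters.

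For $\gis(\hat{T}) = \gis(T)$ I would induct on the GCIS height. The map $D_i \mapsto \hat{D}_i$ is a length-preserving bijection between $\mathcal{D}_1$ and $\hat{\mathcal{D}}_1$ (same equality classes, same lengths), so $\Vert \hat{\mathcal{D}}_1 \Vert = \Vert \mathcal{D}_1 \Vert$. Since $G_1$ and $\hat{G_1}$ have identical rank sequences and equal lengths, the whole argument reapplies with $(G_1, \hat{G_1})$ in place of $(T, \hat{T})$; one only notes that the per-level cumulative offset $\sum_{t} |P_t| + \sigma + 1$ in Definition~\ref{def_gcis} is again a global constant at that level and so cannot perturb ranks. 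Consequently the recursion terminates at the same height $r$ for both strings (having no LMS position in $G_k[2..|G_k|]$ is a rank-only condition), $\Vert \hat{\mathcal{D}}_t \Vert = \Vert \mathcal{D}_t \Vert$ for every $t$, and $|\hat{G}_r| = |G_r|$; summing the defining identity $\gis(T) = \Vert \mathcal{D}_1 \Vert + \cdots + \Vert \mathcal{D}_r \Vert + |G_r|$ (with base case $\gis = |T| = |\hat{T}|$ when $r = 0$) yields $\gis(\hat{T}) = \gis(T)$. I expect the only genuinely delicate step to be the middle one — handling lexicographic comparisons of factors of unequal length and making transparent that the per-level offsets are global constants — while the remainder is bookkeeping.
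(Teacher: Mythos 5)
Your proof is correct and follows exactly the route the paper takes: the paper dismisses this lemma as "immediately follows from Lemma~\ref{sl_define} and $\rank_{\hat{T}}[i]=\rank_T[i]$," and your argument is simply that one-line proof written out in full (rank-preserved comparisons give identical L/S types, hence identical LMS positions and parsings, hence rank-identical GCIS-strings, then recurse on the height). No substantive difference in approach, just a more explicit level of detail.
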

\begin{proof}
  The lemma immediately follows from Lemma~\ref{sl_define}
  and that $\rank_{\hat{T}}[i] = \rank_{T}[i]$ for every $1 \leq i \leq |T|$.
\end{proof}
Figure~\ref{gcis_fig3} shows a concrete example
for Lemma~\ref{gcis_identically_equal}.

\begin{figure}[tbh]
 \centering
 \includegraphics[keepaspectratio, scale=0.5]
      {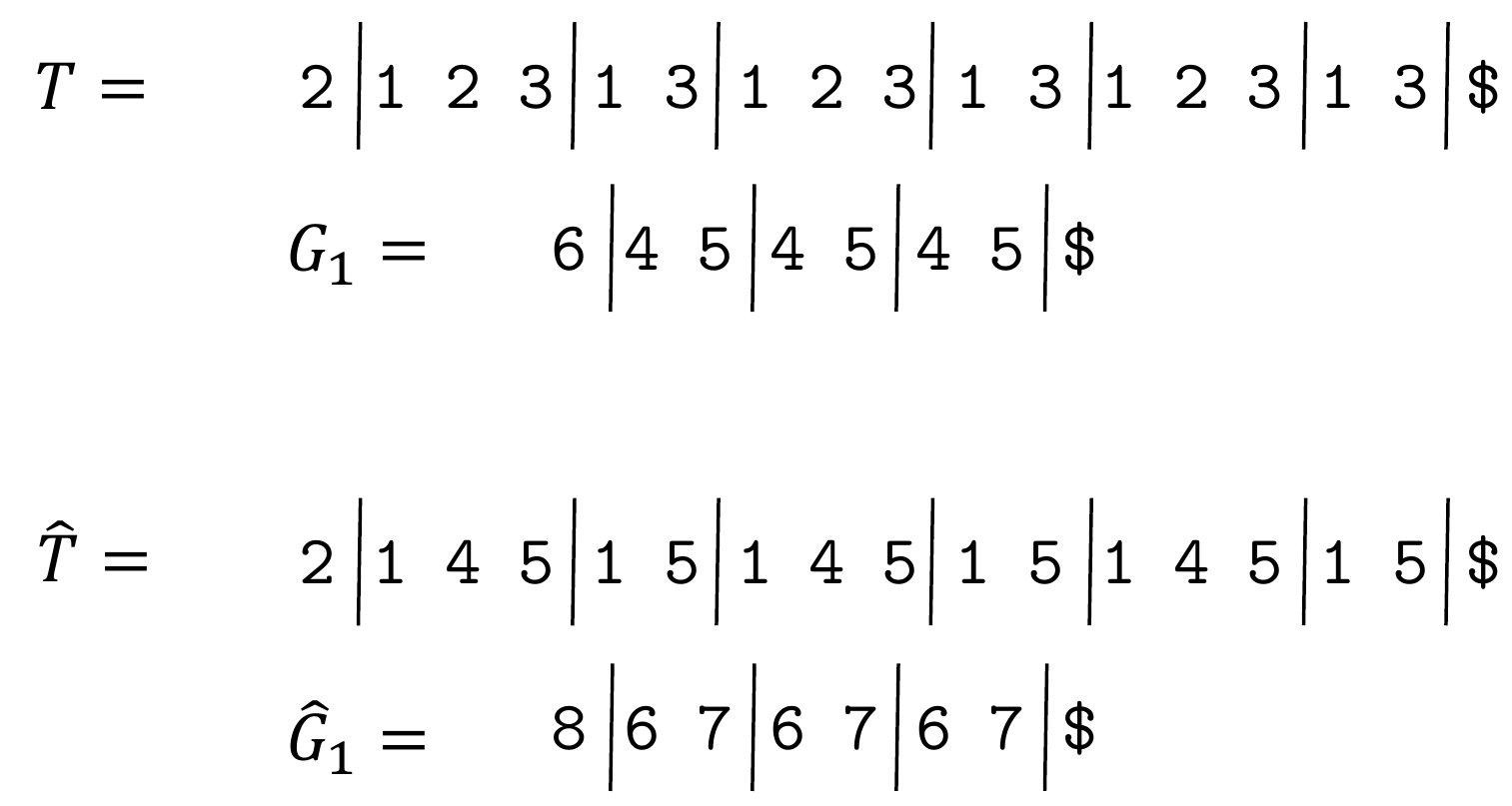}
 \caption{Two strings $T$ and $\hat{T}$ that can be obtained by replacing some characters in $T$ without changing the relative order of any characters, result in the same number of factors in the GCIS-parsing, and each length exactly matches in both of the strings. Therefore, $\Vert \mathcal{D}_1 \Vert =  \Vert\mathcal{\hat{D}}_1\Vert$ and $|G_1| = |\hat{G}_1|$ holds, and $\hat{G}_1$ is recursively the string that can be obtained by replacing some characters in $G_1$ without changing the relative order of any characters. Therefore, we can consider the size of GCIS of $T$ using such a string $\hat{T}$ instead of $T$ itself.}
 \label{gcis_fig3}
\end{figure}

A natural consequence of Lemma~\ref{gcis_identically_equal} is that
edit operations which do not change the relative order
of the characters in $T$ do not affect the size of the grammar.

From now on, we analyze how the size of the GCIS of the string $T$
can increase after the edit operation in the string $T'$.
In the following lemmas, let $1 \leq h \leq r$,
where $r$ is the height of the GCIS grammar for $T$.

\begin{lemma}
\label{gcis_changex}
If $F(G_h,G'_h) = (x,y)$, then
 $| \mathcal{D}_{h+1} \setminus {\mathcal{D'}}_{h+1}| \leq 2 + \lceil (x+1)/2 \rceil $.
\end{lemma}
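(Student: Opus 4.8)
The goal is to bound the number of distinct factors appearing in the GCIS-parsing of $G_h$ but not of $G'_h$, given that $G'_h$ differs from $G_h$ only in a window: $G'_h$ is obtained from $G_h$ by deleting $x$ characters at some position $c$ and inserting $y$ characters there, so $F(G_h,G'_h)=(x,y)$. The key structural fact to exploit is that the GCIS-parsing breaks $G_h$ at LMS positions, and whether a position is an LMS position depends only on a \emph{local} neighborhood: by Lemma~\ref{sl_define}, the S/L type of $G_h[k]$ is determined by scanning forward until the first position where the character strictly changes, and LMS-ness of $k$ depends on the types of $k-1$ and $k$. So an edit confined to a window can only change the LMS status of positions that are ``close'' (in the run-structure sense) to that window, and every factor $D_j$ of the parsing that lies entirely outside the affected zone appears verbatim in $G'_h$, hence in $\mathcal{D}'_{h+1}$.

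\textbf{Main steps.} First I would make precise the ``affected zone'': identify the set of LMS positions of $G_h$ whose LMS status could possibly differ in $G'_h$, or more carefully, the set of \emph{factors} $D_j$ of the parsing of $G_h$ that are not guaranteed to reappear identically in the parsing of $G'_h$. Because an LMS position is detected from a constant amount of local information once runs are collapsed, the only factors at risk are (i) the factor(s) straddling or immediately flanking the edit window — this contributes a bounded constant — and (ii) factors whose \emph{content}, though unchanged, now sits in a context where an LMS boundary shifts; again, boundary shifts propagate only until the next ``stable'' LMS position, which by the run-based argument is at most one or two factors away on each side. Second, I would count: the edited window of length $\max(x,y)$ together with its immediate flanking characters spans at most $\lceil (x+1)/2\rceil$ full factors of $G_h$ (factors have length $\geq 2$ since each runs from one LMS position to the next, and consecutive LMS positions are at least two apart), plus a constant number — which the lemma records as $2$ — for the boundary factors on either side that may be merged, split, or relabeled. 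Summing gives $|\mathcal{D}_{h+1}\setminus \mathcal{D}'_{h+1}| \leq 2 + \lceil (x+1)/2\rceil$. Third, I would verify each of the three edit types (substitution, insertion, deletion) is subsumed: substitution is $(1,1)$, insertion $(0,1)$, deletion $(1,0)$, and in every case the window length entering the factor count is $O(1)$, so the bound holds with the stated small constants.

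\textbf{The hard part.} The delicate point is controlling how far a change in LMS structure can \emph{propagate} away from the literal edit window. A priori, flipping one character could change an L-run into an S-run and thereby shift an LMS boundary, and one worries this could cascade. The crux of the argument is to show the cascade is bounded: once one moves past the first LMS position of $G_h$ strictly to the right of the edit window that is also an LMS position of $G'_h$ (and symmetrically to the left), the two parsings agree, because the suffix/prefix beyond that point is literally identical in $G_h$ and $G'_h$ and LMS-ness there depends only on that identical local data. Pinning down that ``first stable LMS position on each side is at most a constant number of factors away'' — and that the in-between factors number at most $\lceil(x+1)/2\rceil$ plus the constant $2$ — is where the real work lies; the rest is bookkeeping over the three edit types and the length-$\geq 2$ property of GCIS factors.
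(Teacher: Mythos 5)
Your proposal follows essentially the same route as the paper: factors strictly left and strictly right of the edit window are preserved verbatim because LMS-ness is determined locally (via the run-propagation rule of Lemma~\ref{sl_define}), and the at-risk factors are counted as two boundary factors plus at most $\lceil (x+1)/2\rceil$ factors whose starting LMS positions fall inside the length-$x$ deleted window, using that consecutive LMS positions are at least $2$ apart. One small imprecision: since you are bounding $\mathcal{D}_{h+1}\setminus\mathcal{D}'_{h+1}$ (factors of the \emph{original} parsing), the window whose length enters the count is the deleted one of length $x$, not $\max(x,y)$ — your final tally is nevertheless the right one.
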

\begin{proof}
First, let $c$ be the position such that $G'_h$ can be obtained from $G_h$ by deleting a substring of length $x$ from position $c$ and inserting a substring of length $y$ to position $c$. Let $z$ and $z'$ be the numbers of factors in the GCIS-parsing of $G_h$ and $G'_h$, respectively.

Considering $k$ where $i_k \leq c < i_{k+1}$ in $G_h$, the LMS positions $i_1, \ldots, i_{k-1}$ are also the LMS positions in $G'_h$, and for all $j$ where $1 \leq j \leq k-2$, $D_j = D'_j$ holds.
Similarly, for $l$ where $ i_{z-l-1} \leq c+x < i_{z-l} $ in $G_h$, the positions $i_{z-l}, \ldots, i_{z}$ and corresponding positions $i_{z'-l}, \ldots, i_{z'}$ in $G'_h$ are also LMS positions. Therefore, for $l \leq j \leq z$ and $j' = j +(z'-z)$, $D_j = D'_{j'}$. Note that $i_{z-l-1} - i_{k+1} < x$. Since $|D_j| \geq 2, |D'_j| \geq 2$ with $2 \leq j \leq z$, we obtain $| \mathcal{D}_{h+1}  \setminus {\mathcal{D'}}_{h+1}| \leq |\{ D_{k-1} \ldots D_{z-l-1} \} | \leq 2 + \lceil (x+1)/2 \rceil $.
\end{proof}

\begin{lemma}
\label{gcis_order4}
If $F(G_h,G'_h) = (x,y)$, $\Vert\mathcal{D'}_h\Vert \leq 4\Vert\mathcal{D}_h\Vert - x + y$.
\end{lemma}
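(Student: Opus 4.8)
The plan is to relate $\Vert\mathcal{D'}_h\Vert$, the total length of the distinct factors in the GCIS-parsing of $G'_h$, to $\Vert\mathcal{D}_h\Vert$ by a careful boundary analysis around the edited region, combined with an averaging/length argument reminiscent of the treatment in Lemma~\ref{gcis_changex}. First I would fix the position $c$ at which the edit to $G_h$ takes place, so that $G'_h$ is obtained from $G_h$ by removing a length-$x$ substring at position $c$ and inserting a length-$y$ substring there. As in the proof of Lemma~\ref{gcis_changex}, let $i_1,\dots,i_{z+1}$ be the LMS positions of $G_h$ and let $k$ be chosen with $i_k \le c < i_{k+1}$, and let $l$ be chosen with $i_{z-l-1} \le c+x < i_{z-l}$; then $D_j = D'_j$ for $1 \le j \le k-2$ and the symmetric matching holds for $j \ge z-l-1$ (up to the index shift $z'-z$). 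So the factors of $G_h$ and $G'_h$ coincide outside of a bounded ``window'' $D_{k-1},\dots,D_{z-l-1}$ of $G_h$ and the corresponding window of $G'_h$, and from the proof of Lemma~\ref{gcis_changex} this window of $G_h$ consists of at most $2 + \lceil (x+1)/2\rceil$ factors and spans at most roughly $x + O(1)$ positions of $G_h$ (since $i_{z-l-1} - i_{k+1} < x$ and the boundary factors are $O(1)$ long each).

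The key step is then to bound the total length of the \emph{new} distinct factors, i.e.\ those in $\mathcal{D'}_h \setminus \mathcal{D}_h$. Each such new factor $D'_{j'}$ must overlap the edited window of $G'_h$; its length is therefore at most (window length in $G'_h$) $= (i_{z-l-1}-i_{k+1}) - x + y + O(1) \le y + O(1)$, but more usefully a new distinct factor of $G'_h$ that extends into the unchanged region on the left reuses the prefix up to $i_k$ and on the right reuses the suffix from $i_{z-l}$, so its ``new'' contribution is controlled. The cleanest way to get the claimed $4\Vert\mathcal{D}_h\Vert - x + y$ bound is: write $\Vert\mathcal{D'}_h\Vert = \Vert\mathcal{D'}_h \cap \mathcal{D}_h\Vert + \Vert\mathcal{D'}_h\setminus\mathcal{D}_h\Vert \le \Vert\mathcal{D}_h\Vert + \Vert\mathcal{D'}_h\setminus\mathcal{D}_h\Vert$, and then bound $\Vert\mathcal{D'}_h\setminus\mathcal{D}_h\Vert$ by at most three times the total length of the ``window'' factors of $G_h$ plus the net change $-x+y$. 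The factor of $3$ (giving $4 = 1 + 3$ overall) arises because a single boundary shift of the LMS positions can cause a window factor of $G_h$ of length $\ell$ to be replaced by new factors whose lengths sum to at most $3\ell$ — one copy from the factor itself being re-segmented, and up to two more from the factor being merged with $O(\ell)$-bounded neighbouring material on each side whose contribution is charged here rather than to the unchanged region.

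Concretely, the ordered steps I would carry out are: (1) set up $c$, $k$, $l$ exactly as in Lemma~\ref{gcis_changex} and record $D_j=D'_j$ outside the window together with $i_{z-l-1}-i_{k+1}<x$; (2) observe $\Vert\mathcal{D'}_h\Vert \le \Vert\mathcal{D}_h\Vert + \Vert\mathcal{D'}_h\setminus\mathcal{D}_h\Vert$; (3) show every factor counted in $\mathcal{D'}_h\setminus\mathcal{D}_h$ is confined to the window plus its two bounding factors of $G'_h$, whose total length is the window length of $G_h$ shifted by the net edit, i.e.\ at most $\Vert\{D_{k-1},\dots,D_{z-l-1}\}\Vert - x + y$; (4) bound $\Vert\mathcal{D'}_h\setminus\mathcal{D}_h\Vert$ by $3\Vert\{D_{k-1},\dots,D_{z-l-1}\}\Vert - x + y$ by the re-segmentation/merging charging argument; (5) note $\Vert\{D_{k-1},\dots,D_{z-l-1}\}\Vert \le \Vert\mathcal{D}_h\Vert$ since these are distinct factors, and combine to get $\Vert\mathcal{D'}_h\Vert \le 4\Vert\mathcal{D}_h\Vert - x + y$. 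The main obstacle, and the place requiring genuine care, is step (4): precisely controlling how far a shift of an LMS boundary can propagate and hence how much length from neighbouring factors must be ``charged'' into the new-factor budget. This is exactly where Lemma~\ref{sl_define} (the S/L type determination rule) does the work — a type change at one position only forces re-examination of a maximal run of equal characters and then stabilises — so I would use it to argue the window's influence on factor boundaries is genuinely local and the constant $3$ suffices.
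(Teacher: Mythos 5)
Your proposal follows essentially the same route as the paper's (very terse) proof: split $\Vert\mathcal{D'}_h\Vert$ into the surviving factors, bounded by $\Vert\mathcal{D}_h\Vert$, plus the new distinct factors, which are confined to the window around the edit delimited by the LMS positions near $i_{k-1}$ and $i_{z-l}$ and whose total length is at most three full factors of $G_h$ plus the net change $-x+y$, i.e.\ $3\Vert\mathcal{D}_h\Vert - x + y$. The only quibble is that your factor of $3$ should be justified simply by the window containing at most the three boundary factors $D_{k-1}, D_k, D_{z-l-1}$ (each of length at most $\Vert\mathcal{D}_h\Vert$) while the middle factors between $i_{k+1}$ and $i_{z-l-1}$ total less than $x$ and cancel against the $-x$, rather than by the vaguer ``re-segmentation/merging charging'' scheme of your step (4) — but this is the same accounting the paper performs in one line.
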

\begin{proof}
Considering $k$ where $i_k \leq c < i_{k+1}$ in $G_h$ and $l$ where $ i_{z-l-1} \leq c+x < i_{z-l} $, the total length of new factors to be added in $\mathcal{D'}_h$, is at most $i_{z-l}-x+y - i_{k-2} \leq 3\Vert\mathcal{D}_h\Vert- x-y$.
\end{proof}

\begin{lemma}
\label{gcis_changez}
If $F(G_h,G'_h) = (x,y)$, $|G'_{h+1}| \leq |G_{h+1}| + 1 + \lfloor y/2 \rfloor$.
\end{lemma}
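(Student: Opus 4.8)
\textbf{Proof plan for Lemma~\ref{gcis_changez}.}
The plan is to mirror the analysis used for Lemma~\ref{gcis_changex} and Lemma~\ref{gcis_order4}, but now tracking how the \emph{number} of factors (equivalently, the number of LMS positions) in the GCIS-parsing changes between $G_h$ and $G'_h$. Recall $|G_{h+1}| = z_h$ and $|G'_{h+1}| = z'_h$, where $z_h$ and $z'_h$ are the number of LMS positions (minus one) in $G_h$ and $G'_h$ respectively. So the statement is really a bound $z'_h \leq z_h + 1 + \lfloor y/2 \rfloor$.

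First I would fix the position $c$ and lengths $x,y$ as in Definition~\ref{def_gcis2}, so that $G'_h$ is obtained from $G_h$ by replacing the length-$x$ block starting at $c$ with a length-$y$ block. As established in the proof of Lemma~\ref{gcis_changex}, the LMS positions of $G_h$ strictly to the left of the edited region (specifically $i_1,\dots,i_{k-1}$ where $i_k \le c < i_{k+1}$) are unchanged, and the LMS positions to the right of the edited region are also preserved up to the global shift $z'-z$; concretely the suffixes $G_h[i_{z-l}..]$ and $G'_h[i_{z'-l}..]$ agree for the tail factors. Thus the only factors that can change are those overlapping a bounded window around the edit. The key point is that in $G_h$ this window spans only the factors $D_{k-1},\dots,D_{z-l-1}$, of which there are at most a constant number because each factor has length $\ge 2$ and the window has length $O(x)$ — but here we need the sharper count on the \emph{new} side: the edited region of $G'_h$ together with the at most one incomplete factor to its immediate left and right forms a substring of length at most $y + O(1)$, which by the length-$\ge 2$ property of LMS factors contributes at most $1 + \lfloor y/2 \rfloor$ factors, while the boundary factors on the left and right of the window in $G_h$ are either preserved or merged. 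Summing: the count of factors strictly outside the window is unchanged, and inside the window $z'_h - z_h \le 1 + \lfloor y/2 \rfloor$.

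The main obstacle I anticipate is making the boundary bookkeeping precise: when the edit changes the type (S/L, per Lemma~\ref{sl_define}) of the characters adjacent to the edited block, an LMS position can appear or disappear right at the seam between the old context and the new block, and a factor that was split in $G_h$ can become unsplit in $G'_h$ or vice versa. One must check that such seam effects cost at most the additive $+1$ — essentially because a single change of type at one position creates or destroys at most one LMS position on each side, and the accounting is arranged so that a newly created LMS position on the left seam is compensated by the $+1$ slack while the right seam is absorbed into the tail alignment $z'-z$. The remaining step — that the $\lfloor y/2 \rfloor$ bound is exactly right because any run of $y$ fresh characters, flanked by at most one character of old context on each side, can be carved into LMS factors of length $\ge 2$ giving at most $\lceil (y+2)/2 \rceil = 1 + \lfloor y/2 \rfloor$ pieces — is a routine application of the same pigeonhole used in Lemma~\ref{gcis_changex}, so I would state it briefly and refer back.
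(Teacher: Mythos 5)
Your overall framing is right — localize the change to a window of candidate positions around the seam ($i+1$ on the left, and $c,c+1,\dots,c+y$ on the right, giving $y+2$ candidates), and use the fact that two adjacent positions cannot both be LMS. But the final step contains an arithmetic error that is exactly where the difficulty of this lemma lives: $\lceil (y+2)/2\rceil$ equals $1+\lfloor y/2\rfloor$ only when $y$ is even; for odd $y$ it equals $2+\lfloor y/2\rfloor$, one more than the lemma allows. The case $y=1$ (which is the case actually needed for the main theorem, since single-character edits give $F(T,T')\in\{(1,1),(0,1),(1,0)\}$) already exhibits the problem: $\lceil 3/2\rceil = 2$, whereas the lemma claims an increase of at most $1+\lfloor 1/2\rfloor = 1$. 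Your remark that "seam effects cost at most the additive $+1$" does not close this, because in your own accounting the left-seam candidate $i+1$ already consumes that $+1$, and the claim that "the right seam is absorbed into the tail alignment" is asserted, not proved.

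The paper's proof supplies precisely the missing compensation argument. It proceeds by contradiction: if $2+\lfloor y/2\rfloor$ new LMS positions were created with $y$ odd, non-adjacency forces them to be exactly $i+1, c+1, c+3,\dots,c+y$. That selection forces $G_h[i+1]$ to be of type L and, since $G_h[c+x]$ and $G'_h[c+y]$ have the same type, forces $G_h[c+x]$ to be of type S; an L position followed later by an S position guarantees an LMS position inside $G_h[i+1..c+x]$ that is \emph{destroyed} in $G'_h$, so the net increase is still at most $1+\lfloor y/2\rfloor$. You need this (or an equivalent) type-tracking argument at both seams; a pure length/pigeonhole count cannot give the tight constant. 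A secondary, smaller issue: your count via "a substring of length at most $y+O(1)$ divided into factors of length $\geq 2$" is imprecise because the incomplete flanking factors can be arbitrarily long; counting candidate \emph{new LMS positions} (as the paper does) rather than factor lengths avoids this.
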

\begin{proof}
Assume $|G'_{h+1}| >  |G_{h+1}| + 1 + \lfloor y/2 \rfloor $. In other words, there are at least $2 + \lfloor y/2 \rfloor$ positions which are not LMS positions in $G_h$ but are LMS positions in $G'_h$.
Let $i$ be the right-most position where $G_h[i] \neq G_h[c]$ and $i < c$. For $k$ of $1 \leq k \leq i$, $G_h[k]$ and $G'_h[k]$ are of the same type.

For all $k$ with $c < k \leq |G_h|-x$, $G_h[k+x]$ and $G'_h[k+y]$ are of the same type.
Therefore, only $G_h[i..c+x+1]$ and $G'_h[i..c+y+1]$ can introduce new factors. 
Note that $G_h[i+1..c-1]$ and $G'_h[i+1..c-1]$ are of the same types by Lemma~\ref{sl_define},
There are $y+2$ positions $i+1, c, c+1, \ldots, c+y$ that can be new LMS positions in $G'_h$.
Since any LMS position must be the left-most position of consecutive type S positions, two possible positions adjacent each other cannot be LMS positions at the same time. Therefore, it is impossible to create $2 + \lfloor y/2 \rfloor$ new LMS positions if $y$ is even. If $y$ is odd, we can make $2 + \lfloor y/2 \rfloor$ new LMS positions in $G'_h$ by selecting $i+1, c+1, c+3,  \ldots, c+y$. However, in that case, $G_h[i+1]$ must be of type L, and $G_h[c+x]$ must be of type S since $G_h[c+x]$ and $G'_h[c+y]$ are of the same type. Then, there is at least an LMS position between $T[c]$ and $T[c+y]$, and we cannot create $2 + \lfloor y/2 \rfloor$ new LMS positions in $G'_h$, or there is at least an LMS position in $G_h[c \ldots c+x]$ that is not in $G'_h$. Therefore, whichever $y$ is even or odd, $|G'_{h+1}| \leq |G_{h+1}| + 1 + \lfloor y/2 \rfloor$ holds.
See Figure~\ref{gcis_fig2} for illustration.
\end{proof}

\begin{figure}[tbh]
 \centering
 \includegraphics[keepaspectratio, scale=0.5]
      {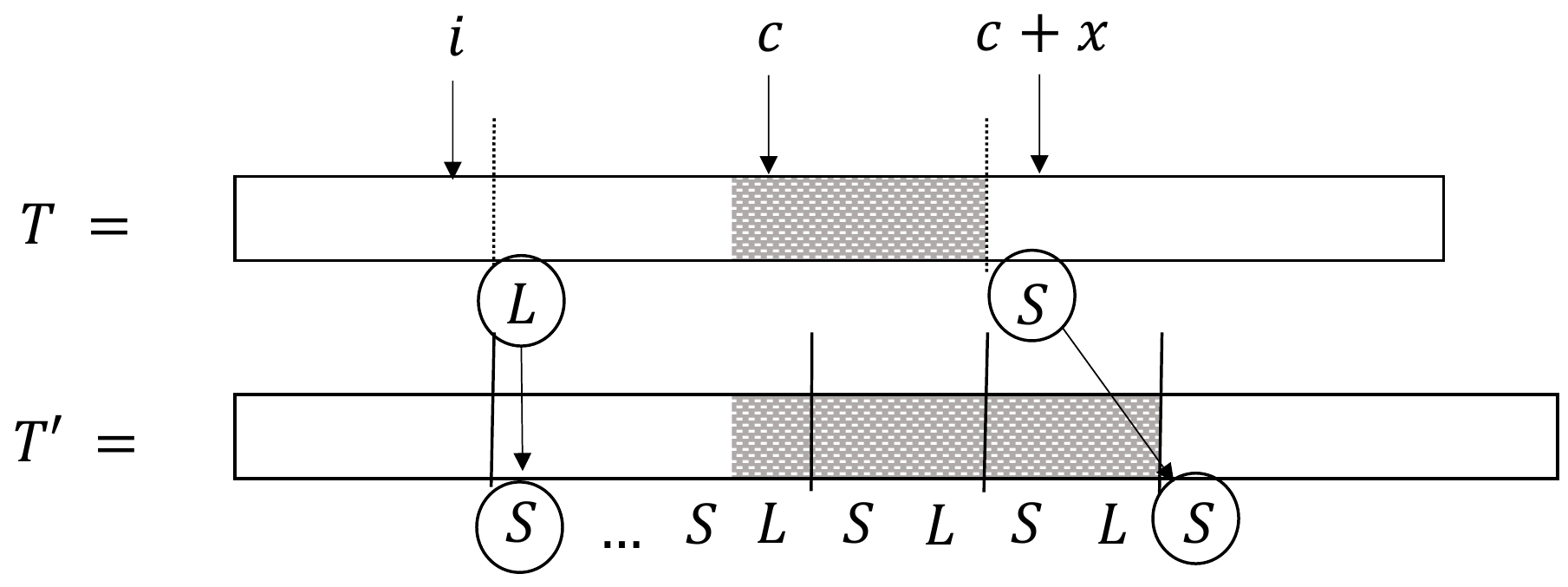}
 \caption{The case that there are $2 + \lfloor y/2 \rfloor$ new LMS positions in $T'$.  The position $i+1$ in $T$ must be of type L to turn position $i+1$ to a new LMS position in $T'$. The position $c+x$ in $T$ must be of type S to turn position $c+y$ to a new LMS position in $T'$. Then, there is at least an LMS position in $T[i+1..c+x]$. }
 \label{gcis_fig2}
\end{figure}

\begin{lemma}
\label{gcis_changexysub}
If $F(G_h,G'_h) = (x,y)$, 
let $a = | \mathcal{D}_{h+1} \setminus {\mathcal{D'}}_{h+1}|$, $b = | \mathcal{D'}_{h+1} \setminus {\mathcal{D}}_{h+1}|$. Then
$a \leq 2 + \lceil (x+1)/2 \rceil $, 
$b \leq 2 + \lceil (y+1)/2 \rceil $, 
$a+b \leq 4+\lfloor (x+y)/2 \rfloor$ hold.
\end{lemma}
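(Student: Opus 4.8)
The bound $a \le 2 + \lceil (x+1)/2 \rceil$ is exactly Lemma~\ref{gcis_changex}, so there is nothing to prove there. For the bound on $b$, I would use the obvious symmetry of the $F$-notation: $F(G_h,G'_h) = (x,y)$ implies $F(G'_h,G_h) = (y,x)$, since undoing a replacement of a length-$x$ substring by a length-$y$ one is a replacement of a length-$y$ substring by a length-$x$ one at the same position. Applying Lemma~\ref{gcis_changex} with $G_h$ and $G'_h$ interchanged then gives $b = |\mathcal D'_{h+1}\setminus\mathcal D_{h+1}| \le 2 + \lceil (y+1)/2 \rceil$.

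For the joint bound I would not merely add the previous two estimates — their sum is $4 + \lceil(x+1)/2\rceil + \lceil(y+1)/2\rceil$, which is one too large when $x$ and $y$ are both even — but sharpen the window count from the proof of Lemma~\ref{gcis_changex}. That proof fixes $c$, picks $k$ with $i_k \le c < i_{k+1}$ and $l$ with $i_{z-l-1}\le c+x<i_{z-l}$, shows $D_j = D'_j$ for $j \le k-2$ and $D_j = D'_{j+(z'-z)}$ for $j \ge z-l$, and bounds $a$ by the number of $G_h$-factors in the dirty window $[i_{k-1},i_{z-l})$: the two boundary factors $D_{k-1},D_k$ together with $D_{k+1},\dots,D_{z-l-1}$, whose starting positions lie in the interval $(c,c+x]$ of size at most $x$, so that (every GCIS factor having length at least $2$) there are at most $\lceil(x+1)/2\rceil$ of them; symmetrically $b$ is bounded via the dirty window of $G'_h$, which is the matching interval, longer by exactly $y-x$. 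The key observation is that the leftmost boundary factor $D_{k-1}$ ends strictly before the edited position $c$, so that, provided position $i_k$ survives as an LMS position in $G'_h$ (the generic case), $D_{k-1}$ coincides with its $G'_h$-counterpart and hence lies in neither $\mathcal D_{h+1}\setminus\mathcal D'_{h+1}$ nor $\mathcal D'_{h+1}\setminus\mathcal D_{h+1}$; it must therefore be discarded from both $a$ and $b$. This replaces the ``$2+2$'' of boundary factors by ``$1+1$'', giving $a+b \le 2 + \lceil(x+1)/2\rceil + \lceil(y+1)/2\rceil$, and a short parity check shows the right-hand side is at most $4 + \lfloor(x+y)/2\rfloor$.

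The one genuine subtlety — and what I expect to be the main obstacle — is the degenerate situation in which the edit destroys the LMS position $i_k$ in $G'_h$, so that $D_{k-1}$ and $D_k$ merge there and $D_{k-1}$ is no longer literally a factor of $G'_h$, costing us the boundary saving above. This can occur only when the suffixes starting at $i_k-1$ and $i_k$ agree up to position $c$, i.e.\ a local period-$1$ run is forced around the edit; in that regime the positions inside the window that can turn into new LMS positions are constrained — adjacent candidates cannot both be LMS, exactly as in the bookkeeping of Lemma~\ref{gcis_changez} — which removes one from the interior count and compensates for the lost saving. Pinning down this trade-off is the only point where one argues about L/S types directly via Lemma~\ref{sl_define}; once it is settled, $a+b \le 4 + \lfloor(x+y)/2\rfloor$ follows in all cases.
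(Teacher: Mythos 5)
Your derivation of the two individual bounds is fine and coincides with the paper's (the paper likewise just invokes Lemma~\ref{gcis_changex} twice, once with the roles of $G_h$ and $G'_h$ swapped). The joint bound is where your argument diverges from the paper and where it does not go through. First, an arithmetic point that matters: the naive sum $4+\lceil(x+1)/2\rceil+\lceil(y+1)/2\rceil$ exceeds the target $4+\lfloor(x+y)/2\rfloor$ by $2$ whenever at least one of $x,y$ is even (and by $1$ only when both are odd), not by $1$ "when $x$ and $y$ are both even" as you write. Your generic-case saving of "$1+1$" is exactly calibrated to absorb an excess of $2$, so the final inequality $2+\lceil(x+1)/2\rceil+\lceil(y+1)/2\rceil\leq 4+\lfloor(x+y)/2\rfloor$ does check out \emph{if} both savings are available. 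But in your degenerate case you lose (at least one of) those savings and propose to recover only "one from the interior count"; when one of $x,y$ is even that recovers too little, so even granting your compensation claim the bound does not close in all cases.

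Second, and more fundamentally, the degenerate case is the entire content of the lemma beyond the trivial sum, and you do not prove it — you explicitly defer it ("once it is settled"). Note also that $D_{k-1}$ can fail to survive as a factor of $G'_h$ not only when $i_k$ ceases to be an LMS position but also when a new LMS position appears strictly inside $[i_{k-1}+1,i_k-1]$ (at the position $i+1$ of Lemma~\ref{gcis_changez}), splitting $D_{k-1}$; and when $i_k$ is destroyed the left boundary of the dirty window of $G'_h$ shifts to an earlier LMS position, so the two windows no longer share a common boundary factor and the "$1+1$" bookkeeping has to be redone from scratch. The paper avoids this occurrence-level surgery entirely: it argues by an extremal/parity analysis that the two individual maxima cannot be attained simultaneously — if, say, $y$ is odd and $b$ attains $2+\lceil(y+1)/2\rceil$, then by the counting in Lemma~\ref{gcis_changez} the new LMS positions of $G'_h$ are forced to be exactly $i+1,c+1,c+3,\ldots,c+y$, which forces $i+1$ and $c+x$ to be non-LMS in $G_h$ and hence decrements the bound on $a$ by one (and symmetrically when $x$ is odd and $a$ is maximal). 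To salvage your route you would need to carry out the L/S-type case analysis you postpone, and to strengthen the compensation in the degenerate case from a saving of $1$ to a saving of $2$ whenever some parameter is even.
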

\begin{proof}
We immediately get $a \leq 2 + \lceil (x+1)/2 \rceil $, $b \leq 2 + \lceil (x+1)/2 \rceil $ by a direct application of Lemma~\ref{gcis_changex}.
Assume $y \mod 2 = 1, b \leq 2 + \lceil (y+1)/2 \rceil $. Then, Lemma~\ref{gcis_changez} shows that there is only one possible combination of new $b$ LMS positions $i+1, c+1, c+3,  \ldots, c+y$ in $G'_h$. For that, neither $i+1$ nor $c+x$ can be  LMS positions in $G_h$ in this case since they must be new LMS positions in $G'_h$. Therefore, $a \leq 2 + \lceil (x+1)/2 \rceil - 1$ since there are no possible combination of $a+1$ LMS positions in $G_h$.
Assume $x \bmod 2 = 1$ and $a \leq 2 + \lceil (x+1)/2 \rceil $. Then, Lemma~\ref{gcis_changez} shows that there is only one possible combination of $a$ disappearing LMS positions $i+1, c+1, c+3,  \ldots, c+x$ in $G_h$. For that, neither $i+1$ not $c+y$ can be LMS positions in $G'_h$ in this case since they must be disappearing LMS positions in $G_h$. Therefore, $a \leq 2 + \lceil (x+1)/2 \rceil - 1$ since there are no possible combination of new $b+1$ LMS positions in $G'_h$.
\end{proof}

\begin{lemma}
\label{gcis_changexy}
If $F(G_h,G'_h) = (x,y)$, there are two strings $\hat{G}_{h+1}, \hat{G'}_{h+1}$ such that $\hat{G}_{h+1},\hat{G}'_{h+1}$ can be obtained by replacing some characters in $G_{h+1}, G'_{h+1}$ without changing the relative order of any characters in $G_{h+1},G'_{h+1}$, respectively, and $F(\hat{G}_{h+1},\hat{G}'_{h+1}) = (a,b)$, where 
$a \leq 2 + \lceil (x+1)/2 \rceil $, $b \leq 2 + \lceil (y+1)/2 \rceil $, and $a+b \leq 4+\lfloor (x+y)/2 \rfloor$.
\end{lemma}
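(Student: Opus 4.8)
The plan is to use the fact, recorded in Lemma~\ref{gcis_identically_equal} together with Lemma~\ref{sl_define}, that neither $\gis$ nor the GCIS-parsing of a string changes when the characters are relabeled in an order-preserving way. Hence we are free to rename the non-terminals produced at level $h+1$, and the goal is to choose renamings $\hat{G}_{h+1}$ of $G_{h+1}$ and $\hat{G'}_{h+1}$ of $G'_{h+1}$ that make the two strings \emph{literally} coincide outside one short window; the length of that window on each side is then exactly the pair $(a,b)$ claimed by the statement.

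First I would recall the structure exhibited in the proof of Lemma~\ref{gcis_changex}: writing $F(G_h,G'_h)=(x,y)$ with the change at position $c$, the GCIS-parsings of $G_h$ and $G'_h$ share a common prefix block of factors and a common suffix block of factors, and only a single contiguous middle block of factor occurrences is altered; moreover the altered middle block of $G_h$'s parsing spans fewer than $x$ positions, so, since every factor has length at least $2$, it consists of at most $2+\lceil (x+1)/2\rceil$ factors. Next I would build the renamings. Order the set $\mathcal{D}_{h+1}\cup\mathcal{D'}_{h+1}$ by the (global) lexicographic order on strings, and let $\hat{G}_{h+1}$ (resp.\ $\hat{G'}_{h+1}$) be obtained from $G_{h+1}$ (resp.\ $G'_{h+1}$) by replacing each non-terminal with the rank, in this combined order, of the factor it derives. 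Two facts then do all the work: (i) ranks in the combined order, restricted to $\mathcal{D}_{h+1}$, are order-isomorphic to the ranks GCIS actually assigns inside $G_{h+1}$ (a superset-rank is monotone, and GCIS assigns symbols by lexicographic rank of factors), and likewise for $\mathcal{D'}_{h+1}$, so $\hat{G}_{h+1},\hat{G'}_{h+1}$ are genuinely order-preserving relabelings of $G_{h+1},G'_{h+1}$; and (ii) any factor lying in $\mathcal{D}_{h+1}\cap\mathcal{D'}_{h+1}$ receives the same rank on both sides.

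From (ii) and the coincidence of the prefix and suffix factor blocks, the corresponding prefix and the corresponding suffix of $\hat{G}_{h+1}$ and of $\hat{G'}_{h+1}$ are equal as character strings, so $\hat{G'}_{h+1}$ arises from $\hat{G}_{h+1}$ by deleting, at one position, the relabeled changed middle block of $G_h$'s parsing (length $a$) and inserting the relabeled changed middle block of $G'_h$'s parsing (length $b$); that is, $F(\hat{G}_{h+1},\hat{G'}_{h+1})=(a,b)$. The bound $a\le 2+\lceil (x+1)/2\rceil$ is the count just quoted from the proof of Lemma~\ref{gcis_changex}; running the same argument with the roles of $G_h$ and $G'_h$ exchanged, which swaps $x$ and $y$, gives $b\le 2+\lceil (y+1)/2\rceil$; and the joint bound $a+b\le 4+\lfloor (x+y)/2\rfloor$ is exactly what Lemma~\ref{gcis_changexysub} delivers, its parity analysis (built on Lemma~\ref{gcis_changez}) showing that the two middle blocks cannot simultaneously attain their individual maxima.

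The only delicate point—and the main obstacle—is establishing (i) and (ii) simultaneously: producing one relabeling scheme that is order-preserving for $G_{h+1}$ and for $G'_{h+1}$ at once, while agreeing on their shared factors. This is feasible precisely because the orderings GCIS uses on $\mathcal{D}_{h+1}$ and on $\mathcal{D'}_{h+1}$ are both restrictions of the \emph{single} global lexicographic order on strings, so there is no conflict to reconcile. Beyond that, the remaining effort is bookkeeping: pinning down the common prefix/suffix factor blocks from the proof of Lemma~\ref{gcis_changex}, checking that the lengths of the changed middle blocks are controlled by the quantities bounded in Lemmas~\ref{gcis_changex}--\ref{gcis_changexysub}, and verifying that the deleted and inserted windows of $F(\hat{G}_{h+1},\hat{G'}_{h+1})$ sit at the same position.
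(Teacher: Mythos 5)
Your proposal is correct and follows essentially the same route as the paper: the paper also relabels the level-$(h+1)$ non-terminals by their rank with respect to the combined factor set (via $\hat{S}_p = |\{D_s \prec D_p\}| + |\{D'_s \prec D_p\}|$), so that shared factors receive identical symbols, the common prefix/suffix blocks coincide literally, and the window lengths $(a,b)$ are exactly the counts bounded by Lemmas~\ref{gcis_changex} and~\ref{gcis_changexysub}. Your rank-in-the-union construction is the same device in slightly different notation.
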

\begin{proof}
Assume $G_h = D_1, \ldots, D_z$ and $G'_h = D'_1, \ldots, D'_{z'}$ are the GCIS-parsings of $G_h$ and $G'_h$, respectively. By Lemma~\ref{gcis_changex}, there are at most $j = 2 + \lceil (x+1)/2 \rceil $ consecutive factors $D_i, \ldots, D_{i+j-1}$ in $ \mathcal{D}_{h+1} \setminus {\mathcal{D'}}_{h+1}$, and at most $\hat{j} = 2 + \lceil (y+1)/2 \rceil $ consecutive factors $D_{\hat{i}}, \ldots, D_{\hat{i}+\hat{j}-1}$ in $ \mathcal{D'}_{h+1} \setminus {\mathcal{D}}_{h+1}$ and $D_k = D'_k$ for all $1 \leq k \leq \max(i,\hat{i})$, and $D_{z-k} = D'_{z'-k}$ for all $0 \leq k \leq \max(z-i-j-1,z'-\hat{i}-\hat{j}-1)$. By Lemma~\ref{gcis_changexysub}, $j + \hat{j} \leq 4+\lfloor (x+y)/2 \rfloor $. Let
\begin{eqnarray*}
\hat{S_p} & = & |\{D_s : D_s \prec D_p (1 \leq s \leq z)\}|+|\{D'_s : D'_s \prec D_p (1 \leq s \leq z')\}|,  \\
\hat{S'_p} & = & |\{D_s : D_s \prec D'_p (1 \leq s \leq z)\}|+|\{D'_s : D'_s \prec D'_p (1 \leq s \leq z')\}|.  
\end{eqnarray*}
Then, the string $\hat{G}_{h+1} = \hat{S}_1 \cdots \hat{S}_z$ can be obtained from $G_{h+1}$ by replacing some characters in $G_{h+1}, G'_{h+1}$ without changing the relative order of any characters, and $\hat{G'_{h+1}}$ as well. In addition, $F(\hat{G}_{h+1},\hat{G}'_{h+1}) = (j,\hat{j})$ holds because $\hat{R}_k = \hat{R}'_k$ for all $1 \leq k \leq \max(i,\hat{i})$, and $\hat{R}_{z-k} = \hat{R}'_{z'-k}$ for all $0 \leq k \leq \max(z-i-j-1, z'-\hat{i}-\hat{j}-1)$. See Figure~\ref{gcis_fig5}
. \end{proof}

\begin{figure}[tbh]
 \centering
 \includegraphics[keepaspectratio, scale=0.5]
      {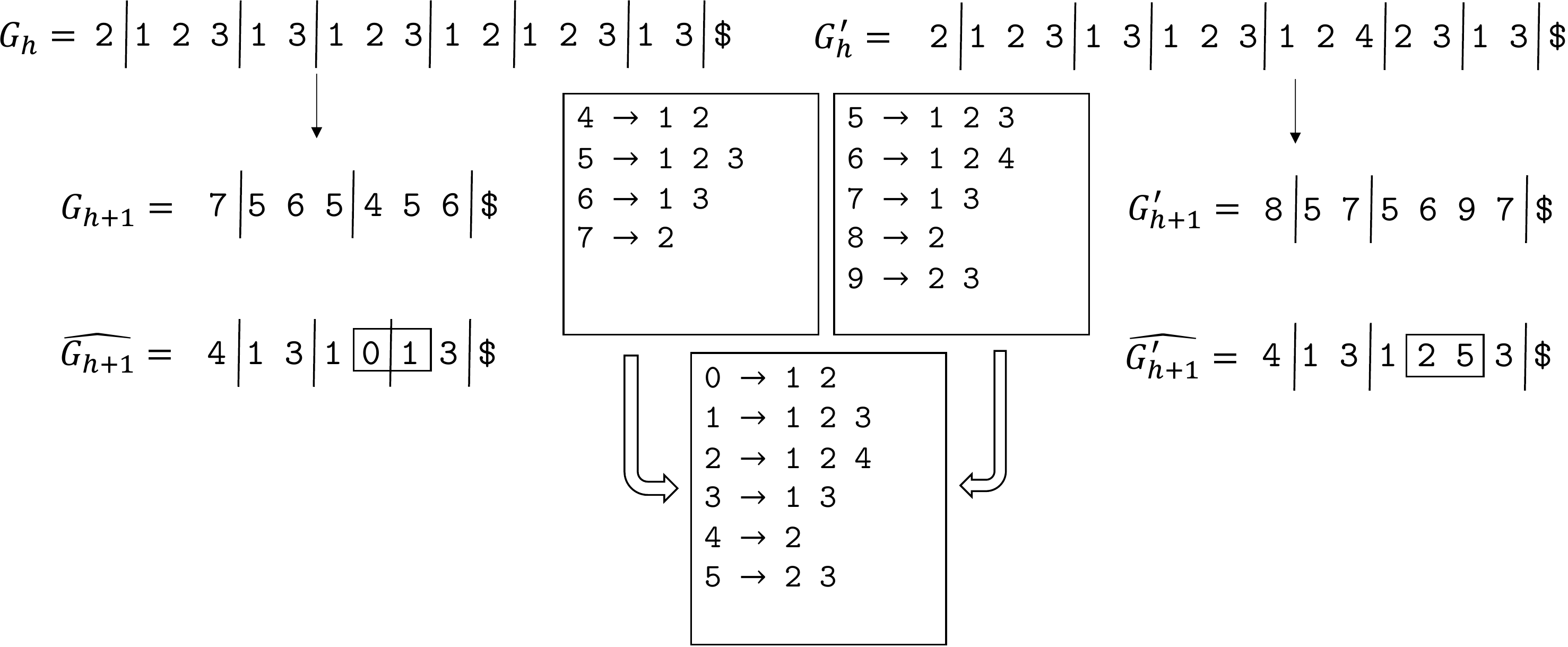}
      \caption{Examples of $\hat{G_{h+1}}$ and $\hat{G}'_{h+1}$ for strings $G_h$ and $G'_h$, where $G'_h$ can be obtained from $G_h$ by substituting a $\mathtt{1}$ with a $\mathtt{4}$. The box under the 2 other boxes shows the ``common'' productions common to $\hat{G_{h+1}}$ and $\hat{G'_{h+1}}$, e.g., by applying $\mathtt{1} \to \mathtt{123}$ to the occurrences of $\mathtt{1}$ in $\hat{G}_{h+1}$ and $\hat{G}'_{h+1}$,
        we obtain the corresponding substrings $\mathtt{123}$ in $G_h$ and $G'_h$. Since the common number is assigned to each equal factors in $G_h$ and $G'_h$, the size of the symmetric difference of $\hat{G}_{h+1}$ and $\hat{G}'_{h+1}$ equals to the number of factors changed by substitution, insertion, or deletion from $G_h$ to $G'_h$, which is four in this example.
      }
 \label{gcis_fig5}
\end{figure}

\begin{lemma}
\label{gcis_colloxy}
If $F(T,T') \in \{ (1,1), (1,0), (0,1)\}$, then there are two strings $\hat{G}_1,\hat{G}'_1$ such that $\hat{G}_1,\hat{G}'_1$ can be obtained by replacing some characters in $G_1,G'_1$ without changing the relative order of any characters in $G_1,G'_1$, respectively, and $F(\hat{G_1},\hat{G'_1}) = (a,b)$, where 
$a \leq 4 $, $b \leq 4 $, $a+b \leq 7$.
\end{lemma}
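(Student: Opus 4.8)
The plan is to instantiate Lemma~\ref{gcis_changexy} at height $h = 0$. Recall that $G_0 = T$ and $G'_0 = T'$, and that our single-character edit operation gives $F(T,T') = (1,1)$ for substitution, $F(T,T') = (0,1)$ for insertion, and $F(T,T') = (1,0)$ for deletion. In all three cases the pair $(x,y) = F(G_0, G'_0)$ satisfies $x \le 1$ and $y \le 1$. Applying Lemma~\ref{gcis_changexy} with $h = 0$ directly produces strings $\hat{G}_1, \hat{G}'_1$, obtained from $G_1, G'_1$ respectively by relabelling characters without changing the relative order of any characters, such that $F(\hat{G}_1, \hat{G}'_1) = (a,b)$ with
\[
  a \le 2 + \lceil (x+1)/2 \rceil, \qquad b \le 2 + \lceil (y+1)/2 \rceil, \qquad a + b \le 4 + \lfloor (x+y)/2 \rfloor.
\]

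The remaining work is purely arithmetic bookkeeping over the three edit types. For substitution, $(x,y) = (1,1)$, so $\lceil (x+1)/2 \rceil = 1$, giving $a \le 3$, $b \le 3$, and $\lfloor (x+y)/2 \rfloor = 1$, giving $a + b \le 5$. For insertion, $(x,y) = (0,1)$: then $\lceil (x+1)/2 \rceil = 1$ so $a \le 3$, $\lceil (y+1)/2 \rceil = 1$ so $b \le 3$, and $\lfloor (x+y)/2 \rfloor = 0$ so $a + b \le 4$. For deletion, $(x,y) = (1,0)$, which is symmetric to insertion: $a \le 3$, $b \le 3$, $a + b \le 4$. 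In every case the bounds $a \le 4$, $b \le 4$, $a + b \le 7$ claimed in the statement hold with room to spare, so the conclusion follows. The claim as stated is deliberately loose (the sharper bound would be $a,b \le 3$ and $a+b \le 5$), presumably so that a single uniform inequality covers all edit types and can be fed cleanly into the recursion at higher heights in the next step.

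There is essentially no obstacle here: the lemma is a direct corollary of Lemma~\ref{gcis_changexy}, and the only thing to check is that plugging in $x,y \in \{0,1\}$ respects the loose numerical bounds, which it does. The one point worth stating explicitly in the write-up is that the hypothesis of Lemma~\ref{gcis_changexy} requires $F(G_h, G'_h)$ to be well-defined as a single block-edit $(x,y)$ at height $h$, and that for $h = 0$ this is exactly the meaning of $F(T,T') \in \{(1,1),(1,0),(0,1)\}$ from Definition~\ref{def_gcis2}; no additional argument is needed to bridge the two. So the proof is a one-line invocation of Lemma~\ref{gcis_changexy} followed by a three-case table of the resulting constants.
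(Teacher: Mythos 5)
Your proof is correct and is essentially the paper's own argument: the paper proves this lemma by the single line ``immediately follows from Lemma~\ref{gcis_changexy}'', i.e.\ by applying that lemma at height $h=0$ with $(x,y)\in\{(1,1),(1,0),(0,1)\}$, exactly as you do. Your explicit arithmetic even shows the sharper bounds $a,b\leq 3$ and $a+b\leq 5$, which of course imply the stated (looser) constants.
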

\begin{proof}
Immediately follows from Lemma~\ref{gcis_changexy}.
\end{proof}

\begin{lemma}
\label{gcis_top4plus}
If $F(G_h,G'_h) = (x,y)$ and $r = h+1$, then $\gis(G'_h) \leq \gis(G_h) + 2(1+ y-|G_{h+1}|)$.
\end{lemma}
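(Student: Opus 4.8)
The plan is to cash in the hypothesis $r=h+1$, which forces $G_{h+1}$ into a normal form. Since $r=h+1$, the string $G_{h+1}$ is the GCIS start string, so $G_{h+1}[2..|G_{h+1}|]$ contains no LMS position; by the type rule of Lemma~\ref{sl_define} this means the character sequence of $G_{h+1}$ is non-decreasing and then non-increasing — call such a string \emph{flat} — and it also gives the exact identity $\gis(G_h)=\Vert\mathcal{D}_{h+1}\Vert+|G_{h+1}|$, because a single parsing round collapses $G_h$ to a one-factor string. I would then run GCIS on $G'_h$, write its parse as $\mathcal{D}'_{h+1}$ with GCIS-string $G'_{h+1}$, so that $\gis(G'_h)=\Vert\mathcal{D}'_{h+1}\Vert+\gis(G'_{h+1})$, and bound the two summands separately.

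For $\Vert\mathcal{D}'_{h+1}\Vert$, observe that every factor of $G'_h$ lying outside the window touched by the $(x,y)$-edit coincides with a factor of $G_h$, so only $O(x+y)$ factors are new (Lemmas~\ref{gcis_changex} and \ref{gcis_changexysub}); hence $\Vert\mathcal{D}'_{h+1}\Vert-\Vert\mathcal{D}_{h+1}\Vert$ is at most the total length of those new factors, which is exactly the quantity controlled by Lemma~\ref{gcis_order4} (the bound there is written so as to cover the case where one new factor absorbs a run of equal symbols). For $\gis(G'_{h+1})$, I would pass to the rank-normalised copies $\hat G_{h+1},\hat G'_{h+1}$ supplied by Lemmas~\ref{gcis_identically_equal} and \ref{gcis_changexy}: they satisfy $\gis(\hat G_{h+1})=\gis(G_{h+1})=|G_{h+1}|$ (flatness is a property of the rank sequence, hence inherited from $r=h+1$ and preserved), and $F(\hat G_{h+1},\hat G'_{h+1})=(a,b)$ with $a,b=O(x+y)$. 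Since $\hat G_{h+1}$ is flat, $\hat G'_{h+1}$ has only $O(x+y)$ internal LMS positions, all of them near the edited window, so the GCIS recursion started from $\hat G'_{h+1}$ terminates after a bounded number of further rounds; summing the geometrically shrinking later string lengths and using $|G'_{h+1}|\le|G_{h+1}|+1+\lfloor y/2\rfloor$ from Lemma~\ref{gcis_changez} gives $\gis(G'_{h+1})=\gis(\hat G'_{h+1})\le |G_{h+1}|+O(x+y)$.

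Adding the two estimates, substituting $\gis(G_h)=\Vert\mathcal{D}_{h+1}\Vert+|G_{h+1}|$, and tracking constants, the arithmetic then collapses to $\gis(G'_h)\le\gis(G_h)+2(1+y-|G_{h+1}|)$, the negative $-|G_{h+1}|$ contribution being the credit left over because $\gis(G_h)$ already pays for one copy of $|G_{h+1}|$ whereas, after the edit, $G'_{h+1}$ need not be re-expanded. The main obstacle I expect is the fine bookkeeping at the edited window: one must pin down exactly which LMS positions of $G_{h+1}$ persist, disappear, or are created in $G'_{h+1}$ — this is where the parity and ceiling refinements behind Lemmas~\ref{gcis_changez} and \ref{gcis_changexysub} are essential, and where a new factor that swallows a long equal-symbol run has to be handled explicitly — and then to squeeze the resulting inequality down to the exact constant $2$ rather than an unspecified $O(1)$; nailing that constant, not the structural picture, is the delicate part.
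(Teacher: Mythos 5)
Your decomposition is the same as the paper's: write $\gis(G'_h)=\Vert\mathcal{D}'_{h+1}\Vert+\gis(G'_{h+1})$, use the identity $\gis(G_h)=\Vert\mathcal{D}_{h+1}\Vert+|G_{h+1}|$ coming from $r=h+1$, bound the first summand by Lemma~\ref{gcis_order4}, and bound the second via the length of $G'_{h+1}$. Your refinement of the second summand (flatness of $G_{h+1}$, hence $O(x+y)$ LMS positions in $G'_{h+1}$, hence $\gis(G'_{h+1})\le|G'_{h+1}|+O(1)$) is plausible but unnecessary: the paper simply uses the generic bound $\gis(G'_{h+1})\le 2|G'_{h+1}|$ together with Lemma~\ref{gcis_changez}, which already suffices and avoids having to pin down your unspecified constants.

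The genuine gap is the final step. You assert that ``the arithmetic then collapses to $\gis(G'_h)\le\gis(G_h)+2(1+y-|G_{h+1}|)$,'' but it cannot: Lemma~\ref{gcis_order4} contributes $4\Vert\mathcal{D}_{h+1}\Vert$, and since $\gis(G_h)=\Vert\mathcal{D}_{h+1}\Vert+|G_{h+1}|$, carrying out your two estimates honestly yields $\gis(G'_h)\le 4\Vert\mathcal{D}_{h+1}\Vert+|G_{h+1}|+O(x+y)$, i.e.\ roughly $4\gis(G_h)-3|G_{h+1}|+O(1)$ --- a multiplicative factor $4$ that no bookkeeping at the edit window can remove. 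Indeed the paper's own proof of this lemma concludes with $\gis(G'_h)\le 4\gis(G_h)+2(1+y-|G_{h+1}|)$; the displayed statement evidently drops the factor $4$ by a typo, and the literal version is false (the paper's GCIS lower-bound instances have $r=1$ and $\gis(T')\approx 4\gis(T)$, whereas the literal bound would force $\gis(T')\le\gis(T)+2+2y$). Rather than claiming an impossible collapse to the printed constant, you should have derived the $4\gis(G_h)$ bound and flagged the discrepancy with the statement.
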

\begin{proof}
By construction of GCIS, $\gis(G'_{h+1}) \leq 2|G'_{h+1}|$.
Remembering $||\mathcal{D'}_h|| \leq 4||\mathcal{D}_h|| - x + y$,
\begin{eqnarray*}
\gis(G'_h) &=& ||\mathcal{D'}_h|| + \gis(G'_{h+1}) \\
&\leq& 4||\mathcal{D}_{h+1}|| - x + y + 2|G'_{h+1}| \\
&\leq& 4||\mathcal{D}_{h+1}|| - x + y + 2(|G_{h+1}|+1 + \lfloor y/2 \rfloor) \\
&=& 4||\mathcal{D}_{h+1}|| - x + y + 2|G_{h+1}|+2 + y \\
&=& 4 ( ||\mathcal{D}_{h+1}||+ |G_{h+1}| ) - 2|G_{h+1}| + 2 - x + 2y \\
&=& 4\gis(G_h) - 2|G_{h+1}| + 2 - x + 2y \\
&\leq& 4\gis(G_h) + 2(1+ y-|G_{h+1}|). \\
\end{eqnarray*}
\end{proof}

\begin{lemma}
\label{gcis_top4}
If $F(T,T') = (x,y)$, $x \leq 1$, $y \leq 1$, and
$r = 1$, then $\gis(T') \leq 4\gis(T)$ holds.
\end{lemma}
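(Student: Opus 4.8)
The plan is to derive Lemma~\ref{gcis_top4} as a one-line consequence of Lemma~\ref{gcis_top4plus}, instantiated at height $h=0$. Since $G_0 = T$ and $G'_0 = T'$, the hypotheses ``$F(T,T')=(x,y)$ with $x,y\le 1$'' and ``$r=1$'' are precisely the hypotheses ``$F(G_h,G'_h)=(x,y)$'' and ``$r=h+1$'' of Lemma~\ref{gcis_top4plus} for $h=0$. First I would invoke that lemma to obtain
\[
\gis(T') \;=\; \gis(G'_0) \;\le\; \gis(G_0) + 2\bigl(1 + y - |G_1|\bigr) \;=\; \gis(T) + 2\bigl(1 + y - |G_1|\bigr).
\]

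Next I would argue that the correction term is non-positive. Because $r\ge 1$, the GCIS-parsing of $T$ is defined, and by construction this requires the number $z$ of factors to be at least $2$; hence the GCIS-string satisfies $|G_1| = z \ge 2$. Together with the hypothesis $y\le 1$ this gives $1 + y - |G_1| \le 1 + 1 - 2 = 0$, so $2(1+y-|G_1|)\le 0$, and therefore $\gis(T') \le \gis(T)$. Since $\gis(T)\ge 0$, this yields $\gis(T') \le 4\gis(T)$, and it also records the sharper fact that for height-one inputs an edit never increases $\gis$.

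I do not expect a genuine obstacle here; the quantitative work has already been absorbed into Lemma~\ref{gcis_top4plus} (via Lemmas~\ref{gcis_order4} and~\ref{gcis_changez}, which supply the bounds $\Vert\mathcal{D'}_1\Vert \le 4\Vert\mathcal{D}_1\Vert - x + y$ and $|G'_1| \le |G_1| + 1 + \lfloor y/2\rfloor$) and into the general estimate $\gis(S)\le 2|S|$ used for the start string. The only point worth spelling out is that the $h=0$ instance of Lemma~\ref{gcis_top4plus} is legitimate: its proof uses only (i) the identity $\gis(G_0)=\Vert\mathcal{D}_1\Vert+|G_1|$, which holds exactly because $r=1$, (ii) the two aforementioned bounds with $G_1,G'_1$ in place of $G_{h+1},G'_{h+1}$, and (iii) $\gis(G'_1)\le 2|G'_1|$ — all of which are valid verbatim for $G_0=T$ and $G'_0=T'$.
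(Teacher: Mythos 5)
Your reduction to Lemma~\ref{gcis_top4plus} is structurally the right idea (the paper's own proof of this lemma is essentially the $h=0$ instance of that computation), but as written your argument proves something false. Taking the printed statement of Lemma~\ref{gcis_top4plus} at face value, you conclude $\gis(T') \leq \gis(T) + 2(1+y-|G_1|) \leq \gis(T)$, i.e.\ that an edit never increases $\gis$ when $r=1$. This contradicts the lower-bound construction in the very same section: for $T = \mathtt{2}^{p}\mathtt{3}\,\mathtt{2}^{p}\mathtt{3}\,\mathtt{2}^{p}\mathtt{3}\,\mathtt{2}^{p}\mathtt{3}$ one has $r=1$ and $\gis(T)=p+5$, while a single substitution gives $\gis(T')=4p+7$. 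The culprit is that the displayed statement of Lemma~\ref{gcis_top4plus} is misprinted: its own proof derives $\gis(G'_h) \leq 4\gis(G_h) + 2(1+y-|G_{h+1}|)$, with a factor $4$ in front of $\gis(G_h)$ that is absent from the statement. Your ``sharper fact'' is an artifact of that typo, so the proof cannot rest on the statement as printed; you should have noticed that the conclusion is incompatible with the lower bounds and traced the inconsistency back to the cited lemma.

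With the corrected bound your remaining steps do go through and recover the paper's argument: $|G_1|\geq 2$ (since the GCIS-parsing of $T$ is only defined for $z\geq 2$ factors) together with $y\leq 1$ gives $2(1+y-|G_1|)\leq 0$, hence $\gis(T')\leq 4\gis(T)$. This matches the paper's main case; the paper writes out the chain $\gis(T') = \Vert\mathcal{D'}_1\Vert + \gis(G'_1) \leq 4\Vert\mathcal{D}_1\Vert - x + y + 2|G'_1| \leq \cdots \leq 4\gis(T)+3-2|G_1|$ explicitly rather than citing the lemma's statement, and it additionally disposes of $|G_1|=1$ as a separate special case. The fix to your write-up is small, but the lesson is not: sanity-check any strengthening you obtain ``for free'' against the known lower bounds before recording it.
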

\begin{proof}
By Lemma~\ref{gcis_top4plus},
\begin{eqnarray*}
\gis(T') &=& \Vert\mathcal{D'}_1\Vert + \gis(G'_1) \\
&\leq& 4\Vert\mathcal{D}_1\Vert - x + y + 2|G'_1| \\
&\leq& 4\Vert\mathcal{D}_1\Vert - x + y + 2(|G_{1}|+1 + \lfloor y/2 \rfloor) \\
&\leq& 4\Vert\mathcal{D}_1\Vert + 2|G_{1}| + 3 \\
&=& 4\Vert\mathcal{D}_1\Vert + 4|G_{1}| + 3 - 2|G_{1}| \\
&=& 4\gis(T) + 3 - 2|G_{1}|.
\end{eqnarray*}
Therefore, if 
$|G_1| \geq 2$, then $\gis(T') \leq 4\gis(T)$. If $|G_1| = 1 $, which is a special case, then $\gis(T') \leq  \Vert\mathcal{D}_1\Vert- x + y + 2|G'_1| \leq 4\Vert\mathcal{D}_1\Vert + 4|G_1| = 4\gis(T)$.
\end{proof}

\begin{lemma}
\label{gcis_under4plus}
If $F(G_h,G'_h) = (x,y)$, $r \geq h+2$, $| \mathcal{D}_h | \geq 2$ and $\Vert\mathcal{D'}_h\Vert \leq 4(\Vert\mathcal{D}_h\Vert-2) - x + y$.
\end{lemma}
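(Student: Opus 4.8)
The plan is to sharpen the counting argument that proves Lemma~\ref{gcis_order4}: there the bound $\Vert\mathcal{D'}_h\Vert \le 4\Vert\mathcal{D}_h\Vert - x + y$ is obtained through a deliberately loose estimate of the length of the region of $G'_h$ that can carry new factors, and the two extra hypotheses $r \ge h+2$ and $|\mathcal{D}_h| \ge 2$ are exactly what is needed to recover an additive slack of $8$.

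First I would recall, from the proofs of Lemmas~\ref{gcis_changex} and~\ref{gcis_order4}, that the GCIS-parsings $G_h = D_1\cdots D_z$ and $G'_h = D'_1\cdots D'_{z'}$ agree outside a short window: there are indices $k,l$ with $D_j = D'_j$ for $j \le k-2$, with $D_{z-j} = D'_{z'-j}$ for $j \le l$, with the edit position $c$ in $[i_k, i_{k+1})$, and with $i_{z-l-1} \le c+x < i_{z-l}$. Hence every factor counted in $\mathcal{D'}_h \setminus \mathcal{D}_h$ lies inside a window $W'$ of $G'_h$ with $|W'| = (i_{z-l} - i_{k-2}) - x + y$, and since $\Vert\mathcal{D'}_h\Vert \le \Vert\mathcal{D}_h \cap \mathcal{D'}_h\Vert + \Vert\mathcal{D'}_h \setminus \mathcal{D}_h\Vert \le \Vert\mathcal{D}_h\Vert + |W'|$, it suffices to prove $i_{z-l} - i_{k-2} \le 3\Vert\mathcal{D}_h\Vert - 8$. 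Decomposing this length as $\bigl(|D_{k-2}| + |D_{k-1}|\bigr) + (i_{z-l-1} - i_k) + |D_{z-l-1}|$ and bounding the middle term by $O(|D_k| + x)$ as in Lemma~\ref{gcis_order4}, the window is covered by the lengths of a constant number of factors of $G_h$ plus an $O(x)$ term. The new ingredient is that $|\mathcal{D}_h| \ge 2$ forces $\mathcal{D}_h$ to contain distinct factors that lie outside the window, while $r \ge h+2$ makes the parse of $G_h$ genuinely recursive (in particular $z \ge 3$, so the window cannot swallow the whole string); subtracting the lengths of these out-of-window distinct factors when bounding each factor that does contribute to the window supplies the extra $-8$, and the $-x+y$ term survives the bookkeeping.

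The main obstacle is the window-length estimate when $G_h$ contains long runs of a repeated factor: then $W$, viewed as a sequence of factors, can be far longer than $\Vert\mathcal{D}_h\Vert$, whereas $\Vert\mathcal{D'}_h \setminus \mathcal{D}_h\Vert$, being a sum over distinct factors, must still be small. I would handle this by replacing the crude inequality $\Vert\mathcal{D'}_h \setminus \mathcal{D}_h\Vert \le |W'|$ with a per-factor count — Lemma~\ref{gcis_changexysub} caps the number of distinct new factors at a small constant, and a single run cannot give birth to a new factor much longer than its repeated unit, since only the first position of a run can be an LMS position — so the longest new factor is again governed by a constant number of old factor lengths. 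A secondary obstacle is making the boundary cases precise (edits adjacent to $D_1$ or $D_z$, and the interaction with the sentinel $\$$); this is exactly where the hypothesis $r \ge h+2$ is invoked, to exclude the degenerate configurations that are instead handled by Lemmas~\ref{gcis_top4plus} and~\ref{gcis_top4}.
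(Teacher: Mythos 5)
There is a genuine gap in where your additive slack of $8$ comes from. Your mechanism is to locate distinct factors of $\mathcal{D}_h$ that lie \emph{outside} the edit window and subtract their lengths from the per-factor bound used in Lemma~\ref{gcis_order4}. Neither hypothesis delivers such factors: $|\mathcal{D}_h|\geq 2$ says nothing about \emph{where} the two distinct factors occur (both may sit inside the window), and while $r\geq h+2$ does force $z\geq 3$, the window of Lemma~\ref{gcis_order4} spans roughly the factor indices $k-2$ through $z-l-1$, i.e.\ up to four consecutive factors, so for small $z$ it can perfectly well swallow all of $G_h$. So the step ``subtracting the lengths of these out-of-window distinct factors \ldots supplies the extra $-8$'' is unsupported.

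The paper extracts the slack differently, from a structural consequence of $r\geq h+2$ that your proposal never isolates: if $|\mathcal{D}_h|=1$ then $G_{h+1}$ is unary, and if $|\mathcal{D}_h|=2$ with a length-$1$ factor then $G_{h+1}$ is unary except at its first position (recall $|D_j|\geq 2$ for $j\geq 2$); in either case $G_{h+1}$ has no internal LMS position and $G_{h+2}$ would not exist. Hence $r\geq h+2$ forces $\mathcal{D}_h$ to contain at least two \emph{distinct} factors \emph{each of length at least $2$}, so every individual factor — wherever it sits, inside or outside the window — has length at most $\Vert\mathcal{D}_h\Vert-2$, and it is this per-factor saving, fed back into the window count of Lemma~\ref{gcis_order4}, that produces $4(\Vert\mathcal{D}_h\Vert-2)-x+y$. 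Note that the bare hypothesis $|\mathcal{D}_h|\geq 2$ (which is all you actually invoke about factor multiplicity, and which the paper states as part of the \emph{conclusion}) would not even guarantee a second factor of length $\geq 2$, so your accounting could at best save $1$ per factor rather than $2$. To repair the proof you should replace the out-of-window argument with this ``two distinct factors of length $\geq 2$'' deduction from $r\geq h+2$.
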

\begin{proof}
If $| \mathcal{D}_h | = 1$, then $G_{h+1}$ must be a unary string, and therefore no $G_{h+2}$ is constructed. If $| \mathcal{D}_h | = 2$ and there is a factor of length 1 in $\mathcal{D}_h$, then $G_{h+1}$ is still a unary string except for the first position, and therefore no $G_{h+2}$ is constructed. Therefore, $G_{h+2}$ is constructed only if $| \mathcal{D}_h | \geq 2$ and there are at least two factors of length at least $2$, and hence $\Vert\mathcal{D'}_h\Vert \leq 4(\Vert\mathcal{D}_h\Vert-2) - x + y$ holds.
\end{proof}

If $F(T,T') = (x, y) \in \{(1,1), (1,0), (0,1)\}$, then $\Vert \mathcal{D}_h \Vert - 4(\Vert \mathcal{D}_h \Vert -2) - x + y = 8 + x - y \geq 7$.
It means that $\mathcal{D'}_1$ can afford to 7 character room to charge. Lemma~\ref{gcis_4all} shows that we can use the room to charge the extra additive factor of $7$ in $|G'_r|$, and leads us to the desired upper bound $\gis(T') \leq 4\gis(T)$, as follows:

\begin{lemma}
\label{gcis_4all}
If $F(T,T') \in \{ (1,1), (1,0), (0,1)\}$, then $\gis(T') \leq 4\gis(T)$.
\end{lemma}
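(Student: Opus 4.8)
The plan is to perform an induction on the height $r$ of the GCIS grammar for $T$, using the machinery developed in Lemmas~\ref{gcis_identically_equal}--\ref{gcis_under4plus} to pass from one level of the recursion to the next while keeping a ``credit'' of spare characters that absorbs the additive slack. First I would handle the base case $r=1$, which is exactly Lemma~\ref{gcis_top4}: when the GCIS of $T$ has height one, $\gis(T') \leq 4\gis(T)$ follows directly from $\Vert\mathcal{D'}_1\Vert \leq 4\Vert\mathcal{D}_1\Vert - x + y$ and $|G'_1| \leq |G_1| + 1 + \lfloor y/2\rfloor$ with $(x,y)\in\{(1,1),(1,0),(0,1)\}$. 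For the inductive step, I would assume $r \geq 2$ and use Lemma~\ref{gcis_colloxy} to replace $G_1$ and $G'_1$ by $\hat{G}_1$ and $\hat{G}'_1$ with $F(\hat{G}_1,\hat{G}'_1)=(a,b)$ where $a,b\leq 4$ and $a+b\leq 7$; by Lemma~\ref{gcis_identically_equal}, $\gis(G_1)=\gis(\hat{G}_1)$ and $\gis(G'_1)=\gis(\hat{G}'_1)$, so it suffices to control $\gis(\hat{G}'_1)$ in terms of $\gis(\hat{G}_1)$, which is again a GCIS-sensitivity problem one level down — but now for an edit of bounded (constant) size rather than a single character.

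The key point is that the decomposition $\gis(T') = \Vert\mathcal{D'}_1\Vert + \gis(G'_1)$ lets me split the budget: Lemma~\ref{gcis_order4} (or its sharper form Lemma~\ref{gcis_under4plus} when $r\geq h+2$) bounds $\Vert\mathcal{D'}_1\Vert$ by $4(\Vert\mathcal{D}_1\Vert - 2) - x + y$, and since $F(T,T')\in\{(1,1),(1,0),(0,1)\}$ we get $4\Vert\mathcal{D}_1\Vert - \Vert\mathcal{D'}_1\Vert \geq 8 + x - y \geq 7$, i.e.\ there is a surplus of at least $7$ characters at the top level. Recursively applying Lemma~\ref{gcis_changexy}, the edit parameters $(a,b)$ at each successive level stay bounded (the recursion $a \leq 2 + \lceil(x+1)/2\rceil$ is contracting toward a small fixed point), and at the last level $r$ one invokes Lemma~\ref{gcis_top4plus}, which gives $\gis(G'_{r-1}) \leq 4\gis(G_{r-1}) + 2(1 + y - |G_r|)$; the additive term $2(1+y-|G_r|)$ together with the slack inherited from the earlier levels must be shown to be nonpositive after the $\leq 7$ credit is spent. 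So the structure is: carry along the invariant that the accumulated additive error never exceeds the surplus generated at the top, and discharge it at the final level.

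I would phrase the induction hypothesis quantitatively, e.g.\ ``if $F(G_h,G'_h)=(x,y)$ with $x+y$ at most the appropriate small constant and the GCIS of $G_h$ has height $r-h$, then $\gis(G'_h) \leq 4\gis(G_h) - 2|G_{h+1}| + C$'' for a constant $C$ depending only on $x+y$, and verify that the recursion on $(x,y)$ keeps $C$ small enough that $C - 2|G_{h+1}| \leq 0$ whenever $|G_{h+1}| \geq 2$; the degenerate cases $|G_{h+1}| \leq 1$ (where no further level exists, so the unary-string argument of Lemma~\ref{gcis_top4} applies verbatim) are handled separately. Chaining this from $h=1$ up to $h=r-1$ and adding back $\Vert\mathcal{D'}_1\Vert$ — using the $\geq 7$ top-level surplus to cover the worst accumulated $C$ — yields $\gis(T') \leq 4\gis(T)$.

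The main obstacle I anticipate is bookkeeping the additive constants cleanly across the levels: each level of the recursion both grows the edit size from $(x,y)$ to $(a,b)$ (bounded by Lemma~\ref{gcis_changexy}) and contributes its own ``$+1+\lfloor y/2\rfloor$'' to the length change of the GCIS-string (Lemma~\ref{gcis_changez}), and one must check that these never outpace the multiplicative factor $4$, i.e.\ that the surplus $4\Vert\mathcal{D}_h\Vert-\Vert\mathcal{D'}_h\Vert$ at each level dominates the error propagated from above. Verifying that the constant in the induction hypothesis does not blow up — that the fixed point of the $(x,y)\mapsto(a,b)$ recursion is small and the errors form a convergent (indeed bounded) quantity rather than a growing one — is the delicate part; everything else is assembling the already-proved lemmas.
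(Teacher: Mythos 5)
Your proposal is correct and follows essentially the same route as the paper's proof: base case via Lemma~\ref{gcis_top4}, passage to the hat-strings of Lemma~\ref{gcis_colloxy}/Lemma~\ref{gcis_identically_equal} so that each level is again a bounded-size edit, the surplus of $7$ from Lemma~\ref{gcis_under4plus} at the top level, and discharge of the accumulated additive slack at the final level via the $2|G_r|$ bound. The paper simply writes your induction as an explicit telescoping sum over the levels rather than a quantitative induction hypothesis, but the substance is identical.
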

\begin{proof}
By Lemma~\ref{gcis_top4}, the lemma holds when $r=1$.
If $r \geq 2$, Lemma~\ref{gcis_colloxy} shows that there are two strings $\hat{G_1}, \hat{G'_1}$, and $F(\hat{G_1}, \hat{G'_1}) = (a,b)$, where $a\leq4$, $b\leq 4$, $a+b \leq 7$ and $\gis(\hat{G_1})=\gis({G_1})$, $\gis(\hat{G'_1})=\gis(G'_1)$ by Lemma~\ref{gcis_identically_equal}. Additionally, Lemma~\ref{gcis_under4plus} shows that $\Vert\mathcal{D'}_1\Vert = \Vert\mathcal{\hat{D'}}_1\Vert \leq 4\Vert\mathcal{D}_1\Vert-7 = 4\Vert\mathcal{\hat{D}}_1\Vert-7$. Since $F(\hat{G_1}, \hat{G'_1}) = (a,b)$, there are also two strings $\hat{G_h}, \hat{G'_h}$, and $F(\hat{G_h},\hat{G'_h}) = (a,b)$, where $a\leq4$, $b\leq 4$, $a+b \leq 7$ and $\gis(\hat{G_h})=\gis({G_h}), \gis(\hat{G'_h})=\gis(G'_h)$ with $2 \leq h \leq r-1$ by Lemma~\ref{gcis_identically_equal}. Furthermore, Lemma~\ref{gcis_under4plus} shows that $\Vert\mathcal{D'}_1\Vert = \Vert\mathcal{\hat{D'}}_1\Vert \leq 4\Vert\mathcal{\hat{D}}_1\Vert-4 = 4\Vert\mathcal{D}_1\Vert$.
Noting that $\gis(\hat{G'_r}) \leq 2|\hat{G'_r}| = 2|G_r| = 2\gis(G_r)$ and $|G_k| \geq 1$, we obtain:
\begin{eqnarray*}
\gis(T') &=& \Vert\mathcal{D'}_1\Vert + \gis(G'_1) \\
&\leq& (4\Vert\mathcal{D}_1\Vert - 7 ) + \gis(\hat{G'_1}) \\ 
&\leq& ( 4\Vert\mathcal{D}_1\Vert - 7 ) + (4\Vert\mathcal{D}_2\Vert - 4) + \gis(\hat{G'_2}) \\ 
&\leq& \sum^{r-1}_{t=1} (4\Vert\mathcal{D}_t\Vert) - 7 + \gis(\hat{G'_r}) \\ 
&\leq& \sum^{r-1}_{t=1} ( 4\Vert\mathcal{D}_t\Vert) - 7  + 2|G_r| + 8 \\ 
&=& \sum^{r-1}_{t=1} ( 4\Vert\mathcal{D}_t\Vert) + 2|G_r| +1 \\ 
&<& \sum^{r-1}_{t=1} ( 4\Vert\mathcal{D}_t\Vert) + 4|G_r| \\
&=& 4\gis(T).
\end{eqnarray*}
\end{proof}

\subsection{Lower bounds for the sensitivity of $\gis$}

\begin{theorem}
  The following lower bounds on
  the sensitivity of GCIS hold: \\
  \textbf{substitutions:} $\liminf_{n \rightarrow \infty}\MSensSub(\gis,n) \geq 4$.  $\ASensSub(\gis,n) \geq 3\gis - 13 = \Omega(n)$. \\
  \textbf{insertions:} $\liminf_{n \rightarrow \infty}\MSensIns(\gis,n) \geq 4$.
    $\ASensIns(\gis,n)  \geq 3\gis - 24 = \Omega(n)$. \\
  \textbf{deletions:} $\liminf_{n \rightarrow \infty}\MSensDel(\gis,n) \geq 4$. 
    $\ASensDel(\gis,n)  \geq 3\gis - 29 = \Omega(n) $.
\end{theorem}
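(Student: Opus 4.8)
The plan is to establish each lower bound by exhibiting an explicit family of strings $T$ (depending on a growing parameter) together with a single-character edit that makes $\gis$ jump by nearly a factor of $4$; since the matching upper bound $\gis(T') \leq 4\gis(T)$ is already proved, these families will be asymptotically tight, and the additive versions will follow by reading off the exact sizes. Concretely, I would reverse-engineer the upper-bound argument: the factor $4$ there arises because a single edited position in $G_h$ can force up to $\approx 2 + \lceil (x+1)/2 \rceil$ factors of $\mathcal{D}_{h+1}$ to be replaced, and each replaced factor can have length up to the local maximum; summed over heights this multiplies the relevant $\Vert \mathcal{D}_h \Vert$ contribution by $4$. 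So the worst case should be a string whose GCIS-parsing consists almost entirely of factors of some uniform length (say $\ell$), arranged so that one edit near the front destroys the LMS structure of a long run of identical factors, causing each such factor of length $\ell$ to split into $\approx 2$ pieces while also introducing $\approx 2$ new distinct factors per level. A clean candidate is a string built from a period of the form $\mathtt{(ab)}^\ell$ or a SAIS-adversarial pattern like alternating $\mathtt{2112\cdots}$-blocks (cf.\ Figure~\ref{gcis_fig1}, where $T = \mathtt{2112121\cdots}$ already exhibits a rich LMS structure), iterated so that the recursion depth $r$ is large and each level contributes a $4\times$ blow-up.

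First I would fix $\Sigma = \{\mathtt{1},\mathtt{2}\}$ (or a slightly larger constant alphabet) and define $T = T(m)$ as a concatenation of $m$ copies of a carefully chosen primitive block $w$, possibly with a short prefix/suffix gadget, so that: (i) $\mathcal{D}_1$ has only $O(1)$ distinct factors but $\Vert \mathcal{D}_1 \Vert = \Theta(m)$, (ii) $G_1$ is again (up to rank-preserving renaming) a string of the same structural form with $m$ replaced by $\approx m/c$, so the recursion descends through $r = \Theta(\log m)$ levels, and (iii) at the start of $G_0 = T$ there is a position whose edit — substituting, inserting, or deleting one character — creates a mismatch that propagates. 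Then I would compute $\gis(T)$ exactly: since each level contributes $\Vert \mathcal{D}_h \Vert = \Theta(1)$ (bounded distinct factors) the dominant term is actually $|G_r|$ plus the $\sum \Vert \mathcal{D}_h \Vert$, and I want the arithmetic to come out to a clean closed form like $\gis(T) = c_1 + c_2 \log m$ or $\Theta(m)$ depending on which regime (the table entry ``$(3/4)n+1$'' for the $n$-based additive bound suggests the extremal family has $\gis(T) = \Theta(1)$ and $\gis(T') = \Theta(n)$, i.e.\ a highly compressible $T$ whose edit destroys all repetitiveness — so actually the right family is one where $T$ is almost unary-like at every GCIS level while $T'$ has $\Theta(n)$ essentially incompressible structure).

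Reconciling these two demands — the $\liminf \geq 4$ multiplicative bound wants $\gis(T)$ and $\gis(T')$ both growing with the parameter in ratio $4$, whereas the $3\gis - O(1) = \Omega(n)$ additive bound wants $\gis(T) = \Theta(1)$ and $\gis(T') = \Theta(n)$ — I would present (at least) two separate constructions, exactly as the earlier theorems in the paper do (compare the $\zss$ section, which uses one family for the $\Theta(\sqrt{\zss})$-type bound and reads off $\Omega(\sqrt n)$ from the same family). For the $\Omega(n)$ additive bound I would take $T = \mathtt{1}^{a}\mathtt{2}^{b}$-type or a constant-size-grammar string where $\gis(T) = O(1)$, and insert/substitute/delete one character at the junction so that $G_1, G_2, \ldots$ all become long non-repetitive strings forcing $\gis(T') = (3/4)n + 1$ after tracking the split counts through Lemmas~\ref{gcis_changex}–\ref{gcis_changexy}; the explicit constants $3\gis - 13$, $3\gis-24$, $3\gis-29$ for the three edit types will come from carefully counting the boundary factors lost at the first and last affected LMS positions at each of the finitely many ``active'' levels.

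The main obstacle I expect is getting the GCIS recursion of the edited string $T'$ to actually \emph{sustain} the $4\times$ (or $3\gis$-additive) blow-up at \emph{every} level rather than just the top one: the upper-bound lemmas show each level can at most quadruple, but a generic edit dissipates after one or two levels (the mismatch gets absorbed into a single non-terminal), so the construction must be \emph{self-similar} — the perturbation at level $h$ must re-create a perturbation of the same shape at level $h+1$ (i.e.\ $F(G_h, G'_h) = F(G_{h+1}, G'_{h+1})$ in the notation of Definition~\ref{def_gcis2}, with the pair $(a,b)$ stable near $(4,\cdot)$ or at the maximum allowed by Lemma~\ref{gcis_changexysub}). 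Verifying this fixed-point property requires understanding precisely how the rank assignment $R_i = |\{D_j : D_j \prec D_i\}| + \cdots$ changes when a handful of new distinct factors appear, and showing the new $G'_{h+1}$ is, up to rank-preserving relabeling (Lemma~\ref{gcis_identically_equal}), literally the same kind of adversarial pattern. I would handle this by choosing the block $w$ so that its GCIS image under one round of parsing is again (a scaled copy of) $w$ — a genuine eigenstring of the SAIS/GCIS operator — which is the analogue of the self-reducing constructions used for LZ78 in Theorem~\ref{theo:LZ78_lowerbound}; the bookkeeping of the small additive constants ($-13, -24, -29$) is then routine but tedious, and the only subtlety is making sure the special small cases ($|G_r| = 1$, $|\mathcal{D}_h| = 2$) from Lemmas~\ref{gcis_top4} and~\ref{gcis_under4plus} do not cost more than the claimed constants.
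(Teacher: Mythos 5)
Your plan contains two misconceptions that would derail the construction. First, you read the additive bound as demanding $\gis(T)=\Theta(1)$ and $\gis(T')=\Theta(n)$ and therefore propose a second, separate family for it; but the claim is $\ASensSub(\gis,n)\geq 3\gis-13=\Omega(n)$, which forces $\gis(T)=\Theta(n)$ and $\gis(T')\approx 4\gis(T)$ — a single family realizes both the multiplicative and the additive bounds at once (and a family with $\gis(T)=O(1)$, $\gis(T')=\Theta(n)$ would contradict the already-proved upper bound $\gis(T')\leq 4\gis(T)$). The paper's witness for substitutions is simply $T=\mathtt{2}^{p}\mathtt{3}\,\mathtt{2}^{p}\mathtt{3}\,\mathtt{2}^{p}\mathtt{3}\,\mathtt{2}^{p}\mathtt{3}$ with $n=4p+4$: its GCIS-parsing has four identical factors, so $\Vert\mathcal{D}_1\Vert=p+1$ and $\gis(T)=p+5$, while substituting the third $\mathtt{3}$ by $\mathtt{1}$ yields three \emph{distinct} factors of total length about $4p$, giving $\gis(T')=4p+7$. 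The entire quadrupling comes from the fact that $\Vert\mathcal{D}_1\Vert$ measures the total length of \emph{distinct} factors: one edit converts one long distinct factor, amortized over four occurrences, into several long distinct factors each occurring once.

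Second, and consequently, your main anticipated obstacle — engineering a self-similar ``eigenstring'' so that the perturbation survives $\Theta(\log m)$ recursion levels — is a red herring. The paper's constructions have recursion depth $1$ (substitutions) or $2$ (insertions/deletions), and the blow-up is achieved entirely at a single level where the distinct factors are long ($\Theta(p)$, hence $\Theta(n)$); nothing needs to propagate. A construction along the lines you sketch, with $O(1)$ distinct factors of bounded length at every level and depth $\Theta(\log m)$, would give $\gis(T)=\Theta(\log n)$ and could yield at best an $O(\log n)$ additive gap, far short of $\Omega(n)$. So the proposal as written would not reach the stated bounds; the missing idea is to make the per-level cost $\Vert\mathcal{D}_1\Vert$ itself linear in $n$ via a small number of long, repeated parsing factors, and to break their equality with one edit.
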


\begin{proof}

Assume $p > 1$. 

\textbf{substitutions:}
Consider the following string of length $n = 4p + 4 \in \Theta(p)$: 
$$T = \mathtt{2^{\mathit{p}}3 2^{\mathit{p}}3 2^{\mathit{p}}3 2^{\mathit{p}}3}$$
By the construction of the GCIS grammar of $T$, we obtain $ \mathcal{D}_1= \{ \mathtt{2^{\mathit{p}}3} \}$, $G_1 = \mathtt{4444}$, $\gis(T) = \Vert \mathcal{D}_1 \Vert + |G_1| = p+5$. The following string
\[
T' = \mathtt{2}^{\mathit{p}}3 \mathtt{2}^{\mathit{p}}3 \mathtt{2}^{\mathit{p}}\mathtt{1} \mathtt{2}^{\mathit{p}}\mathtt{3}
\]
can be obtained from $T$ by substituting the third $\mathtt{3}$ with $\mathtt{1}$.
By the construction of GCIS grammar of $T$, we obtain $G'_1 = \mathtt{564}, \mathcal{D'}_1= \{ \mathtt{12^{\mathit{p}}3, 2^{\mathit{p}}3, 2^{\mathit{p}}32^{\mathit{p}}} \}, \gis(T') = \Vert \mathcal{D'}_1 \Vert + |G'_1| = 4p+7$, which leads to $\liminf_{n \rightarrow \infty}\MSensSub(\gis,n) \geq  \liminf_{p \rightarrow \infty}(4p + 7) / (p + 5) = 4$, and $\ASensSub(\gis,n) =  (4p + 7) - (p + 5) = 3p+2 = 3\gis -13 = \Omega(n)$. 

\textbf{insertions:}
Consider the following string of length $n = 8p + 12 \in \Theta(p)$: 
$$T = \mathtt{(12)^{\mathit{p}}122(12)^{\mathit{p}}122(12)^{\mathit{p}}122(12)^{\mathit{p}}122}$$
By the construction of the GCIS grammar, we obtain $ \mathcal{D}_1 = \{ \mathtt{12,122} \}$, $G_1 = \mathtt{3^{\mathit{p}}4 3^{\mathit{p}}4 3^{\mathit{p}}4 3^{\mathit{p}}4}$, $\mathcal{D}_2= \{ \mathtt{3^{\mathit{p}}4} \}$, $G_2 = \mathtt{5555}$,  $\gis(T) = \Vert \mathcal{D}_1 \Vert+\Vert\mathcal{D}_2\Vert + |G_2| = 5+(p+1)+4 = p+10$.

The string
$$T' = \mathtt{(12)^{\mathit{p}}122(12)^{\mathit{p}}122(12)^{\mathit{p}}1122(12)^{\mathit{p}}122}$$
can be obtained from $T$ by inserting $\mathtt{0}$ to just before the third $\mathtt{11}$.
By the construction of GCIS grammar of $T'$, we obtain $\mathcal{D}_1 = \{ \mathtt{112, 12,1122} \}, G'_1 = \mathtt{4^{\mathit{p}}5 4^{\mathit{p}}5 4^{\mathit{p}}3 4^{\mathit{p}}5}, G'_2 = \mathtt{786},  \mathcal{D'}_2= \{ \mathtt{34^{\mathit{p}}5, 4^{\mathit{p}}5, 4^{\mathit{p}}54^{\mathit{p}}} \},\\ \gis(T') = \Vert \mathcal{D'}_1 \Vert+\Vert\mathcal{D'}_2\Vert + |G'_2| = 9+(4p+4)+3 = 4p+16$, which leads to $\liminf_{n \rightarrow \infty}\MSensIns(\gis,n) \geq \liminf_{p \rightarrow \infty}(4p+16)/(4p + 10) = 4$, and $\ASensIns(\gis,n) =  (4p + 16) - (p + 10) = 3p+6 = 3\gis - 24 = \Omega(n)$.

\textbf{deletions:}
Consider the following string of length $n = 12p+12 \in \Theta(p)$: 
$$T = \mathtt{(122)^{\mathit{p}}132(122)^{\mathit{p}}132(122)^{\mathit{p}}132(122)^{\mathit{p}}132}$$
By the construction of the GCIS grammar of $T$, we obtain $\mathcal{D}_1 = \{ \mathtt{122,132} \}$, $G_1 = \mathtt{4^{\mathit{p}}5 4^{\mathit{p}}5 4^{\mathit{p}}5 4^{\mathit{p}}5}$, $\mathcal{D}_2= \{ \mathtt{4^{\mathit{p}}5} \}$, $G_2 = \mathtt{6666}$, $\gis(T) = \Vert \mathcal{D}_1 \Vert+\Vert\mathcal{D}_2\Vert + |G_2| = 6+(p+1)+4 = p+11$.

The string
$$T' = \mathtt{(122)^{\mathit{p}}132(122)^{\mathit{p}}132(122)^{\mathit{p}}12(122)^{\mathit{p}}132}$$
can be obtained from $T$ by deleting the third $\mathtt{3}$.
By the construction of GCIS grammar of $T'$, we obtain $ \mathcal{D}_1 = \{ \mathtt{01, 011,021} \}, G'_1 = \mathtt{5^{\mathit{p}}6 5^{\mathit{p}}6 5^{\mathit{p}}4 5^{\mathit{p}}6}, \mathcal{D'}_2= \{ \mathtt{45^{\mathit{p}}6, 5^{\mathit{p}}6, 5^{\mathit{p}}65^{\mathit{p}}} \}, G'_2 = \mathtt{897}, \gis(T') = \Vert \mathcal{D'}_1 \Vert+\Vert\mathcal{D'}_2\Vert + |G'_2| = 8+(4p+4)+3 = 4p+15$, which leads to $\liminf_{n \rightarrow \infty}\MSensDel(\gis,n) \geq \liminf_{p \rightarrow \infty}(4p + 15)/(p + 11) = 4$, and $\ASensDel(\gis,n) =  (4p + 15) - (p + 11) = 3p+4 = 3\gis -29 = \Omega(n)$. 
\end{proof}

\section{Bisection}
\label{sec:bisection}

In this section, we consider the worst-case sensitivity
of the compression algorithm \emph{Bisection}~\cite{NelsonKC95}
which is a kind of grammar-based
compression that has a tight connection to BDDs.

Given a string $T$ of length $n$,
the bisection algorithm builds a grammar generating $T$ as follows.
We consider a binary tree $\mathcal{T}$ whose root
corresponds to $T$.
The left and right children of the root correspond to
$T_1 = T[1..2^j]$ and $T_2 = T[2^j+1..n]$, respectively,
where $j$ is the largest integer such that $2^j < n$.
We apply the same rule to $T_1$ and to $T_2$ recursively,
until obtaining single characters which are the leaves of $\mathcal{T}$.
After $\mathcal{T}$ is built,
we assign a label (non-terminal) to each node of $\mathcal{T}$.
If there are multiple nodes such that the leaves of their subtrees
are the same substrings of $T$,
we label the same non-terminal to all these nodes.
The labeled tree $\mathcal{T}$ is the derivation tree
of the bisection grammar for $T$.
We denote by $\gbsc(T)$ the size of the bisection grammar for $T$.
Recall that $\Sigma$ is the alphabet.

Let us briefly consider the case of unary alphabet $\Sigma_1 = \{a\}$.
Let $h(\mathcal{T})$ denote the height
of the derivation tree $\mathcal{T}$ for $T = a^n$.
After obtaining $T' = a^{n+1}$ for insertion or 
$T' = a^{n-1}$ for deletion, at most
$h(\mathcal{T})-1$ new productions are added
(note that $X \rightarrow a$ exists both for $T$ and for $T'$).
Thus the additive sensitivity of Bisection for unary alphabets
is at most $h(\mathcal{T})-1$.
This bound is almost tight, e.g. deleting a single $a$ from $T = a^{2^k}$
adds new $k-2 = h(\mathcal{T})-2$ non-terminals to
the existing $k = h(\mathcal{T})$ non-terminals
(note that the production $X \rightarrow a$ remains and the existing root of $\mathcal{T}$
is replaced with the new one).
The multiplicative sensitivity for Bisection is thus asymptotically $2 = |\Sigma_1|+1$.

In what follows, let us consider the case
of multi-character alphabets, where at least one of $T$ and $T'$ contains
two or more distinct characters.

\subsection{Lower bounds for the sensitivity of $\gbsc$}

\begin{theorem}
  The following lower bounds on
  the sensitivity of $\gbsc$ hold: \\
  \textbf{substitutions:} $\liminf_{n \rightarrow \infty}\MSensSub(\gbsc,n) \geq 2$.
  $\ASensSub(\gbsc,n) \geq \gbsc-4$ and $\ASensSub(\gbsc,n) \geq 2 \log_2 n - 4$. \\
  \textbf{insertions:} $\liminf_{n \rightarrow \infty}\MSensIns(\gbsc,n) \geq |\Sigma|$.
  $\ASensIns(\gbsc,n) \in \Omega(|\Sigma| \gbsc) $ and $\ASensIns(\gbsc,n) \in \Omega \left(|\Sigma|^2 \log \frac{n}{|\Sigma|}  \right) $. \\
  \textbf{deletions:} $\liminf_{n \rightarrow \infty}\MSensDel(\gbsc,n) \geq |\Sigma|$.
    $\ASensDel(\gbsc,n) \in \Omega(|\Sigma| \gbsc) $ and $\ASensDel(\gbsc,n) \in$ \\ $\Omega \left( |\Sigma|^2 \log \frac{n}{|\Sigma|} \right) $.
\end{theorem}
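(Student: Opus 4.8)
The plan is to exhibit, for each edit operation, an explicit family of strings $T$ (parametrized by a length parameter $p$ growing with $n$) together with a single-character edit yielding $T'$, so that the bisection tree $\mathcal{T}$ for $T$ has many repeated subtrees (hence few distinct non-terminals) while the edit shifts the power-of-two block boundaries and destroys almost all of these repetitions in $\mathcal{T}'$. Concretely, for substitutions I would take $T$ of length exactly $2^k$ (or a constant times that) built by repeating a short periodic block, so that $\gbsc(T) = O(\log n)$, and then substitute a single character near one end so that the bisection tree still has its block structure but one branch now forces $\Theta(\log n)$ fresh non-terminals down its spine; this gives $\MSensSub \geq 2$ in the limit and the additive lower bounds $\gbsc - 4$ and $2\log_2 n - 4$. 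For insertions and deletions the point is that changing the length by one shifts the largest power of two, so the entire bisection parse is re-aligned: I would choose $T$ to consist of $|\Sigma|$ distinct symbols arranged periodically (e.g. $(\sigma_1 \cdots \sigma_{|\Sigma|})^{m}$ padded to a convenient length) so that every level of $\mathcal{T}$ after re-alignment produces $\Theta(|\Sigma|)$ distinct substrings where $T$ produced $O(1)$, yielding the factor-$|\Sigma|$ multiplicative bound and the $\Omega(|\Sigma|\gbsc)$ and $\Omega(|\Sigma|^2 \log(n/|\Sigma|))$ additive bounds.

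The key steps, in order, are: (1) fix the alphabet and the exact lengths so that $n$, $n\pm 1$, and the relevant $2^j$ boundaries are under control — this is where the bulk of the bookkeeping lives, since bisection is exquisitely sensitive to whether $n$ is close to a power of two; (2) describe $\mathcal{T}$ for $T$ and count its distinct non-terminals level by level, exploiting the periodicity to show most subtrees coincide, so $\gbsc(T)$ is logarithmic (for substitutions) or $p + O(1)$ (for insertions/deletions, where the $G_1$/$G_2$-style recursion of the periodic block contributes $\Theta(p)$); (3) describe $\mathcal{T}'$ after the edit, identify the "broken spine" of nodes whose subtrees straddle the edited position or the shifted boundary, and show that along (and around) that spine each level contributes $\Theta(1)$ new non-terminals for substitution and $\Theta(|\Sigma|)$ for insertion/deletion; (4) add up to get $\gbsc(T')$, take the ratio and the difference, and express the bounds in terms of both $\gbsc$ and $n$ using the chosen relation between $p$ and $n$; (5) observe the limit as $n\to\infty$ gives the stated $\liminf$ values.

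The main obstacle I expect is step (3): controlling exactly how many subtrees in $\mathcal{T}'$ become distinct. Bisection's subtrees are determined purely by the substring of $T'$ they cover, so I need to argue that the re-aligned blocks are pairwise distinct at each level — which requires the periodic pattern to be chosen so that cyclic shifts of the base block are all different (hence the use of $|\Sigma|$ distinct symbols in a Lyndon-like or "staircase" arrangement, as in the $Q_1,Q_2$ constructions used earlier in the paper for LZSS and LZ-End). A secondary subtlety is making sure the edit really does shift the top-level split: since the root's left child is $T[1..2^j]$ with $2^j$ the largest power of two below the length, inserting or deleting a character when the length is, say, $2^j + 1$ versus $2^j$ changes $j$ and re-partitions everything, whereas other lengths barely move the boundary — so the length must be pinned precisely, and I would verify the boundary arithmetic explicitly for the chosen family. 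Once the counting at the top two recursion levels is nailed down, the remaining summation and the passage to $n$ via $n = \Theta(p)$ (or $n = \Theta(|\Sigma| p)$) are routine.
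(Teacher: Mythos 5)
Your substitution construction is essentially the paper's: the paper takes the unary string $T=\mathtt{a}^{2^k}$ (so $\gbsc(T)=2k-1$) and substitutes the \emph{last} character to get $\mathtt{a}^{2^k-1}\mathtt{b}$, whose right spine needs $k-1$ fresh non-terminals $Y_j$ deriving $\mathtt{a}^{2^j-1}\mathtt{b}$; that part of your plan is sound.

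The insertion/deletion part has a genuine gap. For the purely periodic string $T=(\sigma_1\cdots\sigma_{|\Sigma|})^{m}$, a single insertion or deletion shifts \emph{every} aligned block by the same offset: every length-$2^j$ block of $\mathcal{T}'$ with $2^j\geq|\Sigma|$ starts at a position congruent to the same residue mod $|\Sigma|$, so all of them equal the \emph{one} cyclic shift $(\sigma_2\cdots\sigma_{|\Sigma|}\sigma_1)^{2^j/|\Sigma|}$. Each level therefore contributes $O(1)$ distinct non-terminals, not $\Theta(|\Sigma|)$, and $\gbsc(T')=O(\gbsc(T))$ — the claim that ``every level after re-alignment produces $\Theta(|\Sigma|)$ distinct substrings'' is false for any string with period $|\Sigma|$, no matter how the period is arranged. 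Moreover, even if it were true, it would only yield an additive gap of $\Omega(|\Sigma|\log(n/|\Sigma|))$, a factor $|\Sigma|$ short of the claimed $\Omega(|\Sigma|^2\log(n/|\Sigma|))$. The paper's construction avoids both problems by breaking the periodicity at the scale of the blocks: it takes $T=Q'[1]^{2^p}\cdots Q'[|\Sigma|^2]^{2^p}$, where the run-characters spell a de Bruijn-like sequence $Q$ containing all $|\Sigma|^2$ bigrams. Before the edit, every aligned block of length $2^j\leq 2^p$ lies inside a single run, so each level has only $|\Sigma|$ distinct nodes and $\gbsc(T)=\Theta(|\Sigma|p+|\Sigma|^2)$; after the one-character shift, the blocks that straddle run boundaries have the form $\sigma^{2^j-1}\tau$ with $\sigma\tau$ ranging over all $|\Sigma|^2$ bigrams, so each of the $p$ levels gains $\Theta(|\Sigma|^2)$ distinct nodes, giving $\gbsc(T')=\Theta(|\Sigma|^2 p)$ and hence both the multiplicative factor $|\Sigma|$ and the additive bound $\Omega(|\Sigma|^2\log(n/|\Sigma|))$. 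You would need to replace your periodic family with a construction of this run-plus-de-Bruijn type for the insertion and deletion cases to go through.
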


\begin{proof}
  \textbf{substitutions:} Consider a unary string $T = \mathtt{a}^n$ with $n = 2^k$.
  The set of productions for $T$ is
  \[
  \begin{array}{ll}
  X_1 = \mathtt{a} & \mbox{(generating $\mathtt{a}$)}, \\
  X_2 = X_1 X_1 & \mbox{(generating $\mathtt{aa}$)}, \\
  X_3 = X_2 X_2 & \mbox{(generating $\mathtt{aaaa}$)}, \\
  \ldots & \\
  X_{k} = X_{k-1} X_{k-1} & \mbox{(generating $\mathtt{a}^{2^{k}}$)},
  \end{array}
  \]
  with $\gbsc(T) = 2k-1$.
Let $T' = \mathtt{a}^{n-1}\mathtt{b}$ that can be obtained by
  replacing the last $\mathtt{a}$ in $T$ with $\mathtt{b}$.
  The set of productions for $T'$ is
  \[
  \begin{array}{ll}
  X_1 = \mathtt{a} & \mbox{(generating $\mathtt{a}$)}, \\
  X_2 = X_1 X_1 & \mbox{(generating $\mathtt{aa}$)}, \\
  X_3 = X_2 X_2 & \mbox{(generating $\mathtt{aaaa}$)}, \\
  \ldots & \\
  X_{k-1} = X_{k-2} X_{k-2} & \mbox{(generating $\mathtt{a}^{2^{k-1}}$)}, \\
  Y_{1} = \mathtt{b} & \mbox{(generating $\mathtt{b}$)}, \\
  Y_2 = X_1 Y_1 & \mbox{(generating $\mathtt{ab}$)}, \\
  Y_3 = X_2 Y_2 & \mbox{(generating $\mathtt{aaab}$)}, \\
  \ldots & \\
  Y_{k} = X_{k-1} Y_{k-1} & \mbox{(generating $\mathtt{a}^{2^{k}-1}\mathtt{b}$)}
  \end{array}
  \]
  with $\gbsc(T') = 2k-1 + 2(k-1)-1 = 4k-4$.
  Thus $\liminf_{n \rightarrow \infty}\MSensSub(\gbsc,n) \geq \liminf_{k \rightarrow \infty}\frac{4k-4}{2k-1} \geq 2$.
  Also, $\ASensSub(\gbsc,n) \geq (4k-4)-(2k-1) = 2k-5 = \gbsc(T)-4$ and $\ASensSub(\gbsc,n) \geq 2 \log_2 n - 4$ as $k = \log_2 n$.

  \textbf{deletions:} Assume that $|\Sigma| = 2^i$ with a positive integer $i \geq 1$. Let $Q$ be a string that contains $t = |\Sigma|^2$ distinct bigrams and $|Q| = |\Sigma|^2+1$.
Let $Q' = Q[2..|Q|]$.
Let $\sigma_i$ denote the lexicographically $i$th character in $\Sigma$.  
We consider the string
\begin{eqnarray*}
  T &=& Q'[1]^{2^p} \cdots Q'[|Q'|]^{2^p}.
\end{eqnarray*}
Note that $p = \log (n / \sigma)$.
  The set of productions for $T$ from depth $1$ to $p$ is:
\begin{eqnarray*}
  X_i &\to& \sigma_i\sigma_i ~~~(1 \leq i \leq p), \\
  X_{p|\Sigma|+i} &\to& X_{(p-1)|\Sigma|+i}X_{(p-1)|\Sigma|+i} ~~~(1 \leq i \leq |\Sigma|, 2 \leq k \leq p). \\
\end{eqnarray*}
Thus, the derivation tree $\mathcal{T}$ has $p|\Sigma|$ internal nodes with distinct labels. Additionally, after height $|\Sigma|$, the string consists of $t-1$ distinct bigrams, and there is no run of length $2$. Then the derivation tree $\mathcal{T}$ has $t-1$ internal nodes with distinct labels in height above $p$. Finally, $\gbsc(T) = p|\Sigma| + t-1$.

We consider the string $T'$ where $T[1]$ is removed,
namely,
\begin{eqnarray*}
  T' &=& T[2..|T|] = Q'[1]^{2^p-1}Q'[2]^{2^p} \cdots Q'[|Q'|]^{2^p}.
\end{eqnarray*}
  The set of productions for $T'$ of height $1$ is:
\begin{eqnarray*}
  X_{(i-1)|\Sigma|+j} &\to& \sigma_i\sigma_j  ~~~(1 \leq i \leq |\Sigma|, 1 \leq j \leq |\Sigma|).
\end{eqnarray*}
Thus, the derivation tree $\mathcal{T'}$ for string $T'$ has $t = |\Sigma|^2$ internal nodes with distinct labels at height one.
Because of this, the number of internal nodes of the derivation tree $\mathcal{T'}$ in each height $2 \leq p' \leq p$ is also at least $t = |\Sigma|^2$. After that, the string of height $p$ consists of $t$ distinct bigrams, and there is no run of length 2, which is the same condition of $T$. Then the derivation tree $\mathcal{T}$ has additional $t-1$ internal nodes with distinct labels in height above $p$. 
Finally, $\gbsc(T') = tp + t$. Then, we obtain:

\begin{eqnarray*}
\MSensDel(\gbsc,n) &\geq& \lim_{n \rightarrow \infty} \frac{tp + t}{p|\Sigma| + t-1} = \lim_{p \rightarrow \infty} \frac{tp + t}{p|\Sigma| + t-1} = \frac{t}{|\Sigma|} \geq |\Sigma|, \\
\ASensDel(\gbsc,n) &\geq& (tp + t) - (p|\Sigma| + t-1) = (t - |\Sigma|)p+1 \in \Omega(|\Sigma|^2 p),\\
\end{eqnarray*}
where $\Omega(|\Sigma|^2 p) = \Omega\left(|\Sigma|^2 \log \frac{n}{|\Sigma|}\right)$ and $\Omega(|\Sigma|^2 p) = \Omega(|\Sigma| \gbsc(T))$.

\textbf{insertions:}
We use the same string $T$ as in the case of deletions.
We consider the string $T'$ that is obtained by prepending $Q[1]$ to $T$,
namely,
\begin{eqnarray*}
  T' &=& Q[1]T = Q[1]Q'[1]^{2^p} \cdots Q'[|Q'|]^{2^p}.
\end{eqnarray*}
  The set of productions for $T'$ of height $1$ is:
\begin{eqnarray*}
  X_{(i-1)|\Sigma|+j} &\to& \sigma_i\sigma_j.  ~~~(1 \leq i \leq |\Sigma|, 1 \leq j \leq |\Sigma|) \\
  X_{|\Sigma|^2+1} &\to& Q[1].
\end{eqnarray*}
Thus, the derivation tree $\mathcal{T'}$ has $t+1$ internal nodes with distinct labels at height one.
Because of this, the number of internal nodes of derivation tree $\mathcal{T'}$ of each height $2 \leq p' \leq p$ is also at least $t = |\Sigma|^2$ nodes. After that, the string of height $p$ consists of $t$ distinct bigrams, and there is no run of length 2, which is the same condition of $T$. Then derivation tree $\mathcal{T}$ has additional $t-1$ internal nodes with distinct labels in height above $p$. 
Finally, $\gbsc(T') = (t+1)p + t$. Then, we obtain:

\begin{eqnarray*}
\MSensIns(\gbsc,n) &\geq& \lim_{n \rightarrow \infty} \frac{(t+1)p + t}{p|\Sigma| + t-1} = \lim_{p \rightarrow \infty} \frac{(t+1)p + t}{p|\Sigma| + t-1} = \frac{(t+1)}{|\Sigma| } \geq |\Sigma|, \\
\ASensIns(\gbsc,n) &\geq& \left(((t+1)p + t) - ( p|\Sigma| + t-1) \right) = (t+1-|\Sigma|)p + 1 \in \Omega(|\Sigma|^2 p), 
\end{eqnarray*}
where $\Omega(|\Sigma|^2 p) = \Omega\left(|\Sigma|^2 \log \frac{n}{|\Sigma|}\right)$ and $\Omega(|\Sigma|^2 p) = \Omega(|\Sigma| \gbsc(T))$.
\end{proof}

We show a concrete example of how the derivation tree of Bisection changes by an insertion in Figure~\ref{bisec_example}.

\begin{figure}
\includegraphics[width=\linewidth]{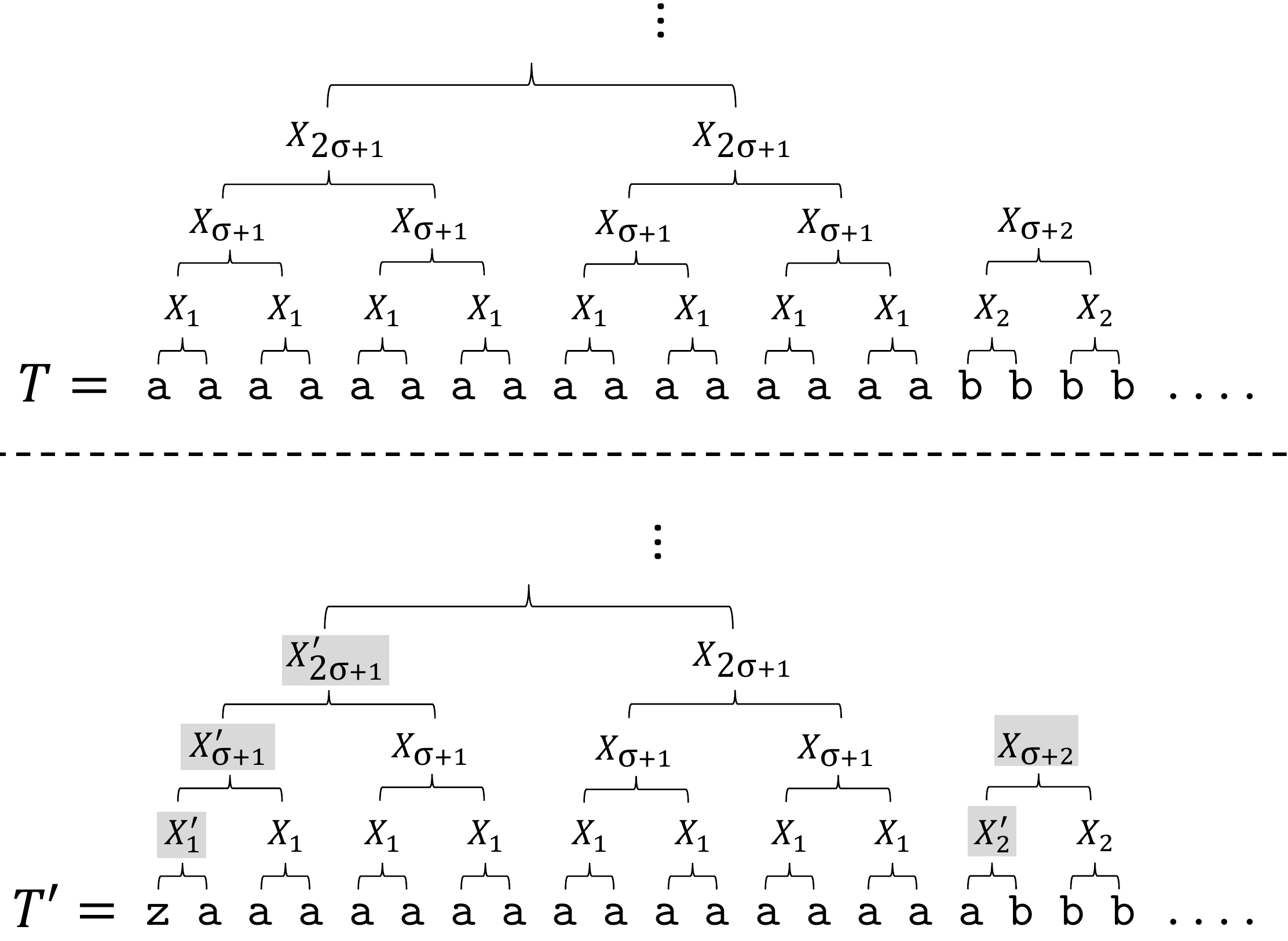}
\caption{An example of insertion for Bisection, where $p = 4$ and $\sigma = |\Sigma|$ in this figure. There are nodes $X_1, X_{\sigma+1}, X_{2\sigma+1}, X_{3\sigma+1}$ in the leftmost path in the derivation tree of $T = \mathtt{a}^{2^4}\mathtt{b}^{2^4} \mathtt{b}^{2^4} \cdots$ (upper). After a $\texttt{z}$ is prepended to $T$~(yielding $T'$), new internal nodes $X'_{1}, X'_{\sigma+1}, X'_{2\sigma+1}, X'_{3\sigma+1}$ that correspond to $\mathtt{za}, \mathtt{za}^3, \mathtt{za}^{7}, \mathtt{za}^{15}$ occur in the derivation tree for $T'$ (lower). This propagates to the other $\sigma-1$ bigrams $\mathtt{ab}$, $\mathtt{bc}$, \ldots, which consist of distinct characters.}
       \label{bisec_example}
\end{figure}

\subsection{Upper bounds for the sensitivity of $\gbsc$}

\begin{theorem}
  The following upper bounds on
  the sensitivity of $\gbsc$ hold: \\
  \textbf{substitutions:} $\MSensSub(\gbsc,n) \leq 2$. $\ASensSub(\gbsc,n) \leq 2\lceil \log_2 n \rceil \leq 2\gbsc$. \\
  \textbf{insertions:} $\MSensIns(\gbsc, n) \leq |\Sigma|+1$. $\ASensIns(\gbsc, n) \leq |\Sigma| \gbsc$.\\
  \textbf{deletions:} $\MSensDel(\gbsc, n) \leq |\Sigma|+1$. $\ASensDel(\gbsc, n) \leq |\Sigma| \gbsc$. 
\end{theorem}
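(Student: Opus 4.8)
The plan is to split the argument by whether the edit changes the string length, since the shape of the bisection derivation tree $\mathcal{T}(S)$ depends only on $|S|$. Throughout I would use that $\gbsc(S)$ equals, up to a fixed additive constant, the number of distinct substrings of $S$ that label a length-$1$ node of $\mathcal{T}(S)$ plus twice the number of distinct substrings labelling a node of length at least $2$; thus bounding $\gbsc$ reduces to counting distinct node labels. I also record two elementary facts: $\mathcal{T}(S)$ has height $\lceil\log_2|S|\rceil$, and the labels along its leftmost root-to-leaf path have pairwise distinct lengths, so that $\gbsc(S)\ge 2\lceil\log_2|S|\rceil$ for every $S$ with $|S|\ge 2$. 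The unary-alphabet sub-case has already been disposed of in the discussion preceding the lower-bound theorem, so I restrict attention to the multi-character case.

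\textbf{Substitution.} Since $|T'|=|T|=n$, we have $\mathcal{T}(T')=\mathcal{T}(T)$ and only the substrings covered by the ancestors of the $i$-th leaf can change. There are at most $\lceil\log_2 n\rceil$ internal ancestors and one leaf, so at most $\lceil\log_2 n\rceil$ new labels of length $\ge 2$ and at most one new character label can appear in $T'$. Hence $\gbsc(T')\le \gbsc(T)+2\lceil\log_2 n\rceil$, which is the asserted additive bound; combining with $\gbsc(T)\ge 2\lceil\log_2 n\rceil$ gives $\gbsc(T')\le 2\gbsc(T)$, i.e. $\MSensSub(\gbsc,n)\le 2$, and also $\ASensSub(\gbsc,n)\le 2\lceil\log_2 n\rceil\le 2\gbsc$.

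\textbf{Insertion and deletion.} Now $|T'|=n\pm1$ and the tree shape genuinely changes, so we must reason about the distinct node labels of $\mathcal{T}(T')$ directly (the grammar is determined by the tree, so exhibiting an auxiliary grammar is not enough). Expressing positions of $T'$ relative to $T$, every node of $\mathcal{T}(T')$ covers either a substring of $T$ shifted by at most one position (when the node lies entirely to one side of the edit) or one of the $O(\log n)$ "straddling" substrings obtained by inserting/deleting a single character inside a substring of $T$ (these are exactly the ancestors of the edited leaf, at most one per level). So the core task is to bound, for each length $\ell\ge 2$, the number $N_\ell(T')$ of distinct length-$\ell$ substrings of $T$ that appear as a label of $\mathcal{T}(T')$ in terms of $N_\ell(T)$. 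I would do this through the right-spine/perfect-subtree decomposition of bisection trees: a length-$\ell$ node of $\mathcal{T}(T)$ lies in one of the perfect subtrees cut off along the right spine and is therefore constrained to a single residue class of starting positions within that subtree, and changing the overall length by one perturbs the block boundaries and residue classes in a controlled way — the content of a new label at a shifted boundary being pinned down by an old label together with at most one extra boundary character, for which there are $|\Sigma|$ choices. Summing the resulting per-length estimate $N_\ell(T')\le|\Sigma|\,N_\ell(T)+(\text{straddling labels of length }\ell)$, with the length-$1$ count changing by at most one and the $O(\log n)$ total straddling contribution absorbed against the full copy of $\sum_\ell N_\ell(T)$ left as headroom (and against $\gbsc(T)\ge2\lceil\log_2 n\rceil$), yields $\gbsc(T')\le(|\Sigma|+1)\gbsc(T)$, hence $\MSensIns(\gbsc,n),\MSensDel(\gbsc,n)\le|\Sigma|+1$ and $\ASensIns(\gbsc,n),\ASensDel(\gbsc,n)\le|\Sigma|\gbsc$. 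The main obstacle is precisely this per-length charging: one must track how a single unit of misalignment propagates through the nested perfect subtrees touched by the edit and verify that it never inflates the distinct-label count of any length by more than the factor $|\Sigma|+1$ — not $|\Sigma|+O(1)$ — which is the delicate point that the lower-bound construction (single-character runs turning into every bigram) shows is tight.
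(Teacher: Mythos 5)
Your proposal follows essentially the same route as the paper's proof: the substitution case is handled by relabelling only the $O(\log n)$ ancestors of the edited leaf and comparing against $\gbsc\geq 2\lceil\log_2 n\rceil$, and the insertion/deletion cases rest on the observation that every new label off the right spine of $\mathcal{T}'$ is an old dyadic label shifted by one position, hence determined by an old ``seed'' label together with one character of left context ($|\Sigma|$ choices per seed), with the right-spine/straddling nodes contributing the extra summand that is absorbed via $\gbsc\geq h+1$ to give the $+1$ in $|\Sigma|+1$. The one step you flag as delicate --- that each seed spawns at most $|\Sigma|$ shifted labels across the perturbed block boundaries --- is exactly the counting the paper uses, and the paper states it at essentially the same level of informality, so your plan matches its proof in both structure and substance.
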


\begin{proof}
  \textbf{substitutions:}
  Let $i$ be the position where we substitute the character $T[i]$.
  We consider the path $P$ from the root of $\mathcal{T}$
  to the $i$th leaf of $\mathcal{T}$ that corresponds to $T[i]$.
  We only need to change the labels of the nodes in the path $P$,
  since any other nodes do not contain the $i$th leaf.
  Since $\mathcal{T}$ is a balanced binary tree,
  the height $h$ of $\mathcal{T}$ is $\lceil \log_2 n \rceil$
  and hence $|P| \leq h = \lceil \log_2 n \rceil$.
  Since $h \leq \gbsc$, we get 
  $\MSensSub(\gbsc, n) \leq 2$.
  Since each non-terminal is in the Chomsky normal form
  and since $\lceil \log_2 n \rceil \leq \gbsc$,
  $\ASensSub(\gbsc, n) \leq 2\lceil \log_2 n \rceil \leq 2\gbsc$.

  \textbf{insertions:}
  Let $i$ be the position where we insert a new character $a$ to $T$,
  and let $\mathcal{T}$ and $\mathcal{T}'$
  be the derivation trees for the strings $T$ and $T'$
  before and after the insertion, respectively.
For any node $v$ in the derivation tree $\mathcal{T}$,
  let $\mathcal{T}(v)$ denote the subtree rooted at $v$.
  Let $\ell(v)$ and $r(v)$ denote the text positions 
  that respectively correspond to the leftmost and rightmost leaves in $\mathcal{T}(v)$.
We use the same analysis for the
  left children of the nodes in the path $P$ from the root to the new $i$th leaf
  which corresponds to the inserted character $a$.
Let $v'$ denote a node in $\mathcal{T}'$.
  From now on let us focus on the subtrees $\mathcal{T}'(v')$
  of $\mathcal{T}'$ such that $\ell(v') > i$
  and $v'$ is not in the rightmost path from the root of $\mathcal{T'}$.
  Let $\str(v')$ denote the string that is derived from the non-terminal for $v'$,
  and let $v$ be the node in $\mathcal{T}$ which corresponds to $v'$.
  Observe that $\str(v') = T'[\ell(v')..r(v')] = T[\ell(v)-1..r(v)-1]$,
  namely, $\str(v')$ has been shifted by one position in the string
  due to the new character $a$ inserted at position $i$.
  Since $T[\ell(v)..r(v)]$ is represented by the node $v$ in $\mathcal{T}$,
  there exist at most $\gbsc$ distinct substrings of $T$
  that can be the ``seed'' of the strings represented by the nodes $v'$ of $\mathcal{T}'$ with $\ell(v') > i$.
  Since the number of left-contexts of each $T[\ell(v)..r(v)]$ is at most $|\Sigma|$,
  there can be at most $|\Sigma|$ distinct shifts from the seed $T[\ell(v)..r(v)]$.
  Since the rightmost paths from the roots of $\mathcal{T}$ and $\mathcal{T}'$
  are all distinct except the root,
  and since inserting the character can increase
  the length of the rightmost path by at most 1,
overall, we have that
  \begin{equation}
    \gbsc(T') \leq |\Sigma| \gbsc(T) + \lceil \log_2 n \rceil + 1 \leq |\Sigma| \gbsc(T) + h(T)+1, \label{eqn:bisection_upperbound}
  \end{equation}
  where $h(T)$ is the height of $\mathcal{T}$.
  For the case of multi-character alphabets $\gbsc(T) \geq h(T)+1$ holds,
  and hence $\gbsc(T') \leq (|\Sigma| + 1)\gbsc(T)$ follows from formula (\ref{eqn:bisection_upperbound}).
  Hence we get $\MSensIns(\gbsc, n) \leq |\Sigma|+1$
  and $\ASensIns(\gbsc, n) \leq |\Sigma| \gbsc$.

  \textbf{deletions:} By similar arguments to the case of insertions,
  we get $\MSensDel(\gbsc, n) \leq |\Sigma|+1$
  and $\ASensDel(\gbsc, n) \leq |\Sigma| \gbsc$.

\end{proof}
 \section{Compact Directed Acyclic Word Graphs (CDAWGs)}
\label{sec:cdawg}

In this section, we consider the worst-case sensitivity
of the size of \emph{Compact Directed Acyclic Word Graphs} (\emph{CDAWGs})~\cite{BlumerBHME87}.
The CDAWG of a string $T$, denoted $\CDAWG(T)$,
is a string data structure that represents the set of suffixes of $T$,
such that the number $v$ of internal nodes in $\CDAWG(T)$ is equal to
the number of distinct maximal repeats in $T$,
and the number $e$ of edges in $\CDAWG(T)$ is equal to
the number of right-extensions of maximal repeats occurring in $T$.
Therefore, the smaller $\CDAWG(T)$ is, the more repetitive $T$ is.
Since $v \leq e$ always holds,
we simply use the number $e$ of edges in the CDAWG
as the \emph{size} of $\CDAWG(T)$, and denote it by $e(T)$.
It is known (c.f.~\cite{BelazzouguiC17})
that $\CDAWG(T)$ induces a grammar-based compression of size $e$
for $T$.

\subsection{Lower bounds for the sensitivity of $e$}

\begin{theorem}
  The following lower bounds on
  the sensitivity of $e$ hold: \\
  \textbf{deletions:} $\liminf_{n \rightarrow \infty}\MSensDel(e,n) \geq 2$.
    $\ASensDel(e,n) \geq e-4$ and $\ASensDel(e,n) \geq n-4$. \\
  \textbf{substitutions:} $\liminf_{n \rightarrow \infty}\MSensSub(e,n) \geq 2$.
    $\ASensSub(e,n) \geq e-2$ and $\ASensSub(e,n) \geq n-2$. \\
  \textbf{insertions:} $\liminf_{n \rightarrow \infty}\MSensIns(e,n) \geq 2$.
    $\ASensIns(e,n) \geq e-2$ and $\ASensIns(e,n) \geq e-2$.
\end{theorem}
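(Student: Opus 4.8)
The plan is to establish every one of these lower bounds by direct construction: for each of the three edit operations we exhibit an explicit one-parameter family of strings $T=T_n$ (of length $n\to\infty$) together with a single-character edit producing $T'$, and we compute $e(T)$ and $e(T')$ essentially exactly. Recall that $e(T)$ is the number of edges of $\CDAWG(T)$, and that this number is determined by the internal nodes — which are exactly the distinct maximal repeats of $T$ — and by their right-extensions. Hence the whole argument reduces to two bookkeeping tasks carried out both on $T$ and on $T'$: (i) list all distinct maximal repeats; and (ii) for each maximal repeat $w$, determine its set of right-extensions $\{c : wc \in \Substr(T)\}$. Summing the sizes of these sets, together with the source contribution (one edge per distinct character occurring in $T$, plus the terminator if used), yields closed forms for $e(T)$ and $e(T')$ in terms of the parameter.

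The families will be chosen so that $e(T)$ and $e(T')$ are both $\Theta(n)$ and the single edit asymptotically doubles the edge count, i.e. $e(T')=2e(T)-O(1)$ with $e(T)=n-O(1)$. Such a family yields all of the claimed statements at once: $e(T')/e(T)\to 2$ gives $\liminf_{n\to\infty}\MSensSub(e,n)\ge 2$ and likewise for $\MSensIns$ and $\MSensDel$; $e(T')-e(T)\ge e(T)-O(1)$ gives the additive bounds stated in terms of the measure $e$; and $e(T)=n-O(1)$ upgrades these to the additive bounds stated in terms of $n$. A natural template is a near-unary string whose long $\mathtt{a}$-run is controlled by a single guard character: on the ``$T$ side'' only a restricted family of powers $\mathtt{a}^j$ are maximal repeats and each has a restricted right-extension set, whereas on the ``$T'$ side'' the guard is removed or changed, a full range of powers $\mathtt{a}^j$ become maximal repeats, and each acquires an extra right-extension. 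The slightly larger additive loss in the deletion case ($e-4$ and $n-4$ instead of $e-2$ and $n-2$) should be exactly accounted for by the fact that a deletion also destroys two boundary edges which an insertion, or a substitution by a fresh character, leaves intact.

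Concretely, for each of the three cases I would display the full CDAWG of $T$ and of $T'$ — or, equivalently, the list of their internal nodes annotated with out-degrees — derive the closed forms for $e(T)$ and $e(T')$, pass to the limit for the multiplicative bounds, and read off the additive bounds directly; an accompanying figure showing how the CDAWG reorganizes around the edited position would help clarify the argument, as in the other sections.

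The main obstacle is step (ii) applied to $T'$: proving the exact edge count of $\CDAWG(T')$. A single edit can interact with a CDAWG in subtle ways — it may split one maximal-repeat equivalence class, merge two of them, or alter several right-extension sets simultaneously — so one must argue carefully that no maximal repeat and no right-extension has been overlooked or double-counted. The delicate points are (a) the role of the end-of-string convention in deciding the right-maximality of suffixes, (b) occurrences that straddle the edited position, and (c) the extremal indices where the long run can no longer be extended. Once $e(T)$ and $e(T')$ are pinned down for the chosen family, the inequalities in each of the three cases are immediate.
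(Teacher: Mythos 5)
Your overall strategy --- exhibit explicit near-unary families in which a single edit asymptotically doubles the CDAWG edge count, with $e(T)=n-O(1)$ so that the additive bounds in terms of $e$ and in terms of $n$ come for free --- is exactly the route the paper takes, and your anticipation that the deletion case costs two extra boundary edges (hence $e-4$ rather than $e-2$) is also correct. The genuine gap is that the proposal never produces the witnesses. For a lower bound proved by construction, the concrete family and the two verified edge counts \emph{are} the proof; ``a near-unary string whose long $\mathtt{a}$-run is controlled by a single guard character'' describes a search space, not a witness, and you explicitly defer the one step that carries all the content (enumerating the maximal repeats and right-extensions of $T'$). As it stands, none of the six inequalities is established.

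For comparison, the paper's witnesses are simple enough that your ``obstacle (ii)'' evaporates. For deletions it takes $T=\mathtt{a}^m\mathtt{b}\mathtt{a}^m\mathtt{b}$ (so $n=2m+2$): the maximal repeats are $\mathtt{a}^h$ for $1\le h<m$, each with right-extensions $\mathtt{a}$ and $\mathtt{b}$, and $\mathtt{a}^m\mathtt{b}$ with a single out-edge, giving $e(T)=2m+1=n-1$ once the source's two out-edges are added; deleting the middle $\mathtt{b}$ yields $T'=\mathtt{a}^{2m}\mathtt{b}$, whose maximal repeats are $\mathtt{a}^k$ for $1\le k<2m$, each of out-degree $2$, so $e(T')=4m=2n-4$, whence the ratio tends to $2$ and the difference is $n-4=e(T)-4$. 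Substituting the middle $\mathtt{b}$ by $\mathtt{a}$ gives $T''=\mathtt{a}^{2m+1}\mathtt{b}$ and the $-2$ bounds. For insertions it uses $S=\mathtt{a}^n$ with $e(S)=n$ against $S'=\mathtt{a}^n\mathtt{b}$ with $e(S')=2n-2$. Note that the doubling arises from two different mechanisms: in the insertion case each $\mathtt{a}^h$ gains a second right-extension, while in the deletion/substitution case the number of maximal repeats doubles and the out-degrees stay at $2$; your template blends these, so when you do commit to concrete strings you must check which mechanism is actually operating. To complete the proof, fix such families and carry out the two enumerations explicitly.
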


\begin{proof}
\textbf{deletions}:
Consider string $T = \mathtt{a}^m\mathtt{b}\mathtt{a}^m\mathtt{b}$
of length $n = 2m+2$.
All the maximal repeats of $T$ are either of form
(1) $\mathtt{a}^h$ with $1 \leq h < m$ or
(2) $\mathtt{a}^m\mathtt{b}$.
Each of those in group (1) has exactly two out-going edges
labeled with $\mathtt{a}$ and $\mathtt{b}$,
and the one in (3) has exactly one out-going edge labeled $\mathtt{a}^m\mathtt{b}$.
Summing up these edges together with the two out-going edges from the source,
the total number of edges in $\CDAWG(T)$ is $2m+1 = n-1$
(see also the left diagram of Figure~\ref{fig:CDAWG}).
Consider string $T' = \mathtt{a}^{2m}\mathtt{b}$ of length $n-1 = 2m+1$
that can be obtained by removing the middle $\mathtt{b}$ from $T$.
$\CDAWG(T)$ has $2m$ internal nodes each of which represents
maximal repeat $a^{k}$ for $1 \leq k < 2m$ and has two out-going edges
labeled with $\mathtt{a}$ and $\mathtt{b}$.
Thus, $\CDAWG(T')$ has exactly $4m = 2n-4$ edges,
including the two out-going edges from the source
(see also the right diagram of Figure~\ref{fig:CDAWG}).
Thus we have $e(T')/e(T) = \frac{4m}{2m+2} = \frac{2n-4}{n}$ which tends to $2$,
and $e(T') - e(T) = 2m-2 = n-4 = e(T)-4$.
This gives us $\liminf_{n \rightarrow \infty}\MSensDel(e,n) \geq 2$
$\ASensDel(e,n) \geq n-4$ and $\ASensDel(e, n) \geq e - 4$.

\textbf{substitutions}:
By replacing the middle $\mathtt{b}$ of $T$ with $\mathtt{a}$,
we obtain string $T'' = \mathtt{a}^{2m+1}\mathtt{b}$,
which gives us similar bounds
$\liminf_{n \rightarrow \infty}\MSensDel(e,n) \geq 2$, 
$\ASensDel(e,n) \geq n-2$ and $\ASensDel(e, n) \\ \geq e - 2$.

\textbf{insertions}:
Consider string $S = \mathtt{a}^n$ of length $n$.
The maximal repeats of $\CDAWG(S)$ are
all of form $\mathtt{a}^h$ with $1 \leq h < n$
and each of them has exactly one out-going edge labeled by $\mathtt{a}$.
The total number of edges in $\CDAWG(S)$ is thus $n$
including the one from the source.
Consider string $S' = \mathtt{a}^n \mathtt{b}$ of length $n+1$.
The set of maximal repeats does not change from $S$,
but $\mathtt{b}$ is a right-extension of $\mathtt{a}^h$ for each $1 \leq h < n$.
Thus, $\CDAWG(S')$ has a total of $2n-2$ edges,
including the two out-going edges from the source.
Thus we have $e(S')/e(S) = \frac{2n-2}{n}$
and $e(S') - e(S) = n-2$.
This gives us $\liminf_{n \rightarrow \infty}\MSensIns(e,n) \geq 2$
$\ASensIns(e,n) \geq n-2$ and $\ASensIns(e, n) \geq e - 2$.
\end{proof}

\begin{figure}[tbh]
  \centerline{
    \includegraphics[scale=0.35]{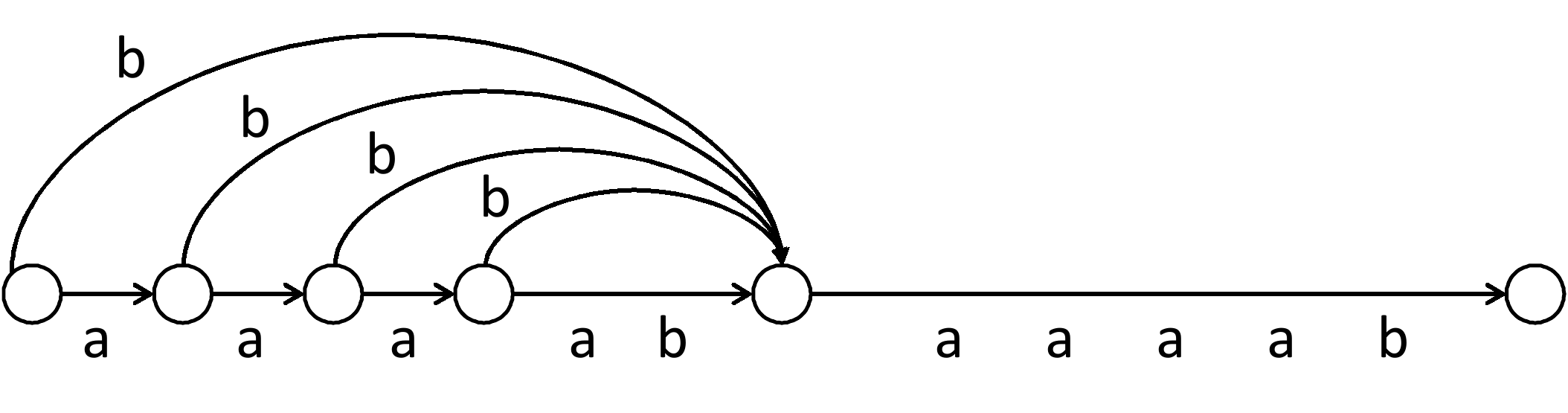}
    \hfill
    \includegraphics[scale=0.35]{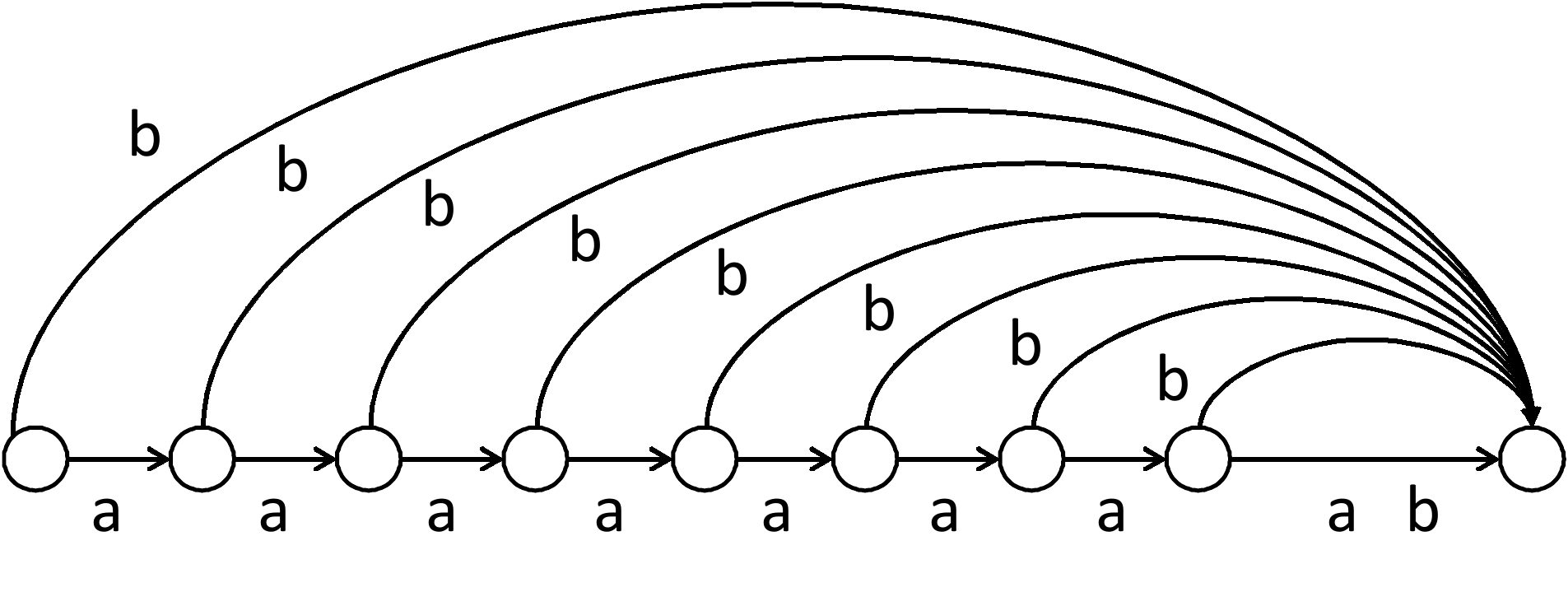}
  }
  \caption{The CDAWGs for strings $T = \mathtt{a}^4\mathtt{b}\mathtt{a}^4\mathtt{b}$ (left) and $T' = \mathtt{a}^8\mathtt{b}$ (right).}
  \label{fig:CDAWG}
\end{figure}

 \section{Concluding remarks and future work}
\label{sec:conclusions}

In the seminal paper by Varma and Yoshida~\cite{VarmaY21}
which first introduced the notion of sensitivity for (general) algorithms
and studied the sensitivity of graph algorithms, the authors wrote:
\begin{quote}
``\textit{Although we focus on graphs here, we note that our definition can also be extended to the study of combinatorial objects other than graphs such as strings and constraint satisfaction problems.}''
\end{quote}
Our study was inspired by the afore-quoted suggestion, and 
our sensitivity for string compressors and repetitiveness measures
enables one to evaluate the robustness and stability of compressors and
repetitiveness measures.

The major technical contributions of this paper are
the \emph{tight and constant upper and lower bounds} for the multiplicative
sensitivity of the LZ77 family, the smallest bidirectional scheme $b$,
and the substring complexity $\delta$.
We also presented tight and constant upper and lower bounds
for the multiplicative sensitivity of the recently proposed grammar compressor GCIS,
which is based on the idea of the Induced Sorting algorithm for suffix sorting.
We also reported non-trivial upper and/or lower bounds
for other string compressors, including RLBWT, LZ-End, LZ78, 
AVL-grammar, $\alpha$-balanced grammar, RePair, LongestMatch, Greedy, Bisection, and CDAWG.
Some of the upper bounds reported here follow from previous important work~\cite{KempaP18, KociumakaNP20, kempa2020resolution, KreftN13, KempaS22, CharikarLLPPSS05, Rytter03, Jez16}.

Apparent future work is to
complete Tables~\ref{tbl:multiplicative_sensitivity} and~\ref{tbl:additive_sensitivity}
by filling the missing pieces and closing the gaps between
the upper and lower bounds which are not tight there.

While we dealt with a number of string compressors and repetitiveness measures,
it has to be noted that our list is far from being comprehensive:
It is intriguing to analyze the sensitivity of
other important and useful compressors and repetitiveness measures
including
the size $\nu$ of the smallest NU-systems~\cite{NavarroU21},
the sizes of the other locally-consistent compressed indices such as
ESP-index~\cite{MaruyamaNKS13} and SE-index~\cite{NishimotoIIBT20}.

Our notion of the sensitivity for string compressors/repetitiveness measures
can naturally be extended to labeled tree compressors/repetitiveness measures.
It would be interesting to analyze the sensitivity for
the smallest tree attractor~\cite{Prezza21},
the run-length XBWT~\cite{Prezza21},
the tree LZ77 factorization~\cite{GawrychowskiJ16},
tree grammars~\cite{LohreyMM13,GanardiHLN18},
and top-tree compression of trees~\cite{BilleGGLW19}.
 
\section*{Acknowledgments}
This work was supported by JSPS KAKENHI Grant Numbers JP20J21147 (MF) and JP22H03551 (SI),
and by JST PRESTO Grant Number JPMJPR1922 (SI).
The authors thank Yuichi Yoshida for his helpful comments.
The authors thank anonymous referees for pointing out some errors in the earlier version of this work and for their suggestions to improve the paper.

\clearpage
\appendix

\section{Omitted proofs}

In this section, we present omitted proofs.

\subsection{Proof for Theorem~\ref{theo:lz77_lowerbounds_sqrtn} ($\Omega(\sqrt{n})$ additive sensitivity for $\zorig$)} \label{app:lz77_lowerbounds_sqrtn}

\begin{proof}
  Let $p = 2^h$ where $h \geq 1$.

  \textbf{substitutions:}  
  Consider the following string $T$ of length $n = \Theta(p^2)$:
  \[
    T = \mathtt{a^{2\mathit{p}-2}b \cdot a^{\mathit{p}}b\#_1  \cdot a^{\mathit{p}+1}b\#_2 \cdot a^{\mathit{p}+2}b\#_3 \cdots a^{2\mathit{p}-2} b \#_{\mathit{p}-1}},
  \]
  where $\#_j$ for every $1 \leq j \leq p-1$ is a distinct character.
The non self-referencing LZ77 factorization of $T$ is
  $$\LZorig(T) = \mathtt{a|a^2|a^4| \cdots |a^{2^{\mathit{h}-1}}|a^{\mathit{p}-1}b|a^{\mathit{p}}b\#_1|a^{\mathit{p}+1}b\#_2|a^{\mathit{p}+2}b\#_3|\cdots | a^{2\mathit{p}-2}b\#_{\mathit{p}-1}|}$$
  with $\zorig(T) = h+p$.
  Then, we consider the string 
  \[
    T' = \mathtt{a^{\mathit{p}-1}ca^{\mathit{p}-2} b \cdot a^{\mathit{p}}b\#_1  \cdot a^{\mathit{p}+1}b\#_2 \cdot a^{\mathit{p}+2}b\#_3
      \cdots a^{2\mathit{p}-2} b \#_{\mathit{p}-1}},
    \]
  which can be obtained from $T$ by substituting the $\mathit{p}$-th $\mathtt{a}$ with $\mathtt{c}$.
  Let us analyze the structure of the non self-referencing LZ77 factorization of $T'$.
  It is clear that $h$ factors in the interval $[1..p-1]$ are unchanged.
  Since $\mathtt{c}$ is a fresh character, it becomes a factor of length 1.
  Also, $\mathtt{a^{\mathit{p}-2} b}$ becomes a factor.
  The following each factor $\mathtt{a}^{p+k-2} \mathtt{b} \#_{k-1}$ with $2 \leq k \leq p$
  is divided into two factors $\mathtt{a}^{p+k-2}$ and $\mathtt{b}\#_{k-1}$,
  since there are no previous occurrences of $\mathtt{a}^{p+k-2}$ and $\#_{k-1}$.
  Thus, the non self-referencing LZ77 factorization of $T'$ is
  $$\LZorig(T') = \mathtt{a|a^2|a^4| \cdots |a^{2^{\mathit{h}-1}}|c|a^{\mathit{p}-2}b|a^{\mathit{p}}|b\#_1|a^{\mathit{p}+1}|b\#_2|a^{\mathit{p}+2}|b\#_3|\cdots | a^{2\mathit{p}-2}|b}\#_{p-1}|$$
  with $\zorig(T') = h+2p$, which leads to
  $\liminf_{n \rightarrow \infty}\MSensSub(\zorig,n) \geq \liminf_{p \to \infty} (h+2p)/(h+p) = 2$, $\ASensSub(\zorig,n) \geq (h+2p)-(h+p) = p = \Omega(\sqrt{n})$.

  \textbf{insertions:}
  As for the same string $T$, we consider the string 
  \[
    T' = \mathtt{a^{\mathit{p}-1}ca^{\mathit{p}-1} b \cdot a^{\mathit{p}}b\#_1  \cdot a^{\mathit{p}+1}b\#_2 \cdot a^{\mathit{p}+2}b\#_3
      \cdots a^{2\mathit{p}-2} b \#_{\mathit{p}-1}},
    \]
  which can be obtained from $T$ by inserting $\mathtt{c}$ between position $p-1$ and position $p$ in $T$.
  Then, by similar arguments to the case of substitutions, the non self-referencing LZ77 factorization of $T'$ is
  $$\LZorig(T') = \mathtt{a|a^2|a^4| \cdots |a^{2^{\mathit{h}-1}}|c|a^{\mathit{p}-1}b|a^{\mathit{p}}|b\#_1|a^{\mathit{p}+1}|b\#_2|a^{\mathit{p}+2}|b\#_3|\cdots | a^{2\mathit{p}-2}|b}\#_{p-1}|$$
  with $\zorig(T') = h+2p$, which leads to
  $\liminf_{n \rightarrow \infty}\MSensIns(\zorig,n) \geq \liminf_{p \to \infty} (h+2p)/(h+p) = 2$, $\ASensIns(\zorig,n) \geq p = \Omega(\sqrt{n})$.

  \textbf{deletions:}
  Consider the following string $T$ of length $n = \Theta(p^2)$:
  \[
    T = \mathtt{a^{\mathit{p}-1}cb \cdot acb\#_1  \cdot a^{2}cb\#_2 \cdot a^{3}cb\#_3
      \cdots a^{\mathit{p}-1} cb \#_{\mathit{p}-1}}.
  \]
The non self-referencing LZ77 factorization of $T$ is
  $$\LZorig(T)= \mathtt{a|a^2|a^4| \cdots |a^{2^{\mathit{h}-1}}|c|b|acb\#_1|a^{2}cb\#_2|a^{3}cb\#_3|\cdots | a^{\mathit{p}-1}cb\#_{\mathit{p}-1}|}$$
  with $\zorig(T) = h+p+1$.
  Then, we consider the string 
  \[
    T' =\mathtt{a^{\mathit{p}-1}b \cdot acb\#_1  \cdot a^{2}cb\#_2 \cdot a^{3}cb\#_3
      \cdots a^{\mathit{p}-1} cb \#_{\mathit{p}-1}},
    \]
  which can be obtained from $T$ by deleting the first $\mathtt{c}$ in $T$.
  Let us analyze the structure of the non self-referencing LZ77 factorization of $T'$.
  It is clear that $h$ factors in the interval $[1..p-1]$ are unchanged.
  The next factor is $\mathtt{b}$ of length 1.
  The following each factor $\mathtt{a}^{k} \mathtt{cb} \#_k$ with $1 \leq k \leq p-1$
  is divided into two factors $\mathtt{a}^{k}\mathtt{c}$ and $\mathtt{b}\#_k$,
  since there are no previous occurrences of $\mathtt{a}^{k}\mathtt{c}$ and $\mathtt{b}\#_k$.
  Thus, the non self-referencing LZ77 factorization of $T'$ is
  $$\LZorig(T')= \mathtt{a|a^2|a^4| \cdots |a^{2^{\mathit{h}-1}}|b|ac|b\#_1|a^{2}c|b\#_2|a^{3}c|b\#_3|\cdots | a^{\mathit{p}-1}c|b\#_{\mathit{p}-1}|}$$
  with $\zorig(T') = h+1+ 2(p-1) = h+2p-1$, which leads to
  $\liminf_{n \rightarrow \infty}\MSensDel(\zorig,n) \geq \liminf_{p \to \infty} (h+2p-1)/(h+p+1) = 2$, $\ASensDel(\zorig,n) \geq (h+2p-1)-(h+p+1) = p-2 = \Omega(\sqrt{n})$.
\end{proof}

\subsection{Proof for Theorem~\ref{theo:lz77sr_lowerbounds_sqrtn} ($\Omega(\sqrt{n})$ additive sensitivity of $\zsrorig$)} \label{app:lz77sr_lowerbounds_sqrtn}

\begin{proof}
  \textbf{substitutions:}
  Consider the following string $T$ of length $n = \Theta(p^2)$:
  \[
    T = \mathtt{a^{\mathit{p}-1}a \cdot a^{\mathit{p}}b \cdot a^{\mathit{p}+1}b\#_1 \cdot a^{\mathit{p}+2}b\#_2  \cdots a^{2\mathit{p}-1}b\#_{\mathit{p}-1}}
  \]
  which consists of $p+1$ components.
The self-referencing LZ77 factorization of $T$ is
  \[
  \LZorigsr(T) = \mathtt{a|a^{2\mathit{p}-1}b|a^{\mathit{p}+1}b\#_1|a^{\mathit{p}+2}b\#_2| \cdots |a^{2\mathit{p}-1}b\#_{\mathit{p}-1}|}
  \]
  with $\zsrorig(T) = p+1$. Notice that the second factor $\mathtt{a^{2\mathit{p}-1}1}$ is self-referencing.

  Consider the string $T'$
  \[
  T' = \mathtt{a^{\mathit{p}-1}c \cdot a^{\mathit{p}}b \cdot a^{\mathit{p}+1}b\#_1  \cdot a^{\mathit{p}+2}b\#_2  \cdots a^{2\mathit{p}-1}b\#_{\mathit{p}-1}}
  \]
  that can be obtained from $T$ by substituting the $p$-th $\mathtt{a}$ with $\mathtt{c}$.
  The self-referencing LZ77 factorization of $T'$ is
  \[
  \LZorigsr(T') = \mathtt{a|a^{\mathit{p}-2}c|a^{\mathit{p}}|b|a^{\mathit{p}+1}|b\#_1|a^{\mathit{p}+2}|b\#_2| \cdots |a^{2\mathit{p}-1}|b\#_{\mathit{p}-1}|}
  \]
  with $\zsrorig(T') = 2p+2$, which leads to
  $\MSensSub(\zsrorig,n) \geq (2p+2)/(p+1) = 2$,
  $\ASensSub(\zsrorig,n) \geq (2p+2)-(p+1) = p+1 = \zsrorig$, and
  $\ASensSub(\zsrorig,n) = \Omega(\sqrt{n})$.

  \textbf{insertions:}
  Consider the following string $T$ of length $n = \Theta(p^2)$:
  \[
    T = \mathtt{a^{\mathit{p}-1} \cdot a^{\mathit{p}}b \cdot a^{\mathit{p}+1}b\#_1  \cdot a^{\mathit{p}+2}b\#_2  \cdots a^{2\mathit{p}-1}b\#_{\mathit{p}-1}}.
  \]
The self-referencing LZ77 factorization of $T$ is
  \[
  \LZorigsr(T) = \mathtt{a|a^{2\mathit{p}-2}b|a^{\mathit{p}+1}b\#_1|a^{\mathit{p}+2}b\#_2| \cdots |a^{2\mathit{p}-1}b\#_{\mathit{p}-1}|}
  \]
  with $\zsrorig(T) = p+1$. Notice that the second factor $\mathtt{a^{2\mathit{p}-1}1}$ is self-referencing.

  Consider the string $T'$
  \[
  T' = \mathtt{a^{\mathit{p}-1}c \cdot a^{\mathit{p}}b \cdot a^{\mathit{p}+1}b\#_1  \cdot a^{\mathit{p}+2}b\#_2  \cdots a^{2\mathit{p}-1}b\#_{\mathit{p}-1}}
  \]
  that can be obtained from $T$ by inserting $\mathtt{c}$ between position $p-1$ and position $p$.
  The self-referencing LZ77 factorization of $T'$ is
  \[
  \LZorigsr(T') = \mathtt{a|a^{\mathit{p}-2}c|a^{\mathit{p}}|b|a^{\mathit{p}+1}|b\#_1|a^{\mathit{p}+2}|b\#_2| \cdots |a^{2\mathit{p}-1}|b\#_{\mathit{p}-1}|}
  \]
  with $\zsrorig(T') = 2p+2$, which leads to
  $\MSensIns(\zsrorig,n) \geq 2$,
  $\ASensIns(\zsrorig,n) \geq p+1 = \zsrorig$, and
  $\ASensIns(\zsrorig,n) = \Omega(\sqrt{n})$.

  \textbf{deletions:}
  Consider the following string $T$ of length $n = \Theta(p^2)$:
  \[
    T = \mathtt{a^{\mathit{p}}bc \cdot abc\#_1  \cdot a^{2}bc\#_2 \cdots a^{\mathit{p}}bc\#_{\mathit{p}}}.
  \]
The self-referencing LZ77 factorization of $T$ is
  \[
  \LZorigsr(T) = \mathtt{a|a^{\mathit{p}-1}b|c|abc\#_1|a^{2}bc\#_2| \cdots |a^{\mathit{p}}bc\#_{\mathit{p}}|}
  \]
  with $\zsrorig(T) = p+3$. Notice that the second factor $\mathtt{a^{\mathit{p}-2}1}$ is self-referencing.

  Consider the string $T'$
  \[
  T' = \mathtt{a^{\mathit{p}}b \cdot abc\#_1 \cdot a^{2}bc\#_2 \cdots a^{\mathit{p}}bc\#_{\mathit{p}}}
  \]
  that can be obtained from $T$ by deleting the first $\mathtt{c}$ of position $p+2$.
  Let us analyze the structure of the self-referencing LZ77 factorization of $T'$.
  The first two factors are unchanged.
  The third factor $\mathtt{c}$ of $\LZorigsr(T)$ is removed, and each of the remaining factors of form $\mathtt{a}^{\mathit{k}}\mathtt{bc}\#_{\mathit{k}}$ in $\LZorigsr(T)$ is divided into two factors as $\mathtt{a}^{\mathit{k}}\mathtt{bc}|\#_{\mathit{k}}|$.
  Thus the self-referencing LZ77 factorization of $T'$ is
  \[
  \LZorigsr(T') = \mathtt{a|a^{\mathit{p}-1}b|abc|\#_1|a^{2}bc|\#_2| \cdots |a^{\mathit{p}}bc|\#_{\mathit{p}}|}
  \]
  with $\zsrorig(T') = 2p+2$, which leads to
  $\liminf_{n \rightarrow \infty} \MSensDel(\zsrorig,n) \geq \liminf_{p \rightarrow \infty}(2p+2)/(p+3) = 2$,
  $\ASensDel(\zsrorig,n) \geq 2p+2-(p+3) = p-1 = \Omega(\sqrt{n})$.
  \end{proof}

  It is also possible to binarize the strings $T$ and $T'$ in the above proof
  for the cases of substitutions and insertions,
  while retaining the same lower bounds:
  \begin{corollary}
  For the self-referencing LZ77 factorization, there are binary strings of length $n$ that satisfy $\MSensSub(\zsrorig,n) \geq 2$, $\MSensIns(\zsrorig,n) \geq 2$, respectively.
  \end{corollary}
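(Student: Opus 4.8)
The plan is to exploit the fact that the base string $T = R_1\cdots R_p$ used in the proof of Theorem~\ref{theo:LZsr_lowerbounds} is already over the binary alphabet $\{\mathtt{0},\mathtt{1}\}$; there, only the edited strings $T'$ use the third letter $\mathtt{2}$. So I would keep $T$ as is and define the binary edited strings by replacing that single occurrence of $\mathtt{2}$ by $\mathtt{1}$: for substitutions, $T' = \mathtt{0}\cdot\mathtt{1}\cdot T[3..n]$ (obtained from $T$ by substituting $T[2]=\mathtt{0}$ with $\mathtt{1}$), and for insertions, $T'$ is obtained from $T$ by inserting $\mathtt{1}$ right after $T[1]$. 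It then suffices to show $\zsrorig(T') = 2p - O(1)$ while $\zsrorig(T) = p$ (the latter being exactly as in Theorem~\ref{theo:LZsr_lowerbounds}), which makes the ratio $\zsrorig(T')/\zsrorig(T)$ tend to $2$; combined with the matching upper bound $\MSensSub(\zsrorig,n),\MSensIns(\zsrorig,n)\le 2$ from Theorem~\ref{theo:u_zsrorig}, this pins the binary multiplicative sensitivity of $\zsrorig$ at $2$ asymptotically.

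The reason this should go through with almost no new work is that $\mathtt{2}$ played exactly one role in the proof of Theorem~\ref{theo:LZsr_lowerbounds}: as a fresh letter at position $2$ it becomes the singleton second factor $f_2=\mathtt{2}$, and it guarantees $T'[1..2]=\mathtt{02}\neq\mathtt{00}=R_k[1..2]$. With $\mathtt{1}$ in its place, $f_2=\mathtt{1}$ is still a singleton factor, because the only earlier position (position $1$) holds $\mathtt{0}$; and $T'[1..2]=\mathtt{01}$ still differs from $R_k[1..2]=\mathtt{00}$. Hence the inductive claim from that proof — that each $R_k$ with $k\ge 3$ is split into two equal halves in the factorization of $T'$ — carries over essentially verbatim, since that induction relied only on the prefix mismatch $T'[1..2]\neq\mathtt{00}$ and on the run structure of the $R_k$'s, neither of which changes. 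Thus all but an $O(1)$-length prefix of the factorization of the binary $T'$ coincides with the ternary one.

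The only genuine difference, and the point I would check carefully, is at the very front: because $\mathtt{1}$ now occurs at position $2$, a constant number of the earliest factors can fuse — e.g. the two consecutive short factors $\mathtt{1}$ and $\mathtt{000}$ appearing in the ternary parse merge into a single self-referencing factor $\mathtt{1000}$ — lowering the factor count by $O(1)$, so $\zsrorig(T') = 2p-\Theta(1)$ instead of exactly $2p$. I would verify that this shrinkage does not cascade: once past this short prefix, every remaining factor is a long run of $\mathtt{0}$'s copied from another long $\mathtt{0}$-run (never from the isolated $\mathtt{1}$ at position $2$), so the parse of $R_4,\ldots,R_p$ is literally the same as in the ternary case. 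That containment check — confirming the extra early occurrences of $\mathtt{1}$ perturb only an $O(1)$-length prefix — is the main (and only mildly delicate) obstacle; everything else is a routine re-run of the bookkeeping already carried out in the proof of Theorem~\ref{theo:LZsr_lowerbounds}. Since $n=\Theta(2^p)$, these binary strings witness $\MSensSub(\zsrorig,n),\MSensIns(\zsrorig,n)\to 2$ as $n\to\infty$.
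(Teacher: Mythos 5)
Your construction and its analysis are essentially sound, and it is a genuinely different route from the paper's: you recycle the binary base string $T=R_1\cdots R_p$ of Theorem~\ref{theo:LZsr_lowerbounds} and replace the fresh letter $\mathtt{2}$ by $\mathtt{1}$, whereas the paper instead binarizes the family of Theorem~\ref{theo:lz77sr_lowerbounds_sqrtn} (encoding the markers $\mathtt{b},\mathtt{c},\#_j$ as $\mathtt{1}$-delimited blocks of $\mathtt{0}$'s). Your containment check does go through: the induction that halves each $R_k$ only uses that no prefix of $R_k$ (which begins with $\mathtt{00}$) can have an occurrence starting at position $1$ or $2$ of $T'$ (still true, since $T'[1..2]=\mathtt{01}$ and $T'[2..3]=\mathtt{10}$), and that $\mathtt{1}^{k-1}$ for $k\ge 3$ has no occurrence touching the isolated $\mathtt{1}$ at position $2$; since $T'[3..n]$ is unchanged, the parse from $R_4$ onward is literally the ternary one, and the perturbation is confined to an $O(1)$ prefix (the ternary parse $\mathtt{0|2|00|1|000|011|}\cdots$ becomes $\mathtt{0|1|00|1000|011|}\cdots$).

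The problem is precisely the fusion you flag: it costs one factor, so $\zsrorig(T')=2p-1$ against $\zsrorig(T)=p$, and the ratio is $2-1/p<2$ for every $p$. Exhibiting such a pair only proves $\MSensSub(\zsrorig,n)\ge 2-1/p$, i.e., $\liminf_{n\to\infty}\MSensSub(\zsrorig,n)\ge 2$; the corollary asserts $\MSensSub(\zsrorig,n)\ge 2$ with no $\liminf$, and the matching upper bound $\MSensSub(\zsrorig,n)\le 2$ cannot close this gap, since it bounds the ratio from above, not below. The paper's different family is chosen exactly so that the count comes out to $\zsrorig(T)=p+2$ and $\zsrorig(T')=2p+4$, hitting the ratio $2$ exactly. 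So either restate the claim with a $\liminf$ (matching how most lower bounds in the paper are phrased) or adjust the prefix of $T$ so that the lost factor is recovered; as written, your strings do not witness the literal inequality.
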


  \begin{proof}
  Let $p \geq 2$.

  \textbf{substitutions:}
  Consider the following string $T$ of length $n = \Theta(p^2)$:
  $$T = \mathtt{0^{\mathit{p}-1}\cdot0 \cdot 0^{2\mathit{p}}1 \cdot 0^{2\mathit{p}+1}101 \cdot 0^{2\mathit{p}+2}10^21 \cdots 0^{3\mathit{p}}10^{\mathit{p}}1}.$$
  The self-referencing LZ77 factorization of $T$ is: 
  $$\LZorigsr(T) = \mathtt{0|0^{3\mathit{p}-1}1|0^{2\mathit{p}+1}101|0^{2\mathit{p}+2}10^21| \cdots |0^{3\mathit{p}}10^{\mathit{p}}1}|$$
  with $p+2$ factors.
  Then, we consider the string
  $$T' = \mathtt{0^{\mathit{p}-1}\cdot1 \cdot 0^{2\mathit{p}}1 \cdot 0^{2\mathit{p}+1}101 \cdot 0^{2\mathit{p}+2}10^21 \cdots 0^{3\mathit{p}}10^{\mathit{p}}1}$$
  is obtained by substituting $p$-th $\mathtt{0}$ with $\mathtt{1}$. The self-referencing LZ77 factorization of $T'$ is: 
  $$\LZorigsr(T') = \mathtt{0|0^{\mathit{p}-2}1|0^{\mathit{p}}|0^\mathit{p}1|0^{2\mathit{p}+1}|101|0^{2\mathit{p}+2}|10^21| \cdots |0^{3\mathit{p}}|10^{\mathit{p}}1}|$$
  with $2p+4$ factors.
  Then we obtain $\MSensSub(\zsrorig,n) \geq (2p+4)/(p+2) = 2$.

  \textbf{insertions:}
  Consider the following string $T$ of length $n = \Theta(p^2)$:
  $$T = \mathtt{0^{\mathit{p}-1} \cdot 0^{2\mathit{p}}1 \cdot 0^{2\mathit{p}+1}101 \cdot 0^{2\mathit{p}+2}10^21 \cdots 0^{3\mathit{p}}10^{\mathit{p}}1}.$$
  The self-referencing LZ77 factorization of $T$ is: 
  $$\LZorigsr(T) = \mathtt{0|0^{3\mathit{p}-2}1|0^{2\mathit{p}+1}101|0^{2\mathit{p}+2}10^21| \cdots |0^{3\mathit{p}}10^{\mathit{p}}1}|$$
  with $p+2$ factors.
  Consider the string
  $$T' = \mathtt{0^{\mathit{p}-1}\cdot1 \cdot 0^{2\mathit{p}}1 \cdot 0^{2\mathit{p}+1}101 \cdot 0^{2\mathit{p}+2}10^21 \cdots 0^{3\mathit{p}}10^{\mathit{p}}1}$$
  is obtained by inserting $\mathtt{1}$ between $p-1$ and $p$. The self-referencing LZ77 factorization of $T'$ is: 
  $$\LZorigsr(T') = \mathtt{0|0^{\mathit{p}-2}1|0^{\mathit{p}}|0^\mathit{p}1|0^{2\mathit{p}+1}|101|0^{2\mathit{p}+2}|10^21| \cdots |0^{3\mathit{p}}|10^{\mathit{p}}1}|$$
  with $2p+4$ factors.
  Then we get $\MSensIns(\zsrorig,n) \geq (2p+4)/(p+2) = 2$.

\end{proof}

\clearpage
\bibliographystyle{abbrv}
\bibliography{ref}

\begin{thebibliography}{10}

\bibitem{AkagiKNIBT21}
T.~Akagi, D.~K{\"{o}}ppl, Y.~Nakashima, S.~Inenaga, H.~Bannai, and M.~Takeda.
\newblock Grammar index by induced suffix sorting.
\newblock In {\em {SPIRE} 2021}, volume 12944 of {\em Lecture Notes in Computer
  Science}, pages 85--99. Springer, 2021.

\bibitem{ApostolicoL00}
A.~Apostolico and S.~Lonardi.
\newblock Off-line compression by greedy textual substitution.
\newblock {\em Proceedings of the IEEE}, 88(11):1733--1744, 2000.

\bibitem{BannaiGI20}
H.~Bannai, T.~Gagie, and T.~I.
\newblock Refining the r-index.
\newblock {\em Theor. Comput. Sci.}, 812:96--108, 2020.

\bibitem{BannaiIK17}
H.~Bannai, S.~Inenaga, and D.~K{\"{o}}ppl.
\newblock Computing all distinct squares in linear time for integer alphabets.
\newblock In {\em {CPM} 2017}, volume~78 of {\em LIPIcs}, pages 22:1--22:18,
  2017.

\bibitem{BelazzouguiCGGK21}
D.~Belazzougui, M.~C{\'{a}}ceres, T.~Gagie, P.~Gawrychowski,
  J.~K{\"{a}}rkk{\"{a}}inen, G.~Navarro, A.~O. Pereira, S.~J. Puglisi, and
  Y.~Tabei.
\newblock Block trees.
\newblock {\em J. Comput. Syst. Sci.}, 117:1--22, 2021.

\bibitem{BelazzouguiC17}
D.~Belazzougui and F.~Cunial.
\newblock Representing the suffix tree with the {CDAWG}.
\newblock In J.~K{\"{a}}rkk{\"{a}}inen, J.~Radoszewski, and W.~Rytter, editors,
  {\em {CPM} 2017}, volume~78 of {\em LIPIcs}, pages 7:1--7:13, 2017.

\bibitem{BelazzouguiGGKO15}
D.~Belazzougui, T.~Gagie, P.~Gawrychowski, J.~K{\"{a}}rkk{\"{a}}inen, A.~O.
  Pereira, S.~J. Puglisi, and Y.~Tabei.
\newblock Queries on {LZ}-bounded encodings.
\newblock In {\em {DCC} 2015}, pages 83--92. {IEEE}, 2015.

\bibitem{BilleEGV18}
P.~Bille, M.~B. Ettienne, I.~L. G{\o}rtz, and H.~W. Vildh{\o}j.
\newblock Time-space trade-offs for {Lempel-Ziv} compressed indexing.
\newblock {\em Theor. Comput. Sci.}, 713:66--77, 2018.

\bibitem{BilleGGLW19}
P.~Bille, P.~Gawrychowski, I.~L. G{\o}rtz, G.~M. Landau, and O.~Weimann.
\newblock Top tree compression of tries.
\newblock In {\em {ISAAC} 2019}, volume 149 of {\em LIPIcs}, pages 4:1--4:18,
  2019.

\bibitem{BlumerBHME87}
A.~Blumer, J.~Blumer, D.~Haussler, R.~M. McConnell, and A.~Ehrenfeucht.
\newblock Complete inverted files for efficient text retrieval and analysis.
\newblock {\em J. {ACM}}, 34(3):578--595, 1987.

\bibitem{CharikarLLPPSS05}
M.~Charikar, E.~Lehman, D.~Liu, R.~Panigrahy, M.~Prabhakaran, A.~Sahai, and
  A.~Shelat.
\newblock The smallest grammar problem.
\newblock {\em {IEEE} Trans. Inf. Theory}, 51(7):2554--2576, 2005.

\bibitem{ChristiansenEKN21}
A.~R. Christiansen, M.~B. Ettienne, T.~Kociumaka, G.~Navarro, and N.~Prezza.
\newblock Optimal-time dictionary-compressed indexes.
\newblock {\em {ACM} Trans. Algorithms}, 17(1):8:1--8:39, 2021.

\bibitem{Crochemore84}
M.~Crochemore.
\newblock Linear searching for a square in a word.
\newblock {\em Bulletin of the European Association of Theoretical Computer
  Science}, 24:66--72, 1984.

\bibitem{Diaz-DominguezN21}
D.~D{\'{\i}}az{-}Dom{\'{\i}}nguez, G.~Navarro, and A.~Pacheco.
\newblock An {LMS}-based grammar self-index with local consistency properties.
\newblock In T.~Lecroq and H.~Touzet, editors, {\em {SPIRE} 2021}, volume 12944
  of {\em Lecture Notes in Computer Science}, pages 100--113. Springer, 2021.

\bibitem{FerradaN19}
H.~Ferrada and G.~Navarro.
\newblock {Lempel-Ziv} compressed structures for document retrieval.
\newblock {\em Inf. Comput.}, 265:1--25, 2019.

\bibitem{GagieNP20}
T.~Gagie, G.~Navarro, and N.~Prezza.
\newblock Fully functional suffix trees and optimal text searching in
  {BWT}-runs bounded space.
\newblock {\em J. {ACM}}, 67(1):2:1--2:54, 2020.

\bibitem{GanardiHLN18}
M.~Ganardi, D.~Hucke, M.~Lohrey, and E.~Noeth.
\newblock Tree compression using string grammars.
\newblock {\em Algorithmica}, 80(3):885--917, 2018.

\bibitem{GasieniecR99}
L.~Gasieniec and W.~Rytter.
\newblock Almost optimal fully {LZW}-compressed pattern matching.
\newblock In {\em {DCC} 1999}, pages 316--325, 1999.

\bibitem{Gawrychowski12}
P.~Gawrychowski.
\newblock Tying up the loose ends in fully {LZW}-compressed pattern matching.
\newblock In {\em {STACS} 2012}, volume~14 of {\em LIPIcs}, pages 624--635,
  2012.

\bibitem{GawrychowskiJ16}
P.~Gawrychowski and A.~Jez.
\newblock {LZ77} factorisation of trees.
\newblock In {\em {FSTTCS} 2016}, volume~65 of {\em LIPIcs}, pages 35:1--35:15,
  2016.

\bibitem{GawrychowskiKKL18}
P.~Gawrychowski, A.~Karczmarz, T.~Kociumaka, J.~Lacki, and P.~Sankowski.
\newblock Optimal dynamic strings.
\newblock In {\em {SODA} 2018}, pages 1509--1528. {SIAM}, 2018.

\bibitem{GiulianiILPST21}
S.~Giuliani, S.~Inenaga, Z.~Lipt{\'{a}}k, N.~Prezza, M.~Sciortino, and
  A.~Toffanello.
\newblock Novel results on the number of runs of the
  {Burrows-Wheeler}-transform.
\newblock In {\em SOFSEM}, pages 249--262, 2021.

\bibitem{GusfieldS04}
D.~Gusfield and J.~Stoye.
\newblock Linear time algorithms for finding and representing all the tandem
  repeats in a string.
\newblock {\em J. Comput. Syst. Sci.}, 69(4):525--546, 2004.

\bibitem{HoobinPZ11}
C.~Hoobin, S.~J. Puglisi, and J.~Zobel.
\newblock Relative {Lempel-Ziv} factorization for efficient storage and
  retrieval of web collections.
\newblock {\em Proc. {VLDB} Endow.}, 5(3):265--273, 2011.

\bibitem{I17}
T.~I.
\newblock Longest common extensions with recompression.
\newblock In {\em {CPM} 2017}, volume~78 of {\em LIPIcs}, pages 18:1--18:15,
  2017.

\bibitem{Jez16}
A.~Jez.
\newblock A really simple approximation of smallest grammar.
\newblock {\em Theor. Comput. Sci.}, 616:141--150, 2016.

\bibitem{abs-1910-10631}
D.~Kempa and T.~Kociumaka.
\newblock Resolution of the {Burrows-Wheeler} transform conjecture.
\newblock {\em CoRR}, abs/1910.10631, 2019.

\bibitem{kempa2020resolution}
D.~Kempa and T.~Kociumaka.
\newblock Resolution of the {B}urrows-{W}heeler transform conjecture.
\newblock In {\em {FOCS} 2020}, pages 1002--1013. {IEEE}, 2020.

\bibitem{KempaPPR18}
D.~Kempa, A.~Policriti, N.~Prezza, and E.~Rotenberg.
\newblock String attractors: Verification and optimization.
\newblock In {\em {ESA} 2018}, volume 112 of {\em LIPIcs}, pages 52:1--52:13,
  2018.

\bibitem{KempaP18}
D.~Kempa and N.~Prezza.
\newblock At the roots of dictionary compression: string attractors.
\newblock In {\em {STOC}}, pages 827--840, 2018.

\bibitem{KempaS22}
D.~Kempa and B.~Saha.
\newblock An upper bound and linear-space queries on the lz-end parsing.
\newblock In {\em Proceedings of the 2022 {ACM-SIAM} Symposium on Discrete
  Algorithms, {SODA} 2022}, pages 2847--2866. {SIAM}, 2022.

\bibitem{KidaTSMA98}
T.~Kida, M.~Takeda, A.~Shinohara, M.~Miyazaki, and S.~Arikawa.
\newblock Multiple pattern matching in {LZW} compressed text.
\newblock In {\em DCC 1998}, pages 103--112. {IEEE} Computer Society, 1998.

\bibitem{KiefferY00}
J.~C. Kieffer and E.~Yang.
\newblock Grammar-based codes: {A} new class of universal lossless source
  codes.
\newblock {\em {IEEE} Trans. Inf. Theory}, 46(3):737--754, 2000.

\bibitem{abs-2206-00781}
T.~Kociumaka, G.~Navarro, and F.~Olivares.
\newblock Near-optimal search time in $\delta$-optimal space.
\newblock {\em CoRR}, abs/2206.00781, 2022.

\bibitem{KociumakaNP20}
T.~Kociumaka, G.~Navarro, and N.~Prezza.
\newblock Towards a definitive measure of repetitiveness.
\newblock In {\em {LATIN}}, pages 207--219, 2020.

\bibitem{KolpakovK99}
R.~M. Kolpakov and G.~Kucherov.
\newblock Finding maximal repetitions in a word in linear time.
\newblock In {\em {FOCS} 1999}, pages 596--604. {IEEE} Computer Society, 1999.

\bibitem{KreftN13}
S.~Kreft and G.~Navarro.
\newblock On compressing and indexing repetitive sequences.
\newblock {\em Theor. Comput. Sci.}, 483:115--133, 2013.

\bibitem{KuruppuPZ10}
S.~Kuruppu, S.~J. Puglisi, and J.~Zobel.
\newblock Relative {Lempel-Ziv} compression of genomes for large-scale storage
  and retrieval.
\newblock In E.~Ch{\'{a}}vez and S.~Lonardi, editors, {\em {SPIRE} 2010},
  volume 6393 of {\em Lecture Notes in Computer Science}, pages 201--206, 2010.

\bibitem{KutsukakeMNIBT20}
K.~Kutsukake, T.~Matsumoto, Y.~Nakashima, S.~Inenaga, H.~Bannai, and M.~Takeda.
\newblock On repetitiveness measures of {Thue-Morse} words.
\newblock In C.~Boucher and S.~V. Thankachan, editors, {\em {SPIRE} 2020},
  volume 12303 of {\em Lecture Notes in Computer Science}, pages 213--220.
  Springer, 2020.

\bibitem{LagardeP18}
G.~Lagarde and S.~Perifel.
\newblock {Lempel-Ziv}: a ``one-bit catastrophe'' but not a tragedy.
\newblock In {\em SODA}, pages 1478--1495, 2018.

\bibitem{LarssonM99}
N.~J. Larsson and A.~Moffat.
\newblock Offline dictionary-based compression.
\newblock In {\em {DCC} 1999}, pages 296--305. {IEEE} Computer Society, 1999.

\bibitem{LohreyMM13}
M.~Lohrey, S.~Maneth, and R.~Mennicke.
\newblock {XML} tree structure compression using repair.
\newblock {\em Inf. Syst.}, 38(8):1150--1167, 2013.

\bibitem{MantaciRRRS21}
S.~Mantaci, A.~Restivo, G.~Romana, G.~Rosone, and M.~Sciortino.
\newblock A combinatorial view on string attractors.
\newblock {\em Theor. Comput. Sci.}, 850:236--248, 2021.

\bibitem{MaruyamaNKS13}
S.~Maruyama, M.~Nakahara, N.~Kishiue, and H.~Sakamoto.
\newblock {ESP}-index: A compressed index based on edit-sensitive parsing.
\newblock {\em J. Discrete Algorithms}, 18:100--112, 2013.

\bibitem{MitsuyaNIBT2021}
S.~Mitsuya, Y.~Nakashima, S.~Inenaga, H.~Bannai, and M.~Takeda.
\newblock Compressed communication complexity of hamming distance.
\newblock {\em CoRR}, abs/2103.03468, 2021.

\bibitem{Navarro04}
G.~Navarro.
\newblock Indexing text using the {Ziv-Lempel} trie.
\newblock {\em J. Discrete Algorithms}, 2(1):87--114, 2004.

\bibitem{Navarro19}
G.~Navarro.
\newblock Document listing on repetitive collections with guaranteed
  performance.
\newblock {\em Theor. Comput. Sci.}, 772:58--72, 2019.

\bibitem{Navarro21a}
G.~Navarro.
\newblock Indexing highly repetitive string collections, part {I:}
  repetitiveness measures.
\newblock {\em {ACM} Comput. Surv.}, 54(2):29:1--29:31, 2021.

\bibitem{Navarro21b}
G.~Navarro.
\newblock Indexing highly repetitive string collections, part {II:} compressed
  indexes.
\newblock {\em {ACM} Comput. Surv.}, 54(2):26:1--26:32, 2021.

\bibitem{NavarroP19}
G.~Navarro and N.~Prezza.
\newblock Universal compressed text indexing.
\newblock {\em Theor. Comput. Sci.}, 762:41--50, 2019.

\bibitem{NavarroU21}
G.~Navarro and C.~Urbina.
\newblock On stricter reachable repetitiveness measures.
\newblock In {\em {SPIRE} 2021}, volume 12944 of {\em Lecture Notes in Computer
  Science}, pages 193--206. Springer, 2021.

\bibitem{NelsonKC95}
G.~Nelson, J.~C. Kieffer, and P.~C. Cosman.
\newblock An interesting hierarchical lossless data compression algorithm,
  1995.
\newblock Invited Presentation.

\bibitem{NevillManningW97}
C.~G. Nevill{-}Manning and I.~H. Witten.
\newblock Identifying hierarchical structure in sequences: {A} linear-time
  algorithm.
\newblock {\em J. Artif. Intell. Res.}, 7:67--82, 1997.

\bibitem{NishimotoIIBT20}
T.~Nishimoto, T.~I, S.~Inenaga, H.~Bannai, and M.~Takeda.
\newblock Dynamic index and {LZ} factorization in compressed space.
\newblock {\em Discret. Appl. Math.}, 274:116--129, 2020.

\bibitem{NishimotoT21a}
T.~Nishimoto and Y.~Tabei.
\newblock Optimal-time queries on {BWT}-runs compressed indexes.
\newblock In {\em {ICALP} 2021}, volume 198 of {\em LIPIcs}, pages
  101:1--101:15, 2021.

\bibitem{NishimotoT21b}
T.~Nishimoto and Y.~Tabei.
\newblock R-enum: Enumeration of characteristic substrings in {BWT}-runs
  bounded space.
\newblock In {\em {CPM} 2021}, volume 191 of {\em LIPIcs}, pages 21:1--21:21,
  2021.

\bibitem{NongZC11}
G.~Nong, S.~Zhang, and W.~H. Chan.
\newblock Two efficient algorithms for linear time suffix array construction.
\newblock {\em {IEEE} Trans. Computers}, 60(10):1471--1484, 2011.

\bibitem{NunesLGAN18}
D.~S.~N. Nunes, F.~A. Louza, S.~Gog, M.~Ayala{-}Rinc{\'{o}}n, and G.~Navarro.
\newblock A grammar compression algorithm based on induced suffix sorting.
\newblock In A.~Bilgin, M.~W. Marcellin, J.~Serra{-}Sagrist{\`{a}}, and J.~A.
  Storer, editors, {\em {DCC} 2018}, pages 42--51. {IEEE}, 2018.

\bibitem{NunesLGAN20}
D.~S.~N. Nunes, F.~A. Louza, S.~Gog, M.~Ayala{-}Rinc{\'{o}}n, and G.~Navarro.
\newblock Grammar compression by induced suffix sorting.
\newblock {\em CoRR}, abs/2011.12898, 2020.

\bibitem{Prezza19}
N.~Prezza.
\newblock Optimal rank and select queries on dictionary-compressed text.
\newblock In N.~Pisanti and S.~P. Pissis, editors, {\em {CPM} 2019}, volume 128
  of {\em LIPIcs}, pages 4:1--4:12. Schloss Dagstuhl - Leibniz-Zentrum
  f{\"{u}}r Informatik, 2019.

\bibitem{Prezza21}
N.~Prezza.
\newblock On locating paths in compressed tries.
\newblock In {\em {SODA} 2021}, pages 744--760. {SIAM}, 2021.

\bibitem{Rytter03}
W.~Rytter.
\newblock Application of {Lempel-Ziv} factorization to the approximation of
  grammar-based compression.
\newblock {\em Theor. Comput. Sci.}, 302(1-3):211--222, 2003.

\bibitem{SirenVMN08}
J.~Sir{\'{e}}n, N.~V{\"{a}}lim{\"{a}}ki, V.~M{\"{a}}kinen, and G.~Navarro.
\newblock Run-length compressed indexes are superior for highly repetitive
  sequence collections.
\newblock In {\em {SPIRE} 2008}, volume 5280 of {\em Lecture Notes in Computer
  Science}, pages 164--175, 2008.

\bibitem{StorerS82}
J.~A. Storer and T.~G. Szymanski.
\newblock Data compression via textual substitution.
\newblock {\em J. {ACM}}, 29(4):928--951, 1982.

\bibitem{VarmaY21}
N.~Varma and Y.~Yoshida.
\newblock Average sensitivity of graph algorithms.
\newblock In {\em {SODA} 2021}, pages 684--703. {SIAM}, 2021.

\bibitem{YangK00}
E.-H. Yang and J.~C. Kieffer.
\newblock Efficient universal lossless data compression algorithms based on a
  greedy sequential grammar transform. i. without context models.
\newblock {\em IEEE Transactions on Information Theory}, 46(3):755--777, 2000.

\bibitem{YoshidaZ21}
Y.~Yoshida and S.~Zhou.
\newblock Sensitivity analysis of the maximum matching problem.
\newblock In {\em {ITCS} 2021}, volume 185 of {\em LIPIcs}, pages 58:1--58:20,
  2021.

\bibitem{LZ77}
J.~Ziv and A.~Lempel.
\newblock A universal algorithm for sequential data compression.
\newblock {\em IEEE Transactions on Information Theory}, 23(3):337–343, 1977.

\bibitem{LZ78}
J.~Ziv and A.~Lempel.
\newblock Compression of individual sequences via variable-rate coding.
\newblock {\em {IEEE} Trans. Inf. Theory}, 24(5):530--536, 1978.

\end{thebibliography}

\end{document}